\pdfoutput=1
\documentclass[a4paper, 11pt]{article}
\usepackage{amsmath,a4wide}
\usepackage{amssymb}
\usepackage{amscd}
\usepackage{epsfig}
\usepackage{graphics}
\usepackage{multirow}
\usepackage{graphicx}

\usepackage{subfigure}
\usepackage{epstopdf}
\usepackage{booktabs}
\usepackage{pgfplots}
\usepackage[T1]{fontenc}
\usepackage{mathpazo}
\usepackage{microtype}

\usepackage{ifpdf}
\usepackage[a4paper,top=1in, bottom=1in, left=0.75in, right=0.75in]{geometry}
\usepackage{changes}
\usepackage[makeroom]{cancel}
\usepackage{bbm}

\usepackage{amsthm}
\usepackage{color}
\definecolor{amaranth}{rgb}{0.9, 0.17, 0.31}
\definecolor{myblue}{rgb}{0.2,0,0.9}
\definecolor{auburn}{rgb}{0.43, 0.21, 0.1}
\definecolor{bittersweet}{rgb}{1.0, 0.44, 0.37}
\definecolor{blue-violet}{rgb}{0.54, 0.17, 0.89}
\definecolor{takadd}{rgb}{0.0, 0.25, 0.75}
\definecolor{takdelete}{rgb}{0.8, 0.0, 0.0}
\definecolor{takcomment}{rgb}{0.0, 0.45, 0.2}
\usepackage[numbers,sort&compress]{natbib}

\usepackage{hyperref}
\usepackage[nameinlink]{cleveref}

\hypersetup{
	colorlinks,linkcolor=myblue,citecolor=blue-violet,filecolor=magenta,urlcolor=myblue}

\newtheorem{pro}{Proposition}[section]
\newtheorem{pr}{Problem}[section]

\newtheorem{thm}{Theorem}[section]

\newtheorem{lem}{Lemma}[section]

\newtheorem{as}{Assumption}[section]
\newtheorem{rem}{Remark}[section]

\def\r{\rho}

\newcommand{\bdryinv}{\mathcal{Z}_{\mathcal{I}}}
\newcommand{\bdrydisinv}{\mathcal{Z}_{\mathcal{D}}}

\numberwithin{equation}{section}
\begin{document}

\title{\LARGE \textbf{Finite-Horizon Reversible Investment under Multi-Factor Dynamics}\thanks{Junkee Jeon was supported  by the National Research Foundation of Korea (NRF) grant funded by the Korea government (MSIT) (Grant No. RS-2026-25591924). Takwon Kim was supported by the National Research Foundation of Korea (NRF) grant funded by the Korea government (MSIT) (Grant No. RS-2024-00351151). Jinwan Park was supported by Basic Science Research Program through the National Research Foundation of Korea (NRF) funded by the Ministry of Education (Grant No. RS-2025-25415411). A.\ Max Reppen acknowledges financial support from Boston University Hariri Institute for Computing and Computational Science. \\
}}

\author{
Junkee Jeon\footnote{E-mail: \href{mailto:junkeejeon@khu.ac.kr}{\tt junkeejeon@khu.ac.kr};
Department of Applied Mathematics, Kyung Hee University, Yongin, Korea.}
\and
Takwon Kim\footnote{E-mail: \href{mailto:takwon@sungshin.ac.kr}{\tt takwon@sungshin.ac.kr};
School of Mathematics, Statistics and Data Science, Sungshin Women's University, Seoul, Korea.}
\and
Jinwan Park\footnote{E-mail: \href{mailto:jinwan@kongju.ac.kr}{\tt jinwan@kongju.ac.kr};
Department of Applied Mathematics, Kongju National University, Gongju, Korea.}
\and
A. Max Reppen\footnote{Corresponding author. E-mail: \href{mailto:amreppen@bu.edu}{\tt amreppen@bu.edu};
Questrom School of Business, Boston University, 595 Commonwealth Avenue, Boston,
MA 02215, USA.}
}

	\date{\today}

	\maketitle \pagestyle{plain} \pagenumbering{arabic}

\abstract{
We study a finite-horizon reversible investment problem in which a risk-neutral firm adjusts capacity at a proportional purchase cost and a lower salvage value under multi-factor geometric Brownian motion. Via the singular control--optimal switching correspondence, the marginal value of capacity solves a family of parabolic double-obstacle problems. We prove existence, uniqueness and local Sobolev regularity of the strong solution, characterize investment, waiting and disinvestment regions by continuous, strictly separated free boundaries, and verify optimality of the reflected capacity process. Numerically, joint demand improvements shift both boundaries super-additively, 1.5--2.7 times as strongly at the disinvestment boundary, depending on factor correlation.
}

\vspace{0.2cm}

\noindent{\em Keywords}: reversible investment; singular control; optimal
switching; double obstacle problem; multi-factor geometric Brownian motion.
\vspace{0.2cm}

\noindent{\em Mathematics Subject Classification (2020)}: 93E20, 35R35, 35K85,
91B38.
\vspace{0.2cm}

\noindent{\em JEL classification}: C61, D21, D81, D92, G31.
\vspace{0.2cm}

\noindent{\em Running title}: Multi-factor reversible investment.
\newpage

\section{Introduction}

Investment under uncertainty is a classical setting in which stochastic
control, optimal stopping, and free-boundary methods interact naturally. A firm
that expands productive capacity under uncertainty must balance the value of
waiting against the cost of delayed investment: investing too early creates the
risk of excess capacity, whereas waiting too long may forgo operating profits.
The trade-off becomes two-sided when investment is reversible but costly.
Capacity may then be increased or decreased, but the purchase cost of capital
exceeds its salvage value, creating an endogenous inaction region.

This paper studies a finite-horizon reversible investment problem in which a
risk-neutral firm adjusts productive capacity through two finite-variation
controls representing investment and disinvestment. Upward adjustment incurs a
constant proportional cost, while downward adjustment produces a smaller
proportional salvage value. The economic environment is driven by an
\(n\)-dimensional geometric Brownian motion, and the operating revenue may
depend jointly on all of its components.  Because the free-boundary
analysis assumes that marginal revenue is nondecreasing in every state
coordinate, each coordinate is interpreted as a favourable economic index.
Examples include output demand or prices, productivity, inverse input-price
indices, energy availability, favourable financial conditions, and sectoral
or regional demand shocks.  A scalar state variable
adequately represents such exposures only when they move together or enter the
revenue function through a one-dimensional sufficient statistic. Our
formulation addresses settings in which no such reduction is available.

The multi-factor specification permits economically relevant substitution,
complementarity, and production-switching effects.  Smooth CES and
Cobb--Douglas specifications are leading examples.  Exact Leontief and
production-switching specifications are nonsmooth benchmarks, and are covered
as well: the free-boundary assumption is stated in terms of a comparable
smooth majorant of marginal revenue, precisely so that marginal revenue itself
need only be locally Lipschitz.

The mathematical difficulty is genuinely multi-dimensional. Many reversible
investment models rely on a one-dimensional ordering, homogeneity, or an
explicit dimension reduction. These tools are unavailable when the economic
state is a multi-dimensional diffusion and marginal revenue depends on the full
state vector. The intervention boundaries are then hypersurfaces rather than
scalar trigger curves. Moreover, the marginal value of capacity is constrained
by two obstacles: the investment cost from above and the disinvestment value
from below. The resulting equation is therefore a parabolic double obstacle
problem, rather than the one-sided obstacle problem associated with a standard
American option.

Our analysis has three parts.  The marginal capacity problem is reduced to a
family of two-regime switching problems indexed by capacity level, whose value
difference solves a single \(n\)-dimensional parabolic double obstacle problem;
we establish existence, uniqueness, boundedness and local Sobolev regularity of
its strong solution.  Under monotonicity and growth conditions on marginal
revenue we then identify the investment, waiting and disinvestment regions and
obtain continuous capacity-level free boundaries together with their terminal
behaviour.  Finally, these boundaries define a time-dependent inaction interval
for capacity, and the Skorokhod problem on that interval produces a reflected
capacity process whose induced level-crossing controls we verify to be optimal.
No integrability restriction is placed on the control class: a growth condition
imposed directly on operating revenue makes the objective well defined for
every finite-variation control, and gives every control of finite value the
integrability that the decomposition requires.

The main results, and the assumptions on which each
rests, are summarized below:\\
\begin{tabular}{@{}p{0.12\textwidth}p{0.55\textwidth}p{0.16\textwidth}@{}}
\toprule
Result & Content & Assumptions \\
\midrule
Theorem~\ref{thm:Q}
  & Marginal value of capacity ${\cal Q}$: existence, uniqueness,
    boundedness and $W^{1,2}_{p,\mathrm{loc}}$ regularity;
    $\partial_t{\cal Q}\le0$
  & \ref{as:global}(i) \\
\midrule
Theorem~\ref{thm:free-boundary-continuity}
  & Continuity of free boundaries: ${\cal Z}_{\cal I}$ and ${\cal Z}_{\cal D}$
  & \ref{as:global}, \ref{as:free_boundary} \\
Theorem~\ref{thm:capacity-boundary-regularity}
  & Free boundary smoothness: ${\cal Z}_{\cal I},{\cal Z}_{\cal D}\in C^1$ \newline
    \phantom{Free boundary smoothness:} ${\cal Z}_{\cal I},{\cal Z}_{\cal D}\in C^\infty$
  & \ref{as:global}, \ref{as:free_boundary}, \ref{as:capacity-first-order-data} \newline
    \ref{as:global}, \ref{as:free_boundary},
    \ref{as:capacity-first-order-data}--\ref{as:capacity-exact-compensation} \\
\midrule
Theorem~\ref{thm:verification-singular-control}
  & Optimality of the reflected control; representation of the value
    function
  & \ref{as:global}, \ref{as:free_boundary} \\
\bottomrule
\end{tabular}\\
Assumption~\ref{as:global} restricts operating revenue and
Assumption~\ref{as:free_boundary} restricts marginal revenue.
Assumptions~\ref{as:capacity-first-order-data}--\ref{as:capacity-exact-compensation}
are needed only for the higher regularity of the free boundaries, and the
verification uses only their continuity and strict separation.

The probabilistic construction uses two principal tools. The correspondence
between singular control and consistent families of optimal switching problems
developed by \citet{Guo2008} provides the level-wise decomposition. The theory
of the Skorokhod problem on a time-dependent interval developed by
\citet{BurdzyKangRamanan2009} converts the free boundaries into a reflected
capacity process. Together with the PDE analysis and the coercive no-loss
argument, these tools identify and verify an optimal finite-variation
investment--disinvestment policy.

The paper is organized as follows. Section~\ref{sec:Model} formulates the
finite-horizon reversible investment problem and states the assumptions on the
operating revenue. The subsequent section derives the family of optimal
switching problems through the layer-cake decomposition and the
singular-control--switching correspondence, and shows that the associated value
difference satisfies the parabolic double obstacle problem.
Section~\ref{sec:pdo} establishes existence, uniqueness, and local Sobolev
regularity, first on bounded domains and then on the unbounded state space.
Section~\ref{sec:fb} characterizes the intervention regions and their
capacity-level free boundaries. Section~\ref{sec:verification} recovers the
switching value pair, constructs the reflected capacity process, proves the
required a-posteriori admissibility, and verifies optimality for the switching
and singular control problems. Section~\ref{sec:Numeric} illustrates the
resulting multi-dimensional intervention geometry. Section~\ref{sec:implications}
quantifies the interaction between the demand factors at the free boundaries and its dependence on the factor correlation. These computations illustrate
the geometry and economic mechanisms generated by the model rather than supply
additional evidence for the analytical results. Section~\ref{sec:conclusion}
concludes.

\medskip
\noindent\textbf{Related literature.}
The economic foundation of irreversible investment and real options goes back
to \citet{Arrow1968}, \citet{Pindyck1988,Pindyck1991}, and
\citet{DixitPindyck1994}. Costly reversibility and the resulting inaction-band
policy were developed by \citet{AbelEberly1996}. In mathematical finance and
stochastic control, related irreversible and reversible investment problems
have been studied through singular control, optimal stopping, and
monotone-follower methods; see, among others,
\citet{BaldurssonKaratzas1997}, \citet{Oksendal2000},
\citet{RiedelSu2011}, and \citet{Steg2012}. These papers provide the economic
and stochastic-control foundations for investment under irreversibility or
partial reversibility, whereas our focus is the finite-horizon, two-sided
problem under genuinely multi-factor dynamics.

Our work is also connected to the literature on the reduction of singular
control to optimal stopping or optimal switching. Classical connections
between singular control and optimal stopping were developed by
\citet{KaratzasShreve1984,KaratzasShreve1985}, and
\citet{ElKarouiKaratzas1991}; see also \citet{BankElKaroui2004} for a
stochastic representation approach. For reversible investment,
\citet{GuoPham2005} and \citet{MerhiZervos2007} study partially reversible
capacity expansion models, while \citet{Guo2008} establishes the
correspondence between singular control and consistent families of optimal
switching problems used here. We combine this correspondence with a
multi-dimensional double-obstacle analysis and a time-dependent Skorokhod
reflection to construct and verify the optimal finite-variation policy. The
one-dimensional setting of \citet{MerhiZervos2007} is closely related to a
reduced case of our framework. Our construction does not posit intervention
thresholds in advance: the double obstacle problem identifies the boundaries,
and the switching and Skorokhod representations turn them into the candidate
control.

Our problem is structurally related to the transaction-cost literature, where
proportional costs turn the frictionless Merton problem
\citep{Merton1969,Merton1971} into a two-sided singular control problem whose
optimal policy keeps the portfolio inside a no-trading region by minimal
reflection at its boundaries; this structure was introduced by
\citet{MagillConstantinides1976}, characterized by \citet{DavisNorman1990},
and developed through viscosity methods by \citet{ShreveSoner1994}, with dual
and shadow-price approaches in \citet{CvitanicKaratzas1996},
\citet{KallsenMuhleKarbe2010}, \citet{ChoiSirbuZitkovic2013}, and
\citet{GuasoniMuhleKarbe2013}.  Closest to our methodology are the
finite-horizon and multi-asset versions.  \citet{DaiYi2009} reduce a
finite-horizon transaction-cost problem to a parabolic double obstacle problem
for the derivative of the value function, \citet{DaiJiangLiYi2009} extend this
to investment and consumption, and \citet{BelakSass2019} construct and verify
optimal finite-horizon strategies, which is conceptually close to our pathwise
construction of the reflected capacity process; multi-asset models are studied
by \citet{AkianMenaldiSulem1996}, \citet{KabanovKluppelberg2004},
\citet{ChenDai2013}, and \citet{HobsonTseZhu2019}, and small-cost asymptotics
and mixed fixed-proportional costs by \citet{JanecekShreve2004},
\citet{SonerTouzi2013}, \citet{PossamaiSonerTouzi2015}, and
\citet{AltaroviciReppenSoner2017}.  Unlike these portfolio models, our
controlled state is productive capacity under a general multi-factor revenue
function, and we do not rely on homotheticity, portfolio-ratio reductions, or a
special utility specification.

Finite-horizon reversible investment problems have been analyzed by
\citet{DeAngelisFerrari2014}, \citet{FedericoPham2014},
\citet{HanYiZhang2024}, \citet{HanYi2025}, and \citet{JeonKim2026}.
These works obtain detailed structural results in one-dimensional,
two-dimensional, or degenerate settings. Here the operating revenue may depend
on the full state vector, and the marginal problem is a genuinely
\(n\)-dimensional parabolic double obstacle problem. Accordingly, the
investment and disinvestment boundaries are treated as hypersurfaces rather
than scalar or reduced-dimensional trigger curves.

Finally, our work is related to free-boundary problems arising from American
options on several assets. American options lead to one-sided obstacle
problems; see \citet{McKean1965}, \citet{Bensoussan1984},
\citet{Karatzas1988}, \citet{BensoussanLions1982},
\citet{JailletLambertonLapeyre1990}, \citet{Kim1990},
\citet{Jacka1991}, and \citet{CarrJarrowMyneni1992}. For multi-asset
American options, see \citet{BroadieDetemple1997} and
\citet{Villeneuve1999}. \citet{LaurenceSalsa2009} prove first-order
regularity of the free boundary for American options on several assets;
further developments include \citet{KlimsiakRozkosz2016} and
\citet{ReppenSonerTissotDaguette2025}. In contrast, our marginal problem has
two interacting intervention regions. Its central analytical feature is
therefore the simultaneous treatment of the investment and disinvestment
boundaries generated by a two-sided singular control problem.

\section{Model}\label{sec:Model}

Fix a finite horizon \(T>0\). We work on a probability space \((\Omega, {\cal F}, \mathbb{F}, \mathbb{P})\)
with a filtration \(\mathbb{F}=({\cal F}_t)_{t\ge0}\) satisfying the usual
conditions of right-continuity and completeness. The space supports a standard
\(n\)-dimensional \(\mathbb F\)-Brownian motion \(B=(B_t)_{t\ge0}\).

Throughout, \(\rho>0\) denotes a constant discount rate, and
\(K^+>K^->0\) denote, respectively, the per-unit cost of upward capacity
adjustment and the per-unit salvage value from downward adjustment.

For a given initial capacity level \(y>0\), let \(\Xi(y)\) denote the set of all
pairs \((\xi^+,\xi^-)\) of \(\mathbb{F}\)-adapted, non-decreasing, left
continuous with right limits (LCRL) processes on \([0,T]\) with
\(\xi^+_T<\infty\) and \(\xi^-_T<\infty\) almost surely, such that
\begin{itemize}
    \item[(i)] \(\xi^+_0=\xi^-_0=0\) and, writing
    \(Y_t=y+\xi^+_t-\xi^-_t\), one has
    \(Y_t>0\) for every \(t\in[0,T]\) and
    \(Y_{t+}>0\) for every \(t\in[0,T)\), almost surely;
    \item[(ii)] the associated measures \(d\xi^+\) and \(d\xi^-\) are mutually
    singular almost surely, so that upward and downward adjustments never occur
    simultaneously.
\end{itemize}
The process \(Y=(Y_t)_{t\in[0,T]}\), with \(Y_0=y\), is the firm's capacity,
and \(\xi^+\) and \(\xi^-\) represent, respectively, cumulative investment and
disinvestment.

Controls are LCRL rather than right continuous because the correspondence
between singular control and optimal switching of \citet{Guo2008}, on which
Section~\ref{sec:switching} rests, is stated for that class.  Accordingly, for
a non-decreasing LCRL process \(A\) we write \(dA\) for the measure with
\(dA([s,t))=A_t-A_s\), so that \(dA(\{t\})=\Delta^+A_t:=A_{t+}-A_t\)
for \(t<T\); in
particular, integration over \([0,T)\) charges a right-jump at time zero and
excludes an intervention at \(T\).

The economic environment is described by an \(n\)-dimensional geometric
Brownian motion \(X=(X_t)_{t\in[0,T]}\). Starting from \(X_0=x\in(0,\infty)^n\),
its dynamics are given by
\begin{equation}\label{eq:GBM}
\begin{aligned}
dX_t
&=
\operatorname{diag}(X_t)\bigl(\mu\,dt+\Sigma\,dB_t\bigr),
\qquad 0\le t\le T,\qquad
X_0=x,
\end{aligned}
\end{equation}
where \(\mu\in\mathbb R^n\) is the drift vector and
\(\Sigma\in\mathbb R^{n\times n}\) is the volatility matrix.
We assume that \(a:=\Sigma\Sigma^\top\) is positive definite and denote
its entries by \(a_{ij}\).
We write \(\mathbb E_{t,x}\) for expectation under the law
of \(X\) started from \(x\) at time \(t\), and abbreviate
\(\mathbb E_x:=\mathbb E_{0,x}\).

The operating revenue function \(\Pi(x,y)\) depends on the economic state
\(x=(x_1,\dots,x_n)\) and the firm's capacity \(y\).
The firm is risk-neutral and maximizes expected discounted profit over the
finite horizon \([0,T]\):

\begin{pr}\label{pr:main}
For given initial conditions \(x\in(0,\infty)^n\) and \(y>0\), the
optimization problem is
\begin{equation}\label{eq:value-function}
\mathfrak{U}(x,y)
:=
\sup_{(\xi^+,\xi^-)\in\Xi(y)}
{\cal J}(x,y;\xi^+,\xi^-).
\end{equation}
Here, the objective functional associated with
\((\xi^+,\xi^-)\in\Xi(y)\) is given by
\begin{equation}\label{eq:main-problem}
\begin{aligned}
{\cal J}(x,y;\xi^+,\xi^-)
&{:=}
\mathbb E_x\bigg[
\int_0^T e^{-\rho t}\Pi(X_t,Y_t)\,dt
-K^+\int_{[0,T)} e^{-\rho t}\,d\xi_t^+
+K^-\int_{[0,T)} e^{-\rho t}\,d\xi_t^-
+e^{-\rho T}K^-Y_T
\bigg].
\end{aligned}
\end{equation}
Under Assumption~\ref{as:global} below, the objective is well defined and
\(\mathfrak{U}\) is finite, as shown in
Section~\ref{sec:integrability}.
\end{pr}

Assumption~\ref{as:global} below imposes two types of requirements on
\(\Pi\): regularity in the capacity variable, because the analysis runs throughout
on the marginal revenue rather than on \(\Pi\) itself, and integrability conditions,
without which operating revenue can outrun the cost of the capacity that produces
it and the firm's value may be infinite. The latter are divided into a revenue-envelope
condition and a level-wise integrability condition. What we require is that revenue, net of a carrying
cost \(\rho K^-\) per unit of capacity, stay integrable.  This suffices for the
reduction of Section~\ref{sec:switching} and the solution theory of
Section~\ref{sec:pdo}; locating the investment and disinvestment boundaries asks
more of marginal revenue, and those conditions are collected in
Assumption~\ref{as:free_boundary}, where they are first used.
Remark~\ref{rem:Pi} describes several economically important specifications
satisfying both.

\begin{as}\label{as:global}
    We assume that the operating revenue function \(\Pi\) satisfies the following conditions.
\begin{enumerate}
    \item[\textnormal{(i)}] \emph{Regularity.} The function \(\Pi\) is jointly Borel measurable. For every
    \(x\in(0,\infty)^n\), the function \(y\mapsto\Pi(x,y)\) is non-negative,
    non-decreasing and differentiable, and its derivative
    \[\pi(x,y):=\frac{\partial\Pi}{\partial y}(x,y)\] is continuous on \((0,\infty)^n\times(0,\infty)\).

    \item[\textnormal{(ii)}] \emph{Revenue envelope.}  The function
    \begin{equation}\label{eq:revenue-envelope-Psi}
        \Psi(x):=\sup_{u>0}\bigl(\Pi(x,u)-\rho K^-u\bigr)
    \end{equation}
    satisfies
    \[
        \mathbb E_x\left[\int_0^T e^{-\rho t}\Psi(X_t)\,dt\right]<\infty,
        \qquad x\in(0,\infty)^n.
    \]

    \item[\textnormal{(iii)}] \emph{Level-wise integrability.}  For every
    \(z>0\),
    \[
        \mathbb E_x\left[\int_0^T e^{-\rho t}
            \pi(X_t,z)\,dt\right]<\infty,
        \qquad x\in(0,\infty)^n.
    \]
\end{enumerate}
\end{as}

\begin{rem}\label{rem:assumptions}
The supremum in \eqref{eq:revenue-envelope-Psi} is non-negative, because
\(\Pi\ge0\) and \(u\downarrow0\) is admissible.  By construction,
\begin{equation}\label{eq:revenue-envelope-bound}
    0\le \Pi(x,u)\le \rho K^-u+\Psi(x),
    \qquad (x,u)\in(0,\infty)^n\times(0,\infty),
\end{equation}
so every fixed capacity level has integrable discounted revenue,
\begin{equation}\label{eq:baseline-revenue-integrability}
    \mathbb E_x\left[\int_0^T e^{-\rho t}\Pi(X_t,u)\,dt\right]<\infty,
    \qquad u>0 .
\end{equation}
Clause~\textnormal{(iii)} is implied by \textnormal{(i)} and
\textnormal{(ii)} when \(\Pi(x,\cdot)\) is concave, since then
\(\pi(x,z)\le z^{-1}\int_0^z\pi(x,s)\,ds\le\rho K^-+z^{-1}\Psi(x)\); it is
stated separately because concavity is imposed only from
Section~\ref{sec:fb} on.  Clause~\textnormal{(i)} is what
Section~\ref{sec:pdo} uses, clauses~\textnormal{(i)}--\textnormal{(iii)} what
Section~\ref{sec:switching} uses, and Sections~\ref{sec:fb}
and~\ref{sec:verification} require in addition
Assumption~\ref{as:free_boundary}.
\end{rem}

\begin{rem} \label{rem:Pi}
    The following specifications illustrate economically relevant
    functional forms.  Remark~\ref{rem:examples-global} verifies
    Assumption~\ref{as:global} for them, and
    Remark~\ref{rem:free-boundary-examples} verifies
    Assumption~\ref{as:free_boundary}.
    We give three examples, with $n=2$ for simplicity:
    \begin{enumerate}
        \item Constant elasticity of substitution (CES):
            \[
                \Pi(x_1, x_2, y)
                =
                P \left(
                    \beta_1 x_1^\theta
                    + \beta_2 x_2^\theta
                    + \beta_y y^\theta
                \right)^{\gamma/\theta},
                \qquad
                \gamma > \theta \neq 0,
                \quad
                \gamma \in (0,1),
            \]
            where $P$ is the price, $x_i$ are state variables that represent market conditions, and $\frac{1}{1-\theta}$ is the elasticity of substitution.
            Here $P,\beta_1,\beta_2,$ and $\beta_y$ are positive.

            The broader CES specification has the following familiar formal limits:
            \begin{enumerate}
                \item $\theta \to 0$ yields the Cobb--Douglas production function, if \(\beta_1+\beta_2+\beta_y=1\).
                \item $\theta \to -\infty$ yields the Leontief production function.
                \item $\theta \to 1$ yields perfect substitutes---a formal limit
                outside the maintained restriction \(\theta<\gamma<1\).
            \end{enumerate}
            Finite $\theta$ thus interpolates between these familiar
            technologies.
        \item CES structure in non-capital factors:
            \[
                \Pi(x_1, x_2, y)
                =
                P \left(
                    \beta_1 x_1^\theta
                    + \beta_2 x_2^\theta
                \right)^{\gamma/\theta} y^\delta,
                \qquad
                \gamma > 0,
                \quad
                \theta \neq 0,
                \quad
                0 < \delta < 1.
            \]
            Here $P,\beta_1,$ and $\beta_2$ are positive.
            The $x$-factors are CES aggregated and $y$ has its own elasticity.
            Thus the substitutability of the $x$-factors is maintained, allowing for flexible modeling, while the $y$-factor enters complementarily.
            As above, this includes limits such as perfect complements:
            $\min\{x_1, x_2\}^\gamma y^\delta$.
            Here, the returns to scale are given by $\gamma + \delta$.

        \item Production switching:
            \[
                \Pi(x_1, x_2, y)
                =
                \max\{x_1, x_2\} \cdot y^\delta,
                \qquad
                0 < \delta < 1,
            \]
            which represents production capital $y$ capable of producing two goods with prices $x_1$ and $x_2$, with frictionless switching between them.
    \end{enumerate}
\end{rem}

\begin{rem}\label{rem:examples-global}
Each specification of Remark~\ref{rem:Pi} satisfies
Assumption~\ref{as:global}. Clause~\textnormal{(i)} is immediate, and,
by Remark~\ref{rem:assumptions},
clause~\textnormal{(iii)} follows from clause~\textnormal{(ii)} and concavity
in \(y\). Thus it remains to verify the envelope condition~\textnormal{(ii)}.
Writing \(c=\rho K^-\), maximizing \(A(x)u^\delta-cu\) over \(u>0\) gives,
for \(0<\delta<1\),
\begin{equation}\label{eq:power-revenue-envelope}
    \sup_{u>0}\bigl(A(x)u^\delta-cu\bigr)
    =(1-\delta)\,\delta^{\delta/(1-\delta)}
      c^{-\delta/(1-\delta)}A(x)^{1/(1-\delta)}.
\end{equation}
For the second and third specifications, the corresponding
functions \(A\) satisfy, respectively,
\[
A(x)\le C(x_1^\gamma+x_2^\gamma)
\qquad\text{and}\qquad
A(x)\le x_1+x_2.
\]
Hence \eqref{eq:power-revenue-envelope} shows that \(\Psi\) has at most
polynomial growth.
Since a geometric Brownian motion has moments of every positive order on
\([0,T]\), this verifies condition~\textnormal{(ii)} for these two
specifications.

For the first specification, when \(\theta>0\), a power inequality gives
\[
\Pi\le C_0(x_1^\gamma+x_2^\gamma)+C_1u^\gamma,
\]
to whose second term \eqref{eq:power-revenue-envelope}, with
\(\delta=\gamma\), applies. When \(\theta<0\),
\[
\Pi\le
P(\beta_1x_1^\theta+\beta_2x_2^\theta)^{\gamma/\theta}
\le C(x_1^\gamma+x_2^\gamma)
\]
uniformly in \(u\).
Thus \(\Psi(x)\le
C(1+x_1^\gamma+x_2^\gamma)\) in either case, completing the verification.
\end{rem}

\section{Reduction to a family of optimal switching problems}
\label{sec:switching}

This section decomposes the payoff of every finite-valued control in
Problem~\ref{pr:main} into a family of two-regime
optimal switching problems, one for each capacity level: the level is either
active or inactive, and raising or lowering capacity past it is a switch.  We
first record the integrability conditions required by the decomposition, and show that every
control of finite payoff automatically satisfies them.  We then set up the switching
problems in a convention matching the LCRL controls of Problem~\ref{pr:main},
carry out the decomposition, and reduce the resulting HJB system to a scalar
parabolic double-obstacle problem.
Section~\ref{sec:verification} later constructs a consistent
family that attains the resulting switching-value upper bound.

\subsection{Integrability of the singular payoff}
\label{sec:integrability}

The decomposition of Section~\ref{sec:switching} splits each term of
\eqref{eq:main-problem} into an integral over capacity levels, and so requires
the four terms to be separately integrable. The present
subsection identifies precisely which controls satisfy this requirement.
The tool is a pathwise rearrangement of the payoff, obtained by integrating
the adjustment costs by parts, in which the two costs collapse into a single
term of known sign.

For \(\xi=(\xi^+,\xi^-)\in\Xi(y)\), set
\[
    A_\rho^\pm(\xi)
    :=
    \int_{[0,T)}e^{-\rho t}\,d\xi_t^\pm,
    \qquad
    Y_t^\xi:=y+\xi_t^+-\xi_t^-.
\]
Discounted integration by parts gives, pathwise,
\begin{equation}\label{eq:discounted-capacity-ibp}
\begin{aligned}
e^{-\rho T}Y_T^\xi-y
&=
A_\rho^+(\xi)-A_\rho^-(\xi)
-\rho\int_0^T e^{-\rho t}Y_t^\xi\,dt,\\
-K^+A_\rho^+(\xi)+K^-A_\rho^-(\xi)
+e^{-\rho T}K^-Y_T^\xi
&=
K^-y-(K^+-K^-)A_\rho^+(\xi)
-\rho K^-\int_0^T e^{-\rho t}Y_t^\xi\,dt.
\end{aligned}
\end{equation}

By \eqref{eq:discounted-capacity-ibp}, the random variable inside the
expectation in \eqref{eq:main-problem} equals, pathwise,
\begin{equation}\label{eq:J-singular-extended}
    Z(\xi)
    :=
    K^-y+\int_0^T e^{-\rho t}
    \bigl(\Pi(X_t,Y_t^\xi)-\rho K^-Y_t^\xi\bigr)\,dt
    -(K^+-K^-)A_\rho^+(\xi),
\end{equation}
so that \({\cal J}(x,y;\xi^+,\xi^-)=\mathbb E_x[Z(\xi)]\).  On the right-hand
side the potentially positive and negative contributions are both controlled: the adjustment costs have collapsed into
\(-(K^+-K^-)A_\rho^+(\xi)\le0\), so that only upward adjustment is charged and
only the bid--ask spread is charged for it, while the running integrand is
operating revenue net of the carrying cost \(\rho K^-\) of a unit of capacity,
whose positive part is at most \(\Psi(X_t)\) by \eqref{eq:revenue-envelope-Psi}
and \(Y^\xi>0\).  Hence
\begin{equation}\label{eq:Z-upper-bound}
    Z(\xi)\le K^-y+\int_0^T e^{-\rho t}\Psi(X_t)\,dt ,
\end{equation}
whose right-hand side is integrable by
Assumption~\ref{as:global}\textnormal{(ii)} and therefore almost surely finite.
Moreover, since \(\Pi\ge0\), \(A_\rho^+(\xi)<\infty\)
and \(Y_t^\xi\le y+\xi_T^+<\infty\) pathwise,
\[
Z(\xi)\ge K^-y-\rho K^-\int_0^T e^{-\rho t}Y_t^\xi\,dt
-(K^+-K^-)A_\rho^+(\xi)>-\infty
\qquad\text{a.s.}
\]
So \(Z(\xi)\) is almost surely finite with
\(\mathbb E_x[Z(\xi)^+]<\infty\): for every \(\xi\in\Xi(y)\) the objective is
well defined in \([-\infty,\infty)\), no admissibility restriction on
\(\Xi(y)\) is needed, and taking suprema in \eqref{eq:Z-upper-bound} bounds
\(\mathfrak U(x,y)\) from above.
Finiteness of \(\mathbb E_x[A_\rho^+(\xi)]\) is a separate matter, and is not
automatic: \(A_\rho^+(\xi)\le\xi_T^+<\infty\) pathwise, but the expectation
need not be finite.  The following lemma identifies that condition as the exact
dividing line between controls of finite and of infinite payoff, and shows that
on the finite side it supplies all the integrability used by the level-crossing
decomposition below.

\begin{lem}\label{lem:finite-payoff-integrability}
Let \(\xi\in\Xi(y)\).  Then the following hold.
\begin{enumerate}
    \item[\textnormal{(i)}] \({\cal J}(x,y;\xi^+,\xi^-)>-\infty\) if and only
    if \(\mathbb E_x[A_\rho^+(\xi)]<\infty\).

    \item[\textnormal{(ii)}] In that case,
    \begin{equation}\label{eq:finite-payoff-control-integrability}
    \begin{aligned}
    &\mathbb E_x[A_\rho^+(\xi)]
    +\mathbb E_x[A_\rho^-(\xi)]<\infty,\\
    &\mathbb E_x[e^{-\rho T}Y_T^\xi]<\infty,
    \qquad
    \mathbb E_x\left[
    \int_0^T e^{-\rho t}\Pi(X_t,Y_t^\xi)\,dt
    \right]<\infty,
    \end{aligned}
    \end{equation}
    and each of the four terms in \eqref{eq:main-problem} is separately
    integrable.

    \item[\textnormal{(iii)}] The value function is finite:
    \begin{equation}\label{eq:U-finite}
        -\infty<\mathfrak U(x,y)
        \le K^-y+\mathbb E_x\left[\int_0^Te^{-\rho t}\Psi(X_t)\,dt\right]
        <\infty .
    \end{equation}
\end{enumerate}
In particular, controls with \({\cal J}=-\infty\) entail no loss in the
maximization problem.
\end{lem}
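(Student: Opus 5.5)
The plan is to work throughout with the pathwise identity \eqref{eq:J-singular-extended}, ${\cal J}=\mathbb E_x[Z(\xi)]$. We already know $\mathbb E_x[Z(\xi)^+]<\infty$ from \eqref{eq:Z-upper-bound}, so ${\cal J}>-\infty$ is equivalent to $\mathbb E_x[Z(\xi)^-]<\infty$. We write $Z(\xi)=K^-y+R-(K^+-K^-)A_\rho^+(\xi)$ with $R:=\int_0^Te^{-\rho t}(\Pi(X_t,Y_t^\xi)-\rho K^-Y_t^\xi)\,dt$, and note $R^+\le\int_0^Te^{-\rho t}\Psi(X_t)\,dt$, which is integrable.

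For the direction ``$\Leftarrow$'' in (i), assume $\mathbb E_x[A_\rho^+]<\infty$. The quantity to control is $R^-$, which is bounded by $\rho K^-\int_0^Te^{-\rho t}Y_t^\xi\,dt$ because $\Pi\ge0$. The first line of \eqref{eq:discounted-capacity-ibp} gives
\[
\rho\int_0^Te^{-\rho t}Y_t^\xi\,dt=y+A_\rho^+-A_\rho^--e^{-\rho T}Y_T^\xi\le y+A_\rho^+ ,
\]
since $A_\rho^-\ge0$ and $Y_T^\xi>0$. Hence $R^-\le K^-(y+A_\rho^+)$ is integrable, and so is $Z^-$. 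For the direction ``$\Rightarrow$'', suppose ${\cal J}>-\infty$. Then $Z$ is integrable, and
\[
(K^+-K^-)A_\rho^+=K^-y+R-Z\le K^-y+R^++Z^- .
\]
The right-hand side is integrable, so $\mathbb E_x[A_\rho^+]<\infty$ because $K^+>K^-$. This is the main step, and the sign structure of the collapsed identity makes it work directly.

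For (ii), the same identity gives $A_\rho^-+e^{-\rho T}Y_T^\xi\le y+A_\rho^+$, so both $A_\rho^-$ and $e^{-\rho T}Y_T^\xi$ are integrable, and therefore $\int_0^T e^{-\rho t}Y_t^\xi\,dt$ is integrable as well. Then
\[
0\le\int_0^Te^{-\rho t}\Pi(X_t,Y_t^\xi)\,dt\le R^++\rho K^-\int_0^Te^{-\rho t}Y_t^\xi\,dt ,
\]
which can also be obtained from \eqref{eq:revenue-envelope-bound} directly, so the revenue term is integrable. The four terms of \eqref{eq:main-problem} are, up to constants, the revenue term, $A_\rho^+$, $A_\rho^-$ and $e^{-\rho T}Y_T^\xi$, and each is now integrable.

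For (iii), the upper bound follows by taking the supremum in \eqref{eq:Z-upper-bound} and using Assumption~\ref{as:global}(ii). For the lower bound, take the null control $\xi^\pm\equiv0$, which lies in $\Xi(y)$. Its payoff is $\mathbb E_x[\int_0^Te^{-\rho t}\Pi(X_t,y)\,dt]+e^{-\rho T}K^-y$, which is finite by \eqref{eq:baseline-revenue-integrability}, so $\mathfrak U>-\infty$. The final remark follows because $\mathfrak U$ is finite and is attained as a supremum over controls with finite payoff, so controls with ${\cal J}=-\infty$ can be discarded.
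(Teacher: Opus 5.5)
Your proof is correct and essentially the paper's argument: you split $Z(\xi)$ into a positive part dominated by $\int_0^T e^{-\rho t}\Psi(X_t)\,dt$ and a negative part controlled by the spread cost $(K^+-K^-)A_\rho^+(\xi)$, get (ii) from the first line of \eqref{eq:discounted-capacity-ibp}, and get (iii) from the no-adjustment control. The only minor difference is how you bound the capacity integral. You use that same identity to get $\rho\int_0^T e^{-\rho t}Y_t^\xi\,dt\le y+A_\rho^+(\xi)$, whereas the paper uses the pathwise bound $Y_t^\xi\le y+e^{\rho T}A_\rho^+(\xi)$; both work, and yours has a slightly sharper constant.
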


\begin{proof}
Write
\[
I_\Psi:=\int_0^T e^{-\rho t}\Psi(X_t)\,dt
\]
and \(\Delta K:=K^+-K^->0\), and split \(Z(\xi)=K^-y+U^\xi-V^\xi\) with
\[
    F_t^\xi:=\Pi(X_t,Y_t^\xi)-\rho K^-Y_t^\xi,
    \qquad
    U^\xi:=\int_0^T e^{-\rho t}(F_t^\xi)^+\,dt,
    \qquad
    V^\xi:=\int_0^T e^{-\rho t}(F_t^\xi)^-\,dt+\Delta K A_\rho^+(\xi).
\]
As in \eqref{eq:Z-upper-bound}, \(0\le U^\xi\le I_\Psi\), so
\(\mathbb E_x[U^\xi]<\infty\) by
Assumption~\ref{as:global}\textnormal{(ii)}, and the condition
\({\cal J}>-\infty\) is equivalent to \(\mathbb E_x[V^\xi]<\infty\).

We first prove \textnormal{(i)}.  Since
\(0\le \Delta K A_\rho^+(\xi)\le V^\xi\), finiteness of
\(\mathbb E_x[V^\xi]\) forces \(\mathbb E_x[A_\rho^+(\xi)]<\infty\).
Conversely, \(\Pi\ge0\) gives \((F_t^\xi)^-\le\rho K^-Y_t^\xi\), while
\(A_\rho^+(\xi)\ge e^{-\rho T}\xi_T^+\) gives the pathwise bound
\begin{equation}\label{eq:capacity-pathwise-bound}
    0<Y_t^\xi\le y+\xi_T^+\le y+e^{\rho T}A_\rho^+(\xi),
    \qquad t\in[0,T].
\end{equation}
Hence
\begin{equation}\label{eq:V-upper-bound}
    V^\xi
    \le K^-\bigl(1-e^{-\rho T}\bigr)
        \bigl(y+e^{\rho T}A_\rho^+(\xi)\bigr)
      +\Delta K A_\rho^+(\xi),
\end{equation}
so that \(\mathbb E_x[A_\rho^+(\xi)]<\infty\) implies
\(\mathbb E_x[V^\xi]<\infty\).  This proves \textnormal{(i)}, and
\eqref{eq:capacity-pathwise-bound} also gives
\begin{equation}\label{eq:capacity-time-integrability}
    \mathbb E_x\left[\int_0^T e^{-\rho t}Y_t^\xi\,dt\right]<\infty .
\end{equation}

For \textnormal{(ii)}, the first identity in
\eqref{eq:discounted-capacity-ibp}, together with \(Y^\xi>0\), yields the
pathwise bounds
\begin{equation}\label{eq:downward-terminal-from-upward}
    A_\rho^-(\xi)\le y+A_\rho^+(\xi),
    \qquad
    e^{-\rho T}Y_T^\xi\le y+A_\rho^+(\xi),
\end{equation}
which give the first two assertions in
\eqref{eq:finite-payoff-control-integrability}.  The third follows from the
envelope bound \eqref{eq:revenue-envelope-bound} together with
\eqref{eq:capacity-time-integrability}.  Each of the four terms in
\eqref{eq:main-problem} is then separately integrable.

For \textnormal{(iii)}, the no-adjustment control has finite payoff by
\eqref{eq:baseline-revenue-integrability}, so \(\mathfrak U(x,y)>-\infty\),
while for every \(\xi\in\Xi(y)\),
\[
    {\cal J}(x,y;\xi^+,\xi^-)
    \le
    K^-y+\mathbb E_x[U^\xi]
    \le K^-y+\mathbb E_x[I_\Psi]<\infty .
\]
\end{proof}

\subsection{Switching controls and level-wise payoffs}

A switching control is a sequence
\(\eta=(\tau_n,\kappa_n)_{n\ge0}\) with the following properties.  Each
\(\tau_n\) is an \(\mathbb F\)-stopping time taking values in \([0,T]\).
The variable \(\kappa_0\) is the initial regime before a possible time-zero
adjustment, whereas \(\kappa_n\), \(n\ge1\), is the regime immediately after
the switch at \(\tau_n\).  We use \(T\) as a cemetery time and impose
\[
    0=\tau_0\le\tau_1\le T,
    \qquad
    \tau_{n+1}>\tau_n
    \quad\text{on }\{\tau_n<T\},\quad n\ge1,
\]
and \(\tau_{n+1}=T\) on \(\{\tau_n=T\}\).  Moreover,
\(\kappa_n\in\{0,1\}\) is \({\cal F}_{\tau_n}\)-measurable and
\[
    \kappa_n=1-\kappa_{n-1}
    \quad\text{on }\{\tau_n<T\},
    \qquad n\ge1.
\]
Thus \(\tau_1=\tau_0=0\) is allowed, but after this possible initial switch
all switching times before \(T\) are strictly increasing.  No switch is made
at \(T\).

The admissible switching class is
\begin{equation}\label{eq:Gamma set condition}
\Gamma
:=
\left\{
\eta=(\tau_n,\kappa_n)_{n\ge0}:
\mathbb E_x\left[
\sum_{n\ge1}e^{-\rho\tau_n}{\bf 1}_{\{\tau_n<T\}}
\right]<\infty
\right\},
\end{equation}
where the structural conditions above are understood.
Although the admissible class depends on the initial state
\(x\) through \(\mathbb E_x\), we simply write \(\Gamma\) rather than
\(\Gamma(x)\).  Every
\(\eta\in\Gamma\) has only finitely many switches before \(T\), almost surely,
because
\[
e^{-\rho T}\,
\mathbb E_x\left[
\sum_{n\ge1}{\bf 1}_{\{\tau_n<T\}}
\right]
\le
\mathbb E_x\left[
\sum_{n\ge1}e^{-\rho\tau_n}{\bf 1}_{\{\tau_n<T\}}
\right]<\infty.
\]

Given \(\eta\in\Gamma\), its LCRL regime indicator is
\begin{equation}\label{eq:regime_indcator}
\zeta_0:=\kappa_0,
\qquad
\zeta_t
:=
\sum_{n=0}^{\infty}
\kappa_n{\bf 1}_{\{\tau_n<t\le\tau_{n+1}\}},
\qquad 0<t\le T.
\end{equation}
At a positive switching time \(\tau_n<T\),
\(\zeta_{\tau_n}=\kappa_{n-1}\) and
\(\zeta_{\tau_n+}=\kappa_n\).  If \(\tau_1=0\), then
\(\zeta_0=\kappa_0\) and \(\zeta_{0+}=\kappa_1\).  Thus the switching cost is
charged at \(\tau_n\), while the new regime applies immediately to its right.

For a fixed level \(z>0\) and \(\eta\in\Gamma\), define
\begin{equation}\label{eq:P-plus-minus}
\begin{aligned}
{\cal P}^+(x,z;\eta)
:=
\mathbb E_x\bigg[
&\int_0^T e^{-\rho t}\pi(X_t,z)\zeta_t\,dt
+e^{-\rho T}K^-\zeta_T\\
&+\sum_{n=1}^{\infty}e^{-\rho\tau_n}
\Bigl(
-K^+{\bf 1}_{\{\kappa_n=1\}}
+K^-{\bf 1}_{\{\kappa_n=0\}}
\Bigr){\bf 1}_{\{\tau_n<T\}}
\bigg],
\end{aligned}
\end{equation}
and
\begin{equation}\label{eq:P-minus}
\begin{aligned}
{\cal P}^-(x,z;\eta)
:=
\mathbb E_x\bigg[
&\int_0^T e^{-\rho t}\pi(X_t,z)(\zeta_t-1)\,dt
+e^{-\rho T}K^-(\zeta_T-1)\\
&+\sum_{n=1}^{\infty}e^{-\rho\tau_n}
\Bigl(
-K^+{\bf 1}_{\{\kappa_n=1\}}
+K^-{\bf 1}_{\{\kappa_n=0\}}
\Bigr){\bf 1}_{\{\tau_n<T\}}
\bigg].
\end{aligned}
\end{equation}
Regime \(1\) means that level \(z\) is active, while regime \(0\) means that
it is inactive.  The functional \({\cal P}^+\) is measured from the inactive
benchmark and \({\cal P}^-\) from the active benchmark.  These functionals are
finite for every \(\eta\in\Gamma\), by Assumption~\ref{as:global} and the
definition of \(\Gamma\).

\begin{pr}\label{pr:OSP}
For \(z>0\) and \(j\in\{0,1\}\), define
\[
{\bf P}_j^+(x,z)
:=
\sup_{\eta\in\Gamma,\,\kappa_0=j}{\cal P}^+(x,z;\eta),
\qquad
{\bf P}_j^-(x,z)
:=
\sup_{\eta\in\Gamma,\,\kappa_0=j}{\cal P}^-(x,z;\eta).
\]
\end{pr}

The two switching problems differ by an additive benchmark independent of
the switching rule.  Define the always-active, no-switching benchmark
\begin{equation}\label{eq:R0-switching-benchmark}
{\cal R}_0(t,x,z)
:=
\mathbb E_{t,x}\bigg[
\int_t^T e^{-\rho(s-t)}\pi(X_s,z)\,ds
+e^{-\rho(T-t)}K^-
\bigg].
\end{equation}
Then, for every \(\eta\in\Gamma\),
\begin{equation}\label{eq:plus-minus-benchmark}
{\cal P}^+(x,z;\eta)-{\cal P}^-(x,z;\eta)
={\cal R}_0(0,x,z).
\end{equation}
Consequently, the \(+\)- and \(-\)-problems have the same optimizers and
\begin{equation}\label{eq:Pplus-Pminus-values}
{\bf P}_j^-(x,z)
={\bf P}_j^+(x,z)-{\cal R}_0(0,x,z),
\qquad j=0,1.
\end{equation}

An elementary pairing of consecutive switches shows, pathwise, that the
switching-cost sum together with the terminal term is bounded above by
\(K^-\kappa_0\). Consequently,
\[
0\le {\bf P}_0^+(x,z)
\le {\cal R}_0(0,x,z)-e^{-\rho T}K^-,
\qquad
0\le {\bf P}_1^-(x,z)
\le K^-\bigl(1-e^{-\rho T}\bigr),
\]
so the switching values are finite. Their Borel measurability in \(z\)
follows from the standard finite-switch approximation.

\subsection{Level-crossing decomposition of singular controls}

Fix \(\xi\in\Xi(y)\) with
\({\cal J}(x,y;\xi^+,\xi^-)>-\infty\).  The level-set
change-of-variables theorem for finite-variation paths applies pathwise for
almost every level.  Tonelli's theorem therefore provides a deterministic
Lebesgue-null set \(N_\xi\subset(0,\infty)\) such that, for every
\(z\notin N_\xi\), the process
\begin{equation}\label{eq:singular-level-indicator}
\begin{aligned}
\zeta_0^\xi(z)&:={\bf 1}_{\{y\ge z\}},\qquad
\zeta_t^\xi(z)&:=\lim_{s\uparrow t}{\bf 1}_{\{Y_s^\xi>z\}},
&&0<t\le T,
\end{aligned}
\end{equation}
is an LCRL finite-variation \(\{0,1\}\)-valued process almost surely; see
\citet[Proposition~2.9]{Guo2008}.  Its right-jump times and successive regimes
define a switching control \(\eta^\xi(z)\).  If
\(\zeta_{0+}^\xi(z)\ne\zeta_0^\xi(z)\), its first switch is recorded as
\(\tau_1^\xi(z)=0\).  No switch is introduced at \(T\).  On \(N_\xi\), use
the no-switch rule with initial regime \({\bf 1}_{\{y\ge z\}}\).

Applying the change-of-variables formula of
\citet[Lemma~2.13]{Guo2008}
to the constant extensions of \(\xi^+\) and \(\xi^-\)
beyond \(T\), with the c\`adl\`ag integrand
\[
    g(t)=e^{-\rho t}{\bf 1}_{\{t<T\}},
    \qquad t\in[0,\infty),
    \qquad g(\infty)=0,
\]
gives
\begin{equation}\label{eq:change-variable}
\begin{aligned}
A_\rho^+(\xi)
&=
\int_0^\infty
\sum_{\substack{n\ge1\\ \kappa_n^\xi(z)=1}}
e^{-\rho\tau_n^\xi(z)}
{\bf 1}_{\{\tau_n^\xi(z)<T\}}\,dz,\qquad
A_\rho^-(\xi)
&=
\int_0^\infty
\sum_{\substack{n\ge1\\ \kappa_n^\xi(z)=0}}
e^{-\rho\tau_n^\xi(z)}
{\bf 1}_{\{\tau_n^\xi(z)<T\}}\,dz.
\end{aligned}
\end{equation}
Adding the two identities and applying Tonelli's theorem yields
\begin{equation}\label{eq:aggregate-switch-count}
\begin{aligned}
&\int_0^\infty
\mathbb E_x\bigg[
\sum_{n\ge1}e^{-\rho\tau_n^\xi(z)}
{\bf 1}_{\{\tau_n^\xi(z)<T\}}
\bigg]dz
=
\mathbb E_x\bigl[A_\rho^+(\xi)+A_\rho^-(\xi)\bigr]
<\infty,
\end{aligned}
\end{equation}
where the last inequality follows from
Lemma~\ref{lem:finite-payoff-integrability}. Consequently,
\(\eta^\xi(z)\in\Gamma\) for Lebesgue-almost every \(z>0\).

Since \(Y^\xi\) is left-continuous, the terminal indicator in
\eqref{eq:singular-level-indicator} agrees with
\({\bf 1}_{\{Y_T^\xi>z\}}\) for Lebesgue-almost every \(z>0\).  The
layer-cake identities therefore give
\begin{equation}\label{eq:zeta_terminal}
\begin{aligned}
e^{-\rho T}K^-Y_T^\xi
&=e^{-\rho T}K^-y
+\int_0^y e^{-\rho T}K^-
\bigl(\zeta_T^\xi(z)-1\bigr)\,dz\\
&\quad
+\int_y^\infty e^{-\rho T}K^-\zeta_T^\xi(z)\,dz,
\end{aligned}
\end{equation}
and, for \(dt\otimes d\mathbb P\)-almost every \((t,\omega)\),
\begin{equation}\label{eq:zeta-t}
\begin{aligned}
e^{-\rho t}\Pi(X_t,Y_t^\xi)
&=e^{-\rho t}\Pi(X_t,y)
+\int_0^y e^{-\rho t}\pi(X_t,z)
\bigl(\zeta_t^\xi(z)-1\bigr)\,dz\\
&\quad
+\int_y^\infty e^{-\rho t}\pi(X_t,z)
\zeta_t^\xi(z)\,dz.
\end{aligned}
\end{equation}
The required absolute integrability follows from
\begin{align*}
&\int_y^\infty \pi(X_t,z)\zeta_t^\xi(z)\,dz
+\int_0^y \pi(X_t,z)
\bigl(1-\zeta_t^\xi(z)\bigr)\,dz
=\bigl|\Pi(X_t,Y_t^\xi)-\Pi(X_t,y)\bigr|,\\
&\int_y^\infty\zeta_T^\xi(z)\,dz
+\int_0^y\bigl(1-\zeta_T^\xi(z)\bigr)\,dz
=|Y_T^\xi-y|,
\end{align*}
Lemma~\ref{lem:finite-payoff-integrability}, and
\eqref{eq:aggregate-switch-count}.  Hence Tonelli--Fubini applies, and
\eqref{eq:change-variable}--\eqref{eq:zeta-t} yield
\begin{equation}\label{eq:Guo-decomposition}
\begin{aligned}
{\cal J}(x,y;\xi^+,\xi^-)
&=
\mathbb E_x\bigg[
\int_0^T e^{-\rho t}\Pi(X_t,y)\,dt
+e^{-\rho T}K^-y
\bigg]\\
&\quad
+\int_y^\infty
{\cal P}^+(x,z;\eta^\xi(z))\,dz
+\int_0^y
{\cal P}^-(x,z;\eta^\xi(z))\,dz.
\end{aligned}
\end{equation}
In particular,
\begin{equation}\label{eq:singular-switching-upper-bound}
\begin{aligned}
{\cal J}(x,y;\xi^+,\xi^-)
&\le
\mathbb E_x\bigg[
\int_0^T e^{-\rho t}\Pi(X_t,y)\,dt
+e^{-\rho T}K^-y
\bigg]\\
&\quad
+\int_y^\infty{\bf P}_0^+(x,z)\,dz
+\int_0^y{\bf P}_1^-(x,z)\,dz.
\end{aligned}
\end{equation}
Controls with payoff \(-\infty\) satisfy this upper bound trivially and are
irrelevant by Lemma~\ref{lem:finite-payoff-integrability}.  If a control in
\(\Xi(y)\) has finite payoff and its consistent level-crossing family attains
the two switching values for almost every level, then equality holds in
\eqref{eq:singular-switching-upper-bound}.  The construction and verification
of such a control are carried out in Section~\ref{sec:verification}.

\subsection{Reduction to a double-obstacle problem}

For \(0\le t\le T\), let
\({\cal Q}_j^+(t,x,z)\) denote the time-\(t\) version of
\({\bf P}_j^+(x,z)\), with switching restricted to \([t,T]\) and
discounting relative to \(t\). Thus
\({\cal Q}_j^+(0,x,z)={\bf P}_j^+(x,z)\). The HJB reduction below is used
at this stage to identify the scalar candidate problem; the value-function
identification is verified in Section~\ref{sec:verification}.

For fixed \(z>0\), the HJB system associated with the \(+\)-switching problem
is stated in terms of the discounted infinitesimal generator of \(X\),
\begin{equation}\label{eq:discounted-generator}
{\cal L} f(x)
=
\frac12
\sum_{i,j=1}^n
a_{ij}x_i x_j\,\partial_{x_i x_j}f(x)
+
\sum_{i=1}^n
\mu_i x_i\,\partial_{x_i}f(x)
-\rho f(x),
\end{equation}
whose zeroth-order term already contains \(-\rho\).  The system reads
\begin{equation}\label{eq:system-VI}
\begin{aligned}
&\max\Bigl\{
\partial_t{\cal Q}_0^+ +{\cal L}{\cal Q}_0^+,
{\cal Q}_1^+-{\cal Q}_0^+-K^+
\Bigr\}=0,\\[1mm]
&\max\Bigl\{
\partial_t{\cal Q}_1^+ +{\cal L}{\cal Q}_1^++\pi(x,z),
{\cal Q}_0^+-{\cal Q}_1^++K^-
\Bigr\}=0,
\end{aligned}
\end{equation}
on \({\cal D}_T=[0,T)\times(0,\infty)^n\), where the value functions are
evaluated at \((t,x,z)\), with terminal conditions
\begin{equation}\label{eq:switching-terminal-conditions}
{\cal Q}_0^+(T,x,z)=0,
\qquad
{\cal Q}_1^+(T,x,z)=K^-.
\end{equation}

Suppose a sufficiently regular pair
\((\mathcal Q_0^+,\mathcal Q_1^+)\) satisfies
\eqref{eq:system-VI}--\eqref{eq:switching-terminal-conditions}, and define
\[
    {\cal Q}:={\cal Q}_1^+-{\cal Q}_0^+.
\]
Then the obstacle inequalities and complementarity conditions imply
\(K^-\le{\cal Q}\le K^+\) and yield the scalar double-obstacle problem
\begin{equation}\label{eq:double-obs-n-dim}
\begin{cases}
\partial_t{\cal Q}+{\cal L}{\cal Q}+\pi(x,z)=0,
&K^-<{\cal Q}<K^+,\\[1mm]
\partial_t{\cal Q}+{\cal L}{\cal Q}+\pi(x,z)\ge0,
&{\cal Q}=K^+,\\[1mm]
\partial_t{\cal Q}+{\cal L}{\cal Q}+\pi(x,z)\le0,
&{\cal Q}=K^-,\\[1mm]
{\cal Q}(T,x,z)=K^-.
\end{cases}
\end{equation}
The upper obstacle corresponds to the cost of switching from regime \(0\) to
regime \(1\), and the lower obstacle to the salvage value received when
switching from regime \(1\) to regime \(0\).

The implication above is one-way: the difference \({\cal Q}\) does not by
itself determine the common additive component of
\((\mathcal Q_0^+,\mathcal Q_1^+)\).  Section~\ref{sec:verification} recovers
the pair from the strong solution of \eqref{eq:double-obs-n-dim} by
Feynman--Kac potentials and then verifies both the switching problems and the
original singular control problem.

\section{Parabolic Double Obstacle Problem}\label{sec:pdo}

We first fix $z>0$ and treat it as a parameter.  When no confusion can
arise, we write ${\cal Q}(t,x)$ and $\pi(x)$ in place of
${\cal Q}(t,x,z)$ and $\pi(x,z)$.

Let $(\Omega_N)_{N\geq1}$ be a nested sequence of smooth bounded
domains such that $\overline{\Omega_N}\Subset(0,\infty)^n$ and
$\bigcup_{N\geq1}\Omega_N=(0,\infty)^n$, and set
${\cal D}_T^N=[0,T)\times\Omega_N$. We consider the bounded-domain problem
\begin{equation}\label{eq:double-bdd}
\begin{cases}
 \partial_t{\cal Q}^N+{\cal L}{\cal Q}^N+\pi=0
     &\text{on }\{K^-<{\cal Q}^N<K^+\},\\
 \partial_t{\cal Q}^N+{\cal L}{\cal Q}^N+\pi\geq0
     &\text{on }\{{\cal Q}^N=K^+\},\\
 \partial_t{\cal Q}^N+{\cal L}{\cal Q}^N+\pi\leq0
     &\text{on }\{{\cal Q}^N=K^-\},\\
 {\cal Q}^N=K^-&\text{on }([0,T)\times\partial\Omega_N)
                         \cup(\{T\}\times\Omega_N).
\end{cases}
\end{equation}
The artificial lateral condition is used only to construct an interior limit;
no boundary condition is imposed on the state space $(0,\infty)^n$.

Set
\[
 C_1=\rho K^-,\qquad C_{2,N}=\|\pi\|_{L^\infty(\Omega_N)}.
\]
Choose smooth penalty functions $\beta_1$ and $\beta_2$ satisfying
\begin{equation*}
\begin{aligned}
 &\beta_1\leq0,\quad \beta_1'\geq0,\quad \beta_1''\leq0,
 \quad \beta_1(0)=-C_1,\quad \beta_1(\xi)=0\quad(\xi>1),\\
 &\beta_2\geq0,\quad \beta_2'\geq0,\quad \beta_2''\geq0,
 \quad \beta_2(0)=C_{2,N},\quad \beta_2(\xi)=0\quad(\xi<-1).
\end{aligned}
\end{equation*}
For $0<\varepsilon<K^+-K^-$, let
$\beta_{i,\varepsilon}(\xi)=\beta_i(\xi/\varepsilon)$ and consider
\begin{equation}\label{eq:penal}
\begin{cases}
 \partial_t{\cal Q}^{N,\varepsilon}+{\cal L}{\cal Q}^{N,\varepsilon}
       +\pi
 =\beta_{1,\varepsilon}({\cal Q}^{N,\varepsilon}-K^-)
  +\beta_{2,\varepsilon}({\cal Q}^{N,\varepsilon}-K^+)
       &\text{in }{\cal D}_T^N,\\
 {\cal Q}^{N,\varepsilon}=K^-&\text{on }
 ([0,T)\times\partial\Omega_N)\cup(\{T\}\times\Omega_N).
\end{cases}
\end{equation}

\begin{lem}\label{lem:QNe}
For every $N\geq1$, $0<\varepsilon<K^+-K^-$, and $1<p<\infty$,
problem~\eqref{eq:penal} has a unique solution
${\cal Q}^{N,\varepsilon}\in W^{1,2}_p({\cal D}_T^N)$.  Moreover,
\[
 \partial_t{\cal Q}^{N,\varepsilon}\leq0,\qquad
 K^-\leq{\cal Q}^{N,\varepsilon}\leq K^+
 \quad\text{a.e. in }{\cal D}_T^N,
\]
and
\begin{equation}\label{eq:W12-QNe}
 \|{\cal Q}^{N,\varepsilon}\|_{W^{1,2}_p({\cal D}_T^N)}
 \leq C_{N,p}\bigl(C_{2,N}+K^+\bigr),
\end{equation}
where $C_{N,p}$ is independent of $\varepsilon$.
\end{lem}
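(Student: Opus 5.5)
Since $\overline{\Omega_N}\Subset(0,\infty)^n$, the operator ${\cal L}$ is uniformly parabolic on ${\cal D}_T^N$ with smooth bounded coefficients, because $a$ is positive definite and $x_i$ is bounded above and away from zero. We therefore treat \eqref{eq:penal} as a semilinear backward parabolic Cauchy--Dirichlet problem with a smooth, globally Lipschitz nonlinearity $\beta_{1,\varepsilon}+\beta_{2,\varepsilon}$ and proceed in four steps.

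\emph{Existence and uniqueness.} We reverse time and use a Schauder (or Leray--Schauder) fixed point in $W^{1,2}_p$. For $w$ in a ball of $L^p$, solve the linear problem with right-hand side $\beta_{1,\varepsilon}(w-K^-)+\beta_{2,\varepsilon}(w-K^+)-\pi$. The data are bounded, since $\beta_{1,\varepsilon}\in[-C_1,0]$ and $\beta_{2,\varepsilon}\ge0$ are bounded once the argument is truncated, and the boundary data $K^-$ are compatible, so $L^p$ theory gives a solution in $W^{1,2}_p$. Compactness of $W^{1,2}_p\hookrightarrow L^p$ then yields a fixed point. Alternatively, since the nonlinearity is Lipschitz, a contraction on short time intervals iterated backward also works. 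For uniqueness, take the difference of two solutions; it satisfies a linear equation with bounded zeroth-order coefficient $c=-\int\beta'\le0$ (the sign is favorable because $\beta_i'\ge0$ and the equation is backward). The Aleksandrov--Bakelman--Pucci/weak maximum principle for $W^{1,2}_n$ solutions then gives zero.

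\emph{Bounds $K^-\le{\cal Q}^{N,\varepsilon}\le K^+$.} We use comparison with constants. The constant $K^-$ is a subsolution: ${\cal L}K^-+\pi=-\rho K^-+\pi\ge -C_1=\beta_{1,\varepsilon}(0)$, which holds provided $\pi\ge0$ (Assumption~\ref{as:global}(i) gives $\Pi$ nondecreasing in $y$), and $\beta_{2,\varepsilon}(K^--K^+)=0$ since $\varepsilon<K^+-K^-$. The constant $K^+$ is a supersolution: ${\cal L}K^++\pi=-\rho K^++\pi\le C_{2,N}=\beta_{2,\varepsilon}(0)$, with $\beta_{1,\varepsilon}(K^+-K^-)=0$. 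On the parabolic boundary the solution equals $K^-\in[K^-,K^+]$. Since the nonlinearity is nondecreasing in ${\cal Q}$, the comparison principle for strong solutions applies.

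\emph{Monotonicity $\partial_t{\cal Q}^{N,\varepsilon}\le0$.} This is the step we expect to be the main obstacle, because differentiating in $t$ requires extra regularity and the terminal layer must be handled. We would compare ${\cal Q}^{N,\varepsilon}(t,x)$ with its shift $\tilde{\cal Q}(t,x):={\cal Q}^{N,\varepsilon}(t-h,x)$ for $h>0$ on $[h,T)\times\Omega_N$. Since the equation is autonomous, $\tilde{\cal Q}$ solves the same PDE. On the lateral boundary both functions equal $K^-$. At $t=T$ we have $\tilde{\cal Q}(T,\cdot)={\cal Q}^{N,\varepsilon}(T-h,\cdot)\ge K^-={\cal Q}^{N,\varepsilon}(T,\cdot)$ by the lower bound just proved. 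Comparison then gives ${\cal Q}^{N,\varepsilon}(t-h,x)\ge{\cal Q}^{N,\varepsilon}(t,x)$, hence $\partial_t{\cal Q}^{N,\varepsilon}\le0$ a.e. This shift argument avoids differentiating the equation.

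\emph{Uniform estimate \eqref{eq:W12-QNe}.} The key observation is that the penalty term is bounded independently of $\varepsilon$. Because $K^-\le{\cal Q}^{N,\varepsilon}\le K^+$, monotonicity of the $\beta_i$ gives
\[
-C_1\le\beta_{1,\varepsilon}({\cal Q}^{N,\varepsilon}-K^-)\le0,\qquad 0\le\beta_{2,\varepsilon}({\cal Q}^{N,\varepsilon}-K^+)\le C_{2,N}.
\]
Writing the equation as a linear one for ${\cal Q}^{N,\varepsilon}-K^-$ with bounded right-hand side of size at most $C(C_{2,N}+K^+)$ (note $C_1=\rho K^-\le\rho K^+$), the global $W^{1,2}_p$ estimate on the smooth bounded cylinder gives \eqref{eq:W12-QNe}. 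The constant depends only on $N$, $p$, $T$ and the coefficients, not on $\varepsilon$.
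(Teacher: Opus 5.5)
Your proof is correct and follows the paper's route: solve a truncated semilinear problem, get $K^-\le\mathcal Q^{N,\varepsilon}\le K^+$ by comparison with the constants $K^\pm$ (using $\beta_{1,\varepsilon}(0)=-\rho K^-$, $\beta_{2,\varepsilon}(0)=C_{2,N}$, $\pi\ge0$ and $\varepsilon<K^+-K^-$), prove $\partial_t\mathcal Q^{N,\varepsilon}\le0$ by a time-shift comparison (you shift backward and the paper forward, which is equivalent), and apply the global $W^{1,2}_p$ estimate with $\varepsilon$-independent bounds on the penalty terms. One point to tighten: $\beta_1$ and $\beta_2$ are unbounded and in general not globally Lipschitz, so you should state that your constant-comparison step applies to the truncated problem and hence removes the truncation, as the paper does explicitly with its $\pm M$ cutoff.
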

\begin{proof}
Since $\overline{\Omega_N}\Subset(0,\infty)^n$ and $a$ is positive
definite, ${\cal L}$ is uniformly elliptic on $\Omega_N$.
Hence~\eqref{eq:penal} is a uniformly parabolic semilinear Dirichlet problem
on the smooth bounded cylinder ${\cal D}_T^N$. Truncate the two
penalties at levels $\pm M$, with $M>\max\{C_1,C_{2,N}\}$.  Standard
semilinear parabolic theory gives a unique $W^{1,2}_p$ solution of the
truncated problem.

The parabolic maximum principle gives bounds independent of $M$.
Indeed, an interior minimum below $K^-$ would give
$-\rho K^-<\partial_t{\cal Q}^{N,\varepsilon,M}
+{\cal L}{\cal Q}^{N,\varepsilon,M}+\pi
\leq\beta_{1,\varepsilon,M}(0)=-\rho K^-$, a contradiction.
The corresponding maximum argument, using
$\pi\leq C_{2,N}$, rules out values above $K^+$.  Consequently the truncated
penalties never reach $\pm M$, so the solution solves~\eqref{eq:penal}.  On the
interval $[K^-,K^+]$ the two penalty terms are bounded respectively by $C_1$
and $C_{2,N}$.  The global $W^{1,2}_p$ estimate on the smooth cylinder now
gives~\eqref{eq:W12-QNe}; comparison gives uniqueness.

It remains to prove the time monotonicity without differentiating at the
terminal--lateral corner.  For $h\in(0,T)$ set
${\cal Q}_h(t,x)={\cal Q}^{N,\varepsilon}(t+h,x)$ on
$[0,T-h]\times\Omega_N$.  The equation is autonomous.
On the lateral boundary ${\cal Q}_h={\cal Q}^{N,\varepsilon}=K^-$, while at
$t=T-h$, ${\cal Q}_h(T-h,\cdot)=K^-\leq
{\cal Q}^{N,\varepsilon}(T-h,\cdot)$.
Since $\beta_{1,\varepsilon}+\beta_{2,\varepsilon}$ is nondecreasing, the
comparison principle yields ${\cal Q}_h\leq{\cal Q}^{N,\varepsilon}$.  Dividing
by $h$ and letting $h\downarrow0$ proves
$\partial_t{\cal Q}^{N,\varepsilon}\leq0$ a.e.

\end{proof}

\begin{pro}\label{lem:QN}
For every $N\geq1$ and $1<p<\infty$, problem~\eqref{eq:double-bdd} has a unique
solution ${\cal Q}^N\in W^{1,2}_p({\cal D}_T^N)$.  Moreover,
\begin{equation}\label{eq:mono-QN}
 \partial_t{\cal Q}^N\leq0,\qquad K^-\leq{\cal Q}^N\leq K^+
 \quad\text{a.e. in }{\cal D}_T^N.
\end{equation}
\end{pro}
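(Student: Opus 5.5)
The plan is the standard penalty limit: pass $\varepsilon\downarrow0$ in Lemma~\ref{lem:QNe} using its $\varepsilon$-uniform $W^{1,2}_p$ bound, identify the limit as a strong solution of \eqref{eq:double-bdd}, and prove uniqueness by a comparison argument on the coincidence sets.

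\emph{Compactness.} Fix $N$ and take $p$ large. By \eqref{eq:W12-QNe}, $\{{\cal Q}^{N,\varepsilon}\}_\varepsilon$ is bounded in $W^{1,2}_p({\cal D}_T^N)$, uniformly in $\varepsilon$. Weak compactness, together with the parabolic Sobolev embedding into $C^{\alpha,\alpha/2}(\overline{{\cal D}_T^N})$ for $p>(n+2)/2$, yields a subsequence with ${\cal Q}^{N,\varepsilon}\rightharpoonup{\cal Q}^N$ weakly in $W^{1,2}_p$ and uniformly on $\overline{{\cal D}_T^N}$. A diagonal argument over $p$ lets the same limit work for every $1<p<\infty$. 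Uniform convergence preserves the boundary data ${\cal Q}^N=K^-$ on the parabolic boundary. Weak convergence preserves $\partial_t{\cal Q}^N\le0$ and $K^-\le{\cal Q}^N\le K^+$ a.e., because these are closed convex constraints. This gives \eqref{eq:mono-QN}.

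\emph{Identification of the limit.} Set $f_\varepsilon:=\beta_{1,\varepsilon}({\cal Q}^{N,\varepsilon}-K^-)+\beta_{2,\varepsilon}({\cal Q}^{N,\varepsilon}-K^+)$. This equals $\partial_t{\cal Q}^{N,\varepsilon}+{\cal L}{\cal Q}^{N,\varepsilon}+\pi$ and converges weakly in $L^p$ to $f:=\partial_t{\cal Q}^N+{\cal L}{\cal Q}^N+\pi$. Since $\varepsilon<K^+-K^-$, at most one penalty is active at each point. The first is nonzero only where ${\cal Q}^{N,\varepsilon}<K^-+\varepsilon$, with values in $[-C_1,0]$. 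The second is nonzero only where ${\cal Q}^{N,\varepsilon}>K^+-\varepsilon$, with values in $[0,C_{2,N}]$.
\begin{itemize}
\item On the open set $\{K^-<{\cal Q}^N<K^+\}$, uniform convergence puts ${\cal Q}^{N,\varepsilon}$ in $(K^-+\varepsilon,K^+-\varepsilon)$ on any compact subset for small $\varepsilon$. Hence $f_\varepsilon=0$ there, and so $f=0$ a.e. on this set.
\item Near $\{{\cal Q}^N=K^+\}$, eventually only $\beta_2$ can be active, so $f_\varepsilon\ge0$ on a neighbourhood. Testing the weak limit against nonnegative functions gives $f\ge0$ a.e. on $\{{\cal Q}^N=K^+\}$.
\item The set $\{{\cal Q}^N=K^-\}$ is handled symmetrically and gives $f\le0$.
\end{itemize}
Pointwise sign information on a closed set needs care under weak convergence. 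I would handle it via the open sets $\{{\cal Q}^N>K^-+\delta\}$ and $\{{\cal Q}^N<K^+-\delta\}$, on which one penalty eventually vanishes identically, and then let $\delta\downarrow0$.

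\emph{Uniqueness.} Let ${\cal Q}_1,{\cal Q}_2$ be two $W^{1,2}_p$ strong solutions and suppose $w={\cal Q}_1-{\cal Q}_2$ has a positive maximum. Consider the open set $\{w>0\}$. On it, ${\cal Q}_1>K^-$, so $\partial_t{\cal Q}_1+{\cal L}{\cal Q}_1+\pi\ge0$; and ${\cal Q}_2<K^+$, so $\partial_t{\cal Q}_2+{\cal L}{\cal Q}_2+\pi\le0$. Hence $\partial_tw+{\cal L}w\ge0$ on $\{w>0\}$, with $w=0$ on the parabolic boundary of that set. The Aleksandrov--Bakel'man--Pucci (Krylov--Tso) maximum principle for $W^{1,2}_{n+1}$ functions, together with the zeroth-order term $-\rho w<0$, forces $w\le0$, a contradiction. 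Exchanging the roles gives $w\ge0$, so ${\cal Q}_1={\cal Q}_2$; uniqueness also shows that the whole family ${\cal Q}^{N,\varepsilon}$ converges, not just a subsequence.

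The main obstacle is the identification step, specifically the one-sided inequalities on the coincidence sets. These sets may have positive measure and are not open, so weak $L^p$ convergence of $f_\varepsilon$ has to be combined carefully with the uniform convergence of ${\cal Q}^{N,\varepsilon}$.
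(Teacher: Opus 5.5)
Your proposal is correct and is essentially the paper's argument: the $\varepsilon$-uniform $W^{1,2}_p$ bound gives weak $W^{1,2}_p$ and uniform convergence. The one-sided inequalities come from the lower (resp.\ upper) penalty vanishing identically on $\{{\cal Q}^N>K^-+\delta\}$ (resp.\ $\{{\cal Q}^N<K^+-\delta\}$) for small $\varepsilon$, and uniqueness is the bounded-domain comparison on $\{{\cal Q}_1>{\cal Q}_2\}$. One harmless slip is that the two penalties have disjoint supports only once $\varepsilon<(K^+-K^-)/2$, not merely $\varepsilon<K^+-K^-$, which costs nothing since $\varepsilon\downarrow0$.
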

\begin{proof}
Since estimate~\eqref{eq:W12-QNe} holds for every $p$, choosing $p$
sufficiently large and using parabolic compactness give, along a sequence
$\varepsilon\downarrow0$,
${\cal Q}^{N,\varepsilon}\rightharpoonup{\cal Q}^N$ in
$W^{1,2}_p({\cal D}_T^N)$ and
${\cal Q}^{N,\varepsilon}\longrightarrow{\cal Q}^N$ in
$C^{0,1}_{\alpha}(\overline{{\cal D}_T^N})$ for every $0<\alpha<1$.
The bounds and time monotonicity pass to the limit.

On every compact subset of $\{{\cal Q}^N>K^-\}$, the lower penalty vanishes for
all sufficiently small $\varepsilon$; the upper penalty is nonnegative.
Similarly, on compact subsets of $\{{\cal Q}^N<K^+\}$, the upper penalty
vanishes and the lower penalty is nonpositive.  Passing to the weak limit gives
\begin{equation}\label{eq:double-bdd-vi}
\begin{cases}
 \partial_t{\cal Q}^N+{\cal L}{\cal Q}^N+\pi\geq0
       &\text{a.e. on }\{{\cal Q}^N>K^-\},\\
 \partial_t{\cal Q}^N+{\cal L}{\cal Q}^N+\pi\leq0
       &\text{a.e. on }\{{\cal Q}^N<K^+\}.
\end{cases}
\end{equation}
Hence ${\cal Q}^N$ solves~\eqref{eq:double-bdd}.  The uniform bounds on the
penalty terms pass to the limit, and hence
$-\rho K^-\leq\partial_t{\cal Q}^N+{\cal L}{\cal Q}^N+\pi
\leq C_{2,N}$ a.e. in ${\cal D}_T^N$.
The parabolic $L^p$ estimate therefore yields
${\cal Q}^N\in W^{1,2}_p({\cal D}_T^N)$ for every $1<p<\infty$.
Uniqueness follows from the bounded-domain comparison principle.
\end{proof}

\begin{thm}\label{thm:Q}
There exists a unique bounded strong solution
\[
 {\cal Q}\in W^{1,2}_{p,\mathrm{loc}}({\cal D}_T)
       \cap C([0,T]\times(0,\infty)^n)
       \quad\text{for every }1<p<\infty
\]
of~\eqref{eq:double-obs-n-dim}.
Moreover,
\begin{equation}\label{eq:mono-Q}
 \partial_t{\cal Q}\leq0,\qquad K^-\leq{\cal Q}\leq K^+
 \quad\text{a.e. in }{\cal D}_T.
\end{equation}
\end{thm}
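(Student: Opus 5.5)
We will obtain ${\cal Q}$ as the limit of the bounded-domain solutions ${\cal Q}^N$ of Proposition~\ref{lem:QN}, extended by $K^-$ outside ${\cal D}_T^N$ if convenient. The key steps, in order, are as follows.

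\emph{Monotonicity in $N$.} Fix $M$. For $N\ge M$, the function ${\cal Q}^{N+1}$ restricted to ${\cal D}_T^N$ solves the same double obstacle problem as ${\cal Q}^N$. Its lateral data satisfy ${\cal Q}^{N+1}\ge K^-={\cal Q}^N$, and the terminal data agree. We would prove this comparison at the penalized level. There ${\cal Q}^{N,\varepsilon}$ and ${\cal Q}^{N+1,\varepsilon}$ solve equations with the same nondecreasing penalty structure. The one subtlety is that $\beta_2$ was normalized with $\beta_2(0)=C_{2,N}$, which depends on $N$. To handle it, we would use the penalty with the larger constant $C_{2,N+1}$ on both domains; the limit ${\cal Q}^N$ does not depend on this choice, by uniqueness in Proposition~\ref{lem:QN}. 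The penalized comparison principle, combined with the uniform convergence $\varepsilon\downarrow0$, then gives ${\cal Q}^N\le{\cal Q}^{N+1}$ on ${\cal D}_T^N$. Hence ${\cal Q}^N\uparrow{\cal Q}$ pointwise, and $K^-\le{\cal Q}\le K^+$.

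\emph{Local estimates and passage to the limit.} Fix a compact set in the state space; more precisely, fix $M$ and take $N\ge M+1$. On ${\cal D}_T^{M+1}$ we have the uniform bounds
\[
-\rho K^-\le\partial_t{\cal Q}^N+{\cal L}{\cal Q}^N+\pi\le \|\pi\|_{L^\infty(\Omega_{M+1})}.
\]
These hold because the penalty limits are bounded by $C_1$ and by $\sup\pi$ on the relevant set. We can also see this directly from \eqref{eq:double-bdd}: on $\{{\cal Q}^N=K^+\}$ we have $\partial_t{\cal Q}^N+{\cal L}{\cal Q}^N=-\rho K^+$ a.e., and on $\{{\cal Q}^N=K^-\}$ it equals $-\rho K^-$.

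Together with $|{\cal Q}^N|\le K^+$, the interior parabolic $L^p$ estimates apply. They are interior in $x$ but reach up to $t=T$, where the data $K^-$ are smooth. They give $\|{\cal Q}^N\|_{W^{1,2}_p((0,T)\times\Omega_M)}\le C_{M,p}$ uniformly in $N$. A diagonal argument then yields weak $W^{1,2}_p$ convergence and local uniform convergence in $C^{0,1}_\alpha$ on each $[0,T]\times\overline{\Omega_M}$.

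The variational inequalities \eqref{eq:double-bdd-vi} pass to the limit exactly as in the proof of Proposition~\ref{lem:QN}. On compact subsets of $\{{\cal Q}>K^-\}$ we eventually have ${\cal Q}^N>K^-$, by uniform convergence, and similarly for the upper obstacle. So ${\cal Q}$ is a strong solution of \eqref{eq:double-obs-n-dim}. It is continuous up to $t=T$ with ${\cal Q}(T,\cdot)=K^-$, and $\partial_t{\cal Q}\le0$ follows from weak limits of $\partial_t{\cal Q}^N\le0$.

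\emph{Uniqueness (main obstacle).} This is the step we expect to be hardest. The domain $(0,\infty)^n$ is unbounded and degenerate at its boundary, and no lateral condition is imposed there, so a maximum principle needs some control at infinity and near the axes. Our plan is a probabilistic representation. Let ${\cal Q}$ be any bounded strong solution. Apply the generalized Itô (Krylov) formula for $W^{1,2}_{p,\mathrm{loc}}$ functions with $p>n+2$, up to the exit time $\tau_N$ of $X$ from $\Omega_N$. This identifies ${\cal Q}(t,x)$ as the value of a Dynkin game: one player stops to receive $K^+$ and the other to receive $K^-$, with running reward $\pi$ and discount $\rho$. The upper obstacle is the one at which $\partial_t+{\cal L}+\pi\ge0$, so the player collecting $K^+$ is the minimizer.

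The localization error is of the form $\mathbb E[e^{-\rho\tau_N}|{\cal Q}|{\bf 1}_{\{\tau_N<T\}}]\le K^+\mathbb P(\tau_N<T)\to0$. This holds because the GBM neither explodes nor hits the boundary of the orthant in finite time. Boundedness of ${\cal Q}$ is what makes the error vanish. The running term is integrable by Assumption~\ref{as:global}(iii). Since the game value does not depend on which solution we started from, the solution is unique.

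If the Itô route proves awkward, the fallback is a PDE comparison. We would change variables to $\log x$, which turns ${\cal L}$ into a constant-coefficient operator, and then use a weighted barrier of the form $\delta e^{\lambda(T-t)}(1+|\log x|^2)$ to compare two bounded solutions on $\mathbb R^n$.
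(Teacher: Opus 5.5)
Your proposal is correct. The existence part follows the paper, and the uniqueness part takes a different route. For existence, the paper also obtains the $N$-uniform bound $|\partial_t{\cal Q}^N+{\cal L}{\cal Q}^N|\le\pi+\rho K^+$ from the a.e.\ identities on the contact sets. It then applies the local $W^{1,2}_p$ estimate up to $t=T$, extracts a diagonal subsequence, and passes the variational inequalities to the limit on compact subsets of $\{{\cal Q}>K^-\}$ and $\{{\cal Q}<K^+\}$. Your bound is uniform in $N$ only because of these contact-set identities. The penalty constant $C_{2,N}=\|\pi\|_{L^\infty(\Omega_N)}$ grows with $N$, so your first justification does not give it.

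Your extra step ${\cal Q}^N\le{\cal Q}^{N+1}$ is valid. You can even do it directly at the obstacle level without re-tuning the penalty: on $\{{\cal Q}^N>{\cal Q}^{N+1}\}$ one has ${\cal Q}^N>K^-$ and ${\cal Q}^{N+1}<K^+$, so the two inequalities combine. This step gives convergence of the whole sequence without appeal to uniqueness, but the paper does not need it.

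For uniqueness, the paper uses what you list as a fallback. It takes $\psi(x)=\sum_i(x_i+x_i^{-1})$ and $\varphi=e^{\lambda(T-t)}\psi$ with $\lambda$ large, so that $(\partial_t+{\cal L})\varphi<0$. The positivity set of ${\cal Q}_1-{\cal Q}_2-\varepsilon\varphi$ is then spatially relatively compact, ${\cal Q}_1>K^-$ and ${\cal Q}_2<K^+$ on it, and the bounded-domain maximum principle shows it is empty. Your weight $1+|\log x|^2$ works equally well: it also blows up at $\partial(0,\infty)^n$ and at infinity, and satisfies ${\cal L}\psi\le C\psi$. This analytic argument needs only boundedness and local $W^{1,2}_p$ regularity, and the same barrier is reused in Proposition~\ref{pro:Q-monotonicity-lipschitz}.

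Your primary Dynkin-game route needs more: the It\^o--Krylov formula on each $\Omega_N$ and the fact that $X$ does not reach $\partial(0,\infty)^n$. In return it represents ${\cal Q}$ as a game value, which is closer in spirit to the Feynman--Kac recovery in Section~\ref{sec:verification}. As written, it also invokes Assumption~\ref{as:global}(iii), whereas the theorem uses only clause (i). You can remove this dependence: since $\pi\ge0$, let $N\to\infty$ using monotone convergence for the running reward and bounded convergence for the terminal term.
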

\begin{proof}
Let ${\cal Q}^N$ be the solution from Proposition~\ref{lem:QN}.
Since the derivatives of ${\cal Q}^N-K^\pm$ vanish a.e. on their
zero sets,~\eqref{eq:double-bdd-vi} gives
\[
 \partial_t{\cal Q}^N+{\cal L}{\cal Q}^N
 =-\rho K^-\chi_{\{{\cal Q}^N=K^-\}}
  -\pi\chi_{\{K^-<{\cal Q}^N<K^+\}}
  -\rho K^+\chi_{\{{\cal Q}^N=K^+\}}
 \quad\text{a.e.}
\]
Thus
$|\partial_t{\cal Q}^N+{\cal L}{\cal Q}^N|\leq\pi+\rho K^+$ a.e.
For every smooth bounded domain $G\Subset(0,\infty)^n$ and all sufficiently
large $N$, the local $W^{1,2}_p$ estimate up to the terminal boundary gives
\[
 \|{\cal Q}^N\|_{W^{1,2}_p((0,T)\times G)}\leq C_{G,p},
\]
where $C_{G,p}$ is independent of $N$.
Fix $p>n+2$. For each $G\Subset(0,\infty)^n$, the preceding estimate
and compactness yield a subsequence converging weakly in
$W^{1,2}_p((0,T)\times G)$ and uniformly on $[0,T]\times\overline G$. A
diagonal argument gives a function ${\cal Q}$ such that
${\cal Q}^N\rightharpoonup{\cal Q}$ in
$W^{1,2}_{p,\mathrm{loc}}({\cal D}_T)$ and
${\cal Q}^N\longrightarrow{\cal Q}$ locally uniformly in ${\cal D}_T$.
Since the estimate holds for every $1<p<\infty$, weak compactness
and the local uniform convergence also give
${\cal Q}\in W^{1,2}_{p,\mathrm{loc}}({\cal D}_T)$ for every such $p$.
The argument used for~\eqref{eq:double-bdd-vi} passes the two variational
inequalities to the limit, and~\eqref{eq:mono-QN} gives~\eqref{eq:mono-Q}.
Moreover, the convergence is locally uniform on
$[0,T]\times(0,\infty)^n$. Since ${\cal Q}^N(T,\cdot)=K^-$, we have
${\cal Q}(T,\cdot)=K^-$, and hence ${\cal Q}$ solves
\eqref{eq:double-obs-n-dim} with the asserted continuity up to $t=T$.

It remains to prove uniqueness on the unbounded state space. Let
${\cal Q}_1$ and ${\cal Q}_2$ be two bounded solutions and set
\begin{equation*}
 \psi(x)=\sum_{i=1}^n\bigl(x_i+x_i^{-1}\bigr),
 \qquad
 \varphi(t,x)=e^{\lambda(T-t)}\psi(x).
\end{equation*}
A direct computation gives
\begin{equation*}
 {\cal L}\psi
 =\sum_{i=1}^n\mu_i x_i
  +\sum_{i=1}^n(a_{ii}-\mu_i)x_i^{-1}-\rho\psi.
\end{equation*}
Hence ${\cal L}\psi\leq C\psi$ for some constant $C$. Fixing
$\lambda>C$, we obtain
$(\partial_t+{\cal L})\varphi
=e^{\lambda(T-t)}({\cal L}\psi-\lambda\psi)<0$ on
$[0,T)\times(0,\infty)^n$.
Since $\psi(x)$ tends to infinity whenever $x$ leaves every compact
subset of $(0,\infty)^n$, the positivity set of
${\cal Q}_1-{\cal Q}_2-\varepsilon\varphi$ is relatively compact in the
spatial variables and is empty at $t=T$. On this set,
${\cal Q}_1>K^-$ and ${\cal Q}_2<K^+$, and hence the variational inequalities
imply
$(\partial_t+{\cal L}){\cal Q}_1\geq-\pi
\geq(\partial_t+{\cal L}){\cal Q}_2$.
Consequently,
$(\partial_t+{\cal L})({\cal Q}_1-{\cal Q}_2-\varepsilon\varphi)>0$ on its
positivity set. The parabolic maximum principle makes this set empty. Letting
$\varepsilon\downarrow0$ gives ${\cal Q}_1\leq{\cal Q}_2$; interchanging the
two solutions gives equality.
\end{proof}

\section{Free Boundary Analysis}\label{sec:fb}

In Theorem \ref{thm:Q}, we obtain the unique solution ${\cal Q}$
to the double obstacle problem
\eqref{eq:double-obs-n-dim}. Now, we can decompose the domain
${\cal D}_T\times(0,\infty)$ into three parts as follows:
\begin{equation*}\left\{\begin{aligned}
	\mbox{Increasing Region :}&\;{\cal IR} = \{(t,x,z)\in{\cal D}_T\times(0,\infty) \mid {\cal Q}(t, x, z)= K^+\},\\
	\mbox{Decreasing Region :}&\;{\cal DR} = \{(t,x,z)\in{\cal D}_T\times(0,\infty) \mid {\cal Q}(t, x, z)= K^-\},\\
	\mbox{Waiting Region :}&\;{\cal WR} = \{(t,x,z)\in{\cal D}_T\times(0,\infty) \mid K^- < {\cal Q}(t, x, z) <  K^+\}.
\end{aligned}\right.\end{equation*}

To characterize the free boundaries
$\partial {\cal IR} \cap \bigl({\cal D}_T\times(0,\infty)\bigr)$
and
$\partial {\cal DR} \cap \bigl({\cal D}_T\times(0,\infty)\bigr)$,
we require certain properties of ${\cal Q}$, such as monotonicity. We therefore
introduce the following additional assumptions on
$\pi = \partial_z \Pi$. Throughout Section \ref{sec:fb}, we adopt
Assumption \ref{as:free_boundary}.

\begin{as}\label{as:free_boundary}
\begin{enumerate}
    \item[(i)] \emph{Regularity and monotonicity.}
    The function
    \[
        \pi:(0,\infty)^n\times(0,\infty)\longrightarrow(0,\infty)
    \]
    is locally Lipschitz continuous. For every $z>0$, the map
    $x\mapsto\pi(x,z)$ is non-decreasing in each coordinate, and for every
    $x\in(0,\infty)^n$, the map $z\mapsto\pi(x,z)$ is strictly decreasing.

    \item[(ii)] \emph{Spatial control.}
    There exists a constant $C_\pi>0$ such that
    \[
        0\le x_i\partial_{x_i}\pi(x,z)\le C_\pi\pi(x,z)
        \quad\text{a.e.},\qquad i=1,\ldots,n,
    \]
    Moreover, there exists a positive $C^2$ function
    $\widehat\pi$ such that, for every $z>0$,
    \[
        \pi\le\widehat\pi\le C_\pi\pi.
    \]
    and, for every $i,j=1,\ldots,n$,
    \[
        |x_i\partial_{x_i}\widehat\pi(x,z)|+
        |x_ix_j\partial_{x_ix_j}\widehat\pi(x,z)|
        \le C_\pi\widehat\pi(x,z).
    \]

    \item[(iii)] \emph{Capacity regularity.}
    There exists a fixed exponent $m>0$, independent of $I$, such that,
    for every compact interval $I\Subset(0,\infty)$, there is a constant
    $C_I>0$ satisfying
    \[
        |\partial_z\pi(x,z)|
        \le C_I\left(1+\sum_{i=1}^n x_i^m\right)
        \quad\text{for a.e. }(x,z)\in(0,\infty)^n\times I.
    \]

    \item[(iv)] \emph{Boundary behavior.}
    The limits
    \[
        \lim_{z\downarrow0}\pi(x,z)=\infty,
        \qquad
        \lim_{z\to\infty}\pi(x,z)=0
    \]
    hold locally uniformly in $x\in(0,\infty)^n$.
\end{enumerate}
\end{as}

The first estimate controls the spatial growth of $\pi$. The smooth majorant
$\widehat\pi$ is introduced for the comparison arguments used later in this
section.
Since the coefficients of ${\cal L}$ in
\eqref{eq:discounted-generator} are $\tfrac12a_{ij}x_ix_j$, $\mu_ix_i$ and
$-\rho$, the weighted derivative bounds in \textnormal{(ii)} imply, after
increasing $C_\pi$ if necessary,
\[
    {\cal L}\widehat\pi(\cdot,z)
    \le C_\pi\widehat\pi(\cdot,z),
    \qquad z>0.
\]

\begin{rem}\label{rem:free-boundary-examples}
The specifications of Remark~\ref{rem:Pi} satisfy
Assumption~\ref{as:free_boundary}.

\emph{Direct CES.} Put $S=\sum_i\beta_ix_i^\theta+\beta_yz^\theta$ and let
$s_i:=\beta_ix_i^\theta/S$ and $s_y:=\beta_yz^\theta/S$ denote the CES shares,
all lying in $(0,1)$ and summing to one. With
$\pi=P\gamma\beta_yz^{\theta-1}S^{\gamma/\theta-1}$, a direct computation gives
\begin{equation}\label{eq:ces-euler}
    x_i\partial_{x_i}\pi=(\gamma-\theta)s_i\pi,
    \qquad
    x_ix_j\partial_{x_ix_j}\pi=(\gamma-\theta)
        \bigl[(\gamma-2\theta)s_is_j+(\theta-1)s_i{\bf 1}_{\{i=j\}}\bigr]\pi .
\end{equation}
Both right-hand sides are
bounded in absolute value by constant multiples
of $\pi$, with constants depending only on $\gamma$ and $\theta$, uniformly
in $x$ \emph{and} in $z$, so $\widehat\pi=\pi$ satisfies
\textnormal{(ii)}.
The first identity in \eqref{eq:ces-euler} shows that
$x\mapsto\pi(x,z)$ is
strictly increasing in each coordinate under the
maintained condition $\gamma>\theta$, while
\[
    \frac{z\,\partial_z\pi}{\pi}
    =(\theta-1)+(\gamma-\theta)s_y<\gamma-1<0 ,
\]
so $z\mapsto\pi(x,z)$ is
strictly decreasing under the maintained condition
$\gamma<1$; this verifies \textnormal{(i)} under the
parameter restrictions of Remark~\ref{rem:Pi}.
For $\theta>0$ one has $\pi\sim C(x)z^{\theta-1}$ as
$z\downarrow0$ and $\pi\sim Cz^{\gamma-1}$ as $z\to\infty$; for $\theta<0$
the two expansions are exchanged. Since $\theta<\gamma<1$, both limits in
\textnormal{(iv)} follow locally uniformly in $x$.
Moreover, for every compact interval
$I\Subset(0,\infty)$,
\[
    |\partial_z\pi(x,z)|
    \le C_I\pi(x,z)
    \le C_I\left(1+\sum_{i=1}^n x_i^{\gamma-\theta}\right),
    \qquad z\in I,
\]
so \textnormal{(iii)} holds with $m=\gamma-\theta>0$.

\emph{CES in non-capital factors.} Here $\pi=\delta A(x)z^{\delta-1}$ with
$A=P(\sum_i\beta_ix_i^\theta)^{\gamma/\theta}$. The same computation, now
with $S=\sum_i\beta_ix_i^\theta$, gives
$x_i\partial_{x_i}\pi=\gamma s_i\pi$ and
$x_ix_j\partial_{x_ix_j}\pi
=\gamma[(\gamma-\theta)s_is_j+(\theta-1)s_i{\bf 1}_{\{i=j\}}]\pi$, so
$\widehat\pi=\pi$ again satisfies \textnormal{(ii)}. Clauses
\textnormal{(i)}, \textnormal{(iii)} and \textnormal{(iv)} follow from
$0<\delta<1$ and $\gamma>0$.
More explicitly, \textnormal{(iii)} holds with
$m=\gamma$, since, for every $I\Subset(0,\infty)$,
\[
    |\partial_z\pi(x,z)|
    \le C_I A(x)
    \le C_I\left(1+\sum_{i=1}^n x_i^\gamma\right),
    \qquad z\in I.
\]
The perfect-complements limit
$P\min\{x_1,x_2\}^\gamma y^\delta$ is also covered.
\[
    \pi=P\delta\min\{x_1,x_2\}^\gamma z^{\delta-1},\qquad
    0\le x_i\partial_{x_i}\pi\le\gamma\pi
    \quad\text{a.e.},\quad i=1,2.
\]
As above, \textnormal{(i)}, \textnormal{(iii)} and \textnormal{(iv)} hold
with $m=\gamma$. For \textnormal{(ii)}, fix $q>0$ and set
\[
    \widehat\pi(x,z)
    =P\delta 2^{\gamma/q}
      \bigl(x_1^{-q}+x_2^{-q}\bigr)^{-\gamma/q}z^{\delta-1}.
\]
Then
\[
    \pi\le\widehat\pi\le2^{\gamma/q}\pi,
    \qquad
    |x_i\partial_{x_i}\widehat\pi|
    +|x_ix_j\partial_{x_ix_j}\widehat\pi|
    \le C_{\gamma,q}\widehat\pi,
    \qquad i,j=1,2,
\]
so \textnormal{(ii)} holds.

\emph{Production switching.} Here
$\pi=\delta\max\{x_1,x_2\}z^{\delta-1}$ is locally Lipschitz,
non-decreasing in each coordinate and strictly decreasing in $z$, so
\textnormal{(i)} holds, and \textnormal{(iii)}--\textnormal{(iv)} follow
as above with $m=1$.
Moreover,
$0\le x_i\partial_{x_i}\pi\le\pi$ a.e., for $i=1,2$.
Since $\max\{x_1,x_2\}$ is not differentiable on the diagonal, take the
majorant
\[
    \widehat\pi(x,z)=\delta\bigl(x_1^q+x_2^q\bigr)^{1/q}z^{\delta-1},
    \qquad q>1 ,
\]
which satisfies $\pi\le\widehat\pi\le2^{1/q}\pi$, is smooth on
$(0,\infty)^2$, and homogeneous of degree one in $x$.
A direct computation gives
\[
    0\le x_i\partial_{x_i}\widehat\pi\le\widehat\pi,
    \qquad
    |x_ix_j\partial_{x_ix_j}\widehat\pi|
    \le(q-1)\widehat\pi,
    \qquad i,j=1,2,
\]
and therefore
${\cal L}\widehat\pi\le C_\pi\widehat\pi$.
\end{rem}

\subsection{Monotonicity and local Lipschitz estimates}

\begin{pro}\label{pro:Q-monotonicity-lipschitz}
    Let ${\cal Q} \in W^{1,2}_{p, {\rm loc}}({\cal D}_T)$ be the unique solution to \eqref{eq:double-obs-n-dim}. Then, for some constant $C>0$, we have
    \begin{equation}\label{eq:mono-Qi}
         0\leq x_i \partial_{x_i} {\cal Q} \le C\pi(x, z), \quad\text{and}\quad
        -C \pi(x, z)\le \partial_t {\cal Q} \le 0
    \end{equation}
    in ${\cal D}_T$ for each $i=1, \dots, n$.
    For $h\ne0$ and $z\in(\max\{0,-h\},\infty)$, define
    \[
        D_z^h{\cal Q}(t,x,z)
        :=\frac{{\cal Q}(t,x,z+h)-{\cal Q}(t,x,z)}{h}.
    \]
    Then
    \begin{equation}\label{eq:Dqy<0}
        D_z^h{\cal Q}\le0\qquad\text{in }{\cal D}_T.
    \end{equation}
    Moreover, ${\cal Q}$ is locally Lipschitz continuous in $z$: for every
    compact ${\cal K}_0\subset{\cal D}_T$ and compact interval
    $I\subset(0,\infty)$, there is a constant $C_{{\cal K}_0,I}>0$ such that
    \[
        |{\cal Q}(t,x,z)-{\cal Q}(t,x,z')|
        \le C_{{\cal K}_0,I}|z-z'|
    \]
    for all $(t,x)\in{\cal K}_0$ and $z,z'\in I$.
\end{pro}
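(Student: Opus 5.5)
I would prove all assertions at the level of the penalized bounded-domain problems \eqref{eq:penal} and pass to the limit $\varepsilon\downarrow0$, $N\to\infty$ as in Proposition~\ref{lem:QN} and Theorem~\ref{thm:Q}. Bounds of the form $0\le\cdot\le C\pi$, and monotonicity in $z$, survive weak $W^{1,2}_p$ and locally uniform convergence. Because $\pi$ is only locally Lipschitz, I would first mollify $\pi$ in $(x,z)$, preserving the monotonicity and the bounds in Assumption~\ref{as:free_boundary}(ii), and remove the mollification at the end using uniqueness.

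\emph{Monotonicity in $x_i$ and $z$.} These come from comparison, since the penalty $\beta_{1,\varepsilon}+\beta_{2,\varepsilon}$ is nondecreasing. For $z$: if $h>0$, then $\pi(\cdot,z+h)\le\pi(\cdot,z)$, so the solution with source $\pi(\cdot,z+h)$ is a subsolution relative to the one with source $\pi(\cdot,z)$. The boundary data coincide ($K^-$), so ${\cal Q}^{N,\varepsilon}(\cdot,z+h)\le{\cal Q}^{N,\varepsilon}(\cdot,z)$, which gives \eqref{eq:Dqy<0}. One caveat is that $C_{2,N}$ enters $\beta_2$; I would fix $\beta_2(0)$ as the maximum over the compact $z$-range so that the same penalty is used for both levels. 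For $x_i$: I would use the dilation $x\mapsto(x_1,\dots,\lambda x_i,\dots,x_n)$ with $\lambda>1$. The operator ${\cal L}$ is invariant under coordinatewise scaling, and $\pi(\lambda x)\ge\pi(x)$. On the whole space, comparison of ${\cal Q}(t,\lambda x)$ and ${\cal Q}(t,x)$ as bounded solutions, using the $\varepsilon\varphi$ barrier argument from the uniqueness proof of Theorem~\ref{thm:Q}, gives ${\cal Q}(t,\lambda x)\ge{\cal Q}(t,x)$. Hence $x_i\partial_{x_i}{\cal Q}\ge0$. I would do this comparison directly in the variational-inequality form on ${\cal D}_T$ rather than on $\Omega_N$, which is not dilation invariant.

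\emph{Upper bounds $x_i\partial_{x_i}{\cal Q}\le C\pi$ and $\partial_t{\cal Q}\ge -C\pi$.} I would compare the difference quotient $w=({\cal Q}(t,\lambda x)-{\cal Q}(t,x))/\log\lambda$ with the barrier $\Phi=Ae^{B(T-t)}\widehat\pi$. In the region where both functions lie strictly inside $(K^-,K^+)$, $w$ satisfies $(\partial_t+{\cal L})w=-(\pi(\lambda x)-\pi(x))/\log\lambda\ge -C_\pi\widehat\pi$ up to $O(\log\lambda)$. This uses Assumption~\ref{as:free_boundary}(ii), namely $x_i\partial_{x_i}\pi\le C_\pi\pi\le C_\pi\widehat\pi$. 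With ${\cal L}\widehat\pi\le C_\pi\widehat\pi$ and $B>C_\pi$, the barrier $\Phi$ is a strict supersolution. Where ${\cal Q}(t,\lambda x)=K^+$ or ${\cal Q}(t,x)=K^-$, the quantity $w$ is at its largest only where both obstacles interact. There the sign conditions of \eqref{eq:double-obs-n-dim} give the needed one-sided inequality: at a positive maximum of $w-\Phi$, we cannot have ${\cal Q}(t,x)=K^+$ or ${\cal Q}(t,\lambda x)=K^-$. The terminal value is $w=0$, and the $\varepsilon\varphi$ barrier handles infinity, so the maximum principle gives $w\le\Phi\le AC_\pi e^{BT}\pi$.

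The time bound works the same way. I would use $w=({\cal Q}(t-h,x)-{\cal Q}(t,x))/h\ge0$ together with the same barrier. The terminal layer is handled by noting that near $t=T$, ${\cal Q}(t,x)-K^-\le\int_t^T e^{-\rho(s-t)}\mathbb E[\pi(X_s)]ds\le C(T-t)\widehat\pi$, again via the supersolution $\widehat\pi$. I expect these barrier comparisons to be the main obstacle, in particular controlling the sign of the obstacle terms at a maximum point and obtaining the terminal estimate uniformly.

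\emph{Local Lipschitz continuity in $z$.} Let $u=({\cal Q}(\cdot,z)-{\cal Q}(\cdot,z+h))/h\ge0$. By the same comparison structure, $u$ is dominated by the solution of a linear problem with source $|\partial_z\pi|\le C_I(1+\sum x_i^m)$, from Assumption~\ref{as:free_boundary}(iii). Using the barrier $e^{B(T-t)}(1+\sum_i x_i^m+\psi)$, which is a supersolution for large $B$ since GBM has polynomial moments, the maximum principle gives $u\le C_{I}e^{BT}(1+\sum x_i^m+\psi)$. This bound is uniform on compacts ${\cal K}_0$, which yields the stated Lipschitz estimate.
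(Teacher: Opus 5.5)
Your proposal is correct and follows essentially the paper's route: dilation invariance of ${\cal L}$ and the coercive $\varepsilon\varphi$ comparison give $0\le x_i\partial_{x_i}{\cal Q}\le C\pi$ with a barrier $e^{\lambda(T-t)}\widehat\pi$, comparison with the monotone source gives monotonicity in $z$, and a polynomial barrier built from Assumption~\ref{as:free_boundary}(iii) gives the $z$-Lipschitz bound. The differences are minor: the paper proves the time bound at the penalized level, via the supersolution $K^-+(T-t)e^{C_\pi(T-t)}\widehat\pi$ and the linear equation for $D_t^h q$, rather than on the whole space; it gets $z$-monotonicity from the variational inequalities directly; and since it works only with difference quotients, your mollification of $\pi$ is unnecessary.
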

\begin{proof}
Fix \(N,\varepsilon\) and let
\(q={\cal Q}^{N,\varepsilon}\). Define
\[
 b_\varepsilon(r)
 :=\beta_{1,\varepsilon}(r-K^-)+\beta_{2,\varepsilon}(r-K^+).
\]
Since \(b_\varepsilon\) is nondecreasing,
\(b_\varepsilon(K^-)=-\rho K^-\), and
\({\cal L}\widehat\pi\leq C_\pi\widehat\pi\), the function
\[
 \overline q(t,x)
 :=K^-+(T-t)e^{C_\pi(T-t)}\widehat\pi(x,z)
\]
is a supersolution of the penalized equation and dominates its boundary
data. Comparison gives
\begin{equation}\label{eq:penal-terminal-bound}
 0\leq q(t,x)-K^-
 \leq(T-t)e^{C_\pi(T-t)}\widehat\pi(x,z).
\end{equation}
For \(0<h<T\), set
\[
 D_t^h q(t,x):=\frac{q(t+h,x)-q(t,x)}h,
 \qquad 0\leq t\leq T-h.
\]
By Lemma~\ref{lem:QNe}, \(D_t^h q\leq0\).
Subtracting the two penalized
equations gives
\((\partial_t+{\cal L}-c_h)D_t^h q=0\), where \(c_h\geq0\) is the
divided difference of \(b_\varepsilon\). The lateral boundary value of
 \(D_t^h q\) is zero. The bound~\eqref{eq:penal-terminal-bound} gives
\(D_t^h q(T-h,x)\geq-e^{C_\pi h}\widehat\pi(x,z)\).
Comparison with \(-e^{C_\pi(T-t)}\widehat\pi\) therefore yields
\[
 -e^{C_\pi(T-t)}\widehat\pi(x,z)\leq D_t^h q(t,x)\leq0.
\]
Letting \(h\downarrow0\), we obtain
\(-e^{C_\pi(T-t)}\widehat\pi\leq\partial_t q\leq0\)
a.e. in \({\cal D}_T^N\).
Passing first to the limit $\varepsilon\to0$ and then to $N\to\infty$,
as in Theorem~\ref{thm:Q}, and using
$\widehat\pi\leq C_\pi\pi$, yields
\begin{equation}\label{eq:Qt-bound-full}
    -C_\pi e^{C_\pi(T-t)}\pi(x,z)
    \leq\partial_t{\cal Q}(t,x,z)\leq0.
\end{equation}

We next prove the spatial estimate directly for the solution ${\cal Q}$ on
the whole state space. Fix $i\in\{1,\ldots,n\}$ and, for $h>0$, set
\[
    S_i^h x=(x_1,\ldots,e^h x_i,\ldots,x_n),
    \qquad
    {\cal Q}_h(t,x,z)={\cal Q}(t,S_i^h x,z),
    \qquad
    \pi_h(x,z)=\pi(S_i^h x,z).
\]
By the scaling invariance of ${\cal L}$, ${\cal Q}_h$ solves the
same double obstacle problem as ${\cal Q}$, with $\pi_h$ in place of $\pi$.
Moreover,
Assumption~\ref{as:free_boundary}\textnormal{(ii)} and Gronwall's inequality give
\begin{equation}\label{eq:pi-scaling-bound}
    0\le \pi_h(x,z)-\pi(x,z)
    \le \bigl(e^{C_\pi h}-1\bigr)\pi(x,z).
\end{equation}

Let $\lambda=C_\pi+1$ and define
$\Psi_h(t,x,z):=(e^{C_\pi h}-1)e^{\lambda(T-t)}\widehat\pi(x,z)$.
Since ${\cal L}\widehat\pi\le C_\pi\widehat\pi$, we have
\begin{equation}\label{eq:Psi-supersolution}
\begin{aligned}
    (\partial_t+{\cal L})\Psi_h
    &=\bigl(e^{C_\pi h}-1\bigr)e^{\lambda(T-t)}
      \bigl(-\lambda\widehat\pi+{\cal L}\widehat\pi\bigr)\\
    &\le-\bigl(e^{C_\pi h}-1\bigr)\widehat\pi
    \le-\bigl(e^{C_\pi h}-1\bigr)\pi
    \le-\bigl(\pi_h-\pi\bigr).
\end{aligned}
\end{equation}

Let \(\varphi\) be the coercive function introduced in the proof of
Theorem~\ref{thm:Q}. Applying the same unbounded-domain comparison argument
to \({\cal Q}_h-{\cal Q}-\Psi_h-\delta\varphi\) and then letting
\(\delta\downarrow0\) gives
${\cal Q}_h-{\cal Q}\leq\Psi_h$.
On the other hand, \eqref{eq:pi-scaling-bound} and the comparison principle
give ${\cal Q}_h\ge{\cal Q}$. Thus
\[
    0\le
    \frac{{\cal Q}(t,S_i^h x,z)-{\cal Q}(t,x,z)}{h}
    \le
    \frac{e^{C_\pi h}-1}{h}e^{\lambda(T-t)}\widehat\pi(x,z)
    \le C_\pi\frac{e^{C_\pi h}-1}{h}e^{\lambda(T-t)}\pi(x,z).
\]
Letting $h\to0^+$ gives
\[
    0\le x_i\partial_{x_i}{\cal Q}(t,x,z)
    \le C_\pi^2 e^{(C_\pi+1)(T-t)}\pi(x,z)
    \le C_\pi^2 e^{(C_\pi+1)T}\pi(x,z).
\]
Combining this inequality with \eqref{eq:Qt-bound-full} and increasing the
constant if necessary proves \eqref{eq:mono-Qi}.

If $h>0$, define
${\cal U}_1:=\{(t,x):{\cal Q}(t,x,z+h)>K^-
\text{ and }{\cal Q}(t,x,z)<K^+\}$.
The variational inequalities imply
$(\partial_t+{\cal L})D_z^h{\cal Q}\geq-D_z^h\pi>0$ in ${\cal U}_1$.
On ${\cal U}_1^c$, the obstacle bounds give $D_z^h{\cal Q}\leq0$.
The same comparison argument as above therefore gives
$D_z^h{\cal Q}\leq0$ in ${\cal D}_T$. If $h<0$, apply the preceding result
at $z+h$ with increment $-h>0$, since
$D_z^h{\cal Q}=D_{z+h}^{-h}{\cal Q}$. Thus \eqref{eq:Dqy<0} follows.

    We also prove the local Lipschitz estimate needed below. Fix a compact
    interval $I\subset(0,\infty)$ and let $z<z'$ belong to $I$. By
    Assumption~\ref{as:free_boundary}\textnormal{(iii)},
    there are constants $C_I>0$ and $m>0$ such that, for a.e.~$x$,
    \[
        0\le \pi(x,z)-\pi(x,z')
        \le C_I(z'-z)\Bigl(1+\sum_{i=1}^n x_i^m\Bigr).
    \]
    Set
    \[
        \Phi(t,x)=C_Ie^{\lambda(T-t)}
        \Bigl(1+\sum_{i=1}^n x_i^m\Bigr).
    \]
    For sufficiently large $\lambda$,
    \[
        (\partial_t+{\cal L})\Phi
        \le -C_I\Bigl(1+\sum_{i=1}^n x_i^m\Bigr).
    \]
    Applying the preceding unbounded-domain comparison to
    ${\cal Q}(\cdot,z)-{\cal Q}(\cdot,z')-(z'-z)\Phi-\varepsilon\varphi$
    and then letting $\varepsilon\downarrow0$, we obtain
    \[
        0\le {\cal Q}(t,x,z)-{\cal Q}(t,x,z')
        \le (z'-z)\Phi(t,x).
    \]
    Since $\Phi$ is bounded on every compact subset of ${\cal D}_T$, the
    asserted local Lipschitz continuity follows.
\end{proof}

\subsection{Existence and basic properties of the free boundaries}

\begin{lem}\label{lem:Zexist} There exist two functions ${\cal Z}_{\cal I} : {\cal D}_T \to [0, \infty)$ and ${\cal Z}_{\cal D} : {\cal D}_T  \to (0, \infty]$ such that
\begin{equation}\label{eq:3region-z}\left\{\begin{aligned}
		{\cal IR} &= \{ (t,x,z)\in{\cal D}_T\times(0,\infty)
		\mid z \leq {\cal Z}_{\cal I}(t,x)\},\\
		{\cal DR} &= \{ (t,x,z)\in{\cal D}_T\times(0,\infty)
		\mid z \geq {\cal Z}_{\cal D}(t,x)\},\\
		{\cal WR} &= \{ (t,x,z)\in{\cal D}_T\times(0,\infty)
		\mid {\cal Z}_{\cal I}(t,x)<z<{\cal Z}_{\cal D}(t,x)\}.
\end{aligned}\right.\end{equation}
Moreover, both ${\cal Z}_{\cal I}$ and ${\cal Z}_{\cal D}$ are
non-increasing in time $t$ and non-decreasing in each state coordinate
$x_i$, with
\begin{equation}\label{ineq:zdoup}
	{\cal Z}_{\cal I}(t, x) \leq
    \pi^{-1}_z(x, \rho K^+) <
    \pi^{-1}_z(x, \rho K^-) \leq
    {\cal Z}_{\cal D}(t, x)
\end{equation}
for each $(t, x) \in {\cal D}_T$, where $\pi^{-1}_z(x, \cdot)$ is an inverse function of $\pi$ in the $z$ variable with fixed $x$.
\end{lem}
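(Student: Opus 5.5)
Define the boundaries through the monotonicity of ${\cal Q}$ in $z$ from Proposition~\ref{pro:Q-monotonicity-lipschitz}, namely \eqref{eq:Dqy<0}. Set
\[
{\cal Z}_{\cal I}(t,x):=\sup\{z>0:{\cal Q}(t,x,z)=K^+\},\qquad
{\cal Z}_{\cal D}(t,x):=\inf\{z>0:{\cal Q}(t,x,z)=K^-\},
\]
with the conventions $\sup\emptyset=0$ and $\inf\emptyset=\infty$. Since $z\mapsto{\cal Q}(t,x,z)$ is non-increasing and continuous (locally Lipschitz), with values in $[K^-,K^+]$, the set $\{{\cal Q}=K^+\}$ in the $z$-variable is an interval $(0,{\cal Z}_{\cal I}]$. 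Likewise $\{{\cal Q}=K^-\}$ is $[{\cal Z}_{\cal D},\infty)$, and the remaining set is the waiting region. This gives \eqref{eq:3region-z}, once we know ${\cal Z}_{\cal I}<{\cal Z}_{\cal D}$, which follows from $K^-<K^+$ together with the strict inequality in \eqref{ineq:zdoup}.

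Monotonicity of the boundaries in $t$ and $x$ follows directly from \eqref{eq:mono-Qi}. Because $\partial_t{\cal Q}\le0$ and ${\cal Q}$ is continuous, ${\cal Q}(t',x,z)\le{\cal Q}(t,x,z)$ for $t'>t$. Hence ${\cal Q}(t',x,z)=K^+$ forces ${\cal Q}(t,x,z)=K^+$, so ${\cal Z}_{\cal I}(t',x)\le{\cal Z}_{\cal I}(t,x)$. Similarly, ${\cal Q}(t,x,z)=K^-$ forces ${\cal Q}(t',x,z)=K^-$, so ${\cal Z}_{\cal D}(t',x)\le{\cal Z}_{\cal D}(t,x)$. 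The same argument with $x_i\partial_{x_i}{\cal Q}\ge0$, integrated along coordinate lines using continuity of ${\cal Q}$ ($W^{1,2}_{p,\rm loc}$ with $p$ large), shows that both boundaries are non-decreasing in each $x_i$.

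For \eqref{ineq:zdoup}, note first that $\pi^{-1}_z(x,\cdot)$ is well defined and strictly decreasing. Indeed, by Assumption~\ref{as:free_boundary}(i),(iv), $z\mapsto\pi(x,z)$ is a continuous strictly decreasing bijection onto $(0,\infty)$. Consequently $\rho K^+>\rho K^-$ gives $\pi^{-1}_z(x,\rho K^+)<\pi^{-1}_z(x,\rho K^-)$.

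Next, on the interior of the increasing region ${\cal Q}\equiv K^+$, so the variational inequality gives $-\rho K^++\pi(x,z)\ge0$, that is $\pi\ge\rho K^+$. The delicate point is that the inequality holds only a.e. and the coincidence set may have empty interior. I would use instead the fact that on $\{{\cal Q}=K^+\}$ the derivatives of ${\cal Q}-K^+$ vanish a.e. (as in the proof of Theorem~\ref{thm:Q}), so the inequality gives $\pi(x,z)\ge\rho K^+$ for a.e. $(t,x)$ in the $z$-slice of the coincidence set. Then argue by contradiction. Suppose $z_0:={\cal Z}_{\cal I}(t_0,x_0)>\pi^{-1}_z(x_0,\rho K^+)$. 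Pick $z$ slightly below $z_0$ with $\pi(x_0,z)<\rho K^+$. By monotonicity of the region in $t$ and $x$, ${\cal Q}(\cdot,\cdot,z)=K^+$ on the set $\{t\le t_0,\ x\ge x_0\}$ (componentwise), which has positive measure near $(t_0,x_0)$; here $t_0<T$. By continuity of $\pi$, $\pi<\rho K^+$ on a positive-measure part of that set, contradicting the a.e. inequality. Equivalently, in the set $\{\pi(\cdot,z)<\rho K^+\}$, which is open, the function ${\cal Q}$ is a strong supersolution with $(\partial_t+{\cal L}){\cal Q}+\pi\le0$ away from $K^+$. The symmetric argument on $\{{\cal Q}=K^-\}$, where $-\rho K^-+\pi\le0$, gives ${\cal Z}_{\cal D}\ge\pi^{-1}_z(x,\rho K^-)$.

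I expect the main obstacle to be exactly this positive-measure step, which upgrades the a.e. variational inequality to a pointwise bound at the boundary point. The orthant monotonicity of the regions is what makes it work. The ranges ${\cal Z}_{\cal I}\in[0,\infty)$ and ${\cal Z}_{\cal D}\in(0,\infty]$ then follow from \eqref{ineq:zdoup}.
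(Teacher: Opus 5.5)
Your route is the paper's: the same sup/inf definitions of ${\cal Z}_{\cal I},{\cal Z}_{\cal D}$, the interval structure from monotonicity and continuity in $z$, boundary monotonicity from \eqref{eq:mono-Qi}, and a contact-point argument for \eqref{ineq:zdoup}. The paper only calls that argument ``standard''; you make it explicit through the a.e.\ vanishing of the derivatives of ${\cal Q}-K^\pm$ on the contact sets, plus orthant monotonicity.

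That explicit step has a hole at $t_0=0$. On the upper contact set, $\partial_t{\cal Q}\le0$ propagates contact \emph{backward} in time, so your set is $\{(t,x):t\le t_0,\ x\ge x_0\}$. It has positive $(t,x)$-measure only when $t_0>0$. The caveat you record, $t_0<T$, is the one needed for the lower contact set, whose orthant is $\{t\ge t_0,\ x\le x_0\}$; that half of your argument is fine for every $t_0\in[0,T)$. Since ${\cal D}_T=[0,T)\times(0,\infty)^n$ contains $t=0$, you have not shown ${\cal Z}_{\cal I}(0,x)\le\pi^{-1}_z(x,\rho K^+)$. This case cannot be recovered from positive times. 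Because ${\cal Z}_{\cal I}$ is non-increasing in $t$, bounds at $t>0$ say nothing about $t=0$. Continuity of ${\cal Z}_{\cal I}$ (Theorem~\ref{thm:free-boundary-continuity}) is proved only later, using this lemma.

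The cheapest repair is autonomy plus uniqueness. Let ${\cal Q}'$ solve \eqref{eq:double-obs-n-dim} on the horizon $T+\delta$. Then ${\cal Q}'(\cdot+\delta,\cdot,\cdot)$ solves the horizon-$T$ problem, so it equals ${\cal Q}$ by Theorem~\ref{thm:Q}. The point $(0,x_0)$ becomes $(\delta,x_0)$ with $\delta>0$, and your orthant argument applies to ${\cal Q}'$, which has the same monotonicity properties.

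Alternatively, turn your closing ``supersolution'' sentence into an actual argument; as written it stops short of a conclusion. Let $O=\{x:\pi(x,z)<\rho K^+\}$, which is open. On $[0,T)\times O$ the a.e.\ inequality $\pi\ge\rho K^+$ on the upper contact set makes that set null. Hence $w=K^+-{\cal Q}\ge0$ satisfies $(\partial_t+{\cal L})w\le-c<0$ a.e.\ on small cylinders $[0,h]\times\overline B$ with $\overline B\subset O$: it equals $\pi-\rho K^+$ in the waiting region and $-\rho(K^+-K^-)$ on the lower contact set. An It\^o--Krylov estimate started at $(0,x_0)$ runs forward in time and gives $w(0,x_0)\ge c\,\mathbb E\int_0^\tau e^{-\rho s}\,ds>0$, as in the proof of Lemma~\ref{lem:appendix-concise-boundary-estimates}.
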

\begin{proof}
	Define two functions ${\cal Z}_{\cal I} : {\cal D}_T \to [0, \infty)$ and ${\cal Z}_{\cal D} : {\cal D}_T \to (0, \infty]$ as
	\begin{equation*}
		 {\cal Z}_{\cal I}( t, x) := \sup\{ z \in (0, \infty) \mid {\cal Q}(t, x, z) = K^+\}
    \end{equation*}
         and
\begin{equation*}
         {\cal Z}_{\cal D}( t, x) := \inf\{ z \in (0, \infty) \mid {\cal Q}(t, x, z) = K^-\}.
		\end{equation*}
		Here and below, we use the conventions $\sup\varnothing=0$ and
        $\inf\varnothing=\infty$. By Proposition~\ref{pro:Q-monotonicity-lipschitz}, the map
        $z\mapsto{\cal Q}(t,x,z)$ is continuous and nonincreasing. Hence its
        upper and lower coincidence sets are closed intervals in the
        $z$-variable, and \eqref{eq:3region-z} follows. Moreover, from the monotonicity $\partial_{t} {\cal Q} \leq 0$ and $\partial_{x_{i}} {\cal Q} \geq 0$, we also obtain the monotonicity of ${\cal Z}_{\cal I}$ and ${\cal Z}_{\cal D}$ for $t$ and $x_i$.

    It remains to prove~\eqref{ineq:zdoup}. The standard contact-point
    argument gives
    \[
        \pi(x,z)\geq \rho K^+ \quad\text{on }\{{\cal Q}=K^+\},
        \qquad
        \pi(x,z)\leq \rho K^- \quad\text{on }\{{\cal Q}=K^-\}.
    \]
    Since $z\mapsto\pi(x,z)$ is strictly decreasing and the
    coincidence sets are intervals in the $z$-variable, this gives
    \eqref{ineq:zdoup}.
\end{proof}

\begin{lem}\label{lem:free-boundary-range-limits}
For every $(t,x)\in{\cal D}_T$, the free-boundary functions satisfy
\begin{equation}\label{eq:range-Z}
 0<{\cal Z}_{\cal I}(t,x)\leq\pi^{-1}_z(x,\rho K^+),
 \qquad
 \pi^{-1}_z(x,\rho K^-)\leq{\cal Z}_{\cal D}(t,x)<\infty.
\end{equation}
Moreover,
\begin{equation}\label{eq:terminal-Z}
 \lim_{t\to T^-}{\cal Z}_{\cal I}(t,x)=0,
 \qquad
 \lim_{t\to T^-}{\cal Z}_{\cal D}(t,x)
 =\pi^{-1}_z(x,\rho K^-).
\end{equation}
\end{lem}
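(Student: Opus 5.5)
The upper bound on ${\cal Z}_{\cal I}$ and the lower bound on ${\cal Z}_{\cal D}$ in \eqref{eq:range-Z} are already contained in \eqref{ineq:zdoup}. The plan therefore reduces to three new claims: strict positivity of ${\cal Z}_{\cal I}$, finiteness of ${\cal Z}_{\cal D}$, and the terminal limits \eqref{eq:terminal-Z}.

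\emph{Finiteness of ${\cal Z}_{\cal D}$.} Fix $(t,x)$. We use the supersolution bound established in the proof of Proposition~\ref{pro:Q-monotonicity-lipschitz}: letting $\varepsilon\to0$ and then $N\to\infty$ in \eqref{eq:penal-terminal-bound} gives
\[
0\le{\cal Q}(t,x,z)-K^-\le (T-t)e^{C_\pi(T-t)}\widehat\pi(x,z)\le C_\pi(T-t)e^{C_\pi T}\pi(x,z).
\]
On its own this only shows ${\cal Q}\to K^-$ as $z\to\infty$, not equality at a finite $z$. Equality will come from a contact argument. Choose $z_0$ so large that $\pi(y,z_0)<\rho K^-$ for all $y$ in a neighbourhood ball $B$ of $x$ and all $s\in[t,T]$; this is possible by the local uniformity in Assumption~\ref{as:free_boundary}(iv). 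On the cylinder $[t,T]\times B$, the function $w={\cal Q}-K^-$ satisfies $(\partial_t+{\cal L})w\ge -\pi+\rho K^->0$ wherever $w>0$. We then compare $w$ with a barrier $\psi$ that vanishes in the interior, grows on the lateral boundary, and dominates the small values $w\le C\sup_B\pi(\cdot,z)$ there. A localized penalization comparison (equivalently, the finite-speed property of the obstacle problem, since a strict sign of the source forces contact) then gives $w(t,x,z)=0$ for $z$ large. I expect this step to be the main obstacle. The first thing I would try is a quadratic bump barrier: $\psi(s,y)=M|y-x|^2\cdot\kappa$ plus a term in $(T-s)$, with the constants tuned so that $(\partial_t+{\cal L})\psi\le \rho K^--\pi$.

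\emph{Positivity of ${\cal Z}_{\cal I}$.} We argue symmetrically. By Assumption~\ref{as:free_boundary}(iv), $\pi(y,z)\to\infty$ as $z\downarrow0$ uniformly on $B$. Hence on $\{{\cal Q}<K^+\}$ the function $v=K^+-{\cal Q}$ satisfies $(\partial_t+{\cal L})v\le \rho K^+-\pi\ll0$. Since $t<T$ there is a positive time to go. A barrier of the form $v\le\psi$, with $\psi=K^-$-scale boundary data plus $(\rho K^+-\pi_{\min})(T-s)$ up to cut-offs, shows that $v(t,x,z)=0$ once $\pi_{\min}(T-t)$ is large compared with $K^+-K^-$. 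Heuristically, $\partial_t{\cal Q}\approx-\pi$ in the waiting region would drive ${\cal Q}$ past $K^+$ before time $T$. So ${\cal Z}_{\cal I}(t,x)\ge z$ for small $z>0$.

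\emph{Terminal limits.} For ${\cal Z}_{\cal I}$, fix $z>0$. The displayed bound gives ${\cal Q}(s,x,z)\le K^-+C(T-s)\pi(x,z)<K^+$ for $s$ close to $T$, so $z>{\cal Z}_{\cal I}(s,x)$; since $z$ is arbitrary, the limit is $0$. For ${\cal Z}_{\cal D}$, monotonicity in $t$ ensures the limit $L\ge\pi^{-1}_z(x,\rho K^-)$ exists. Suppose it were strictly larger, and take $z$ strictly in between, so that $\pi(x,z)>\rho K^-$. Then on a small parabolic cylinder around $(s,x,z)$ with $s$ near $T$, we would have ${\cal Q}=K^-$ there by continuity in $z$ and the $x$-monotonicity of ${\cal Z}_{\cal D}$ (choose the cylinder corner appropriately). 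The contact-point inequality $\pi\le\rho K^-$ on $\{{\cal Q}=K^-\}$ from Lemma~\ref{lem:Zexist}, applied a.e.\ in that open set, then contradicts $\pi(x,z)>\rho K^-$ by continuity of $\pi$. Hence $L=\pi^{-1}_z(x,\rho K^-)$.
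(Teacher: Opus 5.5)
\textbf{Gap: the terminal limit of ${\cal Z}_{\cal D}$.} Your reduction of the two one-sided bounds to \eqref{ineq:zdoup} is fine, and so is the ${\cal Z}_{\cal I}$ terminal limit. Your finiteness barrier for ${\cal Z}_{\cal D}$ is also viable. Because your ball is Euclidean, you must use the smallness of ${\cal Q}-K^-$ on the lateral boundary, which you do. The paper instead uses $h=\sum_i(\log(x_i/\hat x_i))^2$, for which ${\cal L}h\le M_0$ globally, so any radius is allowed and only ${\cal Q}\le K^+$ is needed there.

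The terminal-limit argument contains two sign errors. Let $L:=\lim_{t\to T^-}{\cal Z}_{\cal D}(t,x)>\pi^{-1}_z(x,\rho K^-)$ and take $z$ strictly in between.
\begin{itemize}
\item Strict decrease of $\pi$ in $z$ gives $\pi(x,z)<\rho K^-$, not $>$.
\item Since ${\cal Z}_{\cal D}(s,x)\ge L>z$ for every $s<T$, the point $(s,x,z)$ lies outside ${\cal DR}$, so ${\cal Q}(s,x,z)>K^-$, not $=K^-$.
\end{itemize}
With the signs corrected there is no contradiction. The contact-point inequality $\pi\le\rho K^-$ on $\{{\cal Q}=K^-\}$ is only a necessary condition for contact. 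Points with $\pi<\rho K^-$ and ${\cal Q}>K^-$ do exist away from $T$. As written, your argument only re-proves the already known $L\ge\pi^{-1}_z(x,\rho K^-)$.

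What is missing is an argument that contact is \emph{forced} near $T$ wherever $\pi<\rho K^-$. The paper does this with essentially your finiteness barrier, but the smallness comes from $t_0\to T$ rather than $z\to\infty$:
\begin{enumerate}
\item Fix $z>\pi^{-1}_z(x,\rho K^-)$ and a small log-ball $B_r(x)$ on which $\pi(\cdot,z)\le\rho K^--2\delta$.
\item Set $\eta(t_0):=\sup_{[t_0,T]\times\partial B_r(x)}({\cal Q}-K^-)$. This tends to $0$ by your own bound ${\cal Q}-K^-\le C(T-t)\pi$, equivalently \eqref{eq:Qt-bound-full}.
\item Once $\eta(t_0)M_0/r^2\le\delta$, compare ${\cal Q}$ with $K^-+\eta(t_0)h_x/r^2$ on $[t_0,T)\times B_r(x)$. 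This gives ${\cal Q}(t_0,x,z)=K^-$, hence ${\cal Z}_{\cal D}(t_0,x)\le z$.
\end{enumerate}

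\textbf{Sign slip in the positivity step.} On $\{{\cal Q}<K^+\}$ the correct inequality is $(\partial_t+{\cal L})(K^+-{\cal Q})\ge\pi-\rho K^+$, not $\le\rho K^+-\pi$. The barrier $\psi$ must therefore satisfy $(\partial_t+{\cal L})\psi\le\pi-\rho K^+$, with the lateral term's ${\cal L}h\le M_0$ absorbed. This is the paper's condition $\pi(\cdot,z^*)\ge g'+\rho K^++M_0$. Here $g$ is a nonnegative linear ramp from $0$ at $\hat t$ to $K^+-K^-$ at $T$, which keeps the zeroth-order term $\rho g$ on the right side. Your heuristic $\partial_t{\cal Q}\approx-\pi$ points the right way, so this part is repairable.
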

\begin{proof}
    We begin with ${\cal Z}_{\cal I}$ and record a barrier comparison. Fix $(\hat t, \hat x) \in {\cal D}_T$ and set
    $h(x) := \sum_{i = 1}^n \big(\log(x_i/\hat x_i)\big)^2$,
    $R := \sqrt{K^+ - K^-}$, $B := \{x \mid h(x) < R^2\}$, and
    $g(t) := (K^+ - K^-)\frac{t - \hat t}{T - \hat t}$. Since $x_i\partial_{x_i}h = 2\log(x_i/\hat x_i)$ and
    $x_i^2\partial_{x_i x_i}h = 2 - 2\log(x_i/\hat x_i)$,
    \begin{equation*}
        {\cal L}h = \sum_{i = 1}^n a_{ii} + 2\sum_{i = 1}^n\left(\mu_i - \tfrac{1}{2}a_{ii}\right)\log(x_i/\hat x_i) - \r h
        \leq M_0 := \sum_{i = 1}^n a_{ii} + \frac{1}{\r}\sum_{i = 1}^n\left(\mu_i - \tfrac{1}{2}a_{ii}\right)^2 .
    \end{equation*}
    Suppose that $z^*>0$ satisfies
    $\pi(\cdot,z^*)\geq g'+\rho K^++M_0$ on $\overline B$, and set
    $\Phi:=K^+-h-g$. Then
    \[
        (\partial_t+{\cal L})\Phi+\pi(\cdot,z^*)
        \geq-g'-\rho K^+-M_0+\pi(\cdot,z^*)\geq0
        \quad\text{in }B.
    \]
    Moreover, $\Phi\leq{\cal Q}(\cdot,\cdot,z^*)$ on the parabolic
    boundary of $[\hat t,T)\times B$. Comparison therefore gives
    ${\cal Q}(\hat t,\hat x,z^*)\geq\Phi(\hat t,\hat x)=K^+$, and hence
    ${\cal Z}_{\cal I}(\hat t,\hat x)\geq z^*$.

    To prove the ${\cal Z}_{\cal I}$ bound in \eqref{eq:range-Z},
    fix $(t,x)\in{\cal D}_T$ and center $h$ at $x$.
    Since $\pi(x, z) \to \infty$ as $z \to 0^+$ uniformly on the compact set $\overline{B}$,
    we may choose $z^* = \underline z > 0$ so that the preceding condition holds. The above comparison gives
    ${\cal Z}_{\cal I}(t, x) \geq \underline z > 0$. Together with the upper bound
    ${\cal Z}_{\cal I}(t,x)\leq\pi^{-1}_z(x,\r K^+)$ from
    \eqref{ineq:zdoup}, this proves the ${\cal Z}_{\cal I}$ bound in
    \eqref{eq:range-Z}.

    For the ${\cal Z}_{\cal I}$ limit in \eqref{eq:terminal-Z}, since
    ${\cal Q}$ is continuous up to $t=T$ with
    ${\cal Q}(T,x,z)=K^-$, for each fixed
    $z > 0$ we have ${\cal Q}(t, x, z) < K^+$, hence ${\cal Z}_{\cal I}(t, x) < z$, once $t$ is close to $T$. As
    $z > 0$ is arbitrary, the limit follows.

    For the ${\cal Z}_{\cal D}$ bound in \eqref{eq:range-Z}, center $h$
    at $x$ and set
    $\varepsilon:=\rho K^-/(2M_0)$, $R^2:=(K^+-K^-)/\varepsilon$, and
    $B:=\{y:h(y)<R^2\}$.
    Choose $\bar z$ so large that
    $\pi(\cdot,\bar z)\leq\rho K^-/2$ on $\overline B$. Then
    $\Psi:=K^-+\varepsilon h$ satisfies
    \[
        (\partial_t+{\cal L})\Psi+\pi(\cdot,\bar z)
        \leq-\rho K^-+\varepsilon M_0+\frac{\rho K^-}{2}=0
    \]
    in $B$, and $\Psi\geq{\cal Q}(\cdot,\cdot,\bar z)$ on the
    parabolic boundary of $[t,T)\times B$. Thus comparison gives
    ${\cal Q}(t,x,\bar z)=K^-$, so
    ${\cal Z}_{\cal D}(t,x)\leq\bar z<\infty$. The remaining lower
    bound for ${\cal Z}_{\cal D}$ in \eqref{eq:range-Z} follows from
    \eqref{ineq:zdoup}.

    For the ${\cal Z}_{\cal D}$ limit in \eqref{eq:terminal-Z},
    monotonicity and the corresponding range bound in
    \eqref{eq:range-Z} give
    $\zeta(x):=\lim_{t\to T^-}{\cal Z}_{\cal D}(t,x)
    \geq\pi^{-1}_z(x,\rho K^-)$.
    Fix $z>\pi^{-1}_z(x,\rho K^-)$. Choose $r,\delta>0$ such that
    \(
        \pi(\cdot,z)\leq\rho K^--2\delta
    \)
    on $\overline{B_r(x)}$, where
    $h_x(y):=\sum_{i=1}^n\bigl(\log(y_i/x_i)\bigr)^2$ and
    $B_r(x):=\{y:h_x(y)<r^2\}$,
    and ${\cal L}h_x\leq M_0$. Set
    $\eta(t_0):=\sup_{[t_0,T]\times\partial B_r(x)}
    ({\cal Q}(s,y,z)-K^-)$.
    By \eqref{eq:Qt-bound-full}, $\eta(t_0)\to0$ as $t_0\to T^-$. Choose
    $t_0$ close enough to $T$ that $\eta(t_0)M_0/r^2\leq\delta$. Then
    $\Psi_{t_0}(y):=K^-+\eta(t_0)h_x(y)/r^2$
    dominates ${\cal Q}(\cdot,\cdot,z)$ on the parabolic boundary and
    satisfies
    \[
        (\partial_s+{\cal L})\Psi_{t_0}+\pi(\cdot,z)
        \leq-\rho K^-+\frac{\eta(t_0)}{r^2}M_0
             +\pi(\cdot,z)<0.
    \]
    Comparison gives
    ${\cal Q}(t_0,x,z)\leq\Psi_{t_0}(x)=K^-$. Since ${\cal Q}\geq K^-$,
    equality holds, and hence ${\cal Z}_{\cal D}(t_0,x)\leq z$. Letting
    $z\downarrow\pi^{-1}_z(x,\rho K^-)$ proves the limit.
\end{proof}

\subsection{Strict ordering of coincidence sets and free-boundary continuity}

\begin{lem}\label{lem:local-strict-comparison}
Fix $p_0=(t_0,x_0)\in{\cal D}_T$. Let ${\cal U}$ be a neighborhood of
$p_0$ in ${\cal D}_T$, and let
$u,v\in W^{1,2}_{p,\mathrm{loc}}({\cal U})$, $p>n+2$, satisfy
\[
 \begin{cases}
 (\partial_t+{\cal L})u=f_u(x)\chi_{\{u>0\}},\\
 (\partial_t+{\cal L})v=f_v(x)\chi_{\{v>0\}},
 \end{cases}
 \qquad\text{a.e. in }{\cal U},
 \qquad 0\leq u\leq v,
\]
where $f_u$ and $f_v$ are locally Lipschitz. If
$f_u(x_0)>f_v(x_0)>0$,
then $p_0$ cannot be a free-boundary point of both $u$ and $v$.
\end{lem}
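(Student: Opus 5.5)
We argue by contradiction and reduce the claim to a blow-up classification for the parabolic obstacle problem. Suppose $p_0$ is a free-boundary point of both $u$ and $v$, so that $p_0\in\partial\{u>0\}\cap\partial\{v>0\}$ and $u(p_0)=v(p_0)=0$.

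Since $p>n+2$, both functions are $C^{0,1}_\alpha$ and, with the bounded right-hand sides, have the standard $W^{1,2}_p$ control. Moreover $u,v\ge0$ vanish at $p_0$, so $\nabla u(p_0)=\nabla v(p_0)=0$. First we freeze coefficients. After the change of variables $y_i=\log x_i$, the operator ${\cal L}$ has constant second-order coefficients $\tfrac12 a_{ij}$, together with first-order and zeroth-order terms. A further linear change of variables turns $a$ into the identity. The lower-order terms, and the variation of $f_u,f_v$ away from their values at $p_0$, are perturbations that vanish under parabolic scaling. Set $\lambda_u:=f_u(x_0)>\lambda_v:=f_v(x_0)>0$.

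The key step is a quadratic growth and nondegeneracy estimate at $p_0$ for each function. Nondegeneracy for solutions of $(\partial_t+{\cal L})w=f\chi_{\{w>0\}}$ with $f(x_0)>0$ follows from the Caffarelli-type argument. One compares $w$ with $w(\cdot)-\tfrac{c}{2n}|y-y_1|^2+c(t-t_1)$ starting from a positivity point, which gives $\sup_{Q_r(p_0)}w\ge c r^2$. Quadratic growth $\sup_{Q_r}w\le Cr^2$ follows from the $W^{1,2}_p$ (in fact $W^{2,\infty}$-type) bounds available for obstacle-type problems with bounded right-hand side.

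We then consider the rescalings $u_r(s,y)=u(t_0+r^2 s,y_0+ry)/r^2$ and $v_r$ defined in the same way. Along a subsequence these converge locally uniformly, and weakly in $W^{1,2}_p$, to global solutions $u_0,v_0$ of
\[
(\partial_s+\tfrac12\Delta)u_0=\lambda_u\chi_{\{u_0>0\}},\qquad (\partial_s+\tfrac12\Delta)v_0=\lambda_v\chi_{\{v_0>0\}}.
\]
Both limits are nonzero by nondegeneracy, and they satisfy $0\le u_0\le v_0$ and $u_0(0)=v_0(0)=0$.

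The contradiction comes from comparing the equations on the set where both limits are positive. On $\{u_0>0\}\subset\{v_0>0\}$, the difference $w:=v_0-u_0\ge0$ satisfies $(\partial_s+\tfrac12\Delta)w=\lambda_v-\lambda_u<0$. The simplest route is to avoid blow-up classification altogether and argue locally at scale $r$ via a barrier. Choose a point $q$ near $p_0$ with $u(q)>0$ and run the strong maximum principle for $w$ on the connected component of $\{u>0\}$ in a small cylinder. There $(\partial_t+{\cal L})(v-u)\le -(\lambda_u-\lambda_v)/2<0$, because $f$ is continuous and the zeroth-order term is small for small values. Since $v-u\ge0$, the function $v-u$ is a nonnegative strict supersolution.

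Combining this with the nondegeneracy of $u$, we construct the auxiliary function $v-u-\kappa(|y-y_q|^2/2n-(t-t_q))$ in the backward cylinder. Its value at $q$ is nonnegative. On $\partial\{u>0\}$ we have $v-u=v\ge0$, so the minimum over the boundary is controlled, and this forces $v\ge \kappa r^2$ on $\partial\{u>0\}\cap Q_r$. The final step is to move to a point of $\partial\{u>0\}$ near $p_0$ and use that $v$ vanishes at $p_0$ with quadratic growth. Repeating the argument at all scales shows that $v>0$ in a full neighborhood of $p_0$. This contradicts $p_0\in\partial\{v>0\}$.

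\textbf{Main obstacle.} The hard step is making this last argument rigorous: proving that the gap $\lambda_u-\lambda_v$ forces $v\ge cr^2$ on the whole cylinder $Q_r(p_0)$, rather than only on $\{u>0\}$. If the direct barrier is too delicate, the fallback is the blow-up route. There we classify the global limits. Since $u_0\le v_0$ and both are nonnegative, the half-space solutions $\tfrac{\lambda}{2}((e\cdot y)^+)^2$ cannot be nested with $\lambda_u>\lambda_v$ and a common free-boundary point. The cases of polynomial or time-dependent limits are excluded in the same way through the Laplacian-type trace identities $\tfrac12\Delta=\lambda$ on the interior of the positivity sets.
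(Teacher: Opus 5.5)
\textbf{There is a genuine gap in your main (barrier) route, at exactly the step you flag.} On $\{u>0\}\subset\{v>0\}$ the difference $w=v-u$ is nonnegative and satisfies $(\partial_t+{\cal L})w=f_v-f_u<0$ near $p_0$. But a nonnegative strict supersolution carries no contradiction by itself: the minimum principle only places the minimum of $w$ on the parabolic boundary, and that is compatible with $w\ge0$. What you would actually need is a Hopf-type statement at $p_0$, where $w=|\nabla w|=0$. Such a statement requires geometric information about $\partial\{u>0\}$ at $p_0$, such as an interior parabolic ball or regularity of $p_0$. That information is not available a priori, since $p_0$ may be a singular point. Your assertions that the auxiliary function forces $v\ge\kappa r^2$ on $\partial\{u>0\}\cap Q_r$, and that iterating over scales makes $v$ positive near $p_0$, are not supported by any argument. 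A smaller point: your nondegeneracy barrier with $+c(t-t_1)$ is the one for the forward heat operator. Since $\partial_t+{\cal L}$ is backward in $t$, the barrier must be posed on $t\ge t_1$ with time term $-c(t-t_1)$.

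\textbf{Your fallback is the paper's proof, but as stated it omits the decisive step.} Limits $u_0,v_0$ of a rescaling sequence are merely global solutions. The half-space/polynomial dichotomy you invoke is a classification of \emph{homogeneous} (backward self-similar of degree two) global solutions \cite[Lemma~6.3]{CPS04}. The paper gets homogeneity in two moves. It passes to the unit-source normalizations $U/c_u$, $V/c_v$, and it applies the Weiss monotonicity formula \cite[Theorem~3.4]{CPS04} along a further common blow-up, which preserves $0\le U\le V$ and nontriviality. Without this step your case analysis has nothing to act on.

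\textbf{The cases are also not all excluded ``in the same way''.} Take frozen sources $c_u=f_u(x_0)>c_v=f_v(x_0)$, the equation $(\Delta_y-\partial_\tau)W=c\chi_{\{W>0\}}$ on $\mathbb R^n\times\mathbb R^-$, and polynomial solutions $m\tau+P(y)$ with $\Delta P-m=c$. The four cases then go as follows.
\begin{itemize}
\item Two half-space solutions force a common direction and then $c_u\le c_v$.
\item Two polynomial solutions give $P_u\le P_v$ and $m_u\ge m_v$, so $c_u\le c_v$. The time coefficient enters here, so a bare Laplacian-trace identity does not suffice.
\item If $U$ is polynomial and $V$ is a half-space solution, then $U$ vanishes on an open half-space and hence identically, contradicting nondegeneracy.
\item If $U=\frac{c_u}{2}(y\cdot e)_+^2$ and $V=m_v\tau+y^\top By$, the order gives $B\ge0$, $m_v\le0$ and $e^\top Be\ge c_u/2$. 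Hence $c_v=2\operatorname{tr}B-m_v\ge 2e^\top Be\ge c_u$.
\end{itemize}
The contradiction closes only once both ingredients are in place: homogeneity of the blow-ups, and this explicit four-case exclusion.
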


\begin{proof}
Suppose that $p_0$ is a common free-boundary point and set
$c_u=f_u(x_0)>c_v=f_v(x_0)>0$. With logarithms and exponentials taken
componentwise, let
$\xi_0:=\log x_0$, $A:=\frac12(a_{ij})_{i,j=1}^n$, and
$\beta:=A^{-1/2}(\mu_i-\frac12a_{ii})_{i=1}^n$,
and set
$\tau:=t_0-t$ and $y:=A^{-1/2}(\log x-\xi_0)+\beta\tau$.
For $w=u,v$, define
\[
 W_w(y,\tau)
 =e^{\rho\tau}w
   \bigl(t_0-\tau,e^{\xi_0+A^{1/2}(y-\beta\tau)}\bigr),
 \qquad
 F_w(y,\tau)
 =e^{\rho\tau}f_w
   \bigl(e^{\xi_0+A^{1/2}(y-\beta\tau)}\bigr).
\]
The point $p_0$ corresponds to $(y,\tau)=(0,0)$, and a direct calculation
gives $0\leq W_u\leq W_v$ and
$(\Delta_y-\partial_\tau)W_w
=F_w(y,\tau)\chi_{\{W_w>0\}}$ for $w=u,v$.
Since $F_u(0,0)=c_u>c_v=F_v(0,0)>0$, both sources are bounded above and away
from zero near the origin.
The standard local
quadratic-growth and nondegeneracy estimates for the parabolic obstacle
problem give
\begin{equation}\label{eq:local-quadratic-growth-comparison}
 cr^2\leq\sup_{Q_r^-}w\leq\sup_{Q_r}w\leq Cr^2,
 \qquad 0<r<r_0,\quad w=W_u,W_v.
\end{equation}
Here $Q_r=B_r\times(-r^2,r^2)$ and
$Q_r^-=B_r\times(-r^2,0]$; cf.\
\cite[Proposition~2.4]{LindgrenMonneau2015}
and the nondegeneracy argument in \citet[Section~5.1]{CPS04}.

The quadratic-growth bound and interior
$W^{1,2}_p$ estimates give a common sequence of rescalings converging locally
uniformly to global limits $0\leq U\leq V$ on
$\mathbb R^n\times\mathbb R^-$ satisfying
$(\Delta_y-\partial_\tau)U=c_u\chi_{\{U>0\}}$ and
$(\Delta_y-\partial_\tau)V=c_v\chi_{\{V>0\}}$, where $c_u>c_v>0$.
Here the rescaled sources converge locally uniformly to $c_u,c_v$,
so the variable-source compactness argument in
\citet[Section~3.2]{LindgrenMonneau2015} applies; nondegeneracy excludes zero limits.
The normalized blow-up limits \(U/c_u\) and \(V/c_v\) satisfy
the obstacle equation with source coefficient \(1\). We apply the results
below to these functions.
Applying the Weiss
monotonicity formula \cite[Theorem~3.4]{CPS04} along a further common blow-up
and then \cite[Lemma~6.3]{CPS04}, we may take $U$ and $V$ to be homogeneous,
with
\[
\begin{aligned}
 U&=\frac{c_u}{2}(y\cdot e_u)_+^2
 \quad\text{or}\quad
 U=\lambda_u\tau+P_u(y)
 \quad\text{with}\quad \Delta P_u-\lambda_u=c_u,\\
 V&=\frac{c_v}{2}(y\cdot e_v)_+^2
 \quad\text{or}\quad
 V=\lambda_v\tau+P_v(y)
 \quad\text{with}\quad \Delta P_v-\lambda_v=c_v.
\end{aligned}
\]
Here $e_u,e_v$ are unit vectors, $\lambda_u,\lambda_v\leq0$ are the
coefficients of the time-linear terms, and $P_u,P_v$ are homogeneous
quadratic polynomials.

None of the four possible pairs is compatible with $0\leq U\leq V$ and
$c_u>c_v$. For two half-space solutions, the order forces the same
half-space and $c_u\leq c_v$. For two polynomial solutions,
$P_u\leq P_v$ and $\lambda_u\geq\lambda_v$, again giving $c_u\leq c_v$.
If only $V$ is a half-space solution, then $U$ vanishes on an open
half-space and hence is zero. In the remaining case, write
$U=\frac{c_u}{2}(y\cdot e)_+^2$ and
$V=\lambda_v\tau+y^\top By$. The order gives
$B\geq0$, $\lambda_v\leq0$, and $e^\top Be\geq c_u/2$, so
$c_v=2\operatorname{tr}B-\lambda_v\geq2e^\top Be\geq c_u$,
which is also impossible.
\end{proof}

\begin{lem}\label{lem:strict-nesting-z}
For $0<z_1<z_2$, set
${\cal E}^+_{z_j}:=\{(t,x)\in{\cal D}_T:{\cal Q}(t,x,z_j)=K^+\}$ and
${\cal E}^-_{z_j}:=\{(t,x)\in{\cal D}_T:{\cal Q}(t,x,z_j)=K^-\}$.
Then
\[
 {\cal E}^+_{z_2}\subset
 \operatorname{int}_{{\cal D}_T}{\cal E}^+_{z_1},
 \qquad
 {\cal E}^-_{z_1}\subset
 \operatorname{int}_{{\cal D}_T}{\cal E}^-_{z_2}.
\]
\end{lem}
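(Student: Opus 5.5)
The plan is to combine the monotonicity $D_z^h{\cal Q}\le0$ of Proposition~\ref{pro:Q-monotonicity-lipschitz} with the local strict comparison of Lemma~\ref{lem:local-strict-comparison}. I treat the upper coincidence sets; the lower ones are symmetric.

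\emph{Upper sets.} Since ${\cal Q}(\cdot,z_2)\le{\cal Q}(\cdot,z_1)\le K^+$, we have ${\cal E}^+_{z_2}\subset{\cal E}^+_{z_1}$. Fix $p_0=(t_0,x_0)\in{\cal E}^+_{z_2}$ and suppose $p_0\notin\operatorname{int}{\cal E}^+_{z_1}$, so that $p_0\in\partial{\cal E}^+_{z_1}$. Set
\[
u:=K^+-{\cal Q}(\cdot,z_1),\qquad v:=K^+-{\cal Q}(\cdot,z_2),
\]
so that $0\le u\le v$ and $u(p_0)=v(p_0)=0$.

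\emph{Localization.} Lemma~\ref{lem:Zexist} gives $\pi(x_0,z_j)\ge\rho K^+$ at contact points, and the range bounds in \eqref{ineq:zdoup} give ${\cal Z}_{\cal D}>{\cal Z}_{\cal I}$. By continuity of ${\cal Q}$, I choose a neighbourhood ${\cal U}$ of $p_0$ on which ${\cal Q}(\cdot,z_j)>K^-$ for $j=1,2$, which is possible since ${\cal Q}(p_0,z_j)=K^+>K^-$. On ${\cal U}$ only the upper obstacle is active. As in the proof of Theorem~\ref{thm:Q}, using that $W^{1,2}_p$ derivatives vanish a.e.\ on level sets,
\[
(\partial_t+{\cal L})w=\bigl(\pi(x,z_j)-\rho K^+\bigr)\chi_{\{w>0\}}=:f_j(x)\chi_{\{w>0\}},
\]
with $w=u$ for $j=1$ and $w=v$ for $j=2$. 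Here $f_j$ is locally Lipschitz by Assumption~\ref{as:free_boundary}(i), and $f_1(x_0)>f_2(x_0)$ by strict decrease of $\pi$ in $z$.

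\emph{Applying Lemma~\ref{lem:local-strict-comparison}.} That lemma requires $f_2(x_0)>0$, that is $\pi(x_0,z_2)>\rho K^+$. Two cases arise.
\begin{enumerate}
\item If $p_0$ is also a free-boundary point of $v$, Lemma~\ref{lem:local-strict-comparison} applies (provided $f_2(x_0)>0$) and gives a contradiction.
\item If $p_0$ is interior to $\{v=0\}$, then $(\partial_t+{\cal L})v=0$ on a neighbourhood, and the equation forces $\pi(\cdot,z_2)-\rho K^+\le0$ there only in the weak sense. This alone does not help. Instead I argue that $u\le v=0$ near $p_0$, so $u=0$ near $p_0$, i.e.\ $p_0\in\operatorname{int}{\cal E}^+_{z_1}$, which contradicts the assumption.
\end{enumerate}
So the only real case is a common free-boundary point.

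\emph{Main obstacle: degenerate contact.} The remaining difficulty is excluding $\pi(x_0,z_2)=\rho K^+$ exactly, i.e.\ $f_2(x_0)=0$. I would handle it in one of two ways.
\begin{enumerate}
\item Observe that at a free-boundary point of $v$ with $f_2(x_0)=0$ nondegeneracy fails. Use instead that $u$ is a nonnegative supersolution-type function with $f_1(x_0)>0$, so nondegeneracy of $u$ gives $\sup_{Q_r^-}u\ge cr^2$. Meanwhile $v$ solves an obstacle problem with source $f_2=O(r)$ near $x_0$ (Lipschitz, vanishing at $x_0$), which should yield $\sup_{Q_r}v=o(r^2)$ via the quadratic growth bound with vanishing source. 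This contradicts $u\le v$.
\item Alternatively, first show directly that no contact point of ${\cal Q}(\cdot,z_2)$ has $\pi=\rho K^+$, by a strong maximum principle applied to $\partial_t{\cal Q}\le0$ and the equation. This seems harder, so I would try the growth comparison first.
\end{enumerate}

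\emph{Lower sets.} For the lower sets I set $u:=\,{\cal Q}(\cdot,z_2)-K^-$ and $v:={\cal Q}(\cdot,z_1)-K^-$. The same monotonicity gives $0\le u\le v$, with sources $f_1=\rho K^--\pi(x,z_2)>f_2=\rho K^--\pi(x,z_1)$. The argument then runs identically, with the same degenerate case $\pi(x_0,z_1)=\rho K^-$ treated by the growth comparison.
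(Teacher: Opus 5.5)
Your strategy of applying Lemma~\ref{lem:local-strict-comparison} to the gap functions at two capacity levels is the paper's. As written, though, the proof has a gap exactly at the step you call the main obstacle.

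Because you compare $z_1$ directly with $z_2$, Lemma~\ref{lem:Zexist} only gives $\pi(x_0,z_2)\ge\rho K^+$. So the case $f_2(x_0)=0$ stays open, and Lemma~\ref{lem:local-strict-comparison} cannot be applied there. Neither proposed fix is carried out.
\begin{itemize}
\item In Option~1, a quadratic growth bound gives only $\sup_{Q_r}v\le Cr^2$, with $C$ depending on the data; it does not give $o(r^2)$. Getting $o(r^2)$ would need a further blow-up/Liouville argument: rescalings converging to a nonnegative caloric function vanishing at the origin, which the strong minimum principle forces to be zero on the past cylinder. It would also need a justification of the upper growth bound at a point where the source vanishes, whereas the estimate used in Lemma~\ref{lem:local-strict-comparison} is stated for sources bounded away from zero. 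You supply neither.
\item Option~2 is not available from the paper's tools. Strict inequality $\pi>\rho K^+$ at free-boundary points is proved only in Lemma~\ref{lem:appendix-concise-boundary-estimates}. That proof uses differentiability of $\pi$ and a nonvanishing gradient, which Assumption~\ref{as:free_boundary} does not provide.
\end{itemize}

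The missing idea is simple: insert an intermediate level $\bar z\in(z_1,z_2)$, compare $z_1$ with $\bar z$, and use $z_2$ only to certify positivity.

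For the upper sets, take $p_0\in{\cal E}^+_{z_2}\setminus\operatorname{int}_{{\cal D}_T}{\cal E}^+_{z_1}$. The chain ${\cal E}^+_{z_2}\subset{\cal E}^+_{\bar z}\subset{\cal E}^+_{z_1}$ shows that $p_0\in{\cal E}^+_{\bar z}$ but is not interior to it. Hence $p_0$ is a common free-boundary point of $u=K^+-{\cal Q}(\cdot,z_1)$ and $v=K^+-{\cal Q}(\cdot,\bar z)$. Strict decrease of $\pi$ in $z$ then gives $\pi(x_0,z_1)-\rho K^+>\pi(x_0,\bar z)-\rho K^+>\pi(x_0,z_2)-\rho K^+\ge0$. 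Both sources are strictly positive at $x_0$, so Lemma~\ref{lem:local-strict-comparison} applies directly, and your case distinction is no longer needed.

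For the lower sets, take $u={\cal Q}(\cdot,z_2)-K^-$ and $v={\cal Q}(\cdot,\bar z)-K^-$, and use $\rho K^--\pi(x_0,z_2)>\rho K^--\pi(x_0,\bar z)>\rho K^--\pi(x_0,z_1)\ge0$ for $p_0\in{\cal E}^-_{z_1}$.
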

\begin{proof}
Write ${\cal Q}_j={\cal Q}(\cdot,\cdot,z_j)$. By comparison,
${\cal Q}_1\geq{\cal Q}_2$, and hence
${\cal E}^+_{z_2}\subset{\cal E}^+_{z_1}$ and
${\cal E}^-_{z_1}\subset{\cal E}^-_{z_2}$. We show that these
inclusions are strict at the free boundaries.

\emph{Upper coincidence sets.}
Suppose that
$p_0=(t_0,x_0)\in{\cal E}^+_{z_2}$ does not belong to
$\operatorname{int}_{{\cal D}_T}{\cal E}^+_{z_1}$, and choose
$\bar z\in(z_1,z_2)$. Then $p_0$ is a common free boundary point for the
upper coincidence sets corresponding to $z_1$ and $\bar z$. Indeed,
${\cal E}^+_{z_2}\subset{\cal E}^+_{\bar z}\subset{\cal E}^+_{z_1}$;
hence $p_0$ belongs to the latter two sets but is not an interior point of
either one. In a neighborhood of $p_0$
the lower obstacle is inactive. Set
$u:=K^+-{\cal Q}(\cdot,\cdot,z_1)$ and
$v:=K^+-{\cal Q}(\cdot,\cdot,\bar z)$.
Then $0\leq u\leq v$. Since $p_0\in{\cal E}^+_{z_2}$,
Lemma~\ref{lem:Zexist} and the strict decrease of $\pi$ in $z$ give
$\pi(x_0,z_1)-\rho K^+>\pi(x_0,\bar z)-\rho K^+>0$.
Lemma~\ref{lem:local-strict-comparison} rules out this common
free-boundary point. Hence
${\cal E}^+_{z_2}\subset
\operatorname{int}_{{\cal D}_T}{\cal E}^+_{z_1}$.

\emph{Lower coincidence sets.}
The lower inclusion follows in the same way. Indeed, for
$p_0=(t_0,x_0)\in{\cal E}^-_{z_1}\setminus
\operatorname{int}_{{\cal D}_T}{\cal E}^-_{z_2}$ and
$\bar z\in(z_1,z_2)$, apply Lemma~\ref{lem:local-strict-comparison} to
$u:={\cal Q}(\cdot,\cdot,z_2)-K^-$ and
$v:={\cal Q}(\cdot,\cdot,\bar z)-K^-$, for which
$\rho K^--\pi(x_0,z_2)>\rho K^--\pi(x_0,\bar z)>0$.
This gives a contradiction, and hence
${\cal E}^-_{z_1}\subset
\operatorname{int}_{{\cal D}_T}{\cal E}^-_{z_2}$.
\end{proof}

\begin{thm}
\label{thm:free-boundary-continuity}
    The solution ${\cal Q}$ to \eqref{eq:double-obs-n-dim} is locally Lipschitz continuous on ${\cal D}_T\times(0,\infty)$. Moreover, the free boundary functions ${\cal Z}_{\cal I}$ and ${\cal Z}_{\cal D}$ are continuous on $[0, T) \times (0, \infty)^n$.
\end{thm}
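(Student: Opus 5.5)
Joint local Lipschitz continuity follows from the estimates already available. Fix a compact set ${\cal K}_0\subset{\cal D}_T$ and a compact interval $I\subset(0,\infty)$. Since $p>n+2$, the bound in Theorem~\ref{thm:Q} together with the parabolic Sobolev embedding gives that ${\cal Q}(\cdot,\cdot,z)$ is $C^{0,1}_\alpha$ on ${\cal K}_0$. For this we need the $W^{1,2}_p$ estimate to be uniform in $z\in I$. That holds because the interior estimate only uses $|(\partial_t+{\cal L}){\cal Q}|\le\pi+\rho K^+$, and $\pi\le\pi(\cdot,\min I)$ on ${\cal K}_0$ by monotonicity in $z$. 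Alternatively, \eqref{eq:mono-Qi} gives directly $|\partial_t{\cal Q}|+\sum_i|x_i\partial_{x_i}{\cal Q}|\le C\pi(x,z)\le C\pi(x,\min I)$, which is bounded on compacts. Combining this uniform $(t,x)$-Lipschitz bound with the $z$-Lipschitz estimate of Proposition~\ref{pro:Q-monotonicity-lipschitz} yields local Lipschitz continuity in $(t,x,z)$ via the triangle inequality.

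For the continuity of ${\cal Z}_{\cal I}$, I would prove lower and upper semicontinuity separately.

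\emph{Upper semicontinuity.} The set ${\cal IR}=\{{\cal Q}=K^+\}$ is relatively closed by continuity of ${\cal Q}$. Take $(t_k,x_k)\to(t,x)$ with $\limsup{\cal Z}_{\cal I}(t_k,x_k)=\bar z$. Lemma~\ref{lem:free-boundary-range-limits} bounds ${\cal Z}_{\cal I}$ locally by $\pi^{-1}_z(x,\rho K^+)$, which is locally bounded by local uniform continuity of $\pi$ and its strict decrease in $z$, so $\bar z<\infty$. If $\bar z>0$, the points $(t_k,x_k,{\cal Z}_{\cal I}(t_k,x_k))$ lie in ${\cal IR}$ (the supremum is attained because the coincidence set is closed in $z$). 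Passing to the limit gives ${\cal Q}(t,x,\bar z)=K^+$, hence $\bar z\le{\cal Z}_{\cal I}(t,x)$.

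\emph{Lower semicontinuity.} This is where Lemma~\ref{lem:strict-nesting-z} enters. Fix $z_1<{\cal Z}_{\cal I}(t,x)$, which is positive by \eqref{eq:range-Z}, and pick $z_2\in(z_1,{\cal Z}_{\cal I}(t,x))$. Then $(t,x)\in{\cal E}^+_{z_2}\subset\operatorname{int}{\cal E}^+_{z_1}$. So there is a neighborhood of $(t,x)$ on which ${\cal Q}(\cdot,\cdot,z_1)=K^+$, hence ${\cal Z}_{\cal I}\ge z_1$ there. Letting $z_1\uparrow{\cal Z}_{\cal I}(t,x)$ gives $\liminf{\cal Z}_{\cal I}(t_k,x_k)\ge{\cal Z}_{\cal I}(t,x)$.

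The argument for ${\cal Z}_{\cal D}$ is symmetric. Upper semicontinuity comes from the nesting ${\cal E}^-_{z_1}\subset\operatorname{int}{\cal E}^-_{z_2}$ applied with $z_1\in({\cal Z}_{\cal D}(t,x),z_2)$, using finiteness from \eqref{eq:range-Z}. Lower semicontinuity comes from closedness of ${\cal DR}$. Here one must also rule out $\liminf=0$, which follows from the positive lower bound $\pi^{-1}_z(\cdot,\rho K^-)$, locally bounded away from zero.

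The main obstacle is already absorbed into Lemma~\ref{lem:strict-nesting-z}: without strict nesting, the boundaries could have flat pieces in $z$ and jumps could occur. The remaining care is uniformity. The constant in the $(t,x)$-regularity must be uniform over $z$ in compact intervals, and the inverse-function bounds must be locally uniform in $x$; both follow from the local uniform limits in Assumption~\ref{as:free_boundary}(iv) and continuity of $\pi$. Note that the lower semicontinuity step uses only the nesting lemma, so it holds at every point of $[0,T)\times(0,\infty)^n$.
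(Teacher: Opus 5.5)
Your argument is essentially the paper's: joint Lipschitz continuity from the derivative bounds \eqref{eq:mono-Qi} combined with the $z$-Lipschitz estimate of Proposition~\ref{pro:Q-monotonicity-lipschitz}. Semicontinuity of ${\cal Z}_{\cal I}$ comes from closedness of the contact set (upper) and the strict nesting of Lemma~\ref{lem:strict-nesting-z} (lower), and ${\cal Z}_{\cal D}$ is mirrored, with the positive lower bound in \eqref{eq:range-Z} excluding a zero limit. One caveat: your first route via the parabolic Sobolev embedding gives only H\"older continuity of order $(1+\alpha)/2$ in $t$, so the Lipschitz bound in time comes from your second route through \eqref{eq:mono-Qi}, which is the one the paper uses.
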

\begin{proof}
Since $\pi$ is locally bounded, \eqref{eq:mono-Qi} gives locally bounded
$\partial_t{\cal Q}$ and $\partial_{x_i}{\cal Q}$, while
Proposition~\ref{pro:Q-monotonicity-lipschitz} gives the locally uniform
Lipschitz estimate in $z$.
Hence ${\cal Q}$ is locally Lipschitz continuous in $(t,x,z)$.
It remains to prove the continuity of the two free boundaries.

For $z>0$, set
${\cal E}^+_z:=\{(t,x)\in{\cal D}_T:{\cal Q}(t,x,z)=K^+\}$ and
${\cal E}^-_z:=\{(t,x)\in{\cal D}_T:{\cal Q}(t,x,z)=K^-\}$.
We use the strict-nesting property from
Lemma~\ref{lem:strict-nesting-z}: whenever $0<z_1<z_2$,
\begin{equation}\label{eq:strict-nesting-z}
 {\cal E}^+_{z_2}\subset
 \operatorname{int}_{{\cal D}_T}{\cal E}^+_{z_1},
 \qquad
 {\cal E}^-_{z_1}\subset
 \operatorname{int}_{{\cal D}_T}{\cal E}^-_{z_2}.
\end{equation}

We first consider ${\cal Z}_{\cal I}$. Let
$p_k=(t_k,x_k)\to p=(t,x)$ in ${\cal D}_T$ and write
$a_k={\cal Z}_{\cal I}(p_k)$ and $a={\cal Z}_{\cal I}(p)$.
By the strict decrease and continuity of $\pi$ in $z$, together
with the locally uniform limits in Assumption~\ref{as:free_boundary}(iv),
the maps $x\mapsto\pi_z^{-1}(x,\rho K^\pm)$ are continuous.
The upper bound for ${\cal Z}_{\cal I}$ in \eqref{eq:range-Z} shows that
$(a_k)$ is locally bounded. If $\bar a=\limsup_k a_k>0$, choose a
subsequence along which
$a_k\to\bar a$. Since ${\cal Q}(p_k,a_k)=K^+$, the continuity of
${\cal Q}$ gives ${\cal Q}(p,\bar a)=K^+$, and hence $\bar a\le a$.
Thus ${\cal Z}_{\cal I}$ is upper semicontinuous.

Suppose that $\underline a:=\liminf_k a_k<a$. Choose
$\underline a<z_1<z_2<a$.
Then $p\in{\cal E}^+_{z_2}$, so~\eqref{eq:strict-nesting-z} gives a
neighborhood $U$ of $p$ in ${\cal D}_T$ such that
$U\subset{\cal E}^+_{z_1}$. Consequently $a_k\ge z_1$ for all sufficiently
large $k$, contradicting the definition of $\underline a$. Hence
$\liminf_k a_k\ge a$, and ${\cal Z}_{\cal I}$ is continuous.

The continuity of ${\cal Z}_{\cal D}$ follows by the same argument, with the
inequalities reversed and using the second inclusion in
\eqref{eq:strict-nesting-z}. The lower bound for ${\cal Z}_{\cal D}$ in
\eqref{eq:range-Z}
excludes a zero finite subsequential limit; the cases involving $+\infty$
are immediate.
\end{proof}

\subsection{Higher regularity of the free boundaries}

The next theorem allows us to differentiate the free boundaries with
respect to time and the state variables.
We note that the verification in
Section~\ref{sec:verification} uses only continuity and strict separation of
the boundaries.
For a given state process $X$, the functions
${\cal Z}_{\cal I}(t,X_t)$ and ${\cal Z}_{\cal D}(t,X_t)$ are the lower and
upper boundaries of the time-dependent inaction interval: capacity is
increased at the lower boundary and reduced at the upper boundary.

The additional hypotheses $(D_1)$, $(D_\infty)$, and
$(C_{\mathrm{eq}})$ are stated in
Appendix~\ref{app:capacity-boundary-smoothness}.
Under $(D_1)$, $\pi$ has strictly positive state derivatives and a strictly
negative capacity derivative, with $C^1$ dependence on capacity as a map
into $C_{\mathrm{loc}}^{2,\alpha_0}$ for some
$\alpha_0\in(0,1)$.
The stronger condition $(D_\infty)$ adds smoothness in all variables,
while $(C_{\mathrm{eq}})$ requires
$-z\partial_z\pi=\lambda(z)x_{i_\ast}\partial_{x_{i_\ast}}\pi$
for some state direction $i_\ast$ and a smooth positive function $\lambda$
with the bounds specified there.

\begin{samepage}
\begin{thm}
\label{thm:capacity-boundary-regularity}
If $(D_1)$ holds, then both free-boundary functions are
jointly continuously differentiable in time and all state variables.  Under
the stronger assumptions $(D_\infty)$ and $(C_{\mathrm{eq}})$, both functions
are jointly smooth.  More precisely,
\[
\begin{aligned}
 (D_1)
 &\quad\Longrightarrow\quad
 {\cal Z}_{\cal I},{\cal Z}_{\cal D}
 \in C^1({\cal D}_T),\\
 (D_\infty)+(C_{\mathrm{eq}})
 &\quad\Longrightarrow\quad
 {\cal Z}_{\cal I},{\cal Z}_{\cal D}
 \in C^\infty({\cal D}_T).
\end{aligned}
\]
\end{thm}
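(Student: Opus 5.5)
The plan is to work for each fixed level $z$ with the one-sided obstacle problem satisfied by ${\cal Q}$ near each free boundary, and then recover ${\cal Z}_{\cal I}$ and ${\cal Z}_{\cal D}$ as the functions whose level sets are the $(t,x)$-free boundaries $\Gamma_z^\pm:=\partial{\cal E}^\pm_z$. By Lemma~\ref{lem:Zexist} and Lemma~\ref{lem:strict-nesting-z}, $\{{\cal Z}_{\cal I}=z\}=\Gamma^+_z$ and the family $(\Gamma^+_z)_z$ is strictly nested. Regularity of ${\cal Z}_{\cal I}$ therefore reduces to two things: regularity of each $\Gamma^+_z$, and nondegenerate (transversal) motion of $\Gamma^+_z$ in $z$. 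I treat ${\cal Z}_{\cal I}$ only; ${\cal Z}_{\cal D}$ is symmetric with $u={\cal Q}-K^-$ and source $\rho K^--\pi>0$.

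\emph{Step 1: $\Gamma^+_z$ is a Lipschitz graph.} Near a point of $\Gamma^+_z$ the lower obstacle is inactive by \eqref{ineq:zdoup}. Hence $u:=K^+-{\cal Q}(\cdot,\cdot,z)\ge0$ solves $(\partial_t+{\cal L})u=(\pi-\rho K^+)\chi_{\{u>0\}}$, and the source is strictly positive near the boundary. Under $(D_1)$, $\partial_{x_i}\pi>0$. Differentiating the equation in $x_i$ inside $\{u>0\}$, $w_i:=-x_i\partial_{x_i}u$ satisfies $(\partial_t+{\cal L})w_i=-x_i\partial_{x_i}\pi<0$ there, with $w_i=0$ on $\Gamma^+_z$. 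The strong maximum principle, combined with \eqref{eq:mono-Qi}, gives $x_i\partial_{x_i}{\cal Q}>0$ in the waiting region near $\Gamma_z^+$. The same holds for $-\partial_t{\cal Q}$ with the $-\rho$ term. The standard cone-of-monotonicity argument (as in Laurence--Salsa) then shows that $\Gamma^+_z$ is locally a Lipschitz graph over the $(t,x)$-hyperplane orthogonal to a fixed direction $e_{i}$. After the logarithmic change of variables used in Lemma~\ref{lem:local-strict-comparison}, the parabolic obstacle-problem theory for Lipschitz free boundaries with Hölder-continuous positive source upgrades this to $C^{1,\alpha}$ in space and $C^1$ in time, uniformly for $z$ in compact intervals. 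Here the source is $C^{2,\alpha_0}$ in $x$ by $(D_1)$.

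\emph{Step 2: transversal motion in $z$ and implicit function.} This is the main obstacle. Write $\Gamma^+_z$ locally as $\{x_i=g(t,x',z)\}$. I must show that $g$ is $C^1$ in $z$ with $\partial_z g\neq0$, so that ${\cal Z}_{\cal I}(t,x)=z$ is equivalent to $x_i=g(t,x',z)$ and the implicit function theorem applies. To do this, I would consider $v:=-\partial_z{\cal Q}\ge0$, which exists a.e. by Proposition~\ref{pro:Q-monotonicity-lipschitz}. In the waiting region it solves $(\partial_t+{\cal L})v=\partial_z\pi<0$, and it vanishes on the coincidence set. I would compare $v$ with $w_i$ near the boundary, in the spirit of a boundary Harnack inequality. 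Both are positive solutions vanishing on the same $C^{1,\alpha}$ boundary with negative sources comparable via $(D_1)$. A boundary Harnack (or Hopf-type second-order expansion using quadratic growth and nondegeneracy) would give that $v/w_i$ extends continuously and positively to $\Gamma^+_z$. Formally, the speed of the boundary in $z$ is $\partial_z g=-\lim v/(x_i^{-1}w_i)$, which is negative, finite and continuous. I expect this to be the hard step. I would first try to prove it via the difference quotients $D_z^h{\cal Q}$ and $C^1$ dependence of the data on $z$ in $C^{2,\alpha_0}_{\rm loc}$, deriving $\Gamma^+_{z+h}$ as a $C^1$ perturbation of $\Gamma^+_z$ with uniform estimates. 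Strict nesting (Lemma~\ref{lem:strict-nesting-z}) gives the sign, and the $C^{1,\alpha}$ estimates give the differentiability. Then ${\cal Z}_{\cal I}\in C^1({\cal D}_T)$.

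\emph{Step 3: $C^\infty$.} Under $(C_{\mathrm{eq}})$, the identity $-z\partial_z\pi=\lambda(z)x_{i_\ast}\partial_{x_{i_\ast}}\pi$ says that $z$-derivatives of the source are $x_{i_\ast}$-derivatives up to the factor $\lambda$. I would use it to show that $-z\partial_z{\cal Q}-\lambda(z)x_{i_\ast}\partial_{x_{i_\ast}}{\cal Q}$ solves a homogeneous problem vanishing on the coincidence set, hence is zero by comparison. This converts $\partial_z g$ into an explicit expression, $-\lambda(z)/z$ times a spatial quantity of $g$. Smoothness of $\Gamma^+_z$ in $(t,x)$ under $(D_\infty)$ follows from the partial hodograph--Legendre transform (Kinderlehrer--Nirenberg) once Step 1 gives $C^{1,\alpha}$. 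The $z$-dependence is then smooth via this relation, and bootstrapping gives ${\cal Z}_{\cal I},{\cal Z}_{\cal D}\in C^\infty({\cal D}_T)$.
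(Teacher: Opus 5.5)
Your architecture is the paper's. You prove regularity of each fixed-level contact boundary, then transversality in $z$ through the ratio of $-\partial_z\mathcal Q$ to $\partial_{x_i}\mathcal Q$ (both vanish on the free boundary and have negative sources in the waiting region), then apply the implicit function theorem, and under $(C_{\mathrm{eq}})$ use an $x_{i_\ast}$-translation that compensates a change in capacity. Using a cone of monotonicity plus Lipschitz-free-boundary theory in Step~1, instead of the Lindgren--Monneau classification, is a legitimate variant. However, two facts everything rests on are asserted rather than proved.

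The first is the strict sign of the source on the free boundary. You say $\pi-\rho K^+$ is strictly positive near $\Gamma^+_z$, but \eqref{ineq:zdoup} and the contact-point argument behind it give only $\pi(x,z)\ge\rho K^+$ on $\{\mathcal Q=K^+\}$. Nothing you cite excludes a free-boundary point with $\pi(x_0,z)=\rho K^+$. At such a point the obstacle problem is degenerate. The quadratic nondegeneracy, the Lipschitz-to-$C^{1,\alpha}$ upgrade you invoke, and the Hopf-type lower bounds you need in Step~2 all require a source bounded below by a positive constant. The paper settles this in Lemma~\ref{lem:appendix-concise-boundary-estimates}. It starts from $u_I(t_0,\xi_0)\ge\mathbb E\int_0^\tau e^{-\rho s}(\rho K^+-\widetilde\pi(\Xi_s))\,ds$. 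If $\widetilde\pi(\xi_0)=\rho K^+$, it uses $D\widetilde\pi(\xi_0)\ne0$ (this is where $\partial_{x_i}\pi>0$ enters) and the hitting time of $\sqrt h$ by the Brownian motion $D\widetilde\pi(\xi_0)^\top\Sigma B$. This makes the right-hand side $ch^{3/2}+O(h^2)$ with $c>0$, contradicting $u_I(t_0,\xi_0)=0$. You need some such argument before Step~1 can start.

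The second is that Step~2 is a plan, not a proof. Proposition~\ref{pro:Q-monotonicity-lipschitz} gives only Lipschitz continuity in $z$, so $\partial_z\mathcal Q$ exists only almost everywhere. Your comparison of $v$ with $w_i$ needs $v$ defined everywhere, continuous up to $\Gamma^+_z$, vanishing there, and solving the equation in the waiting region. Interior limits of $D^h_z\mathcal Q$ solve the equation but are not identified without exactly this boundary information. The paper gets it in Lemma~\ref{lem:capacity-local-stability} in three moves:
\begin{itemize}
\item it squeezes the difference quotient between two hitting-time functionals, \eqref{eq:capacity-difference-quotient-squeeze};
\item it proves that these hitting times depend continuously on the initial point and the level, because the diffusion enters the interior of the contact set immediately;
\item it passes to the limit to obtain \eqref{eq:capacity-derivative-representation}.
\end{itemize}

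Even granting $v$, two more pieces are missing.
\begin{itemize}
\item Strict nesting gives only strict monotonicity of the graph $g$ in $z$, not $\partial_z g\neq0$. Nonvanishing must come from a Hopf bound $v\ge c\,d$, using $-\partial_z\pi$ bounded below. The constants must be uniform in $z$, so that the boundary value of the quotient is jointly continuous.
\item Uniform $C^{1,\alpha}$ bounds give continuity of $g$ in $z$ and $C^1$ convergence in $(t,x')$, but no first-order expansion in $z$. The paper bridges this with the level sets $\{\mathcal Q=K^+-\varepsilon\}$. These lie in the waiting region, where the implicit function theorem applies; letting $\varepsilon\downarrow0$ uses the locally uniform boundary limits of all the quotients.
\end{itemize}

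Two smaller points. The sign of the speed is wrong: $\partial_z g=-\partial_z\mathcal Q/\partial_{x_i}\mathcal Q>0$, consistent with $\mathcal E^+_z$ shrinking as $z$ increases. In Step~3, your identity $-z\partial_z\mathcal Q=\lambda x_{i_\ast}\partial_{x_{i_\ast}}\mathcal Q$ is correct, but proving it by comparison presupposes the Step~2 facts. The paper avoids this: it obtains \eqref{eq:capacity-exact-value-invariance} directly from uniqueness in Theorem~\ref{thm:Q}, since the translate solves the same double obstacle problem, with no $z$-differentiation.
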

\end{samepage}

The proof of Theorem~\ref{thm:capacity-boundary-regularity} is given in
Appendix~\ref{app:capacity-boundary-smoothness}.

\begin{rem}
The two smooth CES specifications in Remark~\ref{rem:Pi} satisfy
\((D_1)\) and \((D_\infty)\). This follows from the derivative formulas in
Remark~\ref{rem:free-boundary-examples}. For finite \(\theta\ne0\) and
positive weights, however, the ratio
\(-z\partial_z\pi/(x_i\partial_{x_i}\pi)\) depends on \(x\) for each
\(i=1,2\), so \((C_{\mathrm{eq}})\) does not hold.
Theorem~\ref{thm:capacity-boundary-regularity} therefore gives \(C^1\)
regularity for these examples, including the specification in
Section~\ref{sec:Numeric}, but does not yield \(C^\infty\) regularity.
For Cobb--Douglas revenue
\(\Pi=A x_1^{a_1}x_2^{a_2}z^\delta\), with \(A,a_1,a_2>0\) and
\(0<\delta<1\), one has
\(-z\partial_z\pi=(1-\delta)\pi\) and
\(x_i\partial_{x_i}\pi=a_i\pi\). Thus all three conditions hold with
\(\lambda=(1-\delta)/a_i\), and both boundaries are smooth.
The production-switching specification and the perfect-complements revenue
\(P\min\{x_1,x_2\}^{\gamma}z^\delta\) do not satisfy \((D_1)\).
Their marginal revenues are not differentiable on the switching set,
and the derivative in an inactive state coordinate is zero.
Their boundaries still satisfy the conclusions of
Theorem~\ref{thm:free-boundary-continuity}, and the verification in
Section~\ref{sec:verification} applies.
\end{rem}

\section{Verification}
\label{sec:verification}

We verify the connection between the strong solution of the double-obstacle
problem and the reversible-investment problem of
Problem~\ref{pr:main}.  The
argument has four steps.  We first derive, from the free-boundary results of
Section~\ref{sec:fb}, the pathwise regularity needed for reflection.  We then
recover the two switching value functions from the marginal value by means of
Feynman--Kac potentials.  Next, we verify the switching problems, separating
the fixed-level optimal switching rule from the level-crossing rule generated
by the reflected capacity.  Finally, we aggregate the level-wise identities
and verify the singular control.

Throughout the section, Assumptions~\ref{as:global}
and~\ref{as:free_boundary} are in force.
For fixed \(z>0\),
Assumption~\ref{as:free_boundary}\textnormal{(ii)} implies that there are
constants \(C_z,q>0\) such that
\begin{equation}\label{eq:pi-polynomial-growth-verification}
    0<\pi(x,z)\le C_z(1+|x|^q),
    \qquad x\in(0,\infty)^n.
\end{equation}
Indeed, one first replaces every coordinate smaller than one by one, using
the coordinatewise monotonicity of \(\pi\), and then integrates
\(x_i\partial_{x_i}\pi\le C_\pi\pi\) along each remaining coordinate.  Since
the geometric Brownian motion has finite moments of every order on a finite
horizon, all fixed-\(z\) Feynman--Kac quantities used below are integrable.

\subsection{Terminal extension of the free boundaries}

In view of the terminal limits in
Lemma~\ref{lem:free-boundary-range-limits}, extend the free-boundary functions
to \(t=T\) by setting
\begin{equation}\label{eq:terminal-extension-free-boundaries}
    {\cal Z}_{\cal I}(T,x):=0,
    \qquad
    {\cal Z}_{\cal D}(T,x)
    :=\pi^{-1}_z(x,\rho K^-),
    \qquad x\in(0,\infty)^n.
\end{equation}
These are the left-limit extensions of the
preterminal free boundaries, rather than the boundaries of the terminal
coincidence sets. They are used to close the reflection interval
continuously, and no switch or intervention is introduced at \(T\).

\begin{lem}\label{lem:verification-boundary-regularity-separation}
The terminal extensions in
\eqref{eq:terminal-extension-free-boundaries} are continuous on
\([0,T]\times(0,\infty)^n\).  Moreover,
\begin{equation}\label{eq:strict-boundary-gap-pointwise}
    0\le {\cal Z}_{\cal I}(t,x)
    <{\cal Z}_{\cal D}(t,x)<\infty,
    \qquad (t,x)\in[0,T]\times(0,\infty)^n.
\end{equation}
For \(X^x\) solving \eqref{eq:GBM}, define
\begin{equation}\label{eq:random-reflection-boundaries}
    \ell_t:={\cal Z}_{\cal I}(t,X_t^x),
    \qquad
    r_t:={\cal Z}_{\cal D}(t,X_t^x),
    \qquad 0\le t\le T.
\end{equation}
Then, outside a \(\mathbb P\)-null set,
\begin{equation}\label{eq:pathwise-boundary-properties}
    \ell,r\in C([0,T]),
    \qquad
    \inf_{0\le t\le T}(r_t-\ell_t)>0.
\end{equation}
\end{lem}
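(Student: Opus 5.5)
Continuity on the extended domain splits into the open part and the terminal slice. On $[0,T)\times(0,\infty)^n$, both functions are continuous by Theorem~\ref{thm:free-boundary-continuity}. On $\{T\}\times(0,\infty)^n$ the extended values are continuous in $x$: ${\cal Z}_{\cal I}(T,\cdot)\equiv0$, and $x\mapsto\pi^{-1}_z(x,\rho K^-)$ is continuous, as noted in the proof of Theorem~\ref{thm:free-boundary-continuity}. The remaining point is joint continuity at $(T,x_0)$ along sequences $(t_k,x_k)\to(T,x_0)$ with $t_k<T$.

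For ${\cal Z}_{\cal I}$ I will use the range bound \eqref{eq:range-Z} together with the uniform argument behind the terminal limit. Fix $z>0$. Since ${\cal Q}$ is continuous on $[0,T]\times(0,\infty)^n$ with ${\cal Q}(T,\cdot,z)=K^-<K^+$, and since $\{T\}\times\overline{B}$ is compact for a small closed ball $\overline B$ around $x_0$, we get ${\cal Q}(t,y,z)<K^+$ for all $y\in\overline B$ and all $t$ close to $T$. Hence ${\cal Z}_{\cal I}(t_k,x_k)<z$ eventually, and letting $z\downarrow0$ gives the limit $0$.

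For ${\cal Z}_{\cal D}$ the lower estimate is immediate. By \eqref{eq:range-Z}, ${\cal Z}_{\cal D}(t_k,x_k)\ge\pi^{-1}_z(x_k,\rho K^-)$, and the right side tends to $\pi^{-1}_z(x_0,\rho K^-)$. For the upper estimate I will redo the barrier argument from the proof of \eqref{eq:terminal-Z}, but uniformly in the centre. Fix $z>\pi^{-1}_z(x_0,\rho K^-)$. Choose $r,\delta$ so that $\pi(\cdot,z)\le\rho K^--2\delta$ on the log-ball of radius $2r$ around $x_0$; this uses continuity of $\pi$. For centres $y$ in the log-ball of radius $r$, use the barrier $K^-+\eta(t_0)h_y/r^2$ on $B_r(y)$, where $\eta(t_0)$ is the supremum of ${\cal Q}(\cdot,\cdot,z)-K^-$ over $[t_0,T]\times\overline{B_{2r}(x_0)}$. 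Uniform continuity of ${\cal Q}$ up to $T$ on this compact set gives $\eta(t_0)\to0$. The same comparison then yields ${\cal Q}(t,y,z)=K^-$ for all $t\in[t_0,T)$ and all such $y$. Hence ${\cal Z}_{\cal D}(t_k,x_k)\le z$ eventually, which gives the upper estimate. Making this uniformity work is the only delicate step.

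For the strict gap \eqref{eq:strict-boundary-gap-pointwise}, the chain \eqref{ineq:zdoup} gives ${\cal Z}_{\cal I}\le\pi^{-1}_z(x,\rho K^+)<\pi^{-1}_z(x,\rho K^-)\le{\cal Z}_{\cal D}$ for $t<T$. At $t=T$ the gap is $0<\pi^{-1}_z(x,\rho K^-)$. Finiteness of ${\cal Z}_{\cal D}$ comes from \eqref{eq:range-Z}. For the pathwise statement, note that $X^x$ has continuous paths in $(0,\infty)^n$ almost surely. Composing with the jointly continuous extensions gives $\ell,r\in C([0,T])$. Then $r-\ell$ is a continuous, strictly positive function on the compact interval $[0,T]$, so its infimum is attained and is positive.
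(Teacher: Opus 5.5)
Your proof is correct, but it handles the only nontrivial step, joint continuity at $t=T$, differently from the paper.

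\textbf{The paper's route.} The paper does not revisit any barrier. It takes the pointwise terminal limits \eqref{eq:terminal-Z} as given and fixes $t_k\uparrow T$. It then observes three facts about $x\mapsto{\cal Z}_{\cal I}(t_k,x)$ and $x\mapsto{\cal Z}_{\cal D}(t_k,x)$:
\begin{itemize}
\item they are continuous, by Theorem~\ref{thm:free-boundary-continuity};
\item they are monotone in $k$, by time-monotonicity of the boundaries;
\item they converge pointwise to the continuous limits $0$ and $\pi^{-1}_z(\cdot,\rho K^-)$.
\end{itemize}
Dini's theorem then gives locally uniform convergence, and monotonicity in $t$ upgrades this sequential statement to joint continuity.

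\textbf{Your route.} You make the underlying arguments uniform in the base point instead.
\begin{itemize}
\item For ${\cal Z}_{\cal I}$, you apply the tube lemma to the open condition ${\cal Q}(\cdot,\cdot,z)<K^+$ near $\{T\}\times\overline B$.
\item For ${\cal Z}_{\cal D}$, you run the barrier with centres ranging over a log-ball of radius $r$ around $x_0$, taking $\eta(t_0)$ over the doubled ball.
\end{itemize}
The uniformity you flagged as delicate does go through. The bound ${\cal L}h\le M_0$ has $M_0$ independent of the centre, and $\partial B_r(y)\subset\overline{B_{2r}(x_0)}$. So one choice of $t_0$ with $\eta(t_0)M_0/r^2\le\delta$ works for all centres and all start times $t\in[t_0,T)$ at once.

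\textbf{What each buys.} The paper's route is shorter and reuses results already proved. Yours is more self-contained: for the terminal step it needs neither continuity of the boundaries before $T$ nor their time-monotonicity, only continuity of ${\cal Q}$ up to $T$ and the comparison principle. The cost is re-running the comparison argument.

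\textbf{Pathwise step.} Taking the minimum of the continuous, pointwise positive function $r-\ell$ on $[0,T]$ is slightly simpler than the paper's bound $\min_{K_\omega}\bigl(\pi^{-1}_z(\cdot,\rho K^-)-\pi^{-1}_z(\cdot,\rho K^+)\bigr)$. The paper's version has the advantage of an explicit lower bound on the gap along the range of the path.
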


\begin{proof}
Set
\[
    a(x):=\pi^{-1}_z(x,\rho K^+),
    \qquad
    b(x):=\pi^{-1}_z(x,\rho K^-).
\]
The continuity and strict decrease of \(z\mapsto\pi(x,z)\), together with
the locally uniform endpoint limits in
Assumption~\ref{as:free_boundary}\textnormal{(iv)}, imply that \(a\) and \(b\)
are continuous.  Since \(K^+>K^-\),
\begin{equation}\label{eq:inverse-gap}
    0<a(x)<b(x)<\infty.
\end{equation}
The inverse bounds \eqref{ineq:zdoup} give, for \(t<T\),
\begin{equation}\label{eq:boundary-gap-lower-bound}
    {\cal Z}_{\cal D}(t,x)-{\cal Z}_{\cal I}(t,x)
    \ge b(x)-a(x)>0.
\end{equation}
Together with \eqref{eq:terminal-extension-free-boundaries}, this proves
\eqref{eq:strict-boundary-gap-pointwise}.

By Theorem~\ref{thm:free-boundary-continuity}, both free boundaries are
continuous on \([0,T)\times(0,\infty)^n\). It remains to check continuity at
\(T\).  Let
\(K\Subset(0,\infty)^n\) and choose any sequence \(t_k\uparrow T\).  The
functions \(x\mapsto{\cal Z}_{\cal I}(t_k,x)\) are continuous on \(K\), are
nonincreasing in \(k\), and converge pointwise to zero by the first limit in
\eqref{eq:terminal-Z}.  Dini's theorem therefore yields uniform convergence on
\(K\).  Similarly, the continuous functions
\(x\mapsto{\cal Z}_{\cal D}(t_k,x)\) decrease pointwise to the continuous
function \(b(x)\) by the second limit in \eqref{eq:terminal-Z}; another
application of Dini's theorem gives uniform convergence on \(K\).
Monotonicity in time then upgrades the
sequential conclusion to joint continuity at \(T\).

For almost every \(\omega\), the set
\[
    K_\omega:=\{X_t^x(\omega):0\le t\le T\}
\]
is a compact subset of \((0,\infty)^n\).  The asserted continuity of
\(\ell\) and \(r\) follows by composition.  Finally,
\eqref{eq:boundary-gap-lower-bound} and the terminal definitions give
\[
    \inf_{0\le t\le T}(r_t-\ell_t)
    \ge
    \min_{u\in K_\omega}\bigl(b(u)-a(u)\bigr)>0,
\]
which proves \eqref{eq:pathwise-boundary-properties}.
\end{proof}

\subsection{Recovery of the switching value pair}

Fix \(z>0\) and define
\begin{equation}\label{eq:R0-and-H}
\begin{aligned}
{\cal R}_0(t,x,z)
&:=
\mathbb E_{t,x}\bigg[
\int_t^T e^{-\rho(s-t)}\pi(X_s,z)\,ds
+e^{-\rho(T-t)}K^-
\bigg],\\
H(t,x,z)
&:=
\mathbb E_{t,x}\bigg[
\int_t^T e^{-\rho(s-t)}\pi(X_s,z)\,ds
\bigg].
\end{aligned}
\end{equation}
Thus
\({\cal R}_0=H+e^{-\rho(T-t)}K^-\).

Let
\begin{equation}\label{eq:G-definition-verification}
    G(t,x,z)
    :=(\partial_t+{\cal L}){\cal Q}(t,x,z)+\pi(x,z),
\end{equation}
where \(G\) is understood through an almost-everywhere representative.  The
strong double-obstacle problem implies
\begin{equation}\label{eq:G-contact-decomposition}
\begin{aligned}
G^+(t,x,z)
&:=\max\{G(t,x,z),0\}
=\bigl(\pi(x,z)-\rho K^+\bigr)
  {\bf 1}_{\{{\cal Q}=K^+\}}(t,x,z),\\
G^-(t,x,z)
&:=\max\{-G(t,x,z),0\}
=\bigl(\rho K^--\pi(x,z)\bigr)
  {\bf 1}_{\{{\cal Q}=K^-\}}(t,x,z)
\end{aligned}
\end{equation}
almost everywhere.  In particular,
\begin{equation}\label{eq:G-bounds}
    0\le G^+\le\pi(\cdot,z),
    \qquad
    0\le G^-\le\rho K^-.
\end{equation}
In what follows, we take \(G^+\) and \(G^-\) to be the pointwise versions
defined by the right-hand sides of \eqref{eq:G-contact-decomposition}.
These versions vanish outside their respective contact sets.  Since \(X\)
admits transition densities on \((0,\infty)^n\), their values on any
\(dt\otimes dx\)-null set do not affect the corresponding Feynman--Kac
potentials.

\begin{pro}\label{pro:Qhat-existence}
For \(j=0,1\), define the Feynman--Kac potentials
\begin{equation}\label{eq:Qhat-FK}
\begin{aligned}
\widehat{\cal Q}_0(t,x,z)
&:=\mathbb E_{t,x}\bigg[
  \int_t^T e^{-\rho(s-t)}G^+(s,X_s,z)\,ds
\bigg],\\
\widehat{\cal Q}_1(t,x,z)
&:=\mathbb E_{t,x}\bigg[
  \int_t^T e^{-\rho(s-t)}G^-(s,X_s,z)\,ds
\bigg].
\end{aligned}
\end{equation}
Moreover,
\begin{equation}\label{eq:Qhat-bounds}
    0\le\widehat{\cal Q}_0\le H,
    \qquad
    0\le\widehat{\cal Q}_1
    \le K^-\bigl(1-e^{-\rho(T-t)}\bigr).
\end{equation}
Define
\begin{equation}\label{eq:Qplus-from-potentials}
    {\cal Q}_0^+:=\widehat{\cal Q}_0,
    \qquad
    {\cal Q}_1^+:={\cal R}_0+\widehat{\cal Q}_1.
\end{equation}
Then
\begin{equation}\label{eq:Q-difference-recovered}
    {\cal Q}_1^+-{\cal Q}_0^+={\cal Q},
\end{equation}
and
\begin{equation}\label{eq:Qplus-value-bounds}
    0\le {\cal Q}_0^+\le H,
    \qquad
    0\le {\cal Q}_1^+\le H+K^-.
\end{equation}
Moreover,
\begin{equation}\label{eq:Qplus-terminal}
    {\cal Q}_0^+(T,x,z)=0,
    \qquad
    {\cal Q}_1^+(T,x,z)=K^-.
\end{equation}
\end{pro}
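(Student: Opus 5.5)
The bounds \eqref{eq:Qhat-bounds} and \eqref{eq:Qplus-value-bounds} and the terminal values \eqref{eq:Qplus-terminal} follow directly from \eqref{eq:G-bounds}. First, $0\le G^+\le\pi(\cdot,z)$ gives $0\le\widehat{\cal Q}_0\le H$. Second, $0\le G^-\le\rho K^-$ gives $0\le\widehat{\cal Q}_1\le\int_t^T\rho K^-e^{-\rho(s-t)}ds=K^-(1-e^{-\rho(T-t)})$. Adding ${\cal R}_0=H+e^{-\rho(T-t)}K^-$ yields ${\cal Q}_1^+\le H+K^-$. At $t=T$ the integrals vanish, and ${\cal R}_0(T)=K^-$. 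All expectations are finite by the polynomial growth \eqref{eq:pi-polynomial-growth-verification} and the moments of geometric Brownian motion.

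The substantive claim is \eqref{eq:Q-difference-recovered}, which amounts to
\[
{\cal Q}={\cal R}_0+\widehat{\cal Q}_1-\widehat{\cal Q}_0.
\]
Since $G=G^+-G^-$ a.e. and $\pi=G-(\partial_t+{\cal L}){\cal Q}$, this is the Feynman--Kac identity
\[
{\cal Q}(t,x)=\mathbb E_{t,x}\Bigl[e^{-\rho(T-t)}K^-+\int_t^T e^{-\rho(s-t)}\bigl(\pi(X_s)-G(s,X_s)\bigr)ds\Bigr].
\]
It can be checked by expanding the right-hand side: it equals $H+e^{-\rho(T-t)}K^--\widehat{\cal Q}_0+\widehat{\cal Q}_1$, which is exactly ${\cal Q}_1^+-{\cal Q}_0^+$.

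To prove the identity, apply Itô's formula to $e^{-\rho(s-t)}{\cal Q}(s,X_s)$. Since ${\cal Q}\in W^{1,2}_{p,\mathrm{loc}}$ with $p>n+2$ but is not $C^{1,2}$, use the Itô--Krylov formula, which holds for $W^{1,2}_{p,\mathrm{loc}}$ functions of a nondegenerate diffusion. Nondegeneracy holds locally because $a$ is positive definite and $X$ stays in compact subsets of $(0,\infty)^n$ up to exit times. Alternatively, mollify ${\cal Q}$ and pass to the limit using Krylov's estimate, which also justifies evaluating the a.e.-defined $G$ along $X$; this is the same transition-density point already noted before the proposition.

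Localize with the exit times $\tau_N$ of $X$ from $\Omega_N$, capped at $T-\delta$. This gives
\[
{\cal Q}(t,x)=\mathbb E_{t,x}\Bigl[e^{-\rho(\tau-t)}{\cal Q}(\tau,X_\tau)-\int_t^{\tau}e^{-\rho(s-t)}(\partial_t+{\cal L}){\cal Q}(s,X_s)\,ds\Bigr],
\]
where $\tau=\tau_N\wedge(T-\delta)$ and the stochastic integral has zero mean because $\nabla{\cal Q}$ is bounded on compacts.

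Then let $N\to\infty$ and $\delta\downarrow0$. The boundary term converges by boundedness of ${\cal Q}$ (Theorem~\ref{thm:Q}), its continuity up to $T$ with ${\cal Q}(T,\cdot)=K^-$, and $\tau_N\to T$ a.s. The integral term converges by dominated convergence, since $|(\partial_t+{\cal L}){\cal Q}|\le\pi+\rho K^+$ and $\pi$ is integrable along $X$.

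The main obstacle is the rigorous Itô step for a merely Sobolev, a.e.-defined generator term on an unbounded, degenerate-at-the-boundary state space. I would handle it through the localization and Krylov estimate above, citing the standard Itô--Krylov formula.
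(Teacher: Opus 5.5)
Your proposal is correct and follows the paper's proof. The bounds and terminal values come directly from \eqref{eq:G-bounds}, and \eqref{eq:Q-difference-recovered} comes from the localized Feynman--Kac (It\^o--Krylov) representation ${\cal Q}={\cal R}_0-\mathbb E_{t,x}\int_t^T e^{-\rho(s-t)}G\,ds$ combined with $G=G^+-G^-$ a.e. The one step the paper's proof adds is deriving \eqref{eq:G-contact-decomposition}: the weak derivatives of ${\cal Q}-K^\pm$ vanish a.e.\ on the contact sets, so $(\partial_t+{\cal L}){\cal Q}=-\rho K^\pm$ there. You take that decomposition as given from the text preceding the proposition, which is acceptable.
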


\begin{proof}
On \(\{{\cal Q}=K^\pm\}\), the positive-order weak
derivatives of \({\cal Q}-K^\pm\) entering
\(\partial_t+{\cal L}\) vanish almost everywhere.  Hence
\[
    (\partial_t+{\cal L}){\cal Q}=-\rho K^\pm
    \quad\text{a.e. on }\{{\cal Q}=K^\pm\},
\]
and the two variational inequalities give
\eqref{eq:G-contact-decomposition}. The growth estimate
\eqref{eq:pi-polynomial-growth-verification} and
\eqref{eq:G-bounds} make the potentials in \eqref{eq:Qhat-FK} finite.  The estimates
\eqref{eq:Qhat-bounds} follow immediately from \eqref{eq:G-bounds}.

Since
\[
    (\partial_t+{\cal L}){\cal Q}=G-\pi(\cdot,z),
    \qquad
    {\cal Q}(T,\cdot,z)=K^-,
\]
the localized Feynman--Kac formula, followed by the moment bounds for \(X\),
gives
\begin{equation}\label{eq:Q-FK-with-G}
{\cal Q}(t,x,z)
={\cal R}_0(t,x,z)
-\mathbb E_{t,x}\bigg[
\int_t^T e^{-\rho(s-t)}G(s,X_s,z)\,ds
\bigg].
\end{equation}
Since \(G=G^+-G^-\), equations \eqref{eq:Qhat-FK} and
\eqref{eq:Q-FK-with-G} imply \eqref{eq:Q-difference-recovered}.  Finally,
\eqref{eq:Qplus-value-bounds} follows from
\eqref{eq:R0-and-H}, \eqref{eq:Qhat-bounds}, and
\eqref{eq:Qplus-from-potentials}.
The terminal conditions \eqref{eq:Qplus-terminal}
follow directly from \eqref{eq:R0-and-H}, \eqref{eq:Qhat-FK}, and
\eqref{eq:Qplus-from-potentials}.
\end{proof}

The Feynman--Kac construction gives a convenient verification device that
does not require any global estimate for the gradients of the switching value
functions.  For the process \(X^x\) starting at time zero, set
\begin{equation}\label{eq:FK-martingales}
\begin{aligned}
M_t^0
&:=e^{-\rho t}{\cal Q}_0^+(t,X_t,z)
  +\int_0^t e^{-\rho s}G^+(s,X_s,z)\,ds,\\
M_t^1
&:=e^{-\rho t}{\cal Q}_1^+(t,X_t,z)
  +\int_0^t e^{-\rho s}
    \bigl(\pi(X_s,z)+G^-(s,X_s,z)\bigr)\,ds.
\end{aligned}
\end{equation}
Indeed, set
\[
\begin{aligned}
Z^0
&:=\int_0^T e^{-\rho s}G^+(s,X_s,z)\,ds,\\
Z^1
&:=\int_0^T e^{-\rho s}
\bigl(\pi(X_s,z)+G^-(s,X_s,z)\bigr)\,ds
+e^{-\rho T}K^-.
\end{aligned}
\]
Both random variables are integrable by \eqref{eq:G-bounds} and the
moment bounds for \(X\).  By \eqref{eq:R0-and-H},
\eqref{eq:Qhat-FK}, and the strong Markov property, for every stopping
time \(\tau\le T\),
\[
M_\tau^0=\mathbb E_x[Z^0\mid{\cal F}_\tau],
\qquad
M_\tau^1=\mathbb E_x[Z^1\mid{\cal F}_\tau].
\]
Thus \(M^0\) and \(M^1\) are martingales closed by integrable terminal
random variables and are therefore uniformly integrable.  Hence optional
sampling applies to arbitrary stopping times bounded by \(T\).

\subsection{The fixed-level switching problem}

We use the switching class \(\Gamma\) and the LCRL regime \(\zeta\) of
Section~\ref{sec:switching}.  For optional-sampling arguments it is convenient
to use in addition the post-switch regime: for \(0\le t<T\), set
\begin{equation}\label{eq:post-switch-regime}
N_t^\eta
:=\max\{n\ge0:\tau_n\le t,\ \tau_n<T\},
\qquad
\iota_t:=\kappa_{N_t^\eta},
\qquad
\iota_T:=\zeta_T.
\end{equation}
The processes \(\zeta\) and \(\iota\) differ only at switching times and
therefore give the same Lebesgue-time running payoff.  As in
Section~\ref{sec:switching}, a single switch at time zero is allowed and its
cost is included in the switching-cost sum.

For \(z>0\), \(j\in\{0,1\}\), and \(X_0=x\), define a switching rule
\(\eta^{z,j,*}\) recursively.  Put
\(\tau_0^*=0\), \(\kappa_0^*=j\), and, while \(\tau_n^*<T\), set
\begin{equation}\label{eq:fixed-level-hitting-rule}
\tau_{n+1}^*
:=
\begin{cases}
\inf\{t\in[\tau_n^*,T):{\cal Q}(t,X_t,z)=K^+\},
&\kappa_n^*=0,\\
\inf\{t\in[\tau_n^*,T):{\cal Q}(t,X_t,z)=K^-\},
&\kappa_n^*=1,
\end{cases}
\end{equation}
with \(\inf\varnothing=T\), and put
\(\kappa_{n+1}^*=1-\kappa_n^*\) on
\(\{\tau_{n+1}^*<T\}\).  Thus, if the initial point already belongs to the
relevant contact set, the first switch occurs at time zero.  After this
possible initial switch, strict obstacle separation makes all subsequent
switching times strictly increasing.  No switch is made at \(T\).
Once \(\tau_{n+1}^*=T\), complete the sequence by setting
\(\tau_m^*=T\) and \(\kappa_m^*=\kappa_n^*\) for every \(m\ge n+1\).

The rule is well defined by continuity. Since the
contact sets are closed and \(t\mapsto{\cal Q}(t,X_t,z)\) is adapted and
continuous, each \(\tau_n^*\) is a stopping time and each
\(\kappa_n^*\) is \({\cal F}_{\tau_n^*}\)-measurable. It also has only finitely many
switches pathwise.  Indeed,
\[
    t\longmapsto {\cal Q}(t,X_t,z)
\]
is continuous on \([0,T]\), and two consecutive switches after the possible
time-zero switch require this function to move between the two distinct
levels \(K^+\) and \(K^-\).  Uniform continuity on \([0,T]\) therefore gives
a strictly positive path-dependent lower bound on the time separating two
consecutive switches.

\begin{pro}\label{pro:verification-switching-plus}
For every \(z>0\), \(x\in(0,\infty)^n\), and \(j\in\{0,1\}\),
\begin{equation}\label{eq:switching-value-verified}
    {\cal Q}_j^+(0,x,z)
    ={\bf P}_j^+(x,z)
    :=\sup_{\eta\in\Gamma,\,\kappa_0=j}
      {\cal P}^+(x,z;\eta).
\end{equation}
The fixed-level rule \(\eta^{z,j,*}\) belongs to \(\Gamma\) and attains the
supremum.
\end{pro}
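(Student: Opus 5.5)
The plan is a verification argument that uses the Feynman--Kac martingales \(M^0,M^1\) of \eqref{eq:FK-martingales} in place of Itô's formula. This avoids any global gradient bound on \({\cal Q}_j^+\).

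Fix \(z\), \(x\) and \(\eta\in\Gamma\) with \(\kappa_0=j\). Using the post-switch regime \(\iota\) of \eqref{eq:post-switch-regime}, I would study the process
\[
V_t:=e^{-\rho t}{\cal Q}^+_{\iota_t}(t,X_t,z)+\int_0^t e^{-\rho s}\pi(X_s,z)\iota_s\,ds
+\sum_{n\ge1,\ \tau_n\le t}e^{-\rho\tau_n}\bigl(-K^+{\bf 1}_{\{\kappa_n=1\}}+K^-{\bf 1}_{\{\kappa_n=0\}}\bigr){\bf 1}_{\{\tau_n<T\}}.
\]
Between consecutive switches, on \([\tau_n,\tau_{n+1})\), the increment of \(V\) equals the increment of \(M^{\kappa_n}\) minus \(\int e^{-\rho s}G^+\,ds\) when \(\kappa_n=0\), and minus \(\int e^{-\rho s}G^-\,ds\) when \(\kappa_n=1\). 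Both subtracted terms are nonnegative by \eqref{eq:G-bounds}.

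At a switch time \(\tau_{n}<T\) from regime \(0\) to \(1\), the jump of \(V\) is \(e^{-\rho\tau_n}({\cal Q}_1^+-{\cal Q}_0^+-K^+)=e^{-\rho\tau_n}({\cal Q}-K^+)\le0\), by \eqref{eq:Q-difference-recovered} and Theorem~\ref{thm:Q}. Symmetrically, a switch from \(1\) to \(0\) has jump \(e^{-\rho\tau_n}(K^--{\cal Q})\le0\).

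To pass from these pathwise facts to expectations, I would apply optional sampling to the uniformly integrable martingales \(M^0,M^1\) between \(\tau_n\) and \(\tau_{n+1}\). The sign facts above then give \(\mathbb E_x[V_{\tau_{n+1}}]\le\mathbb E_x[V_{\tau_n}]\). Summing over \(n\) needs a limit \(n\to\infty\). I would justify it by dominated convergence: \(\eta\) has finitely many switches a.s.; the switch-cost sum is dominated by \(K^+\sum_n e^{-\rho\tau_n}{\bf 1}_{\{\tau_n<T\}}\), which is integrable by the definition \eqref{eq:Gamma set condition} of \(\Gamma\); and \(|{\cal Q}^+_{\iota}|\le H+K^-\) is dominated via the martingale closures \(Z^0,Z^1\).

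Since \(V_0={\cal Q}_j^+(0,x,z)\) before any time-zero switch, and \(V_T\) equals the payoff integrand of \eqref{eq:P-plus-minus} by the terminal values \eqref{eq:Qplus-terminal}, this gives \({\cal P}^+(x,z;\eta)\le{\cal Q}_j^+(0,x,z)\). A possible time-zero switch is treated as a jump at \(\tau_1=0\). Taking the supremum yields \({\bf P}_j^+\le{\cal Q}_j^+(0,x,z)\).

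For equality, I take \(\eta^{z,j,*}\). Under this rule the process stays in regime \(0\) only while \({\cal Q}<K^+\), where \(G^+=0\) by \eqref{eq:G-contact-decomposition}, and in regime \(1\) only while \({\cal Q}>K^-\), where \(G^-=0\). Each switch occurs exactly on the relevant contact set, so its jump vanishes. Hence every inequality above becomes an equality, provided the \(G^\pm\) terms vanish for Lebesgue-a.e.\ time along the path. This holds because we use the pointwise versions of \(G^\pm\) vanishing outside the contact sets, together with continuity of \(t\mapsto{\cal Q}(t,X_t,z)\).

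\textbf{Main obstacle.} It remains to show \(\eta^{z,j,*}\in\Gamma\), i.e.\ \(\mathbb E_x[\sum_n e^{-\rho\tau_n^*}{\bf 1}_{\{\tau^*_n<T\}}]<\infty\). Finiteness of the number of switches is only pathwise, so this step needs an argument of its own. I would try the pairing identity: apply the equality chain to \(\eta^{z,j,*}\) stopped after \(2m\) switches. Each completed round trip \(0\to1\to0\) costs \(K^+e^{-\rho\tau_{n}}-K^-e^{-\rho\tau_{n+1}}\ge(K^+-K^-)e^{-\rho\tau_{n}}\). The stopped identity then bounds \((K^+-K^-)\,\mathbb E_x\bigl[\sum_{n\le 2m}e^{-\rho\tau^*_n}{\bf 1}_{\{\tau^*_n<T\}}\bigr]/2\) by \(\mathbb E_x[Z^0+Z^1]+K^+\), uniformly in \(m\), which is finite. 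Monotone convergence then gives admissibility, and the same stopped identities justify the passage to the limit in the attainment step.
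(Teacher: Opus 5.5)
Your proposal is correct and follows the paper's proof essentially step for step. The shared steps are these: optional sampling of the uniformly integrable Feynman--Kac martingales $M^0,M^1$ through finitely many switches; nonpositive switching brackets from $K^-\le{\cal Q}\le K^+$, with the $G^\pm$ terms dropped, for the upper bound; exact equality for the hitting rule $\eta^{z,j,*}$; and a posteriori membership in $\Gamma$ from the truncated identity combined with the alternating-pairing estimate.

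One step should be phrased more carefully. The continuation values satisfy $e^{-\rho\tau_n}{\cal Q}^+_{\iota_{\tau_n}}(\tau_n,X_{\tau_n},z)\le \mathbb E_x[Z^0+Z^1\mid{\cal F}_{\tau_n}]$, but they are not pointwise dominated by a single integrable variable, so dominated convergence does not apply as stated. Take the limit $n\to\infty$ instead via uniform integrability (class $D$) together with the pathwise eventual constancy of $V_{\tau_n}$, which is what the paper does.
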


\begin{proof}
Let \(\eta\in\Gamma\) start in regime \(j\), and use the post-switch regime on
each switching interval.   Applying optional sampling to
the uniformly integrable martingales in \eqref{eq:FK-martingales}
successively for the strategy stopped after its first \(m\) switches and then
letting \(m\to\infty\) gives the following dynamic identity for every stopping
time \(\theta\le T\).   {Writing}
\[
C_\theta(\eta)
:=\sum_{n\ge1}e^{-\rho\tau_n}
\left(
-K^+{\bf 1}_{\{\kappa_n=1\}}
+K^-{\bf 1}_{\{\kappa_n=0\}}
\right){\bf 1}_{\{\tau_n\le\theta\}}{\bf 1}_{\{\tau_n<T\}},
\]
then
\begin{equation}\label{eq:optional-sampling-switching-identity}
\begin{aligned}
&\mathbb E_x\bigg[
e^{-\rho\theta}{\cal Q}_{\iota_\theta}^+
    (\theta,X_\theta,z)
+\int_0^\theta e^{-\rho t}\pi(X_t,z)\iota_t\,dt
+C_\theta(\eta)
\bigg]\\
&={\cal Q}_j^+(0,x,z)
-\mathbb E_x\int_0^\theta e^{-\rho t}
\left[
G^+(t,X_t,z){\bf 1}_{\{\iota_t=0\}}
+G^-(t,X_t,z){\bf 1}_{\{\iota_t=1\}}
\right]dt\\
&\quad
+\mathbb E_x\sum_{n\ge1}e^{-\rho\tau_n}
\bigg[
{\cal Q}_{\kappa_n}^+(\tau_n,X_{\tau_n},z)
-{\cal Q}_{\kappa_{n-1}}^+(\tau_n,X_{\tau_n},z)\\
&\hspace{5.0cm}
-K^+{\bf 1}_{\{\kappa_n=1\}}
+K^-{\bf 1}_{\{\kappa_n=0\}}
\bigg]{\bf 1}_{\{\tau_n\le\theta\}}{\bf 1}_{\{\tau_n<T\}} .
\end{aligned}
\end{equation}
{ Taking \(\theta=T\) and using
\[
e^{-\rho T}{\cal Q}_{\iota_T}^+(T,X_T,z)
=e^{-\rho T}K^-\zeta_T,
\]}
dropping the nonnegative terms \(G^+\) in regime zero and \(G^-\) in regime
one, and using \eqref{eq:Q-difference-recovered} and
\(K^-\le {\cal Q}\le K^+\) to obtain
\[
    {\cal Q}_1^+-K^+\le {\cal Q}_0^+,
    \qquad
    {\cal Q}_0^++K^-\le {\cal Q}_1^+,
\]
at switching times, yields
\begin{equation}\label{eq:arbitrary-switching-upper-bound}
    {\cal P}^+(x,z;\eta)\le {\cal Q}_j^+(0,x,z).
\end{equation}
 {The passage \(m\to\infty\) is justified by
\eqref{eq:G-bounds}, the defining integrability condition of \(\Gamma\), and
the fact that the switching brackets are bounded because
\(K^-\le{\cal Q}\le K^+\).  The running and residual terms are integrable by
Assumption~\ref{as:global}, while the continuation-value family is of class
\(D\) by \eqref{eq:Qplus-value-bounds} and the tower property.}  Thus
\({\bf P}_j^+\le {\cal Q}_j^+(0,x,z)\).

We next consider \(\eta^{z,j,*}\), and suppress the stars on its switching
times and regimes for the remainder of the proof.  Up to each finite number of switches,
optional sampling gives equality rather than inequality: before a switch out
of regime zero, \({\cal Q}<K^+\) and hence \(G^+=0\); before a switch out of
regime one, \({\cal Q}>K^-\) and hence \(G^-=0\); and at every switch the
corresponding obstacle equality holds.

It remains to prove the integrability condition in \(\Gamma\), rather than
assume it.  Define the discounted switching cash flow through the first
\(N\) switches by
\[
    C_N:=
    \sum_{1\le k\le N}e^{-\rho\tau_k}
    \Bigl(
       -K^+{\bf 1}_{\{\kappa_k=1\}}
       +K^-{\bf 1}_{\{\kappa_k=0\}}
    \Bigr){\bf 1}_{\{\tau_k<T\}},
\]
and let
\[
    \Lambda_N
    :=\sum_{1\le k\le N}
       e^{-\rho\tau_k}{\bf 1}_{\{\tau_k<T\}}.
\]
Stop the strategy at the earlier of its \(N\)-th switch and \(T\), denoted by
\(\sigma_N\), and retain the switching value of the post-switch regime as the
continuation value.  Repeated optional sampling gives
\begin{equation}\label{eq:truncated-optimal-switching-identity}
\begin{aligned}
{\cal Q}_j^+(0,x,z)
=\mathbb E_x\bigg[&
\int_0^{\sigma_N}e^{-\rho t}\pi(X_t,z)\iota_t\,dt
+C_N\\
&+e^{-\rho\sigma_N}
 {\cal Q}_{\iota_{\sigma_N}}^+(\sigma_N,X_{\sigma_N},z)
\bigg].
\end{aligned}
\end{equation}
The first and third terms on the right are nonnegative.  Moreover,
\eqref{eq:Qplus-value-bounds} and the tower property give, uniformly over all
stopping times \(\sigma\le T\),
\begin{equation}\label{eq:continuation-value-uniform-bound}
\begin{aligned}
\mathbb E_x\bigl[
e^{-\rho\sigma}{\cal Q}_{\iota_\sigma}^+
(\sigma,X_\sigma,z)
\bigr]
&\le H(0,x,z)+K^-,\\
\mathbb E_x\bigg[
\int_0^\sigma e^{-\rho t}\pi(X_t,z)\iota_t\,dt
\bigg]
&\le H(0,x,z).
\end{aligned}
\end{equation}
{ Consequently,
\[
    \mathbb E_x[C_N]
    \ge {\cal Q}_j^+(0,x,z)-2H(0,x,z)-K^-,
\]
and hence \(\inf_N\mathbb E_x[C_N]>-\infty\).}

Because the regimes alternate and \(K^+>K^-\), the elementary pairing of
successive switches gives
\begin{equation}\label{eq:alternating-cost-bound}
    C_N
    \le
    \frac{K^++K^-}{2}
    -\frac{K^+-K^-}{2}\,\Lambda_N.
\end{equation}
Combining this inequality with the uniform lower bound on
\(\mathbb E_x[C_N]\) yields
\[
    \sup_N\mathbb E_x[\Lambda_N]<\infty.
\]
By monotone convergence, \(\eta^{z,j,*}\in\Gamma\).  Since the number of
switches is pathwise finite, the terms in the truncated identity converge
pathwise to their full-horizon counterparts.  The continuation-value family
is uniformly integrable: by \eqref{eq:Qplus-value-bounds}, for every stopping
time \(\sigma\le T\),
\[
e^{-\rho\sigma}{\cal Q}_{\iota_\sigma}^+
    (\sigma,X_\sigma,z)
\le
\mathbb E_x\left[
  \left.\int_0^T e^{-\rho s}\pi(X_s,z)\,ds\right|{\cal F}_\sigma
\right]+K^-,
\]
and the conditional-expectation family on the right is of class \(D\).
The running rewards are dominated by the same integrable full-horizon
integral, while the switching costs are absolutely integrable because the
rule belongs to \(\Gamma\).  Hence letting \(N\to\infty\) in
\eqref{eq:truncated-optimal-switching-identity}, {and using
\eqref{eq:Qplus-terminal} to identify the terminal continuation value,} gives
\[
    {\cal P}^+(x,z;\eta^{z,j,*})={\cal Q}_j^+(0,x,z).
\]
Together with \eqref{eq:arbitrary-switching-upper-bound}, this proves the
claim.
\end{proof}

\subsection{Construction of the reflected capacity process}

Fix \((x,y)\in(0,\infty)^n\times(0,\infty)\), and let \(\ell,r\) be given by
\eqref{eq:random-reflection-boundaries}.  Lemma~\ref{lem:verification-boundary-regularity-separation}
provides continuous, strictly separated, finite boundaries on the whole
horizon.

\begin{lem}\label{lem:skorokhod-capacity-existence}
There exists a unique adapted reflected capacity process and regulator pair
compatible with the LCRL convention of Problem~\ref{pr:main}.  More precisely,
let
\[
    (\overline Y,K^\ell,K^r)
\]
be the ordinary Skorokhod solution for the constant input \(y\) on the random
interval \([\ell_t,r_t]\).  Then
\begin{equation}\label{eq:standard-Skorokhod-solution}
\begin{gathered}
    \overline Y_t=y+K_t^\ell-K_t^r,
    \qquad
    \ell_t\le\overline Y_t\le r_t,\\
    \overline Y_0=(y\vee\ell_0)\wedge r_0,\\
    K_0^\ell=(\ell_0-y)^+,
    \qquad
    K_0^r=(y-r_0)^+.
\end{gathered}
\end{equation}
The processes \(K^\ell,K^r\) are nondecreasing and continuous after their
possible initial values, and their Stieltjes measures are carried by
\(\{\overline Y=\ell\}\) and \(\{\overline Y=r\}\), respectively.

Define the LCRL controls by
\begin{equation}\label{eq:LCRL-control-from-Skorokhod}
\begin{aligned}
&\widehat\xi_0^+=\widehat\xi_0^-=0,
\qquad
\widehat\xi_t^+=K_t^\ell,
\quad
\widehat\xi_t^-=K_t^r,
\qquad 0<t\le T,\\
&\widehat Y_0=y,
\qquad
\widehat Y_t=\overline Y_t,
\qquad 0<t\le T.
\end{aligned}
\end{equation}
Then
\begin{equation}\label{eq:initial-right-jump}
\begin{aligned}
\Delta^+\widehat\xi_0^+
&=(\ell_0-y)^+,
&\qquad
\Delta^+\widehat\xi_0^-
&=(y-r_0)^+,\\
\widehat Y_{0+}
&=(y\vee\ell_0)\wedge r_0.
\end{aligned}
\end{equation}
For \(t>0\), \(\ell_t\le\widehat Y_t\le r_t\), and the two regulator
measures are mutually singular.  In particular,
\((\widehat\xi^+,\widehat\xi^-)\in\Xi(y)\).
\end{lem}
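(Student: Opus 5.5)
The plan is to apply, pathwise, the theory of the Skorokhod problem on a time-dependent interval of \citet{BurdzyKangRamanan2009}, and then to check adaptedness and the LCRL bookkeeping separately.

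First, fix \(\omega\) outside the null set of Lemma~\ref{lem:verification-boundary-regularity-separation}. Then \(\ell,r\in C([0,T])\) and \(\inf_t(r_t-\ell_t)>0\). For a constant input \(y\) and continuous boundaries with \(\ell\le r\), the cited result gives existence and uniqueness of a solution \((\overline Y,K^\ell,K^r)\) of the Skorokhod problem on \([\ell,r]\). The solution has the following properties: \(\overline Y=y+K^\ell-K^r\) with \(\ell\le\overline Y\le r\); \(K^\ell,K^r\) are nondecreasing; \(dK^\ell\) is carried by \(\{\overline Y=\ell\}\) and \(dK^r\) by \(\{\overline Y=r\}\). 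In the c\`adl\`ag convention, the initial values are forced to be \(K^\ell_0=(\ell_0-y)^+\) and \(K^r_0=(y-r_0)^+\). Since \(\ell_0<r_0\), at most one of these is nonzero, and \(\overline Y_0=(y\vee\ell_0)\wedge r_0\).

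Continuity of \(K^\ell,K^r\) on \((0,T]\) follows from the continuity of the input and of the boundaries. Concretely, I would use the explicit representation of the two-sided Skorokhod map in \citet{BurdzyKangRamanan2009} as a composition of running sup/inf operations, each of which preserves continuity after time zero. Strict separation then gives mutual singularity of \(dK^\ell\) and \(dK^r\). Their supports lie in the disjoint closed sets \(\{\overline Y=\ell\}\) and \(\{\overline Y=r\}\): these are disjoint because \(\ell<r\) everywhere, and closed by continuity.

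For adaptedness, I would use that the Skorokhod map is non-anticipative: the solution on \([0,t]\) depends only on \((\ell,r)|_{[0,t]}\). The boundaries \(\ell_t={\cal Z}_{\cal I}(t,X_t)\) and \(r_t={\cal Z}_{\cal D}(t,X_t)\) are adapted, being continuous functions of \((t,X_t)\). Measurability of the map on path space, via its Lipschitz continuity in the sup-norm in \citet{BurdzyKangRamanan2009}, then yields adapted \(\overline Y,K^\ell,K^r\). Uniqueness for the LCRL pair follows from pathwise uniqueness of the ordinary solution.

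Finally, I would translate to the LCRL convention through \eqref{eq:LCRL-control-from-Skorokhod}. Since \(K^\ell,K^r\) are continuous on \((0,T]\), the processes \(\widehat\xi^\pm\) are left-continuous on \((0,T]\). At \(t=0\) they have right limits equal to \(K_0^\ell,K_0^r\), which gives \eqref{eq:initial-right-jump} and \(\widehat Y_{0+}=\overline Y_0\). Each \(\widehat\xi^\pm\) is nondecreasing and finite at \(T\), because \(K^\ell_T,K^r_T<\infty\) pathwise.

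Positivity \(\widehat Y_t>0\) for \(t>0\) will be the delicate point, since \({\cal Z}_{\cal I}\) may equal \(0\) at \(t=T\). I would handle it with the lower bound \(\overline Y_t\ge\ell_t\) on \([0,T)\), where \(\ell_t>0\) by \eqref{eq:range-Z}. At \(t=T\) I would instead use left continuity and the minimality of the reflection. Near \(T\) the lower boundary tends to \(0\) while \(r_T=b(X_T)>0\), so \(K^\ell\) can increase only when \(\overline Y=\ell\). Hence \(\overline Y_T\ge\min\{\overline Y_{t_0},\min_{[t_0,T]}\ell\}\) fails to give positivity directly. I would try a different argument: \(\overline Y\) can only decrease through \(dK^r\), which acts only at \(r\ge\min_u b(u)>0\) on the compact path range. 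So \(\overline Y_T\ge\min\{\overline Y_{0+},\min_t r_t\}>0\), using also \(y>0\). This last monotonicity argument is the main obstacle to get right carefully. Together with mutual singularity, it gives \((\widehat\xi^+,\widehat\xi^-)\in\Xi(y)\).
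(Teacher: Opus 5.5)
Your proposal is correct and follows the paper's proof. Both arguments apply the Burdzy--Kang--Ramanan theory pathwise, use causality of the map for adaptedness, move the initial regulator values to right jumps at zero, and get mutual singularity from disjoint contact sets; both obtain \(\widehat Y_T>0\) because downward reflection acts only at \(r_t\ge\min_{[0,T]}r>0\). One statement to tighten: existence of the \emph{ordinary} finite-variation Skorokhod solution needs \(\inf_t(r_t-\ell_t)>0\), not merely \(\ell\le r\), which otherwise gives only the extended problem; you do have this strict gap from the separation lemma.
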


\begin{proof}
By Lemma~\ref{lem:verification-boundary-regularity-separation}, the interval endpoints are
continuous and have a strictly positive pathwise gap.  The extended
Skorokhod problem therefore has a unique solution, and the strict gap makes it
the ordinary finite-variation Skorokhod solution; see
\citet[Theorem~2.6 and Corollary~2.4]{BurdzyKangRamanan2009}.  The Skorokhod
map is causal.  Since \(\ell\) and \(r\) are adapted, the solution and its
regulators are adapted.

The transformation \eqref{eq:LCRL-control-from-Skorokhod} merely moves the
initial values of the ordinary regulators to right jumps immediately after
zero.  Hence the control starts from
\(\widehat\xi_0^+=\widehat\xi_0^-=0\), as required in \(\Xi(y)\), while its
Stieltjes measures charge the initial adjustment in
\eqref{eq:initial-right-jump}.  For the support statement at time zero, the
post-adjustment value \(\widehat Y_{0+}\) is used; at positive times the
ordinary contact conditions apply.

Since \(\ell_t<r_t\), the lower and upper contact sets are disjoint.  Thus
the measures \(d\widehat\xi^+\) and \(d\widehat\xi^-\) have disjoint supports
and are mutually singular.  Positivity holds at time zero because \(y>0\),
and on \((0,T)\) because \(\widehat Y_t\ge\ell_t>0\).  Finally,
    \(r_t\ge\pi^{-1}_z(X_t,\rho K^-)>0\) on \([0,T]\), and continuity gives
\(\underline r:=\min_{0\le t\le T}r_t>0\).  Downward adjustment acts only
when the post-adjustment capacity equals \(r_t\), so it cannot drive the
capacity below \(\min\{y,\underline r\}>0\).  Hence
\(\widehat Y_T>0\) as well.  All the requirements in \(\Xi(y)\) are
satisfied.
\end{proof}

\subsection{The reflected level-crossing family}

For each \(z>0\), define the {level indicator}
\begin{equation}\label{eq:reflected-level-indicator}
\begin{aligned}
\widehat\zeta^{x,y}_0(z)
&:={\bf 1}_{\{y\ge z\}},\qquad
\widehat\zeta^{x,y}_t(z)
:= \limsup_{\substack{s\uparrow t}}
{\bf 1}_{\{\widehat Y_s>z\}},
\qquad 0<t\le T.
\end{aligned}
\end{equation}
The initial right jump in \eqref{eq:initial-right-jump} is recorded as a
possible switch at time zero, and no switch is imposed at \(T\).  Let
{\(\widehat\eta^{x,y}(z)\) be the switching control
canonically associated with \(\widehat\zeta^{x,y}(z)\) for every level at
which the latter is LCRL and of finite variation almost surely, and use the
no-switch rule with initial regime \({\bf 1}_{\{y\ge z\}}\) otherwise.  The
next lemma shows that the exceptional set of levels is Lebesgue-null.}

\begin{lem}\label{lem:level-crossing-consistency}
For Lebesgue-almost every \(z>0\),  {the limsup in
\eqref{eq:reflected-level-indicator} is a limit, and} the process
\(\widehat\zeta^{x,y}(z)\) has finite variation and only finitely many jumps
on \([0,T]\), almost surely.  The resulting family
\(\{\widehat\eta^{x,y}(z)\}_{z>0}\) is consistent in the sense of
\citet[Definition~2.6]{Guo2008}.

In addition, there is a Lebesgue-null set \(N_{x,y}\) such that, for every
\(z\notin N_{x,y}\),
\begin{equation}\label{eq:equality-level-null-time}
    (dt\otimes d\mathbb P)
    \bigl(\{(t,\omega):\widehat Y_t(\omega)=z\}\bigr)=0.
\end{equation}
\end{lem}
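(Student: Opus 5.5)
The plan is to reduce everything to the pathwise structure of \(\widehat Y\) given by Lemma~\ref{lem:skorokhod-capacity-existence}, and then apply the level-set results of \citet{Guo2008} already used in Section~\ref{sec:switching}. Fix \(\omega\) outside the null set of Lemma~\ref{lem:verification-boundary-regularity-separation}. On \((0,T]\) the path \(t\mapsto\widehat Y_t=y+K^\ell_t-K^r_t\) is continuous and of finite variation, since \(K^\ell\) and \(K^r\) are nondecreasing, finite and continuous after their initial values. At time zero it has at most the right jump from \(y\) to \(\widehat Y_{0+}\) recorded in \eqref{eq:initial-right-jump}.

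\emph{The limsup is a limit.} For every \(z\) and \(t\in(0,T]\), continuity of \(\widehat Y\) on \((0,T]\) gives \(\lim_{s\uparrow t}\widehat Y_s=\widehat Y_t\). Hence \({\bf 1}_{\{\widehat Y_s>z\}}\) has a limit as \(s\uparrow t\) whenever \(\widehat Y_t\ne z\). The limsup can therefore fail to be a limit only at times with \(\widehat Y_t=z\).

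\emph{Finitely many crossings.} Let \(V(\omega)\) be the total variation of \(\widehat Y\) on \((0,T]\). By the Banach indicatrix theorem, \(\int_0^\infty \#\{\text{crossings of level } z\}\,dz\le V<\infty\). So for a.e.\ \(z\) the continuous path crosses \(z\) finitely often. More precisely, the set \(\{t\in(0,T]:\widehat Y_t=z\}\) is finite for a.e.\ \(z\); this is the level-set statement behind \citet[Proposition~2.9]{Guo2008}. Tonelli's theorem on \(\Omega\times(0,\infty)\), with measurability coming from the joint measurability of \((t,\omega,z)\mapsto{\bf 1}_{\{\widehat Y_t>z\}}\), upgrades ``for a.e.\ \(z\), for a.e.\ \(\omega\)'' to a deterministic Lebesgue-null exceptional set of levels. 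Outside it, \(\widehat\zeta(z)\) is an LCRL \(\{0,1\}\)-valued process with finitely many jumps. The limsup is then a limit everywhere except at the finitely many touching points, and at those the left limit still exists because the level set is finite. The initial jump contributes at most one switch at time zero.

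\emph{Consistency.} Consistency in the sense of \citet[Definition~2.6]{Guo2008} asks that the family comes from a single monotone-in-\(z\) structure: \(\widehat\zeta_t(z)\) is nonincreasing in \(z\), and the switching times are compatible across levels, with the initial regime \({\bf 1}_{\{y\ge z\}}\). All of this holds because every \(\widehat\zeta(z)\) is the level indicator of the one process \(\widehat Y\). Monotonicity in \(z\) is immediate from the indicator form. The switching times are the left-limit crossing times of a common path, which is exactly how Guo builds consistent families from singular controls. I expect this step to be mostly matching definitions, including the time-zero and \(T\) conventions.

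\emph{Null occupation set.} For \eqref{eq:equality-level-null-time}, apply Fubini on \([0,T]\times\Omega\times(0,\infty)\):
\[
\int_0^\infty(dt\otimes d\mathbb P)\{\widehat Y_t=z\}\,dz=\mathbb E\int_0^T\int_0^\infty{\bf 1}_{\{\widehat Y_t=z\}}\,dz\,dt=0,
\]
since for fixed \((t,\omega)\) the inner set is a single point. Hence the integrand vanishes for a.e.\ \(z\), which defines \(N_{x,y}\). The main obstacle is the measurability bookkeeping needed to make the exceptional level set deterministic. I would handle it by working with the jointly measurable map \((t,\omega,z)\mapsto\widehat Y_t(\omega)-z\) and the pathwise crossing count, which is measurable as a limit over dyadic partitions.
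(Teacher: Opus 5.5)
Your proposal is correct and follows essentially the paper's route: a pathwise level-set theorem for the finite-variation path of \(\widehat Y\), Tonelli's theorem to make the exceptional set of levels deterministic, consistency read off from the fact that every \(\widehat\zeta^{x,y}(z)\) is a level indicator of the single process \(\widehat Y\), and Fubini for the occupation-time statement \eqref{eq:equality-level-null-time}. The only difference is minor. You use the continuity of \(\widehat Y\) on \((0,T]\) and Banach's indicatrix to get finite level sets, whereas the paper applies the LCRL change-of-variables result \citet[Proposition~2.9]{Guo2008} directly to obtain finite variation of \(s\mapsto{\bf 1}_{\{\widehat Y_s>z\}}\).
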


\begin{proof}
The reflected capacity is an LCRL finite-variation process.  The level-set
change-of-variables theorem for finite-variation functions, equivalently
\citet[Proposition~2.9]{Guo2008}, shows pathwise that,
for almost every level \(z\), the map
\(s\mapsto{\bf 1}_{\{\widehat Y_s>z\}}\) has finite variation.  Its left
limits therefore exist and define the process in
\eqref{eq:reflected-level-indicator}, with no larger variation.  A
\(\{0,1\}\)-valued finite-variation function has only finitely many jumps.
Tonelli's theorem changes the order of the pathwise and
level-wise exceptional sets, giving the assertion for Lebesgue-almost every
deterministic \(z\), almost surely.  Consistency then follows from the same
level-crossing theorem; see also \citet[Theorem~2.12]{Guo2008}.

For the final assertion, Tonelli's theorem gives
\[
\int_0^\infty
\mathbb E_x\bigg[
\int_0^T {\bf 1}_{\{\widehat Y_t=z\}}\,dt
\bigg]dz
=\mathbb E_x\bigg[
\int_0^T
\int_0^\infty {\bf 1}_{\{\widehat Y_t=z\}}\,dz\,dt
\bigg]=0.
\]
This proves \eqref{eq:equality-level-null-time} after enlarging the preceding
Lebesgue-null exceptional set if necessary.
\end{proof}

\begin{lem}\label{lem:reflected-level-switching-optimality}
For every fixed \((x,y)\), and for Lebesgue-almost every \(z>0\),
\(\widehat\eta^{x,y}(z)\in\Gamma\) and it is optimal for the switching
problem with initial regime
\begin{equation}\label{eq:reflected-initial-regime}
    \widehat\kappa_0(z)={\bf 1}_{\{y\ge z\}}.
\end{equation}
Consequently,
\begin{equation}\label{eq:reflected-level-value-equality}
    {\cal P}^+(x,z;\widehat\eta^{x,y}(z))
    ={\cal Q}_{\widehat\kappa_0(z)}^+(0,x,z).
\end{equation}
\end{lem}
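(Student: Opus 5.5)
The plan is to reuse the optional-sampling identity \eqref{eq:optional-sampling-switching-identity} from the proof of Proposition~\ref{pro:verification-switching-plus}, applied to \(\widehat\eta:=\widehat\eta^{x,y}(z)\). I will show that every loss term vanishes: the running terms \(G^+{\bf 1}_{\{\iota=0\}}\) and \(G^-{\bf 1}_{\{\iota=1\}}\), and the switching brackets. Once this is done, the argument used there for \(\eta^{z,j,*}\) applies verbatim. That argument consists of the truncated identity \eqref{eq:truncated-optimal-switching-identity}, the uniform bounds \eqref{eq:continuation-value-uniform-bound}, and the alternating-cost bound \eqref{eq:alternating-cost-bound}. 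Together they give \(\sup_N\mathbb E_x[\Lambda_N]<\infty\), hence \(\widehat\eta\in\Gamma\), and then \eqref{eq:reflected-level-value-equality} follows by letting \(N\to\infty\). I fix \(z\) outside the union of the Lebesgue-null sets of Lemma~\ref{lem:level-crossing-consistency}. For such \(z\), \(\widehat\zeta(z)\) is LCRL with finitely many switches a.s., and \(\{\widehat Y_t=z\}\) is \(dt\otimes d\mathbb P\)-null. The initial regime is \({\bf 1}_{\{y\ge z\}}\) by construction.

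\emph{Running terms.} Off a null set of times, and using that \(\widehat Y\) is continuous on \((0,T]\) because the regulators are continuous after time zero, regime \(0\) means \(\widehat Y_t<z\), and regime \(1\) means \(\widehat Y_t>z\). In regime \(0\), \(G^+(t,X_t,z)\ne0\) requires \({\cal Q}(t,X_t,z)=K^+\). By Lemma~\ref{lem:Zexist} this means \(z\le{\cal Z}_{\cal I}(t,X_t)=\ell_t\le\widehat Y_t<z\), which is impossible. In regime \(1\), \(G^-\ne0\) requires \(z\ge{\cal Z}_{\cal D}(t,X_t)=r_t\ge\widehat Y_t>z\), which is again impossible. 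The exceptional set \(\{\widehat Y_t=z\}\) does not contribute, by \eqref{eq:equality-level-null-time}.

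\emph{Switching brackets.} Here I use \eqref{eq:Q-difference-recovered}. For a switch \(0\to1\) the bracket equals \({\cal Q}-K^+\), and for a switch \(1\to0\) it equals \(K^--{\cal Q}\), both evaluated at \((\tau_n,X_{\tau_n},z)\). So it suffices to show that \({\cal Q}(\tau_n,X_{\tau_n},z)=K^+\) at upward switches and \(=K^-\) at downward switches.

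Consider first a switch at \(\tau=0\). It occurs only through the initial jump \eqref{eq:initial-right-jump}. An upward switch means \(y<z\le\widehat Y_{0+}=\ell_0\); hence \(z\le{\cal Z}_{\cal I}(0,x)\), so \((0,x,z)\in{\cal IR}\). A downward switch means \(r_0=\widehat Y_{0+}<z\le y\); hence \(z\ge{\cal Z}_{\cal D}(0,x)\), so \((0,x,z)\in{\cal DR}\).

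Now consider a switch at \(\tau\in(0,T)\). For an upward crossing, \(\widehat Y_\tau=z\) and \(\widehat Y\) exceeds \(z\) at times arbitrarily close to the right of \(\tau\). Since \(\widehat Y=y+K^\ell-K^r\) increases only through \(K^\ell\), whose measure is carried by \(\{\widehat Y=\ell\}\), there are times \(s_k\downarrow\tau\) with \(\widehat Y_{s_k}=\ell_{s_k}\). Continuity of \(\ell\) and \(\widehat Y\) then gives \(\ell_\tau=z\). Because the contact set \({\cal IR}=\{z\le{\cal Z}_{\cal I}\}\) is closed, \({\cal Q}(\tau,X_\tau,z)=K^+\). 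The downward case is symmetric, using \(K^r\) and \(r_t\). I will also check that the left-limit convention in \eqref{eq:reflected-level-indicator} records precisely these crossings as the right-jump times of the switching control.

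\emph{Conclusion.} With these identifications, the truncated identity holds with equality. The remaining steps (membership in \(\Gamma\) via \eqref{eq:alternating-cost-bound}, class-\(D\) continuation values, and passage to the limit using finitely many switches pathwise) are copied from Proposition~\ref{pro:verification-switching-plus}. Combined with \eqref{eq:arbitrary-switching-upper-bound}, this yields optimality and \eqref{eq:reflected-level-value-equality}.

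I expect the main obstacle to be the switch-time identification. One has to rule out an upward crossing of level \(z\) that is not accompanied by reflection at \(\ell\), for instance through tangential touching. One also has to handle the measure-theoretic passage from pathwise statements to a deterministic full-measure set of levels. If the support argument proves delicate, my fallback is to compare the Stieltjes measure \(dK^\ell\) with the change-of-variables formula \eqref{eq:change-variable}. Upward switches of level \(z\) occur exactly where \(dK^\ell\) charges, so for a.e. \(z\) they lie in \(\{\widehat Y=\ell=z\}\).
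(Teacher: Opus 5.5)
Your proposal is correct and follows the paper's own proof. The running residuals vanish by the null-time property \eqref{eq:equality-level-null-time} together with $\ell_t\le\widehat Y_t\le r_t$; the switching brackets vanish because each crossed level lies in the corresponding contact region; and membership in $\Gamma$ and the value identity then follow exactly as for $\eta^{z,j,*}$, via \eqref{eq:truncated-optimal-switching-identity}, \eqref{eq:continuation-value-uniform-bound} and \eqref{eq:alternating-cost-bound}.

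Your support argument (an upward crossing at $\tau>0$ forces $dK^\ell$ to charge every right neighbourhood of $\tau$, hence $z=\ell_\tau$ by continuity) just spells out a step the paper asserts in one line, so the fallback is unnecessary. One small correction: in the downward time-zero case, write $\widehat Y_{0+}\le z$ rather than $<$; this does not affect the conclusion $z\ge{\cal Z}_{\cal D}(0,x)$.
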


\begin{proof}
Fix a level outside the exceptional set in
Lemma~\ref{lem:level-crossing-consistency}.  When the post-switch regime is
zero, \(\widehat Y_t\le z\) and
\(\widehat Y_t\ge {\cal Z}_{\cal I}(t,X_t)\).  By
\eqref{eq:equality-level-null-time}, the first inequality is strict for
\(dt\otimes d\mathbb P\)-almost every such point.  Therefore
\(z>{\cal Z}_{\cal I}(t,X_t)\), so \({\cal Q}<K^+\) and \(G^+=0\).  Similarly,
in regime one we have, almost everywhere,
\[
    z<\widehat Y_t\le {\cal Z}_{\cal D}(t,X_t),
\]
so \({\cal Q}>K^-\) and \(G^-=0\).

At an upward level crossing, the lower regulator acts.  The crossed level
belongs to the upper contact region
\(\{z\le {\cal Z}_{\cal I}\}\), including when a time-zero right jump crosses
an interval of levels.  Hence
\({\cal Q}_1^+-{\cal Q}_0^+-K^+=0\) at the switch.  At a downward crossing,
the upper regulator acts, the crossed level belongs to
\(\{z\ge {\cal Z}_{\cal D}\}\), and
\({\cal Q}_0^+-{\cal Q}_1^++K^-=0\).

The strategy has finitely many switches pathwise, but membership in
\(\Gamma\) still requires an expectation estimate.  Apply optional sampling
to \eqref{eq:FK-martingales} only through the first \(N\) crossings.  The
vanishing residual and obstacle terms just proved yield the truncated
identity \eqref{eq:truncated-optimal-switching-identity}.  The estimates
\eqref{eq:continuation-value-uniform-bound} and
\eqref{eq:alternating-cost-bound} therefore give
\[
    \sup_N\mathbb E_x\bigg[
      \sum_{1\le k\le N}
      e^{-\rho\widehat\tau_k(z)}
      {\bf 1}_{\{\widehat\tau_k(z)<T\}}
    \bigg]<\infty.
\]
Thus \(\widehat\eta^{x,y}(z)\in\Gamma\) a posteriori.  Letting
\(N\to\infty\) in the truncated identity is justified by the class
\(D\) continuation-value bound and the running-reward and switching-cost
domination used in the proof of
Proposition~\ref{pro:verification-switching-plus}.  It proves
\eqref{eq:reflected-level-value-equality}.  Optimality also follows from
Proposition~\ref{pro:verification-switching-plus}.
\end{proof}

\subsection{Verification for the singular control problem}
\label{subsec:verification-singular-control}

Define
\begin{equation}\label{eq:Q-minus-from-Q-plus-verification}
    {\cal Q}_j^-(t,x,z)
    :={\cal Q}_j^+(t,x,z)-{\cal R}_0(t,x,z),
    \qquad j=0,1.
\end{equation}
In particular,
\begin{equation}\label{eq:Q1-minus-potential}
    {\cal Q}_1^-=\widehat{\cal Q}_1\ge0.
\end{equation}

Applying the full level-wise decomposition to the reflected control requires
expected, rather than merely pathwise, finite variation.  The next lemma
derives this estimate from bounded level truncations.  Its proof uses only the
fixed-level switching verification and does not assume the full Guo
decomposition for the reflected control.

\begin{lem}\label{lem:reflected-candidate-integrability}
Let
\begin{equation}\label{eq:reflected-discounted-variations}
    \widehat A_\rho^\pm
    :=\int_{[0,T)}e^{-\rho t}\,d\widehat\xi_t^\pm .
\end{equation}
Then
\begin{equation}\label{eq:reflected-upward-moment}
    \mathbb E_x[\widehat A_\rho^+]<\infty.
\end{equation}
Consequently, Lemma~\ref{lem:finite-payoff-integrability}
\textnormal{(i)--(ii)} yields
\begin{equation}\label{eq:reflected-full-integrability}
\begin{aligned}
&\mathbb E_x[\widehat A_\rho^-]<\infty,
\qquad
\mathbb E_x[e^{-\rho T}\widehat Y_T]<\infty,\\
&\mathbb E_x\bigg[
    \int_0^T e^{-\rho t}\Pi(X_t,\widehat Y_t)\,dt
\bigg]<\infty.
\end{aligned}
\end{equation}
In particular,
\({\cal J}(x,y;\widehat\xi^+,\widehat\xi^-)\) is real-valued, and each term
of \eqref{eq:main-problem} is separately integrable for the reflected
control.
\end{lem}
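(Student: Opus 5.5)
The plan is to express \(\widehat A_\rho^+\) level by level, bound the expected discounted number of upward crossings of each level using the fixed-level identity of Lemma~\ref{lem:reflected-level-switching-optimality}, and integrate over levels using the revenue envelope.

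\emph{Level-wise representation.} Apply the change-of-variables formula of \citet[Lemma~2.13]{Guo2008} pathwise to the reflected control, exactly as in \eqref{eq:change-variable}. This does not need any expectation bound. It gives
\[
\widehat A_\rho^+=\int_0^\infty U(z)\,dz,
\qquad
U(z):=\sum_{n\ge1,\ \widehat\kappa_n(z)=1}e^{-\rho\widehat\tau_n(z)}{\bf 1}_{\{\widehat\tau_n(z)<T\}}.
\]
Here \((\widehat\tau_n(z),\widehat\kappa_n(z))=\widehat\eta^{x,y}(z)\). Since \(U\ge0\), Tonelli gives \(\mathbb E_x[\widehat A_\rho^+]=\int_0^\infty\mathbb E_x[U(z)]\,dz\). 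It therefore suffices to bound \(\mathbb E_x[U(z)]\) by a function of \(z\) with finite integral.

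\emph{Exact pathwise pairing.} Fix a level \(z\) outside the null set of Lemma~\ref{lem:level-crossing-consistency}. There the switches are finitely many and alternate. The key algebraic step refines \eqref{eq:alternating-cost-bound} to an identity. For an upward switch at \(\tau_u\) followed by a downward switch at \(\tau_d\),
\[
-K^+e^{-\rho\tau_u}+K^-e^{-\rho\tau_d}=-\Delta K\,e^{-\rho\tau_u}-\rho K^-\int_{\tau_u}^{\tau_d}e^{-\rho t}\,dt,
\qquad \Delta K:=K^+-K^-.
\]
A final unpaired upward switch is handled by the same formula, with the terminal term \(e^{-\rho T}K^-\zeta_T\) playing the role of the downward payment at \(T\).

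For \(z>y\) the initial regime is \(0\). Summing over pairs gives
\[
\text{(switching costs)}+e^{-\rho T}K^-\zeta_T=-\Delta K\,U(z)-\rho K^-\int_0^Te^{-\rho t}\zeta_t\,dt.
\]
Insert this into \eqref{eq:reflected-level-value-equality}, i.e.\ \({\cal P}^+(x,z;\widehat\eta^{x,y}(z))={\cal Q}_0^+(0,x,z)\ge0\). This yields
\[
\Delta K\,\mathbb E_x[U(z)]\le\mathbb E_x\int_0^Te^{-\rho t}\bigl(\pi(X_t,z)-\rho K^-\bigr){\bf 1}_{\{\widehat Y_t>z\}}\,dt .
\]
Here the running regime is identified with \({\bf 1}_{\{\widehat Y_t>z\}}\) for a.e.\ \(t\), using \eqref{eq:equality-level-null-time}.

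For \(z<y\) the initial regime is \(1\). I would work with \({\cal P}^-\) instead, which by \eqref{eq:plus-minus-benchmark} has the same optimizer. The same pairing, now beginning with a downward switch, gives
\[
\Delta K\,\mathbb E_x[U(z)]\le K^--{\cal Q}_1^-(0,x,z)\le K^-,
\]
by \eqref{eq:Q1-minus-potential}. Hence \(\int_0^y\mathbb E_x[U(z)]\,dz\le yK^-/\Delta K\).

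\emph{Integration over high levels (main obstacle).} The difficulty is to integrate the \(z>y\) bound without already knowing that \(\widehat Y\) is integrable, which would be circular. Following the lemma's hint, I would truncate the levels to \((y,b]\). Set \(F(x,u):=\Pi(x,u)-\rho K^-u\). Pathwise, for each \(t\), the layer-cake identity gives
\[
\int_y^b\bigl(\pi(X_t,z)-\rho K^-\bigr){\bf 1}_{\{\widehat Y_t>z\}}\,dz
=F\bigl(X_t,\widehat Y_t\wedge b\bigr)-F(X_t,y)
\]
on \(\{\widehat Y_t>y\}\), and the left side is \(0\) otherwise. The right side is at most \(\Psi(X_t)+\rho K^-y\), because \(\Pi\ge0\) and \eqref{eq:revenue-envelope-bound} holds. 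On the bounded level range, Fubini is legitimate: the integrand is dominated by \(\pi(X_t,z)+\rho K^-\), and by Assumption~\ref{as:global}(iii) and local boundedness in \(z\) this is integrable over \((y,b]\times[0,T]\times\Omega\). This gives, uniformly in \(b\),
\[
\Delta K\int_y^b\mathbb E_x[U(z)]\,dz\le\mathbb E_x\int_0^Te^{-\rho t}\Psi(X_t)\,dt+K^-y,
\]
which is finite by Assumption~\ref{as:global}(ii).

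Letting \(b\to\infty\) by monotone convergence, and adding the bound for \(z<y\), proves \eqref{eq:reflected-upward-moment}. The remaining assertions \eqref{eq:reflected-full-integrability}, and the real-valuedness of \({\cal J}\), then follow directly from Lemma~\ref{lem:finite-payoff-integrability}(i)--(ii), since \((\widehat\xi^+,\widehat\xi^-)\in\Xi(y)\) by Lemma~\ref{lem:skorokhod-capacity-existence}.
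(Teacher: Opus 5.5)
Your proof is correct, but it takes a different route from the paper's.

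\emph{The paper's route.} The paper truncates the levels to $(a_m,b_m)$ and treats the clipped capacity $(\widehat Y\vee a_m)\wedge b_m$ as an admissible control. It uses the level-wise bound only to obtain, with $m$-dependent constants, enough integrability to run the finite-range Guo decomposition of that clipped control. From ${\cal Q}_0^+,{\cal Q}_1^-\ge0$ it deduces $J_m\ge J^0(x,y)$. Only then does it get an $m$-uniform bound on $\mathbb E_x[\widehat A^+_{\rho,m}]$, from the aggregate carrying-cost form \eqref{eq:J-singular-extended} and $F\le\Psi$.

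\emph{Your route.} You do the discounted integration by parts at each fixed level; your exact pairing identity is the level-wise analogue of \eqref{eq:J-singular-extended}. So the carrying cost $\rho K^-$ already sits inside each level's bound. The layer-cake identity $\int_y^b(\pi(X_t,z)-\rho K^-){\bf 1}_{\{\widehat Y_t>z\}}\,dz=F(X_t,\widehat Y_t\wedge b)-F(X_t,y)\le\Psi(X_t)+\rho K^-y$ then gives uniformity in $b$ directly. Passing to ${\cal P}^-$ below $y$ gives the truncation-free bound $K^-/(K^+-K^-)$ per level.

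\emph{What each buys.} Your version dispenses with the auxiliary clipped controls, their admissibility and the truncated decomposition, and needs only Tonelli--Fubini on bounded level ranges. The paper's version instead reuses aggregate identities it has already established rather than redoing the rearrangement level by level. Both rest on the same ingredients: level-wise optimality from Lemma~\ref{lem:reflected-level-switching-optimality}, nonnegativity of ${\cal Q}_0^+$ and ${\cal Q}_1^-$, the carrying-cost rearrangement, and the envelope $\Psi$.

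\emph{One justification to tighten.} Your Fubini step on $(y,b]$ needs absolute integrability. This does not follow from ``local boundedness in $z$'', which is not part of Assumption~\ref{as:global}(iii). It follows either from the monotonicity of $\pi$ in $z$ (Assumption~\ref{as:free_boundary}(i)) combined with Assumption~\ref{as:global}(iii) at $z=y$, or from $\int_y^b\pi(X_t,z)\,dz=\Pi(X_t,b)-\Pi(X_t,y)$ together with \eqref{eq:baseline-revenue-integrability}.
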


\begin{proof}
Put
\[
    \delta_K:=K^+-K^->0,
    \qquad
    F(x,u):=\Pi(x,u)-\rho K^-u,
\]
and define the no-adjustment payoff
\begin{equation}\label{eq:no-adjustment-payoff-verification}
\begin{aligned}
    J^0(x,y)
    &:=
    \mathbb E_x\bigg[
      \int_0^T e^{-\rho t}\Pi(X_t,y)\,dt
      +e^{-\rho T}K^-y
    \bigg]\\
    &=K^-y+
      \mathbb E_x\bigg[
      \int_0^T e^{-\rho t}F(X_t,y)\,dt
      \bigg].
\end{aligned}
\end{equation}
Assumption~\ref{as:global}\textnormal{(ii)} gives
\begin{equation}\label{eq:F-envelope-verification}
    -\rho K^-u\le F(x,u)\le\Psi(x),
    \qquad u>0,
\end{equation}
so both expressions in \eqref{eq:no-adjustment-payoff-verification} are
finite.

For \(m\ge1\), set
\[
    a_m:=\frac{y}{m+1},
    \qquad
    b_m:=(m+1)y,
    \qquad
    I_m:=(a_m,b_m).
\]
Keep the reflected level indicator
\(\widehat\zeta^{x,y}(z)\) for \(z\in I_m\), keep all levels
\(z\le a_m\) permanently active, and keep all levels \(z\ge b_m\)
permanently inactive.  The capacity reconstructed from this consistent
family is
\begin{equation}\label{eq:clipped-reflected-capacity}
    \widehat Y_t^{\,m}
    =a_m+\int_{a_m}^{b_m}\widehat\zeta_t^{x,y}(z)\,dz
    =(\widehat Y_t\vee a_m)\wedge b_m.
\end{equation}
Let
\((\widehat\xi^{m,+},\widehat\xi^{m,-})\) be the Jordan regulator pair of
\(\widehat Y^{\,m}-y\), with the same time-zero convention as in
Lemma~\ref{lem:skorokhod-capacity-existence}, and write
\[
    \widehat A_{\rho,m}^\pm
    :=\int_{[0,T)}e^{-\rho t}\,d\widehat\xi_t^{m,\pm}.
\]
The clipped process is adapted, positive, LCRL, and of finite variation.
Its Jordan measures are mutually singular, and hence
\((\widehat\xi^{m,+},\widehat\xi^{m,-})\in\Xi(y)\).

We first verify integrability on this bounded level range.  For a level
outside the exceptional set in Lemma
\ref{lem:reflected-level-switching-optimality}, let
\begin{align*}
U(z)&:=\sum_{n\ge1}e^{-\rho\widehat\tau_n(z)}
      {\bf 1}_{\{\widehat\kappa_n(z)=1\}}
      {\bf 1}_{\{\widehat\tau_n(z)<T\}},\\
D(z)&:=\sum_{n\ge1}e^{-\rho\widehat\tau_n(z)}
      {\bf 1}_{\{\widehat\kappa_n(z)=0\}}
      {\bf 1}_{\{\widehat\tau_n(z)<T\}},\\
C(z)&:=-K^+U(z)+K^-D(z).
\end{align*}
{On the exceptional Lebesgue-null set, set
\(U(z)=D(z)=C(z)=0\); this does not affect any level integral.}
The switches alternate.  If \(z>y\), the initial regime is zero, and every
downward switch can be paired with an earlier upward switch.  If \(z\le y\),
there can be one additional initial downward switch.  Since discount factors
decrease with time, pathwise
\begin{equation}\label{eq:level-pairing-candidate}
    D(z)\le U(z)+{\bf 1}_{\{z\le y\}},
    \qquad
    \delta_K U(z)
    \le K^-{\bf 1}_{\{z\le y\}}-C(z).
\end{equation}
The level-wise equality
\eqref{eq:reflected-level-value-equality} gives
\begin{equation}\label{eq:candidate-level-cash-identity}
\begin{aligned}
\mathbb E_x[C(z)]
&={\cal Q}_{\widehat\kappa_0(z)}^+(0,x,z)-
\mathbb E_x\bigg[
 \int_0^T e^{-\rho t}\pi(X_t,z)
       \widehat\zeta_t^{x,y}(z)\,dt
 +e^{-\rho T}K^-\widehat\zeta_T^{x,y}(z)
\bigg].
\end{aligned}
\end{equation}
Because \({\cal Q}_0^+,{\cal Q}_1^+\ge0\),
\eqref{eq:level-pairing-candidate} and
\eqref{eq:candidate-level-cash-identity} imply
\begin{equation}\label{eq:level-upcrossing-bound}
\begin{aligned}
\delta_K\mathbb E_x[U(z)]
\le{}&K^-{\bf 1}_{\{z\le y\}}+\mathbb E_x\bigg[
 \int_0^T e^{-\rho t}\pi(X_t,z)
       \widehat\zeta_t^{x,y}(z)\,dt
 +e^{-\rho T}K^-\widehat\zeta_T^{x,y}(z)
\bigg].
\end{aligned}
\end{equation}
Tonelli's theorem is applicable to the nonnegative right-hand side.  Since
\(0\le\widehat\zeta\le1\), monotonicity of \(\Pi\) and
\eqref{eq:revenue-envelope-Psi} give
\begin{equation}\label{eq:bounded-level-running-bound}
\begin{aligned}
&\int_{a_m}^{b_m}\mathbb E_x\bigg[
 \int_0^T e^{-\rho t}\pi(X_t,z)
       \widehat\zeta_t^{x,y}(z)\,dt
 +e^{-\rho T}K^-\widehat\zeta_T^{x,y}(z)
\bigg]dz\\
&\quad\le
\mathbb E_x\bigg[
 \int_0^T e^{-\rho t}
   \bigl(\Pi(X_t,b_m)-\Pi(X_t,a_m)\bigr)\,dt
\bigg]+K^-b_m\\
&\quad\le
\mathbb E_x\bigg[
 \int_0^T e^{-\rho t}
   \bigl(\rho K^-b_m+\Psi(X_t)\bigr)\,dt
\bigg]+K^-b_m<\infty.
\end{aligned}
\end{equation}
It follows from \eqref{eq:level-upcrossing-bound} and then from the first
inequality in \eqref{eq:level-pairing-candidate} that
\begin{equation}\label{eq:bounded-clipped-variation}
    \int_{a_m}^{b_m}\mathbb E_x[U(z)+D(z)]\,dz<\infty.
\end{equation}
The level-set change-of-variables formula therefore yields, without any
signed nonintegrable interchange,
\begin{equation}\label{eq:clipped-coarea}
    \widehat A_{\rho,m}^+
       =\int_{a_m}^{b_m}U(z)\,dz,
    \qquad
    \widehat A_{\rho,m}^-
       =\int_{a_m}^{b_m}D(z)\,dz,
\end{equation}
and both variables on the left are integrable.

Since \(a_m\le\widehat Y^{\,m}\le b_m\), the operating revenue of the
clipped control is bounded by \(\Pi(X,b_m)\) and is integrable by
\eqref{eq:revenue-envelope-Psi}.  Thus the finite-range level decomposition
is now justified.  Using
\eqref{eq:plus-minus-benchmark},
\eqref{eq:reflected-level-value-equality}, and
\eqref{eq:Q1-minus-potential}, it gives
\begin{equation}\label{eq:clipped-candidate-value}
\begin{aligned}
J_m
&:={\cal J}(x,y;\widehat\xi^{m,+},\widehat\xi^{m,-})\\
&=J^0(x,y)
  +\int_y^{b_m}{\cal Q}_0^+(0,x,z)\,dz
  +\int_{a_m}^y{\cal Q}_1^-(0,x,z)\,dz
\ge J^0(x,y).
\end{aligned}
\end{equation}
Here the inequality follows from
\({\cal Q}_0^+\ge0\) and \({\cal Q}_1^-\ge0\).

Apply the discounted integration-by-parts identity
\eqref{eq:discounted-capacity-ibp} to the clipped control.  In the transformed
form \eqref{eq:J-singular-extended}, it reads
\[
J_m
=K^-y+\mathbb E_x\bigg[
 \int_0^T e^{-\rho t}F(X_t,\widehat Y_t^{\,m})\,dt
 -\delta_K\widehat A_{\rho,m}^+
\bigg].
\]
By \eqref{eq:F-envelope-verification} and
\eqref{eq:clipped-candidate-value},
\begin{equation}\label{eq:uniform-clipped-upward-bound}
\begin{aligned}
\delta_K\mathbb E_x[\widehat A_{\rho,m}^+]
&\le K^-y+
\mathbb E_x\bigg[\int_0^T e^{-\rho t}\Psi(X_t)\,dt\bigg]
-J_m\\
&\le K^-y+
\mathbb E_x\bigg[\int_0^T e^{-\rho t}\Psi(X_t)\,dt\bigg]
-J^0(x,y)<\infty.
\end{aligned}
\end{equation}
The upper bound is independent of \(m\).

The intervals \(I_m\) increase to \((0,\infty)\).  By
Lemma~\ref{lem:skorokhod-capacity-existence}, the two Skorokhod regulator
measures are mutually singular and therefore form the Jordan pair of
\(\widehat Y-y\).  The coarea identities for this reflected
finite-variation process and
\eqref{eq:clipped-coarea} give
\[
    \widehat A_{\rho,m}^+
    =\int_{I_m}U(z)\,dz
    \uparrow
    \int_0^\infty U(z)\,dz
    =\widehat A_\rho^+
    \qquad\text{a.s.}
\]
Monotone convergence and \eqref{eq:uniform-clipped-upward-bound} prove
\eqref{eq:reflected-upward-moment}.
By Lemma~\ref{lem:skorokhod-capacity-existence},
\((\widehat\xi^+,\widehat\xi^-)\in\Xi(y)\).  Therefore,
\eqref{eq:reflected-upward-moment} and
Lemma~\ref{lem:finite-payoff-integrability}\textnormal{(i)--(ii)} imply
\eqref{eq:reflected-full-integrability}, show that
\({\cal J}(x,y;\widehat\xi^+,\widehat\xi^-)\) is real-valued, and give the
separate integrability of every term in \eqref{eq:main-problem}.
\end{proof}

\begin{thm}\label{thm:verification-singular-control}
The reflected control
\((\widehat\xi^+,\widehat\xi^-)\) constructed in
Lemma~\ref{lem:skorokhod-capacity-existence} is optimal for
Problem~\ref{pr:main}.  Moreover,
\begin{equation}\label{eq:main-value-representation-verified}
\begin{aligned}
\mathfrak U(x,y)
&=\mathbb E_x\bigg[
\int_0^T e^{-\rho t}\Pi(X_t,y)\,dt
+e^{-\rho T}K^-y
\bigg]
+\int_y^\infty {\cal Q}_0^+(0,x,z)\,dz
+\int_0^y {\cal Q}_1^-(0,x,z)\,dz,
\end{aligned}
\end{equation}
and all terms in this identity are finite.
\end{thm}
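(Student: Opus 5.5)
The plan is to sandwich $\mathcal J(x,y;\widehat\xi^+,\widehat\xi^-)$ between the same two quantities, both equal to the right-hand side of \eqref{eq:main-value-representation-verified}. The upper bound is already available. Let $\xi\in\Xi(y)$. If $\mathcal J(x,y;\xi^+,\xi^-)=-\infty$ there is nothing to prove. Otherwise, \eqref{eq:singular-switching-upper-bound} bounds $\mathcal J$ by $J^0(x,y)+\int_y^\infty\mathbf P_0^+(x,z)\,dz+\int_0^y\mathbf P_1^-(x,z)\,dz$. By Proposition~\ref{pro:verification-switching-plus}, $\mathbf P_j^+(x,z)=\mathcal Q_j^+(0,x,z)$ for every level. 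Then \eqref{eq:Pplus-Pminus-values} and \eqref{eq:Q-minus-from-Q-plus-verification} give $\mathbf P_1^-(x,z)=\mathcal Q_1^-(0,x,z)$. Hence $\mathfrak U(x,y)$ is at most the right-hand side $R(x,y)$ of \eqref{eq:main-value-representation-verified}.

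For the lower bound, I will apply the level-crossing decomposition \eqref{eq:Guo-decomposition} to the reflected control. This is now legitimate because Lemma~\ref{lem:reflected-candidate-integrability} shows $\mathcal J(x,y;\widehat\xi^+,\widehat\xi^-)>-\infty$, with every term separately integrable. The point to check is that the level family $\eta^{\widehat\xi}(z)$ built in \eqref{eq:singular-level-indicator} agrees, for a.e.\ $z$, with $\widehat\eta^{x,y}(z)$ from \eqref{eq:reflected-level-indicator}. Both are defined as left limits of $\mathbf 1_{\{\widehat Y_s>z\}}$, with the same initial regime $\mathbf 1_{\{y\ge z\}}$ and the same time-zero convention. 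Lemma~\ref{lem:level-crossing-consistency} shows the limsup is a limit off a null set of levels, so the two families agree after discarding a Lebesgue-null set.

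With this identification, Lemma~\ref{lem:reflected-level-switching-optimality} gives $\mathcal P^+(x,z;\widehat\eta^{x,y}(z))=\mathcal Q^+_{\mathbf 1_{\{y\ge z\}}}(0,x,z)$ for a.e.\ $z$. For $z>y$ this is $\mathcal Q_0^+(0,x,z)$. For $z<y$, using \eqref{eq:plus-minus-benchmark}, $\mathcal P^-=\mathcal P^+-\mathcal R_0=\mathcal Q_1^-(0,x,z)$. The single level $z=y$ is null and does not matter. Substituting into \eqref{eq:Guo-decomposition} gives $\mathcal J(x,y;\widehat\xi^+,\widehat\xi^-)=R(x,y)$, so $\widehat\xi$ is optimal and \eqref{eq:main-value-representation-verified} holds.

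Finiteness follows from the same identity. The left side is real by Lemma~\ref{lem:reflected-candidate-integrability}, $J^0$ is finite by \eqref{eq:F-envelope-verification}, and the two level integrals are nonnegative because $\mathcal Q_0^+,\mathcal Q_1^-\ge0$. Their sum is therefore finite, and so each integral is finite. The main obstacle is justifying Fubini in \eqref{eq:Guo-decomposition} for the reflected control, in particular the integrability in \eqref{eq:aggregate-switch-count}. I expect this to follow from $\mathbb E_x[\widehat A_\rho^++\widehat A_\rho^-]<\infty$, which Lemma~\ref{lem:reflected-candidate-integrability} provides, together with the absolute-integrability bounds listed after \eqref{eq:zeta-t}. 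If that route causes trouble, a fallback is to use the clipped identity \eqref{eq:clipped-candidate-value} and let $m\to\infty$. By monotone convergence, $J_m\to R(x,y)$ on the right. On the left, $J_m\to\mathcal J(x,y;\widehat\xi^+,\widehat\xi^-)$ by dominated convergence applied to the transformed form \eqref{eq:J-singular-extended}: $\widehat A^+_{\rho,m}\uparrow\widehat A^+_\rho$ is integrable, and $F(X_t,\widehat Y^m_t)$ is dominated by $\Psi(X_t)+\rho K^-(\widehat Y_t+y)$.
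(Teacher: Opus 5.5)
Your proposal is correct and follows essentially the same route as the paper. The upper bound comes from \eqref{eq:singular-switching-upper-bound} together with Proposition~\ref{pro:verification-switching-plus} and \eqref{eq:Pplus-Pminus-values}. Attainment comes from applying \eqref{eq:Guo-decomposition} to the reflected control, which Lemma~\ref{lem:reflected-candidate-integrability} makes legitimate, combined with Lemma~\ref{lem:reflected-level-switching-optimality} and \eqref{eq:plus-minus-benchmark}. Finiteness is then read off from the nonnegativity of the two level integrands.

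Your explicit identification of the level family in \eqref{eq:singular-level-indicator} with $\widehat\eta^{x,y}(z)$ spells out a step the paper leaves implicit. The clipped-control fallback is unnecessary, because \eqref{eq:Guo-decomposition} was already derived for every control of finite payoff.
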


\begin{proof}
{By \eqref{eq:Pplus-Pminus-values},
Proposition~\ref{pro:verification-switching-plus}, and
\eqref{eq:Q-minus-from-Q-plus-verification},} we have
\begin{equation}\label{eq:Pminus-value-verified}
    {\bf P}_j^-(x,z)
    ={\bf P}_j^+(x,z)-{\cal R}_0(0,x,z)
    ={\cal Q}_j^-(0,x,z),
    \qquad j=0,1.
\end{equation}

Lemma~\ref{lem:reflected-candidate-integrability} makes every term in the
full level-wise decomposition of the reflected control integrable.  Thus
\eqref{eq:Guo-decomposition} applies on \((0,\infty)\).  Lemma
\ref{lem:reflected-level-switching-optimality} and
\eqref{eq:plus-minus-benchmark} yield
\begin{equation}\label{eq:reflected-full-level-equalities}
\begin{aligned}
{\cal P}^+(x,z;\widehat\eta^{x,y}(z))
&={\cal Q}_0^+(0,x,z)
&&\text{for a.e. }z>y,\\
{\cal P}^-(x,z;\widehat\eta^{x,y}(z))
&={\cal Q}_1^-(0,x,z)
&&\text{for a.e. }0<z<y.
\end{aligned}
\end{equation}
Consequently,
\begin{equation}\label{eq:reflected-attains-aggregate-value}
\begin{aligned}
{\cal J}(x,y;\widehat\xi^+,\widehat\xi^-)
&=\mathbb E_x\bigg[
\int_0^T e^{-\rho t}\Pi(X_t,y)\,dt
+e^{-\rho T}K^-y
\bigg]\\
&\quad
+\int_y^\infty {\cal Q}_0^+(0,x,z)\,dz
+\int_0^y {\cal Q}_1^-(0,x,z)\,dz.
\end{aligned}
\end{equation}
The left-hand side is finite by Lemma
\ref{lem:reflected-candidate-integrability}; the baseline is finite by
\eqref{eq:baseline-revenue-integrability}; and the two integrands on the right are
nonnegative.  Hence both level integrals in
\eqref{eq:reflected-attains-aggregate-value} are finite.

Let \((\xi^+,\xi^-)\in\Xi(y)\) now be arbitrary.  If
\({\cal J}(x,y;\xi^+,\xi^-)=-\infty\), the desired upper bound is immediate.
Otherwise, Lemma~\ref{lem:finite-payoff-integrability} supplies precisely the
integrability needed for the level-crossing decomposition and for all
Tonelli--Fubini interchanges.  Therefore
\eqref{eq:singular-switching-upper-bound},
Proposition~\ref{pro:verification-switching-plus}, and
\eqref{eq:Pminus-value-verified} give
\begin{equation}\label{eq:singular-upper-bound-verification}
\begin{aligned}
{\cal J}(x,y;\xi^+,\xi^-)
&\le\mathbb E_x\bigg[
\int_0^T e^{-\rho t}\Pi(X_t,y)\,dt
+e^{-\rho T}K^-y
\bigg]\\
&\quad
+\int_y^\infty {\cal Q}_0^+(0,x,z)\,dz
+\int_0^y {\cal Q}_1^-(0,x,z)\,dz.
\end{aligned}
\end{equation}
Taking the supremum over \(\Xi(y)\) and comparing
\eqref{eq:singular-upper-bound-verification} with
\eqref{eq:reflected-attains-aggregate-value} proves optimality and
\eqref{eq:main-value-representation-verified}.
\end{proof}

\section{Numerical computations}\label{sec:Numeric}

We solve the double-obstacle problem numerically for a two-factor
($n=2$) specification with CES-in-non-capital-factors profit
$\Pi(x_1,x_2,y)=(x_1^\theta+x_2^\theta)^{\gamma/\theta}\,y^\delta$,
where $\theta=-1$, $\gamma=0.8$, and $\delta=0.4$.  Both factors
follow geometric Brownian motion with drift $\mu_i=0.02$ and
baseline variance $\sigma_i^2=0.04$; the factors are negatively
correlated ($\rho_{12}=-0.5$).
The horizon is $T=5$, the cost band is
$K^+=1.00$, $K^-=0.85$, and the discount rate is $\rho=0.12$.
The parameters are not calibrated to a particular industry, but each is
chosen to be economically plausible.  The elasticity of substitution
$1/(1-\theta)=1/2$ makes the two demand factors complements, and $\delta<1$
makes marginal revenue decreasing in capacity.  A drift of $2\%$ and a
volatility of $\sigma_i=0.2$ are of the order of long-run demand growth and
firm-level demand uncertainty, and the negative correlation describes two
partly offsetting sources of demand, such as distinct markets or customer
segments.  The salvage value $K^-=0.85K^+$ corresponds to a $15\%$ loss on
resale, a moderate degree of irreversibility; the discount rate $\rho=0.12$
lies within the range of firm-level costs of capital; and the horizon $T=5$
can be read as the planning horizon or the remaining life of the plant.  The
magnitudes reported below are nonetheless specific to this parameterisation.
Each solve discretises the state space on a grid of size
$400\times400\times N_y\times60$ in $(x_1,x_2,y,t)$ (with $N_y$ up
to $800$) on the domain
$x_i\in(0,100]$, $y\in[0.01,200]$, and applies an implicit
finite-difference scheme; Appendix~\ref{app:numerics} lists further details.
The figures are then cropped at the upper boundaries to reduce interference of boundary conditions.
In all figures, yellow denotes the investment region (IR),
teal the waiting region (WR), and purple the disinvestment region (DR).

The near-expiry behaviour in the figures below is governed by two
quantities, which we make explicit here.  Retaining a marginal unit of
capacity until~$T$ yields its discounted marginal revenue plus the
liquidation value $e^{-\rho(T-t)}K^-$, whereas selling it immediately
yields~$K^-$; retention is worthwhile when marginal revenue exceeds the
\emph{user cost of capital} $\rho K^-$: the flow forgone by leaving the
salvage value invested in the asset rather than realising it.
Purchasing a unit is governed by a second quantity: it costs~$K^+$,
earns approximately $\pi(x,z)(T-t)$ over the remaining horizon, and is
liquidated at~$K^-$, so it is worth buying only if
\begin{equation}\label{eq:inv_heuristic}
  \pi(x,z) \;\gtrsim\; \rho K^- + \frac{K^+ - K^-}{T-t},
\end{equation}
that is, only if marginal revenue covers the user cost \emph{plus} the
round-trip friction $K^+-K^-$ amortised over the time remaining.  The
two terminal limits in \eqref{eq:terminal-Z}
are the $t\to T$ statements of these two conditions.

\providecommand{\figwidth}{6.25cm}
\providecommand{\figheight}{6cm}

\begin{figure}[htbp]
\centering

\begin{tabular}{@{}c@{}c@{}c@{}}
\begin{tikzpicture}
  \begin{axis}[
    width=\figwidth,
    height=\figheight,
    xlabel={$y$},
    ylabel={$x_1 = x_2$},
    axis on top,
    enlargelimits=false,
    xmin=0.01, xmax=200.0,
    ymin=0.0, ymax=44.86215538847118,
    title = {$T-t = 5$}
  ]
    \addplot graphics [
      xmin=0.01, xmax=200.0,
      ymin=0.0, ymax=44.86215538847118
    ] {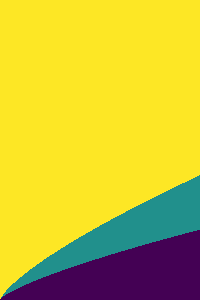};
  \end{axis}
\end{tikzpicture}&

\begin{tikzpicture}
  \begin{axis}[
    width=\figwidth,
    height=\figheight,
    xlabel={$y$},
    ytick=\empty,
    yticklabels={},
    axis on top,
    enlargelimits=false,
    xmin=0.01, xmax=200.0,
    ymin=0.0, ymax=44.86215538847118,
    title={$T-t \approx 0.85$}
  ]
    \addplot graphics [
      xmin=0.01, xmax=200.0,
      ymin=0.0, ymax=44.86215538847118
    ] {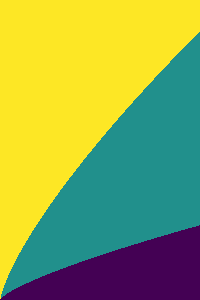};
  \end{axis}
\end{tikzpicture}
&
\begin{tikzpicture}
  \begin{axis}[
    width=\figwidth,
    height=\figheight,
    xlabel={$y$},
    ytick=\empty,
    yticklabels={},
    axis on top,
    enlargelimits=false,
    xmin=0.01, xmax=200.0,
    ymin=0.0, ymax=44.86215538847118,
    title={$T-t \approx 0.25$}
  ]
    \addplot graphics [
      xmin=0.01, xmax=200.0,
      ymin=0.0, ymax=44.86215538847118
    ] {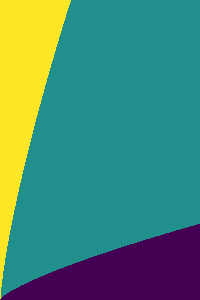};
  \end{axis}
\end{tikzpicture}
\\
\end{tabular}
\caption{Policy regions in the $(y,\, x_1{=}x_2)$ plane along the
symmetric diagonal for decreasing time-to-maturity $T - t$.  Each
panel shows the investment, waiting, and disinvestment regions as a
function of capacity~$y$ and the common factor level $x_1 = x_2$.
Far from expiry, the free boundaries $\bdrydisinv(x)$ and
$\bdryinv(x)$ are well-separated.  As $t \to T$, the
investment region shrinks and is absorbed by the waiting region:
the terminal liquidation at~$K^-$ makes upward adjustment expensive
near expiry.  The disinvestment boundary remains largely stationary, converging to the myopic threshold at which marginal revenue equals the user cost $\rho K^-$.}
\label{fig:boundary_x_v_y}
\end{figure}

Figure~\ref{fig:boundary_x_v_y} displays the policy regions in the
$(y,\,x_1{=}x_2)$ plane along the symmetric diagonal of the
factor space, at three values of the time-to-maturity.  Far from
expiry ($T-t=5$), the free boundaries $\bdrydisinv(x)$ and
$\bdryinv(x)$ are clearly separated and increase with the
factor level: higher demand calls for higher capacity, and the
boundaries demarcating optimal adjustment scale accordingly.  As $t\to T$,
the disinvestment boundary remains approximately stationary, converging
to the myopic threshold $\pi_z^{-1}(x,\rho K^-)$ given by the second limit
in \eqref{eq:terminal-Z}: near expiry the firm disinvests exactly when
marginal revenue falls below the user cost.  The investment region, by
contrast, shrinks markedly and the waiting region expands to absorb it.
The asymmetry is the amortised friction in \eqref{eq:inv_heuristic}.

\begin{figure}
\centering
\providecommand{\figwidth}{8cm}
\providecommand{\figheight}{6cm}
\begin{tikzpicture}
  \begin{axis}[
    width=\figwidth,
    height=\figheight,
    xlabel={$T-t$},
    ylabel={$x_1 = x_2$},
    axis on top,
    enlargelimits=false,
    xmin=0.0, xmax=5.0,
    ymin=0.0, ymax=83.27759197324414,
    x dir=reverse,
  ]
    \addplot graphics [
      xmin=0.0, xmax=5.0,
      ymin=0.0, ymax=83.27759197324414
    ] {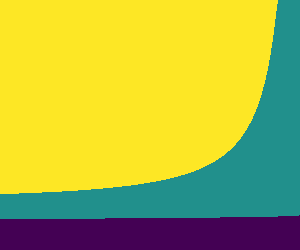};
  \end{axis}
\end{tikzpicture}
\caption{Policy regions along the symmetric diagonal $x_1 = x_2$ as a
function of time-to-maturity $T - t$, at fixed capacity $y \approx
100$.  The horizontal axis runs from the full horizon ($T - t = 5$)
on the left to expiry ($T - t = 0$) on the right.  As $t\to T$, the investment region shrinks markedly and the waiting region expands to absorb it: buying a unit near expiry must recover the round-trip friction $K^+-K^-$ over a vanishing remaining horizon, whereas retaining one is governed by the user cost $\rho K^-$ alone.  The disinvestment region is largely unchanged.}
\label{fig:boundary_x_v_t}
\end{figure}

\providecommand{\figwidth}{5.2cm}
\providecommand{\figheight}{5cm}

\begin{figure}[htbp]
\centering
\begin{tabular}{@{}c@{}c@{}c@{}}
\begin{tikzpicture}
  \begin{axis}[
    width=\figwidth, height=\figheight,
    xlabel={$x_1$}, ylabel={$x_2$},
    title={$T - t = 1$},
    axis on top, enlargelimits=false,
    xmin=0, xmax=100, ymin=0, ymax=100,
  ]
    \addplot graphics [xmin=0, xmax=100, ymin=0, ymax=100]
      {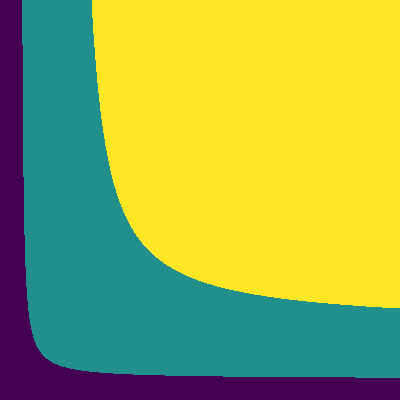};
  \end{axis}
\end{tikzpicture}
&
\begin{tikzpicture}
  \begin{axis}[
    width=\figwidth, height=\figheight,
    xlabel={$x_1$}, %
    ytick=\empty,
    yticklabels={},
    title={$T - t = 5$},
    axis on top, enlargelimits=false,
    xmin=0, xmax=100, ymin=0, ymax=100,
  ]
    \addplot graphics [xmin=0, xmax=100, ymin=0, ymax=100]
      {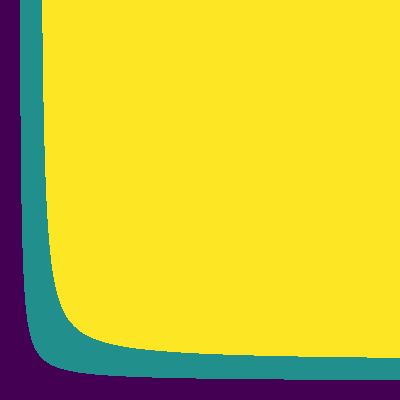};
  \end{axis}
\end{tikzpicture}
&
\begin{tikzpicture}
  \begin{axis}[
    width=\figwidth, height=\figheight,
    xlabel={$x_1$}, %
    ytick=\empty,
    yticklabels={},
    title={$T - t = 10$},
    axis on top, enlargelimits=false,
    xmin=0, xmax=100, ymin=0, ymax=100,
  ]
    \addplot graphics [xmin=0, xmax=100, ymin=0, ymax=100]
      {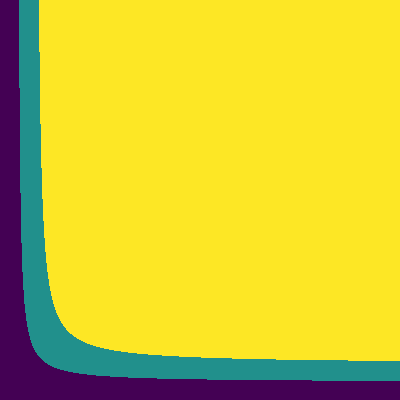};
  \end{axis}
\end{tikzpicture}
\\
\end{tabular}
\caption{Horizon comparative statics.  Each panel shows the policy
regions in the $(x_1, x_2)$ plane at fixed capacity $y = 100$, solved with $T\in\{1,5,10\}$, i.e., evaluated at $t=0$.
At $T-t=1$ the amortised friction $(K^+-K^-)/(T-t)$ is large, so investment
requires a much higher marginal revenue and the waiting region is
correspondingly thicker.  As $T$ increases, the terminal
condition becomes less relevant and the WR narrows toward its
long-run stationary shape.}
\label{fig:experiment_C}
\end{figure}

Figure~\ref{fig:boundary_x_v_t} traces the two boundaries along the diagonal
continuously in time-to-maturity, at fixed capacity $y\approx100$: the
investment region narrows towards expiry while the disinvestment boundary
stays nearly flat.
Figure~\ref{fig:experiment_C} reports a horizon comparative-statics
exercise: the policy regions in the $(x_1,x_2)$ plane at $y=100$ are compared across $T-t\in\{1,\,5,\,10\}$.
Shorter horizons shrink the investment region noticeably.
The comparison across panels is quantitative: by \eqref{eq:inv_heuristic}, the
amortised friction $(K^+-K^-)/(T-t)$ dominates the user cost $\rho K^-$ once
$T-t < (K^+-K^-)/(\rho K^-) \approx 1.5$.  The right-hand side of
\eqref{eq:inv_heuristic} accordingly falls from $0.252$ at $T-t=1$ to $0.132$ at
$T-t=5$ and $0.117$ at $T-t=10$, which is why the first panel differs sharply
from the other two while the last two resemble each other.
The channel runs entirely through the friction: were $K^+=K^-$ the amortised
term would vanish and the horizon would move neither boundary, so the
near-expiry collapse of the investment region is driven by costly
reversibility rather than by the finite horizon as such.

\providecommand{\figwidth}{6.5cm}
\providecommand{\figheight}{6cm}

\begin{figure}[htbp]
\centering
\begin{tabular}{@{}cc@{}}
\begin{tikzpicture}
  \begin{axis}[
    width=\figwidth, height=\figheight,
    xlabel={$x_1$}, ylabel={$x_2$},
    title={$\sigma_2^2 = 0.01$},
    axis on top, enlargelimits=false,
    xmin=0, xmax=100, ymin=0, ymax=100,
  ]
    \addplot graphics [xmin=0, xmax=100, ymin=0, ymax=100]
      {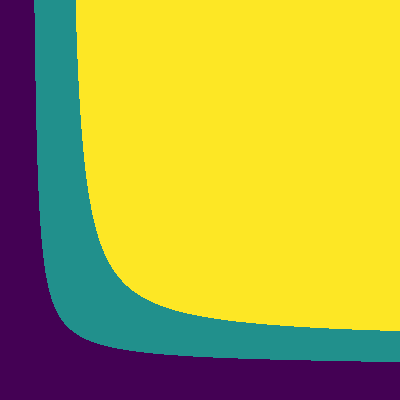};
  \end{axis}
\end{tikzpicture}
&
\begin{tikzpicture}
  \begin{axis}[
    width=\figwidth, height=\figheight,
    xlabel={$x_1$}, ylabel={$x_2$},
    title={$\sigma_2^2 = 0.16$},
    axis on top, enlargelimits=false,
    xmin=0, xmax=100, ymin=0, ymax=100,
  ]
    \addplot graphics [xmin=0, xmax=100, ymin=0, ymax=100]
      {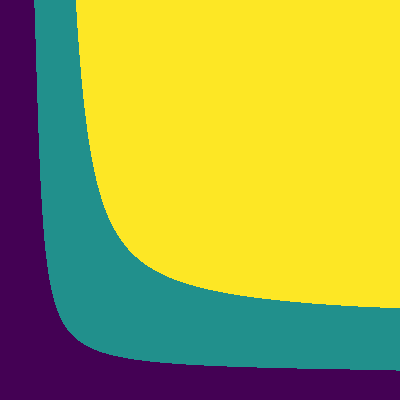};
  \end{axis}
\end{tikzpicture}
\\
\end{tabular}
\caption{Waiting region under asymmetric factor volatility.  Each panel
shows the investment region~(IR), waiting region~(WR), and
disinvestment region~(DR) in the $(x_1, x_2)$ plane at fixed capacity
$y = 100$, with $\sigma_1^2 = 0.04$, $\rho_{12} = -0.5$, and
$\sigma_2^2$ increasing across panels.  Raising $\sigma_2$ away from the least-uncertain configuration $\sigma_2=-\rho_{12}\sigma_1$ increases the volatility of the CES composite and thickens the waiting region, breaking the symmetry of the free boundary about the diagonal; the effect is concentrated where $x_2$ is the scarce factor.}
\label{fig:experiment_A}
\end{figure}

Figure~\ref{fig:experiment_A} displays the optimal policy regions in
the $(x_1,x_2)$ plane at fixed capacity $y=100$ for two values of the
second factor's volatility $\sigma_2\in\{0.1,\,0.4\}$, holding
$\sigma_1=0.2$ and the correlation $\rho_{12}=-0.5$ fixed; the
symmetric baseline $\sigma_2=\sigma_1=0.2$ is the configuration of
Figures~\ref{fig:boundary_x_v_y}--\ref{fig:experiment_C}, in which the
free boundaries are symmetric about the diagonal $x_1=x_2$.

Because the factors are negatively correlated, $\sigma_2$ does not act
monotonically on the uncertainty faced by the firm.  Profit depends on
the factors only through the CES composite
$A(x)=(x_1^\theta+x_2^\theta)^{\gamma/\theta}$, so the relevant
uncertainty is the volatility $\sigma_A$ of that composite rather than
the volatilities of the individual factors.  Let
$u_i:=\partial\ln A/\partial\ln x_i$ denote the elasticity of the
composite with respect to factor~$i$, and let $s:=x_1/(x_1+x_2)$ be the
share of the first factor.  For $\theta=-1$ one has $u_1=\gamma(1-s)$
and $u_2=\gamma s$, so that $u_2$ is large precisely when $x_1$
dominates the composite, and
\begin{equation}\label{eq:composite_vol}
  \sigma_A^2 \;=\; u_1^2\sigma_1^2
    + 2u_1u_2\rho_{12}\sigma_1\sigma_2 + u_2^2\sigma_2^2 .
\end{equation}
A factor thus transmits its volatility in proportion to how binding it
is: $u_2$ is largest where $x_2$ is scarce.  On the diagonal $s=1/2$,
so $\sigma_A^2=\gamma^2(\sigma_1^2+2\rho_{12}\sigma_1\sigma_2+\sigma_2^2)/4$,
which is minimised at $\sigma_2=-\rho_{12}\sigma_1=0.1$ along the diagonal.
The left panel is therefore the least-uncertain configuration available there
at these values of $\sigma_1$ and $\rho_{12}$, not merely a low-volatility one.
Raising $\sigma_2$ also alters the composite's drift, since $a_{22}$ and
$a_{12}$ enter the second-order terms of $\mathcal{L}A$.
Off the diagonal, the transmission is strongly asymmetric.
By \eqref{eq:composite_vol}, raising $\sigma_2$ from $0.1$ to $0.4$
changes $\sigma_A^2$ by $\gamma^2 s\,(0.21\,s-0.06)$, which is positive
where $x_2$ is the scarce factor and negative for $x_2>2.5\,x_1$.  The
reversal is a correlation effect: where $x_1$ binds, $x_2$ enters mainly
through the negatively correlated cross term, so a more volatile $x_2$
dampens the composite rather than amplifying it.  No single-factor model
admits this, since one variance moves the inaction band the same way
everywhere.  The two branches are very unequal in size: along the arm on
which $x_2$ binds the increase exceeds an order of magnitude and the band
thickens visibly, whereas over the range plotted the negative branch
removes at most $17\%$ of the local variance and leaves the other arm
unchanged.  Since $\sigma_A^2$ is instantaneous, evaluated at the current
state, it locates where a change in $\sigma_2$ is transmitted to the firm and
not how far the boundaries move.

\section{Implications}\label{sec:implications}

We quantify the interaction between the two demand factors through the
discrete cross-partial of each free boundary,
\[
  \Delta^2 f \;=\; f(x_0+\varepsilon e) - f(x_0+\varepsilon e_1)
              - f(x_0+\varepsilon e_2) + f(x_0),
  \qquad e = e_1 + e_2,
\]
evaluated at the baseline factor level $x_0=(10,\,10)$ and $T-t=5$.
Normalised by $\varepsilon^2$, this approximates
$\partial^2 f/\partial x_1\partial x_2$ and measures the departure from
additivity: a positive value means that a simultaneous improvement in
both factors shifts the boundary further than the sum of two
equal-sized individual improvements.  Table~\ref{tab:boundary_cross_conv}
reports $\Delta^2/\varepsilon^2$ for the disinvestment threshold
$\bdrydisinv$, the investment threshold $\bdryinv$, and the
waiting-region width $w = \bdrydisinv - \bdryinv$, for the two
one-sided ($\pm\varepsilon$) estimates at $\varepsilon\in\{2.5,\,1\}$.
The estimates tighten as $\varepsilon$ shrinks: at $\varepsilon=1$ the two
endpoints lie within $2.9$--$3.2\%$ of their average, against $8.1$--$8.7\%$ at
$\varepsilon=2.5$.  We base the discussion on the $\pm\varepsilon$ average at $\varepsilon=1$.
These one-sided estimates vary $\varepsilon$ at a fixed mesh, so they bound the
sensitivity to the perturbation size and not the discretisation error.

\begin{table}[htbp]
\centering
\caption{Sensitivity of the normalised cross-partial $\Delta^2/\varepsilon^2$ to the
perturbation size $\varepsilon$, for the disinvestment threshold
($\bdrydisinv$), investment threshold ($\bdryinv$), and waiting-region width
($w$) at $x_0=(10,10)$, with $N_y=800$.  The reference point is chosen to
capture both boundaries within the computation region, cf.\
Figure~\ref{fig:experiment_C}.  The headline estimate is the $\pm\varepsilon$
average at $\varepsilon=1$.}
\label{tab:boundary_cross_conv}
\begin{tabular}{l c c c c c c }
\toprule
  & \multicolumn{3}{c}{\(\varepsilon = 2.5\)} & \multicolumn{3}{c}{\(\varepsilon = 1.0\)} \\
\cmidrule(lr){2-4}
\cmidrule(lr){5-7}
  & \(\bdrydisinv\) & \(\bdryinv\) & \(w\) & \(\bdrydisinv\) & \(\bdryinv\) & \(w\) \\
\midrule
\(+\varepsilon\) shift & 0.6058 & 0.3263 & 0.2795 & 0.6437 & 0.3535 & 0.2902 \\
\(-\varepsilon\) shift & 0.7175 & 0.3886 & 0.3289 & 0.6839 & 0.3767 & 0.3072 \\
\bottomrule
\end{tabular}
\end{table}

The sign of these interactions is largely structural.  At maturity,
$\pi(x,z)=\delta A(x)z^{\delta-1}$ with $A(x)=(x_1^\theta+x_2^\theta)^{\gamma/\theta}$,
so the second limit in \eqref{eq:terminal-Z} gives
\begin{equation}\label{eq:ZD_terminal_closed}
  \bdrydisinv(T,x)=\Big(\tfrac{\delta}{\rho K^-}\Big)^{1/(1-\delta)}
    \big(x_1^\theta+x_2^\theta\big)^{q},
  \qquad q=\frac{\gamma}{\theta(1-\delta)} .
\end{equation}
Its mixed derivative has the sign of $q(q-1)$, which is positive if and only if
$\theta<0$ or $0<\theta<\gamma/(1-\delta)$; this includes the calibration
$\theta=-1$, for which $q=-4/3$.  In this range a positive cross-partial is a
feature of the CES power form rather than a signature of complementarity,
and we read the estimates below as measuring the size and asymmetry of the
interaction, not its sign.

At $\varepsilon=1$, both free boundaries display \emph{positive
super-additivity} in the demand factors at the evaluation point:
$\Delta^2 \bdrydisinv/\varepsilon^2 \approx 0.66$ and
$\Delta^2 \bdryinv/\varepsilon^2 \approx 0.37$.  The interaction is
positive at both thresholds, as \eqref{eq:ZD_terminal_closed} leads one to expect:
a simultaneous improvement
in both demand factors shifts each boundary further than the sum of
two individual improvements of the same size.  At this evaluation point the effect is about
$1.8$ times as strong at the disinvestment boundary as at the investment
boundary, indicating that joint factor improvements have an asymmetric
impact on the policy: they raise the firm's willingness to retain
existing capacity more than they raise its appetite for new capacity.

As a consequence, the waiting-region width
$w = \bdrydisinv - \bdryinv$ inherits an essentially
positive cross-partial of comparable magnitude,
$\Delta^2 w/\varepsilon^2 \approx 0.30 \approx
\Delta^2 \bdrydisinv/\varepsilon^2 -
\Delta^2 \bdryinv/\varepsilon^2$.  A demand shock that lifts both factors together therefore not only
translates the inaction band toward higher capacities (because both
thresholds rise super-additively), it also widens it: at the higher factor
level the firm tolerates a wider range of capacities before intervening in
either direction.
Since $\pi(cx,c^{4/3}z)=\pi(x,z)$ for the specification computed here, both
thresholds are homogeneous of degree $4/3$ in $x$; with
$x_0+\varepsilon(1,1)=1.1\,x_0$ at $\varepsilon=1$, the width at the
perturbed point is exactly $1.1^{4/3}-1\approx13.6\%$ larger.
The positive
cross-partial is a separate statement: it says the joint effect exceeds the sum
of the two one-at-a-time effects, not that the band is wider in levels.
This super-additive widening is a multi-factor effect that has no analogue in the single-factor model, and it arises with the volatility and discounting parameters held fixed.

Table~\ref{tab:corr_sweep} repeats the cross-partial measurement across the
factor correlation $\rho_{12}\in\{-0.9,-0.5,0,0.5,0.9\}$ at the same baseline
$x_0=(10,10)$ and $\varepsilon=1$ (here with $N_y=600$, against $N_y=800$ in
Table~\ref{tab:boundary_cross_conv}), holding all other parameters fixed.  Along the symmetric diagonal the sum-shock variance
$\sigma_{\Sigma}^2 = \sigma_1^2 + 2\rho_{12}\sigma_1\sigma_2
+ \sigma_2^2$ ranges from $0.008$ to $0.152$; the composite's own variance is
$\sigma_A^2 = (\gamma / 2)^2 \sigma_\Sigma^2$ at $x_1 = x_2$, i.e., 0.0013 to 0.0243.
The waiting-region width
grows monotonically with $\rho_{12}$, from $w\approx42.6$ to
$w\approx71.2$ --- a factor of about $1.7$.  Correlation enters the problem
both through the variance of the composite and, via the cross term in
$\mathcal{L}A$, through its expected level; we report the comparative static
without separating the two.

The two cross-partials behave very differently in $\rho_{12}$.  The
investment cross-partial $\Delta^2 \bdryinv/\varepsilon^2$
varies only weakly --- from $0.36$ at $\rho_{12}=-0.9$ to $0.32$ at
$\rho_{12}=+0.9$, a mild monotone decline --- so the super-additivity of
the investment threshold is, across this sweep, close to independent of how the factors co-move.
The disinvestment cross-partial
$\Delta^2 \bdrydisinv/\varepsilon^2$, by contrast, rises roughly
linearly with $\rho_{12}$ (slope $\approx 0.17$ per unit of correlation,
from $0.56$ at $\rho_{12}=-0.9$ to $0.87$ at $\rho_{12}=+0.9$):
it varies systematically with $\rho_{12}$.
The disinvestment cross-partial is the larger of the two at every correlation,
and their ratio rises from about $1.5$ at $\rho_{12}=-0.9$ to $2.7$ at
$\rho_{12}=0.9$.
The width cross-partial $\Delta^2 w/\varepsilon^2$ therefore
inherits essentially the entire $\rho_{12}$-dependence of
$\Delta^2 \bdrydisinv/\varepsilon^2$, growing from $0.20$ to $0.55$
across the sweep, so the super-additive widening of the inaction band is
present in every correlation regime and is \emph{strengthened} by
positive correlation.
Two comparisons bound the discretisation error.  The two tables overlap at
$\rho_{12}=-0.5$, $\varepsilon=1$ and differ only in the capacity grid
($N_y=800$ against $600$); in addition, we repeat the sweep on a grid coarsened
in both the factor and capacity directions ($300\times300\times400\times60$).
Across both comparisons the widths agree to within $0.4\%$, and all
cross-partials, of the two boundaries and of the width, to within $0.014$ in
capacity units, against a capacity mesh width of $0.25$--$0.50$; this is the
same order as the $\pm\varepsilon$ spread reported above.  All monotonicity
patterns are unchanged.

\begin{table}[htbp]
\centering
\caption{Waiting-region width and free-boundary cross-partials as a function of the factor correlation~\(\rho_{12}\).  All other parameters are held at their baseline values (\(\sigma_1^2 = \sigma_2^2 = 0.0400\)).  The column~\(\sigma_{\Sigma}^2 = \sigma_1^2 + 2\rho_{12}\sigma_1\sigma_2 + \sigma_2^2\) reports the sum-shock variance along the symmetric diagonal \(x_1=x_2\).  Cross-partials are normalised by~\(\varepsilon^2\) (\(\varepsilon = 1\)) and averaged over the \(\pm\varepsilon\) shifts.}
\label{tab:corr_sweep}
\begin{tabular}{r c c c c c}
\toprule
\(\rho_{12}\) & \(\sigma_{\Sigma}^2\) & Width \(w\) & \(\Delta^2 \bdrydisinv/\varepsilon^2\) & \(\Delta^2 \bdryinv/\varepsilon^2\) & \(\Delta^2 w/\varepsilon^2\) \\
\midrule
\(-0.9\) & 0.008000 & 42.5904 & 0.5616 & 0.3640 & 0.1977 \\
\(-0.5\) & 0.040000 & 51.1683 & 0.6520 & 0.3514 & 0.3006 \\
\(0.0\) & 0.080000 & 59.1699 & 0.7383 & 0.3366 & 0.4017 \\
\(0.5\) & 0.120000 & 66.0570 & 0.8142 & 0.3342 & 0.4801 \\
\(0.9\) & 0.152000 & 71.1544 & 0.8691 & 0.3236 & 0.5455 \\
\bottomrule
\end{tabular}
\end{table}

\section{Conclusion}\label{sec:conclusion}

We have studied a finite-horizon reversible investment problem in which
operating revenue depends on the full multi-factor state, so that no
one-dimensional reduction is available.  The correspondence between singular
control and optimal switching reduces the marginal capacity problem to a
parabolic double obstacle problem, for which we prove existence, uniqueness and
local Sobolev regularity.  Under monotonicity and growth conditions on marginal
revenue the state space decomposes into investment, waiting and disinvestment
regions separated by free boundaries, and the Skorokhod problem on the
resulting inaction band produces a reflected capacity process, which we verify
to be optimal.

Section~\ref{sec:implications} reports what this multi-factor structure does to
the policy.  Joint improvements in the demand factors move both thresholds
super-additively, 1.5--2.7 times as strongly at the disinvestment boundary
across the factor correlations we consider, so that the inaction band widens as well as translates; positive factor
correlation strengthens the widening.  Neither the asymmetry nor its dependence
on correlation has an analogue in a single-factor formulation.

Several extensions remain open. It would be natural to allow state-dependent
adjustment costs, more general factor dynamics such as mean-reverting or
jump-diffusion processes, and equilibrium interactions among firms. The
high-dimensional structure of the free-boundary problem also suggests further
work on efficient numerical and simulation-based methods for computing the
investment and disinvestment regions.

\appendix

\section{Regularity of the free boundaries in the state and capacity variables}
\label{app:spatial-boundary-smoothness}
\label{app:capacity-boundary-smoothness}

We first establish the fixed-capacity result used in the capacity-dependent
argument.

\subsection{Fixed-capacity regularity}
\label{app:fixed-capacity-regularity}

\begin{thm}
\label{thm:spatial-boundary-smoothness}
Fix $z>0$. Suppose that, for some
$i_\ast\in\{1,\ldots,n\}$, the function $x\mapsto\pi(x,z)$ is smooth and
\[
    \partial_{x_{i_\ast}}\pi(x,z)>0,
    \qquad x\in(0,\infty)^n.
\]
Then, near every point with $t<T$, each of the two
fixed-capacity free boundaries can be written as a smooth graph in
$\xi_{i_\ast}=\log x_{i_\ast}$, with smooth dependence on time and on the
remaining logarithmic state variables. The $i_\ast$-component of its inward
spatial normal is nonzero.
\end{thm}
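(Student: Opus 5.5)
Near a fixed-capacity free-boundary point $p_0=(t_0,x_0)$ with $t_0<T$, I will first reduce to a one-sided obstacle problem, then get a nonzero directional derivative in $\xi_{i_\ast}$, then apply the implicit function theorem, and finally bootstrap the regularity.

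\emph{Local reduction.} Fix such a point $p_0$, say on the boundary of $\{{\cal Q}=K^+\}$; the lower boundary is symmetric. By Lemma~\ref{lem:Zexist} the two coincidence sets are separated: $\pi\ge\rho K^+$ on one and $\pi\le\rho K^-$ on the other. Continuity of ${\cal Q}$ then lets me pick a neighbourhood of $p_0$ on which only the upper obstacle is active. Pass to logarithmic variables $\xi=\log x$ and set $u:=K^+-{\cal Q}\ge0$. Then
\[
(\partial_t+\widetilde{\cal L})u=(\pi-\rho K^+)\chi_{\{u>0\}},
\]
where $\widetilde{\cal L}$ is a constant-coefficient uniformly elliptic operator, and by Lemma~\ref{lem:Zexist} $f:=\pi(\cdot,z)-\rho K^+$ is smooth with $f(x_0)\ge0$.

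\emph{Directional monotonicity.} I will differentiate in $e_{i_\ast}$. Proposition~\ref{pro:Q-monotonicity-lipschitz} gives $\partial_{\xi_{i_\ast}}{\cal Q}=x_{i_\ast}\partial_{x_{i_\ast}}{\cal Q}\ge0$, so $w:=-\partial_{\xi_{i_\ast}}u\ge0$. In $\{u>0\}$ I get $(\partial_t+\widetilde{\cal L})w=-\partial_{\xi_{i_\ast}}f=-x_{i_\ast}\partial_{x_{i_\ast}}\pi<0$, i.e.\ $w$ is a strict supersolution of the backward operator with $w\ge0$. The strong maximum principle then gives $w>0$ throughout $\{u>0\}$ near $p_0$. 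This is precisely the monotonicity cone hypothesis used for parabolic obstacle problems, as in \citet{LaurenceSalsa2009}.

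\emph{Nondegeneracy and Lipschitz graph.} Next I need $f(x_0)>0$, since only then is $p_0$ a regular point rather than degenerate. If $f(x_0)=0$, then $\pi(x,z)<\rho K^+$ for $x$ slightly decreased in the $i_\ast$ direction, because $\partial_{x_{i_\ast}}\pi>0$. Contact points have $\pi\ge\rho K^+$, so the coincidence set lies in $\{x_{i_\ast}\ge \text{graph}\}$. I expect to handle this by a barrier argument showing that contact with $f(x_0)=0$ cannot occur: the solution stays in the waiting region near such points, consistent with the strict inequality ${\cal Z}_{\cal I}\le\pi_z^{-1}(x,\rho K^+)$. That would make the source bounded below, $f\ge c>0$ near $p_0$.

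With positive source and $w>0$ in $\{u>0\}$, the coincidence set is monotone in the $\xi_{i_\ast}$ direction, so near $p_0$ the free boundary is a graph $\xi_{i_\ast}=g(t,\xi')$. The standard cone-of-monotonicity argument, using time-monotonicity from $\partial_t{\cal Q}\le0$ and the directional derivative above, should make $g$ Lipschitz. The blow-up classification in the proof of Lemma~\ref{lem:local-strict-comparison} rules out polynomial blow-ups, because $u$ vanishes on a half-cone. So every point is regular, and the Caffarelli/Petrosyan--Shahgholian parabolic theory gives a $C^{1,\alpha}$ graph with $\nabla u\neq0$ approaching from the non-coincidence side.

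\emph{Higher regularity.} Smoothness then follows from a hodograph--Legendre transformation (Kinderlehrer--Nirenberg style), since the source $f$ is smooth. The normal is $\propto(\partial_t g,\nabla_{\xi'}g,-1)$, whose $i_\ast$-component is nonzero.

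The main obstacle is the nondegeneracy step: excluding free-boundary points where $\pi=\rho K^\pm$, and hence proving Lipschitz regularity of the graph from directional monotonicity alone, without a full cone of monotone directions. I would first try to get the full cone from $\partial_t{\cal Q}\le0$ together with the uniform positivity of $x_{i_\ast}\partial_{x_{i_\ast}}\pi$ on compacts.
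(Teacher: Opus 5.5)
Your outline has the right architecture, but the two steps you defer are the ones that carry the proof, and neither goes through as sketched. The first is strict positivity of the source. Lemma~\ref{lem:Zexist} only gives $f\ge0$ on the contact set; the inequality ${\cal Z}_{\cal I}\le\pi^{-1}_z(x,\rho K^+)$ is not strict, so it is not evidence. A generic barrier also fails. If $f(x_0)=0$, then near $\xi_0$ one has $f\approx g\cdot(\xi-\xi_0)$ with $g=\nabla_\xi f(\xi_0)\ne0$, which is odd, so on a symmetric neighbourhood the leading-order gain cancels. The missing idea is to exploit the sign change one-sidedly. Because the lower obstacle is inactive, $u=K^+-\widetilde{\cal Q}$ satisfies ${\cal H}u\le f$. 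It\^o--Krylov then gives $u(t_0,\xi_0)\ge\mathbb E\int_0^\tau e^{-\rho s}\bigl(-f(\Xi_s)\bigr)\,ds$ for bounded stopping times $\tau$ before exit from a small ball. Take $\tau$ to be the first time the scalar martingale $g^\top\Sigma B$ reaches $+\sqrt h$, capped at $h$. The right-hand side is then $ch^{3/2}+O(h^2)$ with $c>0$, contradicting $u(t_0,\xi_0)=0$. A barrier on a height-$h$ cylinder over the one-sided set $\{g\cdot(\xi-\xi_0)<\sqrt h\}$ is the analytic version. This is the first part of the paper's Lemma~\ref{lem:appendix-concise-boundary-estimates}; it uses $g\ne0$ (from $\partial_{x_{i_\ast}}\pi>0$) and positive definiteness of $a$.

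The second is excluding singular points on the upper boundary. In space you need no new cone: Proposition~\ref{pro:Q-monotonicity-lipschitz} gives $\partial_{\xi_i}{\cal Q}\ge0$ for every $i$, so the closed positive orthant is already a cone of monotone directions. The difficulty is time. For $u=K^+-\widetilde{\cal Q}$, the sign $\partial_t{\cal Q}\le0$ propagates contact only to earlier times $t<t_0$. With $\tau=t_0-t$, the global solution $U=c\max\{-\tau,0\}$ (the case $P=0$ of the polynomial family in Lemma~\ref{lem:local-strict-comparison}) is nonincreasing in $\tau$, as $u$ is, and vanishes on all of $\{t\le t_0\}$. So it is compatible with your half-cone and with time monotonicity. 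No standard cone argument tilts the cone into time without a quantitative bound $-\partial_t\widetilde{\cal Q}\lesssim\partial_{\xi_{i_\ast}}\widetilde{\cal Q}$, which is what your closing remark is reaching for. The paper obtains it on penalized approximations. Set $p_k=-\partial_t\widetilde{\cal Q}_k$, $q_k=\partial_{\xi_{i_\ast}}\widetilde{\cal Q}_k$, and $W_k=(t_b-t)p_k-Cq_k-K|\xi-\xi_0|^2$, corrected by the vanishing constant $C\|(q_k)^-\|_\infty$. Then $W_k$ satisfies $({\cal H}-c_k)W_k>0$ for large $C$, because $({\cal H}-c_k)q_k=-\partial_{\xi_{i_\ast}}\widetilde\pi$ is bounded away from zero; this is where the hypothesis is really used. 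It yields $(t_b-t)(-\partial_t\widetilde{\cal Q})\le C\partial_{\xi_{i_\ast}}\widetilde{\cal Q}+K|\xi-\xi_0|^2$, hence $\partial_t\widetilde{\cal Q}\to0$ at the free boundary. With $0\le-\partial_t\widetilde{\cal Q}\le\delta$ and quadratic nondegeneracy, an orthant ball in the contact set survives on $[t_0,t_0+c_0r^2]$. This is the thickness that lets \citet{LindgrenMonneau2015} rule out singular points; for the lower boundary, time monotonicity already points the right way.

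Finally, $w>0$ only shows that each $e_{i_\ast}$-line meets the boundary once. Since $e_{i_\ast}$ is an extreme ray of the orthant, nothing gives a Lipschitz bound in that direction, so writing the normal as $(\partial_tg,\nabla_{\xi'}g,-1)$ assumes the conclusion. You need the Hopf step: $q$ vanishes on the free boundary and ${\cal H}q<0$ in the waiting region, so $\partial_\nu q\ne0$. Differentiating the regular profile $\frac{c_0}{2}((\xi-\xi_0)\cdot\nu)_+^2$ gives $\partial_\nu q=\mp c_0\nu_{i_\ast}$, so $\nu_{i_\ast}\ne0$. Once $C^{1,\alpha}$ is in hand, your hodograph bootstrap is a reasonable substitute for the paper's higher-order boundary Harnack argument.
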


For the proof, introduce the logarithmic coordinates and gap functions
\[
    \xi_j=\log x_j,\qquad
    \widetilde{\cal Q}(t,\xi)={\cal Q}(t,e^\xi,z),\qquad
    u_I=K^+-\widetilde{\cal Q},\qquad
    u_D=\widetilde{\cal Q}-K^-.
\]
Writing
\[
    A=\frac12(a_{jk})_{j,k=1}^n,\qquad
    b_j=\mu_j-\frac12a_{jj},\qquad
    {\cal H}=\partial_t+\operatorname{tr}(A D^2)+b\cdot D-\rho,
\]
we have, near the corresponding free boundaries,
\begin{equation}\label{eq:appendix-concise-gap-equations}
\begin{aligned}
    {\cal H}u_I
    &=\bigl(\widetilde\pi-\rho K^+\bigr)\chi_{\{u_I>0\}},\\
    {\cal H}u_D
    &=\bigl(\rho K^--\widetilde\pi\bigr)\chi_{\{u_D>0\}},
\end{aligned}
\end{equation}
where $\widetilde\pi(\xi)=\pi(e^\xi,z)$.
By Theorem~\ref{thm:Q}, $\widetilde{\cal Q}\in W^{1,2}_{p,\mathrm{loc}}$ for every
$p<\infty$; taking $p>n+2$, the parabolic Sobolev embedding makes
$D\widetilde{\cal Q}$ continuous. Hence $u_I$ and $u_D$ are nonnegative and
continuously differentiable in $\xi$, so each attains its minimum value $0$ on
its coincidence set and its spatial gradient vanishes there (spatial smooth
fit): $u=Du=0$ on the coincidence sets.

\begin{lem}\label{lem:appendix-concise-boundary-estimates}
At every point with $t<T$ on the upper and lower free boundaries, respectively,
\[
    \widetilde\pi>\rho K^+,
    \qquad
    \widetilde\pi<\rho K^-.
\]
Moreover, $\partial_t\widetilde{\cal Q}$ is continuous across both free
boundaries.
\end{lem}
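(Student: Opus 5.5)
We prove the strict inequalities by a Hopf-lemma argument on the side of the hypersurface \(\{\widetilde\pi=\rho K^+\}\) (respectively \(\{\widetilde\pi=\rho K^-\}\)) where the source has the wrong sign. We then obtain the continuity of \(\partial_t\widetilde{\cal Q}\) from a blow-up argument that uses the sign of \(\partial_t\widetilde{\cal Q}\) from Theorem~\ref{thm:Q}. We treat the upper boundary and \(u_I\); the lower boundary is identical after replacing \(u_I,\ \widetilde\pi-\rho K^+\) by \(u_D,\ \rho K^--\widetilde\pi\) and reversing the relevant inequalities.

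\emph{Strict inequality.} Let \(p_0=(t_0,\xi^0)\), with \(t_0<T\), be a free-boundary point of \(\{u_I=0\}\), and set \(f:=\widetilde\pi-\rho K^+\). The contact-point inequality in the proof of Lemma~\ref{lem:Zexist} gives \(f\ge0\) on \(\{u_I=0\}\), so \(f(\xi^0)\ge0\). Suppose \(f(\xi^0)=0\). Since \(\partial_{\xi_{i_\ast}}\widetilde\pi=x_{i_\ast}\partial_{x_{i_\ast}}\pi>0\) and \(\widetilde\pi\) is smooth, near \(\xi^0\) the level set \(\{f=0\}\) is a smooth hypersurface with nonvanishing \(\xi_{i_\ast}\)-component of the normal. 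It bounds the open side \(\Omega_-=\{f<0\}\). The space-time set \((t_0-\delta,t_0+\delta)\times\Omega_-\) is a smooth cylinder, and there is room on both sides of \(t_0\) because \(t_0<T\). On this cylinder \(u_I>0\), since \(f\ge0\) on the coincidence set, and \eqref{eq:appendix-concise-gap-equations} gives \({\cal H}u_I=f<0\). Thus \(u_I\) is a positive strong supersolution of the backward operator \({\cal H}\) on the cylinder. It vanishes at the lateral boundary point \(p_0\) and is \(C^1\) in \(\xi\) up to that point, because \(p>n+2\). The parabolic Hopf lemma, which holds for \(W^{1,2}_p\) strong supersolutions with interior-ball boundary, then gives a nonzero inward normal derivative of \(u_I\) at \(p_0\). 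This contradicts the spatial smooth fit \(Du_I(p_0)=0\). Hence \(f(\xi^0)>0\).

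\emph{Continuity of \(\partial_t\widetilde{\cal Q}\).} On the interior of the coincidence set \(\partial_t u_I=0\). Inside \(\{u_I>0\}\), the function \(v:=\partial_t u_I=-\partial_t\widetilde{\cal Q}\) is a continuous solution of \({\cal H}v=0\): the equation is autonomous in \(t\) and \(\widetilde\pi\) does not depend on \(t\), so interior regularity applies. By \eqref{eq:mono-Q} and \eqref{eq:mono-Qi}, \(v\) is nonnegative and locally bounded. It therefore suffices to show \(v(p_k)\to0\) for every sequence \(p_k\to p_0\) with \(p_k\in\{u_I>0\}\).

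Suppose instead that \(v(p_k)\ge\eta>0\), and let \(d_k\) be the parabolic distance from \(p_k\) to the coincidence set. Pass to the coordinates of Lemma~\ref{lem:local-strict-comparison} and rescale \(u_I\) quadratically at the nearest free-boundary points with scale \(d_k\). The quadratic growth estimate \eqref{eq:local-quadratic-growth-comparison} and nondegeneracy apply because the source is now strictly positive at \(p_0\) by the first part. Together with interior estimates, they give a nonzero global limit \(U\ge0\) with \(\partial_\tau U\) of fixed sign (the rescaled counterpart of \(v\ge0\)) and \((\Delta-\partial_\tau)U=c\,\chi_{\{U>0\}}\), where \(c=f(\xi^0)>0\). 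The bounds on \(v\) pass to the limit as local uniform convergence, by interior Schauder estimates for \({\cal H}v=0\) in the positivity set. The limit therefore has \(\partial_\tau U\ne0\) at a point at unit distance from a free-boundary point of \(U\).

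On the other hand, the classification of global solutions with time derivative of one sign, via \cite[Theorem~3.4, Lemma~6.3]{CPS04} as in Lemma~\ref{lem:local-strict-comparison}, leaves only the half-space solution \(\frac c2(y\cdot e)_+^2\), which is time independent, or polynomial solutions. I expect the main obstacle to be excluding the polynomial (time-dependent) alternative and justifying that the rescaled time derivatives converge rather than merely stay bounded. To exclude the polynomial case, I would first try the Lebesgue-density argument. At a regular point, the Hopf conclusion above and the monotonicity of the coincidence set in \(\xi_{i_\ast}\) give the coincidence set positive density, which rules out the polynomial blow-ups. If this does not close, I would invoke directly the Caffarelli-type continuity of \(u_t\) for parabolic obstacle problems with \(u_t\ge0\) and positive Lipschitz source. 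With that in hand, the contradiction \(\eta=0\) follows, and continuity across the lower boundary is proved in the same way with \(v=\partial_t u_D\le0\).
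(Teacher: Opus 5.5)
Your proof of the strict inequalities is correct, and it is a legitimate alternative to the paper's. The paper contradicts $u_I(t_0,\xi_0)=0$ with the stochastic lower bound $\mathbb E\int_0^{\tau_h^+}e^{-\rho s}(\rho K^+-\widetilde\pi(\Xi_s))\,ds=ch^{3/2}+O(h^2)$. You instead apply the parabolic Hopf lemma on the future cylinder over $\{\widetilde\pi<\rho K^+\}$ and contradict spatial smooth fit. Both arguments use $D\widetilde\pi(\xi_0)\neq0$ in essentially the same way.

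The continuity part, however, has a genuine gap, located at the upper free boundary. In the forward variable $\tau=t_0-t$ of Lemma~\ref{lem:local-strict-comparison}, the gap $u_I$ satisfies $\partial_\tau U\le0$. Consider the degenerate polynomial alternative with $P=0$, namely $U=c\,\tau_-=c\max\{-\tau,0\}$; in the original variables this is $u_I\approx c(t-t_0)_+$, with the whole spatial slice leaving the coincidence set at once. This is an admissible limit:
\begin{itemize}
\item it solves $(\Delta-\partial_\tau)U=c\chi_{\{U>0\}}$ and is homogeneous;
\item it has quadratic growth and nondegeneracy on $Q_r^-$;
\item it satisfies $\partial_\tau U\le0$ and is monotone in every spatial direction;
\item its coincidence set $\{\tau\ge0\}$ has positive space-time density at the origin.
\end{itemize}
Yet $\partial_\tau U=-c$ at unit distance from its free boundary, and $\partial_\tau U$ jumps across $\{\tau=0\}$. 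So every property your rescaling retains is compatible with exactly the discontinuity you want to exclude.

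In particular, none of your three proposed steps can close the argument.
\begin{itemize}
\item \emph{Density.} The orthant $\{t\le t_0,\ \xi\ge\xi_0\}$ supplied by monotonicity lies in $\{\tau\ge0\}$. It does not lie in the backward cylinders $Q_r^-$ on which \cite[Lemma~6.3]{CPS04} and the polynomial alternatives live, and nothing prevents the coincidence set of $u_I$ from disappearing for $t>t_0$.
\item \emph{Caffarelli-type fallback.} Continuity of $u_t$ of that type requires the coincidence set to shrink in forward time, i.e.\ $\partial_\tau U\ge0$. That is the sign of $u_D$, not of $u_I$, and the example above shows the claim fails without further structure.
\item \emph{Classification.} Your rescalings are centred at moving points with scales $d_k$, so the Weiss formula does not make the limit homogeneous. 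A further blow-up to reach the homogeneous classification discards the point where $\partial_\tau U\neq0$.
\end{itemize}

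The missing ingredient is the hypothesis $\partial_{\xi_{i_\ast}}\widetilde\pi\ge c_0>0$. It is not scale-invariant and disappears in any blow-up, because the rescaled source becomes constant. The paper uses it directly:
\begin{itemize}
\item $p=-\partial_t\widetilde{\cal Q}$ solves the homogeneous linearized equation and is bounded;
\item $q=\partial_{\xi_{i_\ast}}\widetilde{\cal Q}$ is a strict supersolution that vanishes on both free boundaries by smooth fit;
\item the barrier $(t_b-t)p_k-Cq_k-K|\xi-\xi_0|^2-C\eta_k$ on the penalized approximations yields $(t_b-t)(-\partial_t\widetilde{\cal Q})\le C\partial_{\xi_{i_\ast}}\widetilde{\cal Q}+K|\xi-\xi_0|^2$;
\item smooth fit then forces $\partial_t\widetilde{\cal Q}\to0$ at both boundaries.
\end{itemize}
In the model $u_I=c(t-t_0)_+$ one has $q\equiv0$ and $p=c$, which is exactly what this inequality rules out.
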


\begin{proof}
The obstacle inequalities first give the corresponding weak inequalities.
Fix a free-boundary point $(t_0,\xi_0)$ and let $\Xi$ denote the
logarithmic diffusion satisfying
$d\Xi_s=b\,ds+\Sigma\,dB_s$, with $\Xi_0=\xi_0$, where $B$ and $\Sigma$ are
the Brownian motion and volatility matrix in \eqref{eq:GBM}.
Choose $r>0$ sufficiently small and set $U:=B_r(\xi_0)$.
For this process, the It\^o--Krylov formula gives,
for every bounded stopping time $\tau$ before $\Xi$ exits $U$,
\begin{equation}\label{eq:appendix-concise-stopping-bounds}
\begin{aligned}
    u_I(t_0,\xi_0)&\geq
      \mathbb E_{\xi_0}\!\int_0^\tau e^{-\rho s}
      \bigl(\rho K^+-\widetilde\pi(\Xi_s)\bigr)\,ds,\\
    u_D(t_0,\xi_0)&\geq
      \mathbb E_{\xi_0}\!\int_0^\tau e^{-\rho s}
      \bigl(\widetilde\pi(\Xi_s)-\rho K^-\bigr)\,ds.
\end{aligned}
\end{equation}
Assume that $\widetilde\pi(\xi_0)=\rho K^+$. Set
$g:=D\widetilde\pi(\xi_0)$ and $W_s:=g^\top\Sigma B_s$. Let $\tau_U$ be the
first exit time of $\Xi$ from $U$ and, for $h>0$, define
\[
 \tau_h^\pm
 :=\inf\{s\geq0:W_s=\pm\sqrt h\}\wedge h\wedge\tau_U.
\]
Taylor expansion and Brownian scaling give
\[
    \mathbb E_{\xi_0}\!\int_0^{\tau_h^+}e^{-\rho s}
      \bigl(\rho K^+-\widetilde\pi(\Xi_s)\bigr)\,ds
      =c h^{3/2}+O(h^2),
      \qquad c>0.
\]
For small $h$ this contradicts
\eqref{eq:appendix-concise-stopping-bounds} and
$u_I(t_0,\xi_0)=0$.
The same argument with $\tau_h^-$ gives the corresponding contradiction
at the lower free boundary.

For the continuity assertion, fix a free-boundary point
$P_0=(t_0,\xi_0)$ and a compact backward cylinder
${\cal C}=[t_b-R^2,t_b]\times B_R(\xi_0)$ containing it,
with $t_b>t_0$.
By continuity and the positivity
assumption, $\partial_{\xi_{i_\ast}}\widetilde\pi$ is bounded below by a
positive constant on ${\cal C}$.
Use the penalized solutions ${\cal Q}^{N,\varepsilon}$ of
\eqref{eq:penal}, expressed in logarithmic coordinates and restricted to
${\cal C}$. Taking $\varepsilon\downarrow0$ first at fixed $N$, and then
$N\to\infty$, local $W^{1,2}_p$ compactness yields a diagonal sequence
$\widetilde{\cal Q}_k$ such that
\[
 \widetilde{\cal Q}_k\to\widetilde{\cal Q},\qquad
 D_\xi\widetilde{\cal Q}_k\to D_\xi\widetilde{\cal Q}
\]
uniformly on ${\cal C}$. Write $\beta_{1,k}$ and $\beta_{2,k}$ for
the corresponding penalties, and set
\[
 p_k=-\partial_t\widetilde{\cal Q}_k,\qquad
 q_k=\partial_{\xi_{i_\ast}}\widetilde{\cal Q}_k,\qquad
 c_k=\beta'_{1,k}+\beta'_{2,k}\geq0.
\]
The penalized estimate used in the proof of
\eqref{eq:Qt-bound-full} gives
\[
 0\leq-\partial_t\widetilde{\cal Q}_k=p_k\leq M
 \qquad\text{on }{\cal C},
\]
with $M$ independent of $k$. The gradient convergence and
$\partial_{\xi_{i_\ast}}\widetilde{\cal Q}\geq0$ give
\mbox{$\eta_k:=\|(q_k)^-\|_{L^\infty({\cal C})}\to0$}. Moreover,
\[
 ({\cal H}-c_k)p_k=0,
 \qquad
 ({\cal H}-c_k)q_k=-\partial_{\xi_{i_\ast}}\widetilde\pi.
\]
With $h=t_b-t$ and $\psi(\xi)=|\xi-\xi_0|^2$, choose $K\geq M$ and then
$C$ sufficiently large. The function
\[
 W_k=hp_k-Cq_k-K\psi-C\eta_k
\]
satisfies $({\cal H}-c_k)W_k>0$ in ${\cal C}$. Since
$0\leq h\leq R^2$, $p_k\leq M$, and
$q_k\geq-\eta_k$, the choice $K\geq M$ gives
$W_k\leq0$ on the parabolic boundary.
The maximum principle and passage to the limit $k\to\infty$ give
\[
    (t_b-t)(-\partial_t\widetilde{\cal Q})
    \leq C\partial_{\xi_{i_\ast}}\widetilde{\cal Q}
       +K|\xi-\xi_0|^2.
\]
Letting $(t,\xi)\to P_0$ in this estimate from the waiting region and
using spatial smooth fit, $t_b>t_0$, and
$0\leq-\partial_t\widetilde{\cal Q}$, we obtain
$\partial_t\widetilde{\cal Q}\to0$ from the waiting side. On either
coincidence region, $\widetilde{\cal Q}=K^\pm$ and hence
$\partial_t\widetilde{\cal Q}=0$, proving continuity across the free boundary.
\end{proof}

\begin{proof}[Proof of Theorem~\ref{thm:spatial-boundary-smoothness}]
We first prove that all points with \(t<T\) on the two
fixed-capacity free boundaries are regular and that both boundaries are locally parabolic
\(C^1\) hypersurfaces. Let \(P_0=(t_0,\xi_0)\), with \(t_0<T\), be a point on
the upper free boundary. By
Lemma~\ref{lem:appendix-concise-boundary-estimates}, there exist \(\delta>0\)
and a neighborhood of \(P_0\) such that
\[
 \widetilde\pi-\rho K^+\geq2\delta,
 \qquad 0\leq-\partial_t\widetilde{\cal Q}\leq\delta.
\]
A fixed-time quadratic barrier gives the nondegeneracy estimate
\begin{equation*}
 \sup_{B_r(y)}u_I(t,\cdot)\geq cr^2
 \qquad\text{whenever }u_I(t,y)>0.
\end{equation*}
For the lower free boundary, the same estimate follows from
 \(\rho K^--\widetilde\pi\geq2\delta\) and
\(\partial_t\widetilde{\cal Q}\leq0\).

At a free-boundary point \((t,\xi_0)\), monotonicity in the state
variables implies that \(B_r(\xi_0)\) contains a ball of radius
\(r/(2\sqrt n)\) in the corresponding coincidence set.
For the upper free boundary, let \(B_{\alpha r}(\eta)\) be the coincidence ball above, where
\(\alpha=1/(2\sqrt n)\). If \(u_I(t+h,y)>0\) for some
\(y\in B_{\alpha r/2}(\eta)\) and \(h\leq c_0r^2\), the nondegeneracy
estimate and \(0\leq\partial_tu_I\leq\delta\) give
\(c\alpha^2r^2/4\leq\delta c_0r^2\). Choosing
\(c_0<c\alpha^2/(4\delta)\) gives a contradiction. Thus
\(B_{\alpha r/2}(\eta)\) remains in the coincidence set on
\([t,t+c_0r^2]\).
For the lower free boundary, the same property follows directly from time
monotonicity. Therefore, the coincidence set satisfies the uniform parabolic
thickness condition at every free-boundary point.

For \(u\in\{u_I,u_D\}\), write \(f\) for the corresponding
source in \eqref{eq:appendix-concise-gap-equations}. As in the proof of
Lemma~\ref{lem:local-strict-comparison}, set
\[
 \tau=t_0-t,\qquad
 U(y,\tau)=e^{\rho\tau}
 u\bigl(t_0-\tau,\xi_0+A^{1/2}y-b\tau\bigr).
\]
The moving coordinates remove the drift, and the exponential factor
removes the zeroth-order term. Thus
\[
\begin{aligned}
 (\Delta_y-\partial_\tau)U
 &=F(y,\tau)\chi_{\{U>0\}},\\
 F(y,\tau)
 &=e^{\rho\tau}f\bigl(\xi_0+A^{1/2}y-b\tau\bigr).
\end{aligned}
\]
Here \(F(0,0)>0\), and \(F\) is locally H\"older continuous.
By the classification in
\citet[Theorem~1.2]{LindgrenMonneau2015}, the nondegeneracy and thickness
estimates exclude the degenerate and singular alternatives.
Thus every free-boundary point is regular.
By \citet[Theorem~1.7]{LindgrenMonneau2015}, the free boundary is
locally parabolic \(C^1\).
For the finite-order conclusion, it suffices that
\(f\in C^{2,\alpha_0}_{\mathrm{loc}}\), with \(\alpha_0\in(0,1)\).
The regular-profile estimates give a linearly nondegenerate inward
directional derivative of \(U\), as in
\citet[Lemma~6.1]{Kukuljan2022}.
Apply \citet[Theorems~1.2--1.3]{Kukuljan2022} to the spatial and time
derivatives of \(U\), following the proof of Corollary~1.4 there.
\(D_yF\) and \(\partial_\tau F\) have the required H\"older bounds.
Thus the boundary graphs are locally \(C^{1,2}_{\beta}\) for some
\(\beta\in(0,\alpha_0)\).
When \(f\) is smooth, the same argument can be iterated to obtain
\(C^\infty\) regularity.

Finally, we prove the graph direction. Set
 \(q=\partial_{\xi_{i_\ast}}\widetilde{\cal Q}\geq0\). In
each adjacent
waiting region,
\[
 {\cal H}q=-\partial_{\xi_{i_\ast}}\widetilde\pi<0.
\]
By the strong maximum principle and the boundary-point lemma,
 \(\partial_\nu q\neq0\) on the free boundary.
At a regular point \(P_0=(t_0,\xi_0)\), the corresponding gap
function \(u\in\{u_I,u_D\}\) has the quadratic expansion
\[
 u(t,\xi)
 =\frac{c_0}{2}\bigl((\xi-\xi_0)\cdot\nu\bigr)_+^2
 +o\bigl(|\xi-\xi_0|^2+|t-t_0|\bigr),
 \qquad c_0>0.
\]
Since \(u_I=K^+-\widetilde{\cal Q}\) and
\(u_D=\widetilde{\cal Q}-K^-\), differentiation from the waiting side gives
\[
 \partial_\nu q=\mp c_0\nu_{i_\ast},
\]
where the signs correspond to the upper and lower free boundaries,
respectively.
Thus \(\nu_{i_\ast}\neq0\). By the implicit function theorem, the
free boundary is locally a graph in the \(\xi_{i_\ast}\)-direction.
This graph-direction argument remains valid for locally
\(C^{2,\alpha_0}\) sources with
\(\partial_{\xi_{i_\ast}}\widetilde\pi>0\).
\end{proof}
\subsection{Assumptions and logarithmic formulation}
\label{app:capacity-data}

We now prove Theorem~\ref{thm:capacity-boundary-regularity}. We keep
the logarithmic state variables and the operator ${\cal H}$ from
Subsection~\ref{app:fixed-capacity-regularity}, and set
\[
 s:=\log z,
 \qquad
 V(t,\xi,s):={\cal Q}(t,e^\xi,e^s),
 \qquad
 \widetilde\pi(\xi,s):=\pi(e^\xi,e^s).
\]
We first state the regularity assumptions in these variables.

\begin{as}[First-order data $(D_1)$]\label{as:capacity-first-order-data}
Assume that
\[
 \partial_{\xi_i}\widetilde\pi>0\quad(i=1,\ldots,n),
 \qquad
 \partial_s\widetilde\pi<0.
\]
For some \(\alpha_0\in(0,1)\), the map
\[
 s\longmapsto\widetilde\pi(\cdot,s)
\]
is of class \(C^1\) with values in \(C_{\mathrm{loc}}^{2,\alpha_0}
(\mathbb R^n)\).
\end{as}

\begin{as}[Smooth data $(D_\infty)$]\label{as:capacity-boundary-smoothness}
In addition, assume that
 \(\pi\in C^\infty((0,\infty)^n\times(0,\infty))\).
\end{as}

Fix \(i_\ast\in\{1,\ldots,n\}\). Under \((D_1)\), the ratio below
is positive and continuous. Hence, for every compact set
\(E\Subset\mathbb R^n\times\mathbb R\), there are constants
\(0<\lambda_-(E)\leq\lambda_+(E)<\infty\) such that
\begin{equation}\label{eq:capacity-log-compensation}
 \lambda_-(E)
 \leq
 \frac{-\partial_s\widetilde\pi(\xi,s)}
      {\partial_{\xi_{i_\ast}}\widetilde\pi(\xi,s)}
 =
 \frac{-z\partial_z\pi(e^\xi,e^s)}
      {x_{i_\ast}\partial_{x_{i_\ast}}\pi(e^\xi,e^s)}
 \leq\lambda_+(E),
 \qquad (\xi,s)\in E.
\end{equation}
Here \(x=e^\xi\) and \(z=e^s\). This is a consequence of
\((D_1)\), not an additional assumption.
The ratio in \eqref{eq:capacity-log-compensation} is the local change
in \(\log x_{i_\ast}\) required to offset a change in \(\log z\).

\begin{as}[Exact compensation $(C_{\mathrm{eq}})$]
\label{as:capacity-exact-compensation}
There are an index \(i_\ast\in\{1,\ldots,n\}\) and a function
\(\lambda\in C^\infty((0,\infty))\) such that
\begin{equation}\label{eq:capacity-exact-compensation}
\begin{gathered}
 0<\lambda_-\leq\lambda(z)\leq\lambda_+<\infty,\\
 -z\partial_z\pi(x,z)
   =\lambda(z)x_{i_\ast}\partial_{x_{i_\ast}}\pi(x,z),
 \qquad (x,z)\in(0,\infty)^n\times(0,\infty).
\end{gathered}
\end{equation}
\end{as}

\par\smallskip\noindent
Define
\[
 L(s):=\int_0^s\lambda(e^r)\,dr.
\]
Then, by integration along characteristics, we obtain
\begin{equation}\label{eq:capacity-exact-source-invariance}
 \widetilde\pi\bigl(\xi+(L(s_2)-L(s_1))e_{i_\ast},s_2\bigr)
   =\widetilde\pi(\xi,s_1),
 \qquad s_1,s_2\in\mathbb R.
\end{equation}

For each fixed \(s\), the fixed-capacity formulation
\eqref{eq:appendix-concise-gap-equations} applies with
\(\widetilde{\cal Q}=V(\cdot,\cdot,s)\) and
\(\widetilde\pi=\widetilde\pi(\cdot,s)\). Denote the corresponding sources by
\begin{equation}\label{eq:capacity-source-definitions}
    f_I(\xi,s):=\widetilde\pi(\xi,s)-\rho K^+,
    \qquad
    f_D(\xi,s):=\rho K^--\widetilde\pi(\xi,s).
\end{equation}
By Lemma~\ref{lem:appendix-concise-boundary-estimates},
\begin{equation}\label{eq:capacity-strict-sources}
\begin{aligned}
    f_I&>0 &&\quad\hbox{on the upper free boundary},\\
    f_D&>0 &&\quad\hbox{on the lower free boundary}.
\end{aligned}
\end{equation}

For \(P=(t_0,\xi_0)\), set
\[
 {\cal C}_r(P)
 :=(t_0-r^2,t_0+r^2)\times B_r(\xi_0).
\]

\subsection{Local first-order regularity}
\label{app:capacity-monotone-compensation}

Write \(\xi_{(i_\ast)}\) for the vector obtained from \(\xi\) by
removing its \(i_\ast\)-th component.
For fixed \((t,\xi_{(i_\ast)},s)\) and \(r\in\mathbb R\), let
\[
 \xi[r]
 :=(\xi_1,\ldots,\xi_{i_\ast-1},r,
       \xi_{i_\ast+1},\ldots,\xi_n).
\]
Near a point of the upper or lower free boundary, we write its local graph as
\[
 \xi_{i_\ast}=\gamma_\ell(t,\xi_{(i_\ast)},s),
 \qquad \ell\in\{I,D\}.
\]
More precisely, after shrinking the neighborhood if necessary,
\[
\begin{aligned}
 V(t,\xi[r],s)=K^+
 &\quad\Longleftrightarrow\quad
 r\geq\gamma_I(t,\xi_{(i_\ast)},s)
 &&\text{near the upper free boundary},\\
 V(t,\xi[r],s)=K^-
 &\quad\Longleftrightarrow\quad
 r\leq\gamma_D(t,\xi_{(i_\ast)},s)
 &&\text{near the lower free boundary}.
\end{aligned}
\]
These equivalences are understood only within the chosen neighborhood.

\begin{lem}\label{lem:capacity-logarithmic-boundary-estimates}
Assume~\ref{as:capacity-first-order-data}.
Fix \(\ell\in\{I,D\}\) and a point \((P_0,s_0)\), where
\(P_0=(t_0,\xi_0)\), \(0\leq t_0<T\), lies on the corresponding
fixed-\(s_0\) free boundary.
Then there are a cylinder \({\cal C}_r(P_0)\), an interval \(J\ni s_0\),
and \(\beta\in(0,1)\) such that, for every \(s\in J\), the corresponding
fixed-\(s\) free boundary is represented over a common
\((t,\xi_{(i_\ast)})\)-neighborhood by
\[
 \xi_{i_\ast}=\gamma_\ell(t,\xi_{(i_\ast)},s).
\]
On every smaller cylinder, these graph functions have uniform
\(C^{1,2}_{\beta}\) bounds and depend
continuously on \(s\), locally uniformly
in \((t,\xi_{(i_\ast)})\).
\end{lem}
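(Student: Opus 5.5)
The plan is to make the fixed-capacity argument of Theorem~\ref{thm:spatial-boundary-smoothness} uniform in $s$ near $s_0$, and then to obtain continuity in $s$ from compactness together with uniqueness of the limit graph.

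\emph{Step 1: uniform nondegeneracy.} Lemma~\ref{lem:appendix-concise-boundary-estimates} gives $f_\ell(\xi_0,s_0)>0$ at $P_0$. By $(D_1)$, $f_\ell$ is continuous in $(\xi,s)$, so $f_\ell\ge 2\delta$ on ${\cal C}_{2r}(P_0)\times J$. By $(D_1)$ again, $\partial_{\xi_{i_\ast}}\widetilde\pi\ge c>0$ there.

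The next point is that Lemma~\ref{lem:appendix-concise-boundary-estimates} is stated pointwise, while we need $|\partial_t V|\le\delta$ uniformly in $s\in J$ near the free boundary. For this I will use the barrier $W_k=hp_k-Cq_k-K\psi-C\eta_k$. Its constants depend only on $M$ (from \eqref{eq:Qt-bound-full}, locally uniform in $z$), on $\inf\partial_{\xi_{i_\ast}}\widetilde\pi$, and on the cylinder. It therefore gives
\[
(t_b-t)(-\partial_tV)\le C\,\partial_{\xi_{i_\ast}}V+K|\xi-\xi_1|^2
\]
with constants uniform over centres $\xi_1$ and levels $s\in J$. Since $\partial_{\xi_{i_\ast}}V(\cdot,\cdot,s)$ is continuous in $(t,\xi,s)$ jointly, it is small near the free-boundary points of every level in $J$. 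Joint continuity comes from the uniform $W^{1,2}_p$ bounds combined with local Lipschitz continuity in $s$ (Proposition~\ref{pro:Q-monotonicity-lipschitz}), via an interpolation or compactness argument.

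This yields uniform nondegeneracy $\sup u\ge cr^2$ and uniform parabolic thickness, with constants independent of $s\in J$. Combined with the uniform quadratic growth bound \eqref{eq:local-quadratic-growth-comparison}, it places all free-boundary points for $s\in J$ in the regular class of \citet{LindgrenMonneau2015}, with a uniform regularity modulus. The uniformity clause of their Theorem~1.7 then gives uniform local $C^1$ bounds. Kukuljan's estimates, with the $C^{2,\alpha_0}$ norms of $f_\ell(\cdot,s)$ bounded uniformly for $s\in J$ by $(D_1)$, upgrade these to uniform $C^{1,2}_\beta$ bounds.

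\emph{Step 2: graph representation over a common neighbourhood.} The nonvanishing normal component $\nu_{i_\ast}$ from Theorem~\ref{thm:spatial-boundary-smoothness} needs a uniform lower bound. I will obtain it from the boundary-point lemma applied to $q=\partial_{\xi_{i_\ast}}V$ with ${\cal H}q\le -c$, quantified by a Hopf barrier whose constants are uniform in $s$. Alternatively, $\nu_{i_\ast}\ge$ const follows directly from monotonicity in $\xi$: $V$ is nondecreasing in every $\xi_j$, so the coincidence sets are monotone, and the normal lies in the closed positive (or negative) cone with unit-norm, which together with $\partial_\nu q=\mp c_0\nu_{i_\ast}$ and the quantitative Hopf bound gives the bound. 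The quantitative implicit function theorem then produces $\gamma_\ell(\cdot,s)$ on a common $(t,\xi_{(i_\ast)})$-neighbourhood.

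To make sure the level-$s$ boundary actually passes through the neighbourhood, I use Lemma~\ref{lem:strict-nesting-z} together with the continuity of $V$ in $s$: for $s$ close to $s_0$, the sign change of $V-K^\pm$ along the $\xi_{i_\ast}$ line persists.

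\emph{Step 3: continuity in $s$.} Take $s_k\to s$. The uniform $C^{1,2}_\beta$ bounds and Arzel\`a--Ascoli give a subsequential limit of $\gamma_\ell(\cdot,s_k)$. By continuity of $V$ in $(t,\xi,s)$, the limit graph separates $\{V(\cdot,s)=K^\pm\}$ from its complement. This identifies it with $\gamma_\ell(\cdot,s)$, so the whole sequence converges, locally uniformly.

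I expect the main obstacle to be Step 1: extracting $s$-uniform constants, especially uniform smallness of $\partial_tV$ and the regularity modulus in the Lindgren--Monneau theory, which is stated qualitatively. If it proves intractable, my fallback is a compactness-contradiction argument. I would take levels $s_k\to s_0$ with degenerating constants, blow up, and use continuity of $V$ in $s$ to pass to a limit problem at level $s_0$ whose point is known to be regular, which gives a contradiction.
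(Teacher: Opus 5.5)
\emph{Gap in Steps 2 and 3.} These steps move the contact/non-contact dichotomy between capacity levels using only continuity of $V$ in $s$ and Lemma~\ref{lem:strict-nesting-z}, and that is not enough. Continuity carries a strict inequality $V<K^+$ from one level to nearby levels. It also carries an equality $V=K^+$ from a sequence of levels to their limit. Neither converse holds.

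Take $\ell=I$ and write ${\cal E}^+_s$ for $\{V(\cdot,\cdot,s)=K^+\}$. This set shrinks as $s$ increases, so $\gamma_I$ increases in $s$.
\begin{itemize}
\item In Step 2, for $s>s_0$: knowing $V(\cdot,\cdot,s_0)=K^+$ on the contact side gives only $V(\cdot,\cdot,s)\approx K^+$ there. Strict nesting gives ${\cal E}^+_s\subset\operatorname{int}{\cal E}^+_{s_0}$, which is the wrong inclusion. So nothing you cite shows that the level-$s$ contact set meets your neighbourhood at all, and the ``persistent sign change'' is unproved for half of $J$.
\item In Step 3, for $s_k\downarrow s$: your subsequential limit $\gamma^\ast$ satisfies $\gamma^\ast\ge\gamma_I(\cdot,s)$. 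Continuity cannot exclude a slab $\gamma_I(\cdot,s)<\xi_{i_\ast}<\gamma^\ast$ that is contact at level $s$ but waiting at every $s_k$. Equivalently, ${\cal Z}_{\cal I}\equiv e^{s}$ on an open set. So the limit graph is not identified with $\gamma_I(\cdot,s)$.
\end{itemize}
The lower boundary has the mirror-image problem for $s<s_0$.

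\emph{The fix.} Use the $s$-uniform nondegeneracy you already establish in Step 1, as the paper does. Suppose $u_\ell(\cdot,\cdot,s_k)>0$ at the centre of a ball lying in the interior of the limiting contact set. Then $\sup u_\ell(\cdot,\cdot,s_k)\ge c\rho^2$ on that ball, while $u_\ell(\cdot,\cdot,s_k)\to0$ uniformly there, a contradiction. Your barrier, centred at the ball, supplies the needed smallness of $\partial_tV(\cdot,\cdot,s_k)$ there, because $q(\cdot,\cdot,s_k)\to0$ on it.

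Combine this with continuity on the waiting side and with the regularity of the limiting boundary points, which are therefore limits of interior contact points. The result is local Hausdorff convergence of the contact sets. This is the step the paper inserts before its graph estimates, and with it your Steps 2 and 3 go through.

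\emph{Comparison of the uniformity arguments.} Your route to $s$-uniformity is otherwise a legitimate alternative to the paper's. The paper fixes one scale at $(P_0,s_0)$ where the normalized gap is close to the half-space profile, and propagates that closeness to nearby points and levels by uniform convergence. You instead derive uniform thickness, and you make explicit the $s$-uniform smallness of $\partial_tV$ that the paper leaves implicit.

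Two caveats apply. First, the uniform regularity modulus has to come from your compactness fallback (blow-ups of uniformly thick solutions are half-space solutions). Do not rely on a ready-made uniformity clause in Lindgren--Monneau's Theorem~1.7. Second, monotonicity alone does not keep $\nu_{i_\ast}$ away from zero, since the cone contains $e_j$ for $j\neq i_\ast$. It is your quantitative Hopf bound on $\partial_\nu q$, together with an upper bound on $c_0$, that does.
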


\begin{proof}
By Theorem~\ref{thm:free-boundary-continuity},
\(V(\cdot,\cdot,s)\to V(\cdot,\cdot,s_0)\) locally uniformly as \(s\to s_0\).
Choose a smaller cylinder \({\cal C}_R(P_0)\) and interval
\(J\ni s_0\) so that the other obstacle is inactive. By
Lemma~\ref{lem:appendix-concise-boundary-estimates} and
Assumption~\ref{as:capacity-first-order-data}, there are constants
\(c_f,M>0\), independent of \(s\in J\), such that
\[
 f_\ell(\xi,s)\geq c_f,\qquad
 \|f_\ell(\cdot,s)\|_{C^{2,\alpha_0}(B_R(\xi_0))}\leq M
 \quad (\xi\in B_R(\xi_0),\ s\in J).
\]
Also, \(0\leq u_\ell\leq K^+-K^-\).
The finite-order part of the proof of
Theorem~\ref{thm:spatial-boundary-smoothness} therefore applies at each
fixed \(s\); its \(C^\infty\) conclusion is not used here.

The quadratic nondegeneracy constants can be chosen using
\(c_f\), \(M\), the fixed operator, and the cylinder, and hence
independently of \(s\in J\). Every reference boundary point at \(s=s_0\)
is regular.
On a compact subset of the limiting waiting region, the gap has
a positive minimum, so the subset remains in the waiting region for
\(s\) near \(s_0\). Suppose that positive-gap points remain in a compact subset of
the interior of the limiting contact set as \(s\to s_0\).
Nondegeneracy gives a positive lower bound for the supremum of the gap
on a fixed surrounding cylinder, independent of \(s\).
This contradicts uniform convergence to zero there. Since every limiting
boundary point is regular, it can be approached by interior contact
points. It follows that the contact sets converge locally in Hausdorff
distance as \(s\to s_0\).

Choose a fixed scale at \((P_0,s_0)\) at which the normalized gap is
sufficiently close to its regular half-space profile. At this scale, local
uniform convergence preserves the required closeness at nearby free-boundary
points for \(s\) in a smaller interval \(J\). Since the operator is fixed and
the source bounds above are uniform, the regular-point argument of
\citet[Section~4.1]{LindgrenMonneau2015} gives a common graph neighborhood
and a uniform parabolic \(C^1\) modulus.
These graph estimates, together with the Hausdorff convergence above,
imply that
\(\gamma_\ell(\cdot,\cdot,s)\to
\gamma_\ell(\cdot,\cdot,s_0)\) locally uniformly as \(s\to s_0\).
The same uniform profile estimates give a uniform linear
nondegeneracy bound for a fixed inward directional derivative of the gap,
as in \citet[Lemma~6.1]{Kukuljan2022}. Thus the constants in the boundary
Harnack argument below can also be chosen independently of \(s\).
Apply the finite-order boundary Harnack argument of
\citet[Theorems~1.2--1.3 and the proof of Corollary~1.4]{Kukuljan2022}
to the state and time derivatives of the gap.
Use the drift-removing and exponential transformations in the
proof of Theorem~\ref{thm:spatial-boundary-smoothness}, which are independent
of \(s\). The transformed sources and their first space and time
derivatives have uniform H\"older bounds. Together with the common graph
modulus and the uniform linear nondegeneracy bound, these give
\(C^{1,2}_{\beta}\) graph bounds on smaller cylinders, with a common
\(\beta\in(0,\alpha_0)\) and constants independent of \(s\in J\).

The graph-direction argument at the end of the proof of
Theorem~\ref{thm:spatial-boundary-smoothness} uses
\(\partial_{\xi_{i_\ast}}\widetilde\pi>0\) and gives
\(\nu_{i_\ast}\neq0\) at the reference point.
The uniform graph estimates and convergence preserve this property
on a smaller cylinder. This completes the proof.
\end{proof}

The next lemma gives the capacity derivative of the value function
and its behavior near the free boundary.

\begin{lem}\label{lem:capacity-local-stability}
Assume~\ref{as:capacity-first-order-data}. The derivative
\(w:=-\partial_sV\) exists and is continuous. It vanishes on the contact
sets and, in the waiting region, satisfies
\[
 {\cal H}w=\partial_s\widetilde\pi<0.
\]
In a smaller neighborhood as in
Lemma~\ref{lem:capacity-logarithmic-boundary-estimates}, set
\(q:=\partial_{\xi_{i_\ast}}V\).
The quotients
\[
 \frac{w}{q},\qquad
 -\frac{\partial_tV}{q},\qquad
 -\frac{\partial_{\xi_j}V}{q}\quad(j\neq i_\ast)
\]
extend continuously from the waiting region to the free boundary,
jointly in \((t,\xi,s)\). Moreover,
\begin{equation}\label{eq:capacity-local-quotient-bounds}
 0<c\leq\frac{w}{q}\leq C<\infty
\end{equation}
there, with constants depending on the chosen neighborhood.
\end{lem}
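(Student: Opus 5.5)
Existence and continuity of \(w=-\partial_sV\) come first. The Lipschitz bound of Proposition~\ref{pro:Q-monotonicity-lipschitz} shows that \(V\) is locally Lipschitz in \(s\). Let \(w^h:=-(V(\cdot,\cdot,s+h)-V(\cdot,\cdot,s))/h\ge0\), which is nonnegative by \eqref{eq:Dqy<0}. Where both levels lie in the waiting region, \({\cal H}w^h=(\widetilde\pi(\cdot,s+h)-\widetilde\pi(\cdot,s))/h\). On the contact sets, \(w^h\) vanishes once the contact set at \(s\) lies in the interior of the one at \(s+h\), by Lemma~\ref{lem:strict-nesting-z} and its lower analogue. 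Under \((D_1)\) the right-hand side converges in \(C^{2,\alpha_0}_{\rm loc}\) to \(\partial_s\widetilde\pi\). I would get local \(W^{1,2}_p\) bounds on \(w^h\) in the waiting region and uniform bounds up to the free boundary. For the latter, \(0\le w^h\le\Phi\) by the comparison barrier from Proposition~\ref{pro:Q-monotonicity-lipschitz}, and the free boundaries move continuously in \(s\) by Lemma~\ref{lem:capacity-logarithmic-boundary-estimates}. These bounds let \(h\to0\) and show that the limit exists, is continuous, vanishes on the contact sets, and solves \({\cal H}w=\partial_s\widetilde\pi<0\) in the waiting region. To rule out different limits along different sequences, I would use uniqueness for the linear Dirichlet problem on the waiting region with zero data on the (now \(C^{1,2}_\beta\)) free boundary.

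The quotient step would proceed as follows. In the neighbourhood of Lemma~\ref{lem:capacity-logarithmic-boundary-estimates}, the functions \(w\), \(q=\partial_{\xi_{i_\ast}}V\) (up to sign for the \(I\) side), \(-\partial_tV\), and \(-\partial_{\xi_j}V\) all vanish on the free boundary; for \(q\) and \(\partial_{\xi_j}V\) this is spatial smooth fit, and for \(\partial_tV\) it is Lemma~\ref{lem:appendix-concise-boundary-estimates}. In the transformed variables they satisfy linear equations with H\"older right-hand sides: \({\cal H}q=-\partial_{\xi_{i_\ast}}\widetilde\pi\), \({\cal H}\partial_tV=0\), \({\cal H}\partial_{\xi_j}V=-\partial_{\xi_j}\widetilde\pi\), and \({\cal H}w=\partial_s\widetilde\pi\). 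The domain is a \(C^{1,2}_\beta\) parabolic domain with bounds uniform in \(s\in J\), and the uniform linear nondegeneracy of \(q\) holds as cited from \citet[Lemma~6.1]{Kukuljan2022}. The boundary Harnack principle with right-hand sides, \citet[Theorems~1.2--1.3]{Kukuljan2022}, then gives that each ratio with \(q\) in the denominator is \(C^{0,\beta'}\) up to the boundary at fixed \(s\), with H\"older norms uniform in \(s\in J\). To make the extension jointly continuous in \((t,\xi,s)\), I would combine these uniform H\"older bounds with the locally uniform convergence of \(V\), \(DV\) and \(w\) as \(s\to s_0\), together with convergence of the graphs \(\gamma_\ell\). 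An Arzel\`a--Ascoli argument then identifies the limit of the quotients.

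The two-sided bound \eqref{eq:capacity-local-quotient-bounds} would come from the Hopf-type boundary values. At a free-boundary point, both \(w\) and \(q\) have nonzero normal derivatives, by the boundary-point lemma applied to \({\cal H}w<0\) and \({\cal H}q<0\) with zero boundary data and nonnegativity. The boundary value of \(w/q\) is then the ratio of normal derivatives. Using the quadratic expansion of the gap, \(u\approx\frac{c_0}{2}((\xi-\xi_0)\cdot\nu)_+^2\) with \(c_0=f_\ell\), this ratio should be comparable to \(\partial_\nu w/(c_0|\nu_{i_\ast}|)\). It is bounded above and below by compactness and continuity, since \(\nu_{i_\ast}\neq0\) and \(\partial_\nu w>0\) uniformly in the neighbourhood. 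Away from the boundary the ratio is positive and continuous, and shrinking the neighbourhood gives constants \(c,C\).

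I expect the main obstacle to be the first step: showing that \(w\) exists and is continuous across the free boundary jointly in \(s\), rather than only having the difference quotients bounded. The plan is to control \(w^h\) near the boundary by the barrier \(C\,q\), which the boundary Harnack principle makes available uniformly in \(h\), and then pass to the limit.
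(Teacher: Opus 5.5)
Your second half, using the boundary Harnack principle for the quotients and Hopf's lemma for \eqref{eq:capacity-local-quotient-bounds}, is a reasonable alternative to the paper's boundary-Schauder and normal-derivative argument. It rests entirely on the first step, however, and the mechanism you propose there does not work.

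\emph{Where the first step fails.} Lemma~\ref{lem:strict-nesting-z} nests the two obstacles in opposite directions. For $h>0$, the upper contact set at $s+h$ lies inside the one at $s$, while the lower contact set at $s$ lies inside the one at $s+h$. Hence $w^h=(V(\cdot,\cdot,s)-V(\cdot,\cdot,s+h))/h$ vanishes only where both levels are in contact, and you have omitted two strips.
\begin{itemize}
\item The upper strip $\{V(\cdot,\cdot,s)=K^+>V(\cdot,\cdot,s+h)\}$ belongs to the contact set at level $s$. There $w^h>0$ while $q(\cdot,\cdot,s)\equiv0$, so any bound $w^h\le C\,q(\cdot,\cdot,s)$ fails for every $h$. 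Also ${\cal H}w^h=f_I(\cdot,s+h)/h$ there.
\item The lower strip $\{V(\cdot,\cdot,s)>K^-=V(\cdot,\cdot,s+h)\}$ lies inside the waiting region at level $s$. There ${\cal H}w^h=f_D(\cdot,s)/h$, which is of order $1/h$ near the free boundary.
\end{itemize}
So $w^h$ neither vanishes on the boundary of the waiting region at $s$ nor solves an equation there with $h$-uniform H\"older data. The boundary Harnack principle needs both, and therefore gives nothing uniform in $h$.

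With only interior estimates and $0\le w^h\le\Phi$, you obtain a limit off the free boundary. You do not obtain its continuity up to the boundary, its vanishing on the contact sets, or its continuity in $s$. Your Dirichlet-uniqueness identification presupposes exactly these facts.

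\emph{The missing idea.} What is needed is a two-sided sandwich whose domains converge.
\begin{itemize}
\item Lower bound: where both levels are in the waiting region, ${\cal H}w^h=-g_h$ and $w^h\ge0$ at exit. This bounds $w^h$ below by the Feynman--Kac integral of $g_h$ up to $\tau_I(s)\wedge\tau_D(s+h)$.
\item Upper bound: on $\{V(\cdot,\cdot,s)>K^-,\ V(\cdot,\cdot,s+h)<K^+\}$, the two variational inequalities give ${\cal H}w^h\ge-g_h$. At the exit of this set the two values agree with the same obstacle, or equal $K^-$ at $T$, so $w^h=0$ there. This bounds $w^h$ above by the integral up to $\tau_I(s+h)\wedge\tau_D(s)$.
\end{itemize}
Together these are \eqref{eq:capacity-difference-quotient-squeeze}.

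You then need the hitting times to be continuous in the starting point and in the capacity parameter. The paper gets this from the regularity of every boundary point: by ellipticity and the $C^{1,2}_\beta$ graphs, the diffusion enters the interior of the contact set immediately after hitting it. It combines this with the Hausdorff convergence of the contact sets.

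Both bounds then converge to
\[
\mathbb E\int_0^{\tau(s)}e^{-\rho r}\bigl(-\partial_s\widetilde\pi(\Xi_r,s)\bigr)\,dr,
\]
which gives existence, joint continuity and vanishing on the contact sets at once. An analytic version would compare $w^h$ with solutions of the linear problem on the intersection and on the union of the two waiting regions. It would then use stability of the Dirichlet problem under Hausdorff convergence of domains with regular boundary points. A barrier built from $q$ cannot replace this step.
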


\begin{proof}
Let \(\Xi_r=\xi+br+\Sigma B_r\) be the logarithmic state process from
Subsection~\ref{app:fixed-capacity-regularity}.
For a capacity parameter \(a\), let \(\tau_I(a)\) and \(\tau_D(a)\)
be its first hitting times of the upper and lower contact sets:
\[
\begin{aligned}
 \tau_I(a)&:=\inf\{r\in[0,T-t):V(t+r,\Xi_r,a)=K^+\}\wedge(T-t),\\
 \tau_D(a)&:=\inf\{r\in[0,T-t):V(t+r,\Xi_r,a)=K^-\}\wedge(T-t).
\end{aligned}
\]
We use \(\inf\varnothing=\infty\).
For \(h>0\), set
\[
 d_h:=\frac{V(\cdot,\cdot,s)-V(\cdot,\cdot,s+h)}{h},
 \qquad
 g_h:=\frac{\widetilde\pi(\cdot,s)-\widetilde\pi(\cdot,s+h)}{h}.
\]
Both functions are nonnegative. By monotonicity in \(s\),
\(\tau_I(s)\leq\tau_I(s+h)\) and
\(\tau_D(s+h)\leq\tau_D(s)\).
The variational inequalities and the It\^o--Krylov formula give
\begin{equation}\label{eq:capacity-difference-quotient-squeeze}
\begin{aligned}
 \mathbb E_\xi\!\int_0^{\tau_I(s)\wedge\tau_D(s+h)}
       e^{-\rho r}g_h(\Xi_r)\,dr
 &\leq d_h(t,\xi)\\
 &\leq
 \mathbb E_\xi\!\int_0^{\tau_I(s+h)\wedge\tau_D(s)}
       e^{-\rho r}g_h(\Xi_r)\,dr .
\end{aligned}
\end{equation}
For the first inequality, both solutions are in the waiting region
before the stopping time, so \({\cal H}d_h=-g_h\); the inequality then
follows from \(d_h\geq0\) at the stopping time.
For the second, on
\(\{V(\cdot,\cdot,s)>K^-,\ V(\cdot,\cdot,s+h)<K^+\}\) we have
\({\cal H}d_h\geq-g_h\).
At its exit the two values agree with the same obstacle, and they also
agree at \(T\). For both inequalities, localization and
Assumption~\ref{as:free_boundary}\textnormal{(iii)} justify passage
to the unbounded state space.

At each fixed-capacity boundary point, the diffusion enters the interior
of the contact set immediately after hitting it. This follows from
uniform ellipticity and the local
\(C^{1,2}_{\beta}\) graph in
Lemma~\ref{lem:capacity-logarithmic-boundary-estimates}.
The local convergence of the contact sets then implies convergence of
these hitting times as the initial point and capacity vary.
Indeed, couple a convergent sequence of initial
points and capacity parameters using the same Brownian motion, and write
\(\tau_n\) and \(\tau\) for the corresponding hitting times.
The limiting paths stay a positive distance from the closed contact
set on compact intervals before the first hit. Immediately after the
hit they enter its interior, which persists for nearby parameters.
These two observations give, respectively,
\(\liminf \tau_n\geq \tau\) and \(\limsup \tau_n\leq \tau\), and hence
the convergence of the hitting times.
The same conclusion holds for the hitting times truncated at \(T-t\).

Let \(\tau(s):=\tau_I(s)\wedge\tau_D(s)\).
Since \(g_h\to-\partial_s\widetilde\pi(\cdot,s)\), dominated convergence
in \eqref{eq:capacity-difference-quotient-squeeze} yields
\begin{equation}\label{eq:capacity-derivative-representation}
 w(t,\xi,s)
 =\mathbb E_\xi\!\int_0^{\tau(s)}
       e^{-\rho r}\bigl(-\partial_s\widetilde\pi(\Xi_r,s)\bigr)\,dr .
\end{equation}
The bound in Assumption~\ref{as:free_boundary}\textnormal{(iii)}
and the finite moments of the geometric Brownian motion give an
integrable bound, locally uniform in the initial point and \(s\).
Applying the same argument with \(s-h\) gives the left derivative.
The hitting-time convergence also proves continuity of \(w\), including
\(w=0\) on both contact sets. Interior parabolic estimates applied to
the difference quotients yield
\({\cal H}w=\partial_s\widetilde\pi\) in the waiting region.

Fix the local graph neighborhood. The function \(q\) is nonnegative,
vanishes on the free boundary, and satisfies
\[
 {\cal H}q=-\partial_{\xi_{i_\ast}}\widetilde\pi<0
\]
in the waiting region.
The strong maximum principle, the boundary-point lemma, and boundary
Schauder estimates give, on a smaller neighborhood,
\[
 c\,d\leq q,w\leq C\,d,
\]
where \(d\) is the distance in the state variables to the corresponding
fixed-time boundary.
These estimates are uniform for \(s\) near \(s_0\): the graph bounds
are uniform, the two negative sources are bounded away from zero,
and \(q,w\) are locally bounded. This proves
\eqref{eq:capacity-local-quotient-bounds}.

The state and time derivatives of \(V\) vanish on the free boundary
by spatial smooth fit and
Lemma~\ref{lem:appendix-concise-boundary-estimates}. In the waiting region,
\[
 {\cal H}(\partial_{\xi_j}V)
   =-\partial_{\xi_j}\widetilde\pi,
 \qquad
 {\cal H}(\partial_tV)=0.
\]
Let \(v\) denote any of \(q,w,\partial_tV\), or
\(\partial_{\xi_j}V\) with \(j\neq i_\ast\), and let \(\nu_s(P)\) be the
inward spatial unit normal at a fixed-\(s\) boundary point \(P=(t,\xi)\).
The boundary estimates give
\[
 v(t,\xi+r\nu_s(P),s)
   =r\,\partial_{\nu_s}v(P,s)+o(r),
 \qquad
 \partial_{\nu_s}q(P,s)\geq c>0,
\]
as \(r\downarrow0\), with the remainder uniform on smaller boundary
patches and for \(s\) near \(s_0\). Hence \(v/q\) extends continuously
to the boundary with value
\(\partial_{\nu_s}v/\partial_{\nu_s}q\).
To verify continuity in \(s\), let \(s_k\to s_0\) and flatten the
boundaries using the graphs in
Lemma~\ref{lem:capacity-logarithmic-boundary-estimates}.
By interpolation, these coordinate maps converge locally together with
their spatial derivatives. Denote the pullback of \(v(\cdot,\cdot,s)\) to the common flat
half-cylinder by \(\widehat v_s(t,y)\). The uniform boundary Schauder
estimates make the spatial gradients of these pullbacks precompact up
to the flat boundary. Every subsequential limit agrees with
\(\nabla_y\widehat v_{s_0}\) in the interior, and therefore also on the
flat boundary by continuity. Thus
\[
 \nabla_y\widehat v_{s_k}\longrightarrow\nabla_y\widehat v_{s_0}
 \qquad\text{locally uniformly up to the flat boundary}.
\]
Transforming back gives locally uniform convergence of the one-sided
normal derivatives and hence of their ratios, since
\(\partial_{\nu_s}q\geq c>0\).
\end{proof}

\begin{proof}[Proof of
Theorem~\ref{thm:capacity-boundary-regularity} (the \(C^1\) case)]
Fix \(\ell\in\{I,D\}\) and a point on its free boundary.
Use the local graph neighborhood of
Lemma~\ref{lem:capacity-logarithmic-boundary-estimates}.
For \(\varepsilon>0\) small, define \(\gamma_\ell^\varepsilon\) by
\[
 V(t,\xi[\gamma_\ell^\varepsilon],s)
 =
 \begin{cases}
 K^+-\varepsilon,&\ell=I,\\
 K^-+\varepsilon,&\ell=D.
 \end{cases}
\]
These level sets lie in the waiting region, where \(q>0\).
The implicit function theorem and
Lemma~\ref{lem:capacity-local-stability} give
\[
 \partial_s\gamma_\ell^\varepsilon=\frac{w}{q},\qquad
 \partial_t\gamma_\ell^\varepsilon=-\frac{\partial_tV}{q},\qquad
 \partial_{\xi_j}\gamma_\ell^\varepsilon
   =-\frac{\partial_{\xi_j}V}{q}\quad(j\neq i_\ast),
\]
evaluated on the corresponding level set.
The quadratic growth and nondegeneracy estimates imply
\(\gamma_\ell^\varepsilon\to\gamma_\ell\) locally uniformly.
By the continuous boundary limits in
Lemma~\ref{lem:capacity-local-stability}, all displayed derivatives
also converge locally uniformly.
Integrating along coordinate segments shows that their limits are
the derivatives of \(\gamma_\ell\). Hence \(\gamma_\ell\) is jointly
\(C^1\), and
\begin{equation}\label{eq:capacity-graph-s-derivative-bounds}
 0<c\leq\partial_s\gamma_\ell\leq C<\infty
\end{equation}
locally. In particular, the corresponding local shift bounds follow
by integration in \(s\); their constants need not be uniform over the
whole free boundary.

Apply the \(C^1\) implicit function theorem once more to
\[
 \xi_{i_\ast}=\gamma_\ell(t,\xi_{(i_\ast)},s).
\]
Since \(\partial_s\gamma_\ell>0\), it gives a local \(C^1\) solution
\(s=s_\ell(t,\xi)\).
By the ordering of the contact sets,
\(e^{s_\ell(t,\xi)}={\cal Z}_\ell(t,e^\xi)\).
The logarithmic coordinate changes are smooth, and the boundary point
was arbitrary. This proves the \(C^1\) assertion.
\end{proof}

\subsection{Exact compensation and smoothness}
\label{app:capacity-exact-compensation-proof}

By $(C_{\mathrm{eq}})$, a translation in $\xi_{i_\ast}$ exactly
compensates for a change in capacity.

\begin{proof}[Proof of
Theorem~\ref{thm:capacity-boundary-regularity} (the \(C^\infty\) case)]
By \eqref{eq:capacity-exact-source-invariance},
 \(V(t,\xi+(L(s_2)-L(s_1))e_{i_\ast},s_2)\) and \(V(t,\xi,s_1)\)
have the same source. Since \({\cal H}\) is translation invariant and
the obstacles and terminal condition are constant, they
 solve the same double-obstacle problem.
By uniqueness in Theorem~\ref{thm:Q},
\begin{equation}\label{eq:capacity-exact-value-invariance}
 V\bigl(t,\xi+(L(s_2)-L(s_1))e_{i_\ast},s_2\bigr)
   =V(t,\xi,s_1).
\end{equation}

Fix \(\ell\in\{I,D\}\) and
\((t_0,x_0)\in{\cal D}_T\), and let
\(\xi_0=\log x_0\) and
\(s_0=\log{\cal Z}_{\ell}(t_0,x_0)\).
By Lemma~\ref{lem:free-boundary-range-limits}, \(s_0\) is finite.
Set \(s_1=s_0\) and \(s_2=0\) in
\eqref{eq:capacity-exact-value-invariance}. Then this boundary point is
mapped to \((t_0,\xi_0-L(s_0)e_{i_\ast},0)\).
 By Theorem~\ref{thm:spatial-boundary-smoothness}, the free boundary
at \(s=0\) has a smooth local graph
\[
 \xi_{i_\ast}
   =\gamma_{\ell,0}(t,\xi_{(i_\ast)})
\]
through that point.
By the same identity, for every nearby $s$,
\begin{equation}\label{eq:capacity-exact-graph-translation}
 \gamma_{\ell}(t,\xi_{(i_\ast)},s)
 =\gamma_{\ell,0}(t,\xi_{(i_\ast)})+L(s),
 \qquad \ell\in\{I,D\}.
\end{equation}

At a point on one of the two capacity boundaries, the identity
\(\xi_{i_\ast}
=\gamma_{\ell}(t,\xi_{(i_\ast)},s)\) and
\eqref{eq:capacity-exact-graph-translation}
imply
\begin{equation}\label{eq:capacity-exact-Z-formula}
 {\cal Z}_{\ell}(t,e^\xi)
 =\exp\left(
     L^{-1}\bigl(
       \xi_{i_\ast}
       -\gamma_{\ell,0}(t,\xi_{(i_\ast)})
            \bigr)
          \right),
 \qquad \ell\in\{I,D\}.
\end{equation}
Since \(L'(s_0)=\lambda(e^{s_0})>0\), the local inverse \(L^{-1}\)
is smooth near \(L(s_0)\). Hence the right-hand side is smooth near
\((t_0,\xi_0)\). Since the point and
 \(\ell\) were
arbitrary, the result follows.
\end{proof}

\section{Numerical scheme}\label{app:numerics}

The computations in Sections~\ref{sec:Numeric} and~\ref{sec:implications}
solve \eqref{eq:double-obs-n-dim} for $n=2$ by a standard projected implicit
finite-difference scheme.  Since \eqref{eq:double-obs-n-dim} contains no
derivative in $z$, it is solved independently at each capacity level of the
grid.  Below, $h_i$ is the mesh width in $x_i$, $\Delta t$ the time step, and
$Q^k_{i,j}$ the approximation of ${\cal Q}(t_k,x_{1,i},x_{2,j},z)$.
\begin{center}
\small
\begin{tabular}{@{}p{0.2\textwidth}p{0.65\textwidth}@{}}
\toprule
Item & Choice \\
\midrule
Grid
  & Except where otherwise stated: Uniform in $(x_1,x_2)\in[0,100]^2$ with $400\times400$ nodes, including
    $x_i=0$; $N_y$ uniform capacity levels in $[0.01,200]$; $N_t=60$ uniform
    time steps, $\Delta t=T/N_t$. \\
Time stepping
  & Backward Euler from ${\cal Q}(T,\cdot)=K^-$: $Q^k$ solves
    $(Q^{k+1}-Q^k)/\Delta t+{\cal L}_hQ^k-\rho Q^k+\pi=0$ at the interior
    nodes, by sparse LU factorisation. \\
Obstacles
  & After each linear step,
    $Q^k\leftarrow\min\{\max\{Q^k,K^-\},K^+\}$ (explicit projection; no
    complementarity problem is solved). \\
Second derivatives
  & Central: $(Q_{i+1}-2Q_i+Q_{i-1})/h_1^2$, and likewise in $x_2$. \\
First derivatives
  & Forward, which is upwind since $\mu_i\ge0$: $(Q_{i+1}-Q_i)/h_1$, and
    likewise in $x_2$. \\
Mixed derivative
  & Seven-point stencil chosen by the sign of $a_{12}$; for $a_{12}<0$,
    $(Q_{i+1,j}+Q_{i-1,j}+Q_{i,j+1}+Q_{i,j-1}-2Q_{i,j}-Q_{i+1,j-1}-Q_{i-1,j+1})/(2h_1h_2)$,
    and for $a_{12}\ge0$ the reflection with $Q_{i+1,j+1}$, $Q_{i-1,j-1}$. \\
Boundary $x_i=100$
  & The equation is kept, with every derivative involving $x_i$ (first,
    second and mixed) set to zero. \\
Free boundaries
  & At each node in $(x_1,x_2)$, ${\cal Z}_{\cal I}$ is the largest level
    with $Q=K^+$ and ${\cal Z}_{\cal D}$ the smallest level with $Q=K^-$. \\
Cross-partials
  & For Section~\ref{sec:implications}, the boundary surfaces are smoothed
    in $(x_1,x_2)$ by a separable Gaussian kernel with a standard deviation
    of $3$ grid cells, truncated at $3$ standard deviations, and evaluated
    off the grid by bilinear interpolation. \\
\bottomrule
\end{tabular}
\end{center}

\bibliographystyle{abbrvnat}
\bibliography{reference}

\begin{thebibliography}{58}
\providecommand{\natexlab}[1]{#1}
\providecommand{\url}[1]{\texttt{#1}}
\expandafter\ifx\csname urlstyle\endcsname\relax
  \providecommand{\doi}[1]{doi: #1}\else
  \providecommand{\doi}{doi: \begingroup \urlstyle{rm}\Url}\fi

\bibitem[Abel and Eberly(1996)]{AbelEberly1996}
A.~B. Abel and J.~C. Eberly.
\newblock Optimal investment with costly reversibility.
\newblock \emph{Review of Economic Studies}, 63\penalty0 (4):\penalty0
  581--593, 1996.

\bibitem[Akian et~al.(1996)Akian, Menaldi, and Sulem]{AkianMenaldiSulem1996}
M.~Akian, J.~L. Menaldi, and A.~Sulem.
\newblock On an investment-consumption model with transaction costs.
\newblock \emph{SIAM Journal on Control and Optimization}, 34\penalty0
  (1):\penalty0 329--364, 1996.

\bibitem[Altarovici et~al.(2017)Altarovici, Reppen, and
  Soner]{AltaroviciReppenSoner2017}
A.~Altarovici, A.~M. Reppen, and H.~M. Soner.
\newblock Optimal consumption and investment with fixed and proportional
  transaction costs.
\newblock \emph{SIAM Journal on Control and Optimization}, 55\penalty0
  (3):\penalty0 1673--1710, 2017.

\bibitem[Arrow(1968)]{Arrow1968}
K.~J. Arrow.
\newblock Optimal capital policy with irreversible investment.
\newblock In J.~N. Wolfe, editor, \emph{Value, Capital and Growth}, pages
  1--19. Edinburgh University Press, Edinburgh, 1968.

\bibitem[Baldursson and Karatzas(1997)]{BaldurssonKaratzas1997}
F.~M. Baldursson and I.~Karatzas.
\newblock Irreversible investment and industry equilibrium.
\newblock \emph{Finance and Stochastics}, 1:\penalty0 69--89, 1997.

\bibitem[Bank and El~Karoui(2004)]{BankElKaroui2004}
P.~Bank and N.~El~Karoui.
\newblock A stochastic representation theorem with applications to optimization
  and obstacle problems.
\newblock \emph{Annals of Probability}, 32\penalty0 (1B):\penalty0 1030--1067,
  2004.

\bibitem[Belak and Sass(2019)]{BelakSass2019}
C.~Belak and J.~Sass.
\newblock Finite-horizon optimal investment with transaction costs:
  Construction of the optimal strategies.
\newblock \emph{Finance and Stochastics}, 23:\penalty0 861--888, 2019.

\bibitem[Bensoussan(1984)]{Bensoussan1984}
A.~Bensoussan.
\newblock On the theory of option pricing.
\newblock \emph{Acta Applicandae Mathematicae}, 2:\penalty0 139--158, 1984.

\bibitem[Bensoussan and Lions(1982)]{BensoussanLions1982}
A.~Bensoussan and J.-L. Lions.
\newblock \emph{Applications of Variational Inequalities in Stochastic
  Control}.
\newblock North-Holland, Amsterdam, 1982.

\bibitem[Broadie and Detemple(1997)]{BroadieDetemple1997}
M.~Broadie and J.~Detemple.
\newblock The valuation of {American} options on multiple assets.
\newblock \emph{Mathematical Finance}, 7\penalty0 (3):\penalty0 241--286, 1997.

\bibitem[Burdzy et~al.(2009)Burdzy, Kang, and Ramanan]{BurdzyKangRamanan2009}
K.~Burdzy, W.~Kang, and K.~Ramanan.
\newblock The {Skorokhod} problem in a time-dependent interval.
\newblock \emph{Stochastic Processes and their Applications}, 119\penalty0
  (2):\penalty0 428--452, 2009.

\bibitem[Caffarelli et~al.(2004)Caffarelli, Petrosyan, and Shahgholian]{CPS04}
L.~Caffarelli, A.~Petrosyan, and H.~Shahgholian.
\newblock Regularity of a free boundary in parabolic potential theory.
\newblock \emph{J. Amer. Math. Soc.}, 17\penalty0 (4):\penalty0 827--869, 2004.

\bibitem[Carr et~al.(1992)Carr, Jarrow, and Myneni]{CarrJarrowMyneni1992}
P.~Carr, R.~Jarrow, and R.~Myneni.
\newblock Alternative characterizations of {American} put options.
\newblock \emph{Mathematical Finance}, 2\penalty0 (2):\penalty0 87--106, 1992.

\bibitem[Chen and Dai(2013)]{ChenDai2013}
X.~Chen and M.~Dai.
\newblock Characterization of optimal strategy for multiasset investment and
  consumption with transaction costs.
\newblock \emph{SIAM Journal on Financial Mathematics}, 4\penalty0
  (1):\penalty0 857--883, 2013.

\bibitem[Choi et~al.(2013)Choi, S{\^\i}rbu, and {\v
  Z}itkovi{\'c}]{ChoiSirbuZitkovic2013}
J.~H. Choi, M.~S{\^\i}rbu, and G.~{\v Z}itkovi{\'c}.
\newblock Shadow prices and well-posedness in the problem of optimal investment
  and consumption with transaction costs.
\newblock \emph{SIAM Journal on Control and Optimization}, 51\penalty0
  (6):\penalty0 4414--4449, 2013.

\bibitem[Cvitanic and Karatzas(1996)]{CvitanicKaratzas1996}
J.~Cvitanic and I.~Karatzas.
\newblock Hedging and portfolio optimization under transaction costs: A
  martingale approach.
\newblock \emph{Mathematical Finance}, 6\penalty0 (2):\penalty0 133--165, 1996.

\bibitem[Dai and Yi(2009)]{DaiYi2009}
M.~Dai and F.~Yi.
\newblock Finite-horizon optimal investment with transaction costs: A parabolic
  double obstacle problem.
\newblock \emph{Journal of Differential Equations}, 246\penalty0 (4):\penalty0
  1445--1469, 2009.

\bibitem[Dai et~al.(2009)Dai, Jiang, Li, and Yi]{DaiJiangLiYi2009}
M.~Dai, L.~Jiang, P.~Li, and F.~Yi.
\newblock Finite horizon optimal investment and consumption with transaction
  costs.
\newblock \emph{SIAM Journal on Control and Optimization}, 48\penalty0
  (2):\penalty0 1134--1154, 2009.

\bibitem[Davis and Norman(1990)]{DavisNorman1990}
M.~H.~A. Davis and A.~R. Norman.
\newblock Portfolio selection with transaction costs.
\newblock \emph{Mathematics of Operations Research}, 15\penalty0 (4):\penalty0
  676--713, 1990.

\bibitem[{De Angelis} and Ferrari(2014)]{DeAngelisFerrari2014}
T.~{De Angelis} and G.~Ferrari.
\newblock A stochastic partially reversible investment problem on a finite
  time-horizon: Free-boundary analysis.
\newblock \emph{Stochastic Processes and their Applications}, 124\penalty0
  (12):\penalty0 4080--4119, 2014.

\bibitem[Dixit and Pindyck(1994)]{DixitPindyck1994}
A.~K. Dixit and R.~S. Pindyck.
\newblock \emph{Investment under Uncertainty}.
\newblock Princeton University Press, Princeton, 1994.

\bibitem[{El Karoui} and Karatzas(1991)]{ElKarouiKaratzas1991}
N.~{El Karoui} and I.~Karatzas.
\newblock A new approach to the {Skorohod} problem, and its applications.
\newblock \emph{Stochastics and Stochastics Reports}, 34\penalty0
  (1--2):\penalty0 57--82, 1991.

\bibitem[Federico and Pham(2014)]{FedericoPham2014}
S.~Federico and H.~Pham.
\newblock Characterization of the optimal boundaries in reversible investment
  problems.
\newblock \emph{SIAM Journal on Control and Optimization}, 52\penalty0
  (4):\penalty0 2180--2223, 2014.

\bibitem[Guasoni and Muhle-Karbe(2013)]{GuasoniMuhleKarbe2013}
P.~Guasoni and J.~Muhle-Karbe.
\newblock Portfolio choice with transaction costs: A user's guide.
\newblock In V.~Henderson and R.~Sircar, editors, \emph{Paris--Princeton
  Lectures on Mathematical Finance 2013}, volume 2081 of \emph{Lecture Notes in
  Mathematics}, pages 169--201. Springer, Berlin, 2013.

\bibitem[Guo and Pham(2005)]{GuoPham2005}
X.~Guo and H.~Pham.
\newblock Optimal partially reversible investment with entry decision and
  general production function.
\newblock \emph{Stochastic Processes and their Applications}, 115\penalty0
  (5):\penalty0 705--736, 2005.

\bibitem[Guo and Tomecek(2008)]{Guo2008}
X.~Guo and P.~Tomecek.
\newblock Connections between singular control and optimal switching.
\newblock \emph{SIAM Journal on Control and Optimization}, 47\penalty0
  (1):\penalty0 421--443, 2008.

\bibitem[Han and Yi(2025)]{HanYi2025}
X.~Han and F.~Yi.
\newblock A reversible investment problem with general cost function in finite
  horizon: Free boundaries analysis.
\newblock \emph{Journal of Mathematical Analysis and Applications},
  546:\penalty0 129221, 2025.

\bibitem[Han et~al.(2024)Han, Yi, and Zhang]{HanYiZhang2024}
X.~Han, F.~Yi, and J.~Zhang.
\newblock A reversible investment problem with capacity and demand in finite
  horizon: Free boundary analysis.
\newblock \emph{SIAM Journal on Control and Optimization}, 62\penalty0
  (2):\penalty0 1207--1234, 2024.

\bibitem[Hobson et~al.(2019)Hobson, Tse, and Zhu]{HobsonTseZhu2019}
D.~Hobson, A.~S.~L. Tse, and Y.~Zhu.
\newblock A multi-asset investment and consumption problem with transaction
  costs.
\newblock \emph{Finance and Stochastics}, 23:\penalty0 641--676, 2019.

\bibitem[Jacka(1991)]{Jacka1991}
S.~D. Jacka.
\newblock Optimal stopping and the {American} put.
\newblock \emph{Mathematical Finance}, 1\penalty0 (2):\penalty0 1--14, 1991.

\bibitem[Jaillet et~al.(1990)Jaillet, Lamberton, and
  Lapeyre]{JailletLambertonLapeyre1990}
P.~Jaillet, D.~Lamberton, and B.~Lapeyre.
\newblock Variational inequalities and the pricing of {American} options.
\newblock \emph{Acta Applicandae Mathematicae}, 21\penalty0 (3):\penalty0
  263--289, 1990.

\bibitem[Jane{\v c}ek and Shreve(2004)]{JanecekShreve2004}
K.~Jane{\v c}ek and S.~E. Shreve.
\newblock Asymptotic analysis for optimal investment and consumption with
  transaction costs.
\newblock \emph{Finance and Stochastics}, 8:\penalty0 181--206, 2004.

\bibitem[Jeon and Kim(2026)]{JeonKim2026}
J.~Jeon and T.~Kim.
\newblock The finite-horizon reversible investment problem with the constant
  elasticity of variance model.
\newblock \emph{SIAM Journal on Control and Optimization}, 64\penalty0
  (2):\penalty0 843--868, 2026.

\bibitem[Kabanov and Kl{\"u}ppelberg(2004)]{KabanovKluppelberg2004}
Y.~Kabanov and C.~Kl{\"u}ppelberg.
\newblock A geometric approach to portfolio optimization in models with
  transaction costs.
\newblock \emph{Finance and Stochastics}, 8:\penalty0 207--227, 2004.

\bibitem[Kallsen and Muhle-Karbe(2010)]{KallsenMuhleKarbe2010}
J.~Kallsen and J.~Muhle-Karbe.
\newblock On using shadow prices in portfolio optimization with transaction
  costs.
\newblock \emph{Annals of Applied Probability}, 20\penalty0 (4):\penalty0
  1341--1358, 2010.

\bibitem[Karatzas(1988)]{Karatzas1988}
I.~Karatzas.
\newblock On the pricing of {American} options.
\newblock \emph{Applied Mathematics and Optimization}, 17:\penalty0 37--60,
  1988.

\bibitem[Karatzas and Shreve(1984)]{KaratzasShreve1984}
I.~Karatzas and S.~E. Shreve.
\newblock Connections between optimal stopping and singular stochastic control
  {I}. {M}onotone follower problems.
\newblock \emph{SIAM Journal on Control and Optimization}, 22\penalty0
  (6):\penalty0 856--877, 1984.

\bibitem[Karatzas and Shreve(1985)]{KaratzasShreve1985}
I.~Karatzas and S.~E. Shreve.
\newblock Connections between optimal stopping and singular stochastic control
  {II}. {R}eflected follower problems.
\newblock \emph{SIAM Journal on Control and Optimization}, 23\penalty0
  (3):\penalty0 433--451, 1985.

\bibitem[Kim(1990)]{Kim1990}
I.~J. Kim.
\newblock The analytic valuation of {American} options.
\newblock \emph{Review of Financial Studies}, 3\penalty0 (4):\penalty0
  547--572, 1990.

\bibitem[Klimsiak and Rozkosz(2016)]{KlimsiakRozkosz2016}
T.~Klimsiak and A.~Rozkosz.
\newblock The early exercise premium representation for {American} options on
  multiply assets.
\newblock \emph{Applied Mathematics \& Optimization}, 73:\penalty0 99--114,
  2016.

\bibitem[Kukuljan(2022)]{Kukuljan2022}
T.~Kukuljan.
\newblock Higher order parabolic boundary {Harnack} inequality in {$C^1$} and
  {$C^{k,\alpha}$} domains.
\newblock \emph{Discrete and Continuous Dynamical Systems}, 42\penalty0
  (6):\penalty0 2667--2698, 2022.

\bibitem[Laurence and Salsa(2009)]{LaurenceSalsa2009}
P.~Laurence and S.~Salsa.
\newblock Regularity of the free boundary of an {American} option on several
  assets.
\newblock \emph{Communications on Pure and Applied Mathematics}, 62\penalty0
  (7):\penalty0 969--994, 2009.

\bibitem[Lindgren and Monneau(2015)]{LindgrenMonneau2015}
E.~Lindgren and R.~Monneau.
\newblock Pointwise regularity of the free boundary for the parabolic obstacle
  problem.
\newblock \emph{Calculus of Variations and Partial Differential Equations},
  54\penalty0 (1):\penalty0 299--347, 2015.

\bibitem[Magill and Constantinides(1976)]{MagillConstantinides1976}
M.~J.~P. Magill and G.~M. Constantinides.
\newblock Portfolio selection with transaction costs.
\newblock \emph{Journal of Economic Theory}, 13\penalty0 (2):\penalty0
  245--263, 1976.

\bibitem[McKean(1965)]{McKean1965}
H.~P. McKean.
\newblock Appendix: A free boundary problem for the heat equation arising from
  a problem in mathematical economics.
\newblock \emph{Industrial Management Review}, 6\penalty0 (2):\penalty0 32--39,
  1965.

\bibitem[Merhi and Zervos(2007)]{MerhiZervos2007}
A.~Merhi and M.~Zervos.
\newblock A model for reversible investment capacity expansion.
\newblock \emph{SIAM Journal on Control and Optimization}, 46\penalty0
  (3):\penalty0 839--876, 2007.

\bibitem[Merton(1969)]{Merton1969}
R.~C. Merton.
\newblock Lifetime portfolio selection under uncertainty: The continuous-time
  case.
\newblock \emph{Review of Economics and Statistics}, 51\penalty0 (3):\penalty0
  247--257, 1969.

\bibitem[Merton(1971)]{Merton1971}
R.~C. Merton.
\newblock Optimum consumption and portfolio rules in a continuous-time model.
\newblock \emph{Journal of Economic Theory}, 3\penalty0 (4):\penalty0 373--413,
  1971.

\bibitem[Oksendal(2000)]{Oksendal2000}
A.~Oksendal.
\newblock Irreversible investment problems.
\newblock \emph{Finance and Stochastics}, 4:\penalty0 223--250, 2000.

\bibitem[Pindyck(1988)]{Pindyck1988}
R.~S. Pindyck.
\newblock Irreversible investment, capacity choice, and the value of the firm.
\newblock \emph{American Economic Review}, 78\penalty0 (5):\penalty0 969--985,
  1988.

\bibitem[Pindyck(1991)]{Pindyck1991}
R.~S. Pindyck.
\newblock Irreversibility, uncertainty, and investment.
\newblock \emph{Journal of Economic Literature}, 29\penalty0 (3):\penalty0
  1110--1148, 1991.

\bibitem[Possama{\"i} et~al.(2015)Possama{\"i}, Soner, and
  Touzi]{PossamaiSonerTouzi2015}
D.~Possama{\"i}, H.~M. Soner, and N.~Touzi.
\newblock Homogenization and asymptotics for small transaction costs: The
  multidimensional case.
\newblock \emph{Communications in Partial Differential Equations}, 40\penalty0
  (11):\penalty0 2005--2046, 2015.

\bibitem[Reppen et~al.(2025)Reppen, Soner, and
  Tissot-Daguette]{ReppenSonerTissotDaguette2025}
A.~M. Reppen, H.~M. Soner, and V.~Tissot-Daguette.
\newblock Neural optimal stopping boundary.
\newblock \emph{Mathematical Finance}, 35\penalty0 (2):\penalty0 441--469,
  2025.

\bibitem[Riedel and Su(2011)]{RiedelSu2011}
F.~Riedel and X.~Su.
\newblock On irreversible investment.
\newblock \emph{Finance and Stochastics}, 15:\penalty0 607--633, 2011.

\bibitem[Shreve and Soner(1994)]{ShreveSoner1994}
S.~E. Shreve and H.~M. Soner.
\newblock Optimal investment and consumption with transaction costs.
\newblock \emph{Annals of Applied Probability}, 4\penalty0 (3):\penalty0
  609--692, 1994.

\bibitem[Soner and Touzi(2013)]{SonerTouzi2013}
H.~M. Soner and N.~Touzi.
\newblock Homogenization and asymptotics for small transaction costs.
\newblock \emph{SIAM Journal on Control and Optimization}, 51\penalty0
  (4):\penalty0 2893--2921, 2013.

\bibitem[Steg(2012)]{Steg2012}
J.-H. Steg.
\newblock Irreversible investment in oligopoly.
\newblock \emph{Finance and Stochastics}, 16:\penalty0 207--224, 2012.

\bibitem[Villeneuve(1999)]{Villeneuve1999}
S.~Villeneuve.
\newblock Exercise regions of {American} options on several assets.
\newblock \emph{Finance and Stochastics}, 3:\penalty0 295--322, 1999.

\end{thebibliography}

\end{document}